\newcommand{\Description}[1]{}
\newcommand{\removelatexerror}{}
\Crefname{algocf}{Algorithm}{Algorithms}
\crefname{algocfline}{line}{lines}
\definecolor{DarkGray}{rgb}{0.66, 0.66, 0.66}
\definecolor{DarkPowderBlue}{rgb}{0.0, 0.2, 0.6}
\definecolor{fluorescentyellow}{rgb}{0.8, 1.0, 0.0}
\newcounter{note}[section]
\renewcommand{\thenote}{\thesection.\arabic{note}}
\newcommand{\agnote}[1]{\refstepcounter{note}$\ll${\bf Anupam~\thenote:}
  {\sf \color{red} #1}$\gg$\marginpar{\tiny\bf AG~\thenote}}
\newcommand{\initOneLiners}{%
    \setlength{\itemsep}{0pt}
    \setlength{\parsep }{0pt}
    \setlength{\topsep }{0pt}
}
\newenvironment{OneLiners}[1][\ensuremath{\bullet}]
    {\begin{list}
        {#1}
        {\initOneLiners}}
    {\end{list}}
\newtheorem{theorem}{Theorem}[section]
\newtheorem{lemma}[theorem]{Lemma}
\newtheorem{claim}[theorem]{Claim}
\newtheorem{corollary}[theorem]{Corollary}
\newtheorem{definition}[theorem]{Definition}
\newcommand{\eat}[1]{}
\newcommand{\T}{{\mathcal{T}}}
\newcommand{\fT}{{\mathfrak{T}}}
\newcommand{\eps}{\varepsilon}
\newcommand{\sse}{\subseteq}
\newcommand{\I}{{\mathcal{I}}}
\newcommand{\opt}{{\textsf{opt}}}
\newcommand{\newS}{{B^\star}}
\newcommand{\oldS}{{A^\star}}
\newcommand{\cT}{{\cal T}}
\newcommand{\pagecover}{{\textsf{SolveDext}}\xspace}
\newcommand{\pagecoverp}{{\textsf{SolveDextP}}\xspace}
\newcommand{\specialpagecover}{{\textsf{NonNestDext}}\xspace}
\newcommand{\specialpagecoverp}{{\textsf{NonNestDextP}}\xspace}
\newcommand{\othercoverp}{{\textsf{SolveRextP}}\xspace}
\newcommand{\intervalcover}{{\textsf{TiledIC}}\xspace}
\newcommand{\exintervalcover}{{\textsf{TiledICEx}}\xspace}
\newcommand{\Z}{\mathbb{Z}}
\newcommand{\R}{\mathbb{R}}
\newcommand{\page}{\textsf{page}}
\newcommand{\cost}{\textsf{cost}}
\newcommand{\wPwTw}{{\textsf{PageTW}}\xspace}
\newcommand{\wPwTwP}{{\textsf{PageTWPenalties}}\xspace}
\newcommand{\wPwD}{{\textsf{PageD}}\xspace}
\newcommand{\dext}[1]{{\mathsf{Dext}}(#1)}
\newcommand{\rext}[1]{{\mathsf{Rext}}(#1)}
\newcommand{\astar}{A^\star}
\newcommand{\pstar}{p^\star}
\newcommand{\pdag}{p^\dagger}
\newcommand{\Udag}{U^\dagger}
\newcommand{\rt}{{\tt rt}}
\newcommand{\lt}{{\tt lt}}
\newcommand{\VC}{{\textsf{Vertex Cover}}\xspace}
\renewcommand{\emptyset}{\varnothing}
\newcommand{\tz}{\tilde{z}}
\newcommand{\calC}{\mathcal{C}}
\newcommand{\calD}{\mathcal{D}}
\newcommand{\calS}{\mathcal{S}}
\newcommand{\calN}{\mathcal{N}}
\newcommand{\calI}{\mathcal{I}}
\newcommand{\cK}{\mathcal{K}}
\newcommand{\tx}{{\tilde x}}
\newcommand{\bx}{{\bar x}}
\newcommand{\ty}{{\tilde y}}
\newcommand{\by}{{\bar y}}
  \def\\{}%
  \def\texttt#1{<#1>}%
  \def\textsf#1{<#1>}%
  \def\mathsf#1{<#1>}%
  \def\ensuremath#1{#1}%
  \def\xspace{}%
  \def\Cref#1{<Label:#1>}%
  \def\eqref#1{<Eq.:#1>}%
\newcommand{\penaltyfn}{F}
\begin{document}

\title{Caching with Time Windows and Delays}

\author{
{Anupam Gupta\thanks{Computer Science Department, Carnegie Mellon University, Pittsburgh, PA. Email: {\tt anupamg@cs.cmu.edu}.}}
\and
{Amit Kumar\thanks{Department of Computer Science and Engineering, IIT Delhi, New Delhi, India. Email: {\tt amitk@cse.iitd.ac.in}.}}
\and
{Debmalya Panigrahi\thanks{Department of Computer Science, Duke University, Durham, NC. Email: {\tt debmalya@cs.duke.edu}.}}
}

\date{}

\maketitle

\begin{abstract}

  \medskip
We consider two generalizations of the classical weighted 
paging problem that incorporate the notion of delayed service
of page requests. The first is the {\em (weighted) Paging with Time Windows}
(\wPwTw) problem, which is like the classical weighted
paging problem except that each page request only needs to be served before 
a given deadline. This problem arises in many practical 
applications of online caching, such as the ``deadline'' I/O
scheduler in the Linux kernel and video-on-demand streaming.
The second, and more general, problem is the 
{\em (weighted) Paging with Delay} (\wPwD) problem, 
where the delay in serving a page request results in a penalty
being assessed to the objective. This problem generalizes the caching 
problem to allow delayed service, a line of work that has 
recently gained traction in online algorithms 
(e.g., Emek {\em et al.} STOC '16, 
Azar {\em et al.} STOC '17, Azar and Touitou FOCS '19).

We give $O(\log k\log n)$-competitive algorithms for both the
\wPwTw and \wPwD problems on $n$ pages with a cache of size $k$. This
significantly improves on the previous best bounds of $O(k)$ 
for both problems (Azar {\em  et al.} STOC '17).
We also consider the {\em offline} \wPwTw and \wPwD problems, for which we give
an $O(1)$ approximation algorithms and prove APX-hardness.  These are
the first results for the offline problems; even NP-hardness was not
known before our work.
At the heart
of our algorithms is a novel ``hitting-set'' LP relaxation 
of the \wPwTw problem that overcomes the $\Omega(k)$ integrality
gap of the natural LP for the problem. To the best of our knowledge,
this is the first example of an LP-based algorithm for an online
algorithm with delays/deadlines.
\end{abstract}

\section{Introduction}
\label{sec:introduction}

In the caching/paging problem,
page requests from a universe of $n$ pages arrive over time. They 
have to be served by swapping pages in and out of a cache that can hold
only $k < n$ pages at a time. In weighted paging, each page $p$ has a 
weight $w_p$, and the goal is to minimize the sum of weights of evicted pages. 
In this paper we consider situations where page requests do not need to be 
served immediately, but can be delayed for some time. For instance, 
in mixed-workload environments such as those arising in cloud computing 
or operating systems, requests from 
time-sensitive applications (such as interactive ones) 
have short deadlines, but batch processes can tolerate longer wait times.
(Indeed, the ``deadline'' I/O scheduler in the Linux kernel is precisely
for this purpose, although the way it currently handles deadlines is not
very sophisticated~\cite{LinKernel}.) A different application arises in
network streaming, e.g., in video-on-demand, where a server needs to
cache segments appearing in multiple video streams (see, e.g.,
\cite{ClaeysBDVLT, Schepper}). Depending on when these segments are
required, various streams set different deadlines for each of these 
segments. In all these applications, the key feature is that individual
page requests can be delayed, but only until a given deadline. Specifically, 
the request $r_t = (p, d)$ at time $t$ for a page $p$ includes a
deadline $d$, and the algorithm must ensure that the page is in the
cache at some time in the interval $[t,d]$. 
We call this the {\em (weighted) Paging with Time Windows}
(\wPwTw) problem; if the deadline is the same as the time of the
request, we get back the weighted paging problem. 

A more general setting is one where the page requests do not have 
specific deadlines, but the algorithm incurs a cost that is monotonically
non-decreasing with the delay in serving individual requests. This is related
to the recent line of work in online algorithms with delay, where
problems such as online matching~\cite{EmekKW16,AzarCK17,AshlagiACCGKMWW17,AzarF20}
and online network design~\cite{AzarT19,AzarT20}
have been considered. In particular, our work relates to the ``online service
with delays'' problem~\cite{AzarGGP17,BienkowskiKS18,AzarT19}, and can be 
interpreted as a generalization of this problem to $k$ servers but for the 
special case of a star metric. In our problem, 
each request is specified by a triple
$(p,t, \penaltyfn)$, where $p$ is the requested page, $t$ is the time
at which this request is made, and
$\penaltyfn: \{t, t+1, \ldots, \} \to \R_{\geq 0}$ denotes the
non-decreasing loss function associated with it. 
The objective is to minimize
the sum of two quantities: the sum of weights of pages evicted
from the cache (the usual objective in weighted paging)
and {\em the total delay losses incurred over all the individual
page requests}. We call this the {\em (weighted) Paging with Delay}
(\wPwD) problem. Note that \wPwTw is a special case of this problem
where the delay loss is $0$ till the deadline, and $\infty$ thereafter.

\begin{theorem}[Main Results: Online Algorithm]
  \label{thm:main-on}
  There is an $O(\log k \log n)$-competitive randomized algorithm for
  the \wPwD problem in the online setting, where $n$ is the number of
  pages and $k$ is the size of the cache. As a consequence, there
  is also an $O(\log k \log n)$-competitive randomized algorithm for
  the special case of the \wPwTw problem in the online setting.
\end{theorem}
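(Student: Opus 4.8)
The plan is to obtain the result through a chain of reductions ending in an LP relaxation that can be maintained and rounded online.

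\textbf{Step 1: from delays to time windows with penalties.} First I would reduce \wPwD to the time-window problem \wPwTwP (time windows, with an associated penalty if a window is missed) at the cost of an $O(1)$ factor. The reduction is the familiar scale-splitting trick for delay objectives: for a request with non-decreasing loss function $\penaltyfn$, and for each $j$, the moment at which the accumulated loss of the still-unserved request would first exceed $2^j$ is turned into a (soft) deadline, and a request reaching that deadline unserved simply pays $2^j$. A charging argument against OPT on the same request (OPT either serves it before loss $2^j$ accrues, or is already paying $\Theta(2^j)$) shows the optimum changes by at most a constant, and the transformation is implementable online. Thus it suffices to be $O(\log k\log n)$-competitive for \wPwTwP; the plain \wPwTw statement of the theorem is then immediate, since \wPwTw is the $\{0,\infty\}$-loss special case.

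\textbf{Step 2: the hitting-set LP.} The technical heart is a relaxation of \wPwTwP that escapes the $\Omega(k)$ integrality gap of the natural LP (one variable $x_{p,t}$ for the fraction by which page $p$ is out of cache at time $t$, capacity constraints $\sum_p x_{p,t}\ge n-k$, and a window-covering constraint per request). Instead I would cast \wPwTwP --- possibly through an intermediate ``tiled interval cover'' formulation --- as a pure covering/hitting-set program: the ground set consists of ``bring page $p$ into cache over interval $I$''-type moves (together with penalty-payment options), the covering constraints say that each request's window is hit by some move on its page, and the capacity requirement ``at most $k$ pages in cache at any time'' is re-expressed, via complementation, as further covering constraints on the moves. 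I would then prove this LP has a small integrality gap --- $O(1)$ offline, which is also what yields the offline constant-factor approximation mentioned in the abstract. Designing a relaxation that is simultaneously tight enough to round, a covering LP so that the online primal-dual machinery applies, and faithful to the combinatorics of windowed caching is, I expect, the main obstacle.

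\textbf{Step 3: solving the LP online.} With a covering formulation in hand, I would maintain a monotone feasible fractional solution via the Buchbinder--Naor online primal-dual framework for covering/packing LPs: when a new request (constraint) arrives, multiplicatively raise the fractional values of the moves that can satisfy it while growing a dual certificate. Bounding the row sparsity of the covering constraints by roughly the number of pages $n$ (a single request can be served in essentially $n$ qualitatively distinct ways, once one discretizes to the times forced by the instance) yields a fractional solution of cost $O(\log n)\cdot\opt_{\lp}$, hence $O(\log n)\cdot\opt$.

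\textbf{Step 4: online rounding.} It remains to round the fractional solution online into a randomized integral cache schedule together with a service time for each request; this contributes the second logarithmic factor. Classical fractional-to-integral rounding for weighted caching does not suffice, since time windows give the algorithm freedom about \emph{when} to serve a request, and the rounding must commit to service times while respecting capacity. I would combine (i) a randomized online rounding of the fractional ``in-cache'' profile of each page --- a random threshold per page with an eviction rule that shadows the fractional solution, in the spirit of the $O(\log k)$ rounding used for weighted $k$-server/caching --- with (ii) a matching/gap-filling step that assigns every request a nearby time at which its page is integrally cached, charging the cost increase through a potential argument; this loses $O(\log k)$ over the fractional cost. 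Composing Steps 1--4 gives competitive ratio $O(1)\cdot O(\log n)\cdot O(\log k)=O(\log k\log n)$, proving the theorem. (If needed, the usual guess-and-double on $\opt$ removes any assumption that $\opt$ is known in advance.)
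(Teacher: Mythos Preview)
Your high-level pipeline (reduce \wPwD to \wPwTwP, write a covering relaxation, solve it online, convert to a schedule) matches the paper, but the attribution of the two logarithmic factors is inverted, and the two substantive steps are not the ones the paper carries out.

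\textbf{Where the factors actually come from.} In the paper, the $O(\log k)$ is the cost of solving the hitting-set IP online (\S\ref{sec:first-lp}), \emph{not} of rounding; and the $O(\log n)$ is the cost of converting an integral IP solution into an actual cache schedule (\S\ref{sec:solving-wpwtw-using}), \emph{not} of the primal-dual step. Your Step~3 gets $O(\log n)$ from generic Buchbinder--Naor via ``row sparsity $\approx n$''; the paper instead rewrites the exponential family of constraints into compact form, reduces each of the two constraint families to a \emph{tiled interval cover} problem that is essentially weighted paging, and inherits the $O(\log k)$ bound from~\cite{BansalBN12}. A generic covering analysis would not recover $O(\log k)$ here.

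\textbf{The relaxation is the missing idea.} Your Step~2 describes variables for ``bring page $p$ into cache over interval $I$'' plus complemented capacity constraints; this is essentially the natural LP, which the paper shows has an $\Omega(k)$ integrality gap (\S\ref{sec:bad-ex}). The paper's fix is different in kind: the variables $x_{p,t}$ record \emph{hits} (either an eviction or a load), and the constraints are not per-request covering but rather, for every time $t$ and every choice of one request per page, a requirement that enough of the corresponding \emph{right-extended} intervals $\rext{I,t}$ (and separately \emph{double-extended} intervals $\dext{I,t}$ tied to the critical interval $I_t$) be hit. The double-extension constraints are what reserve a cache slot for $p_t$ somewhere in its window without fixing when; this is the idea your proposal does not supply, and without it one is stuck at the $\Omega(k)$ gap.

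\textbf{The conversion is not threshold rounding.} Because the IP is solved \emph{integrally} online (the tiled-interval-cover subroutine already returns an integer solution), Step~4 is not fractional-to-integral rounding at all. The hard part is that the integral IP solution only tells you a set of ``stars'' $(p,t)$, not who is in cache; extracting a schedule requires deciding, at each deadline, which pending requests to serve now versus later. The paper does this deterministically via a charging scheme (the sets $Z^\star$, $U^\circ$, the page $p^\star$ balancing them, and for overlapping requests a charging forest over critical times partitioned into weight classes); the $O(\log n)$ arises from summing the per-class charges across $\log n$ weight classes (\S\ref{sec:onlinecostanalysis}). A random-threshold ``in-cache profile'' rounding does not apply, since the IP never maintains such a profile.
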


Previously, an $O(k)$-competitive deterministic algorithm was given by
Azar {\em et al.}~\cite{AzarGGP17} for both problems.  
\wPwD and \wPwTw inherit an
$\Omega(\log k)$-competitiveness lower bound from the classical paging
problem; closing the gap between our upper bound and this lower bound
remains open.

While we stated the above theorem for the more general \wPwD problem,
and derived the bound for \wPwTw as a corollary, we will actually
prove the theorem for the special case of the \wPwTw problem first, 
and then show that we can reduce the \wPwD problem to the \wPwTw 
problem. %
More precisely,
we extend our \wPwTw algorithm to a generalization that we call
the \wPwTwP problem, where every page request has a non-negative 
penalty that the algorithm can choose to incur instead of satisfying
the request. Then, we give a reduction from the \wPwD problem
to the \wPwTwP problem in \S\ref{sec:penalties} without changing the 
objective; %
moreover, this reduction can be performed online.
So, the rest of 
this section, and much of the subsequent sections, focus on the \wPwTw 
and \wPwTwP problems. 

At the heart of our algorithm is a novel ``hitting-set'' LP relaxation
of the \wPwTw problem that overcomes the $\Omega(k)$ integrality gap
of the natural LP relaxation for this problem (see \Cref{sec:bad-ex}). From a theoretical
perspective, the \wPwTw problem is in the category of online
optimization problems with delays/deadlines that has attracted
significant interest recently (e.g.,
\cite{EmekKW16,AzarGGP17,AzarCK17,BuchbinderFNT17,AzarT19,BienkowskiBBCDF16}).
To the best of our knowledge, our work is the first example of an
LP-based algorithm in this line of research. Given the great success
of LP-based techniques in online algorithms in general, we hope that
our work spurs further progress in this area.

We also study the {\em offline} versions of the \wPwTw and \wPwD problems, 
where the request sequence is given up-front. Here, the first question is
tractability: since weighted paging is solvable in polynomial time
offline, it is conceivable that so are \wPwTw and \wPwD. 
We show that the \wPwTw problem (and therefore, by generalization,
the \wPwD problem) is %
APX-hard. We complement this lower bound with an $O(1)$-approximation
for the offline \wPwTw and \wPwTwP problems, which again by our reduction from
the \wPwD problem to the \wPwTwP problem, implies an $O(1)$-approximation
for the offline \wPwD problem.

\begin{theorem}[Main Results: Offline Algorithm]
  \label{thm:main-off}
  The \wPwTw problem is NP-hard (and APX-hard) even when the cache
  size $k=1$, and the pages have unit weight. As a consequence, the 
  \wPwD problem is also NP-hard (and APX-hard) under these restrictions.
  Moreover, there are $O(1)$-approximation deterministic algorithms
  for the \wPwTw and \wPwD problems, based on rounding a linear program
  to show a constant integrality gap.
\end{theorem}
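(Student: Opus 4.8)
For the lower bound I plan an L-reduction from an APX-hard problem such as \VC on bounded-degree graphs (equivalently, minimum set cover with each element in $O(1)$ sets). The key structural point is that in \wPwTw \emph{every} request must be served, so the reduction cannot encode a combinatorial choice as ``serve this request or not'': the choice must instead live in \emph{when} each request is served, and the capacity constraint ($k=1$ here) must be exploited to create the necessary conflicts. Concretely, I would lay the instance on a timeline, use tight (single-time) requests for a small set of ``anchor'' pages to force a rigid backbone schedule, and for each source constraint (e.g.\ each edge) insert a constant-size gadget of requests, arranged so that --- given the forced backbone --- the cheapest way to serve the gadget corresponds to selecting one of a small set of options (the sets/vertices covering that constraint), and so that a selection made for one option can be amortized over all constraints offering it (by grouping the relevant gadgets consecutively). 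Since the pages have unit weight, every load costs exactly $1$, so the instance optimum is a fixed additive term plus a constant times the size of a minimum solution to the source instance.

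The main obstacle here is making the correspondence tight in \emph{both} directions: a source solution must map to a schedule exceeding the additive term by only $O(1)$ per chosen set/vertex, and conversely any schedule must be editable --- without increasing its cost by more than a constant factor --- into a canonical form from which a source solution of the right size can be read off. This means controlling the extra flexibility caching has over a plain covering problem: a page can be retained over a long interval, and the order of gadgets on the timeline can create spurious savings. I would neutralize the former by making the anchor requests dense enough that keeping any non-anchor page across an anchor is strictly worse than reloading it later, and the latter by choosing a layout in which the per-option amortization is exactly $O(1)$. Showing that $\opt$ is an \emph{affine} function of the source optimum (as an L-reduction demands), rather than merely monotone in it, is the delicate quantitative step; NP-hardness then falls out of the same reduction.

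\textbf{Algorithm.} For the upper bound I would round the hitting-set LP relaxation of \wPwTw --- the same relaxation that powers the online algorithm and that avoids the $\Omega(k)$ integrality gap of the natural LP --- extended to \wPwTwP by adding a ``pay the penalty'' variable per request; the offline \wPwD bound then follows by composing with the paper's (online) reduction from \wPwD to \wPwTwP. The steps are: (i) solve the LP offline to optimality, using a polynomial-size reformulation or a separation oracle if (as seems likely) the natural formulation has one variable per (page, interval) pair; (ii) round it deterministically, losing only a constant factor. For the rounding I would exploit the interval structure of the windows: restricted to a single page the constraints form essentially a totally unimodular interval point-cover system, so the only genuine difficulty is the cache-capacity constraint, which couples the pages; I would round the per-page ``load'' intervals while running a charging/flow argument for capacity, paralleling the exact LP rounding for offline weighted paging. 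Penalties are folded in by first paying in full and discarding every request whose LP penalty variable exceeds a fixed constant, leaving a residual instance whose LP solution is feasible after $O(1)$ scaling.

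The hardest step is proving that this LP has a genuinely \emph{constant} integrality gap offline --- it was introduced only to kill the $\Omega(k)$ gap, and the online analysis still loses an $O(\log k \log n)$ factor --- together with a rounding that respects the capacity and window constraints simultaneously. My plan is to first isolate the right rounding primitive in the clean special case $k=1$ with unit weights (where the task is purely ``when to reload pages on a line''), then lift to general $k$ and arbitrary weights by composing with standard weighted-paging LP rounding, and finally add the penalty handling. As a consistency check, the APX-hardness above guarantees the gap cannot be $1$, so some genuine constant-factor loss in the rounding is both unavoidable and expected.
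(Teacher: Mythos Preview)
Your plan matches the paper's: an L-reduction from \VC on bounded-degree graphs, with one ``anchor'' page $p^\star$ requested at every integer time (your backbone) and, for each edge $e=(u,v)$, a page $p_e$ with the three request intervals $[0,u],[u,v],[v,n+1]$. The paper shows $\opt = 2|E| + r + \Theta(1)$ for minimum vertex-cover size $r$, exactly the affine relation you aim for.

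\textbf{Algorithm.} Here you have a genuine gap. You propose to round the hitting-set LP ``paralleling the exact LP rounding for offline weighted paging,'' but an integral solution to the hitting-set IP is \emph{not} a cache schedule: the variable $x_{p,t}$ records only that page $p$ is ``hit'' (loaded \emph{or} evicted) at time $t$, with no capacity constraint in sight and no indication of cache contents. The paper therefore proceeds in two genuinely separate stages. First it solves the IP, handling the right-extension and double-extension constraint families separately by reduction to tiled interval-cover problems (your TU intuition is relevant here, but only here); the double-extension side is intricate, requiring a compact reformulation, a solution on a non-nested sub-instance, and two applications of an ``extension theorem'' to lift to the general case. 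Second --- and this is what your plan is missing entirely --- it converts the resulting set $A^\star$ of stars into an actual schedule via an explicit algorithm: at each deadline it evicts the cheapest cached page, uses the IP constraints to certify a set $Z^\star$ of cached pages whose doubly-extended intervals are hit by $A^\star$, runs a Hall-set argument to decide which of those to evict and which pending requests to serve so that each star is charged $O(1)$ times, and finally performs a reverse-delete pass to prune redundant services of overlapping same-page intervals. This conversion is the bulk of the offline work, bears no resemblance to classical weighted-paging rounding, and your ``charging/flow argument for capacity'' would leave you with a set of stars and no schedule.
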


\subsection{Our Techniques}

\begin{figure}
    \centering
    \includegraphics[width=4in]{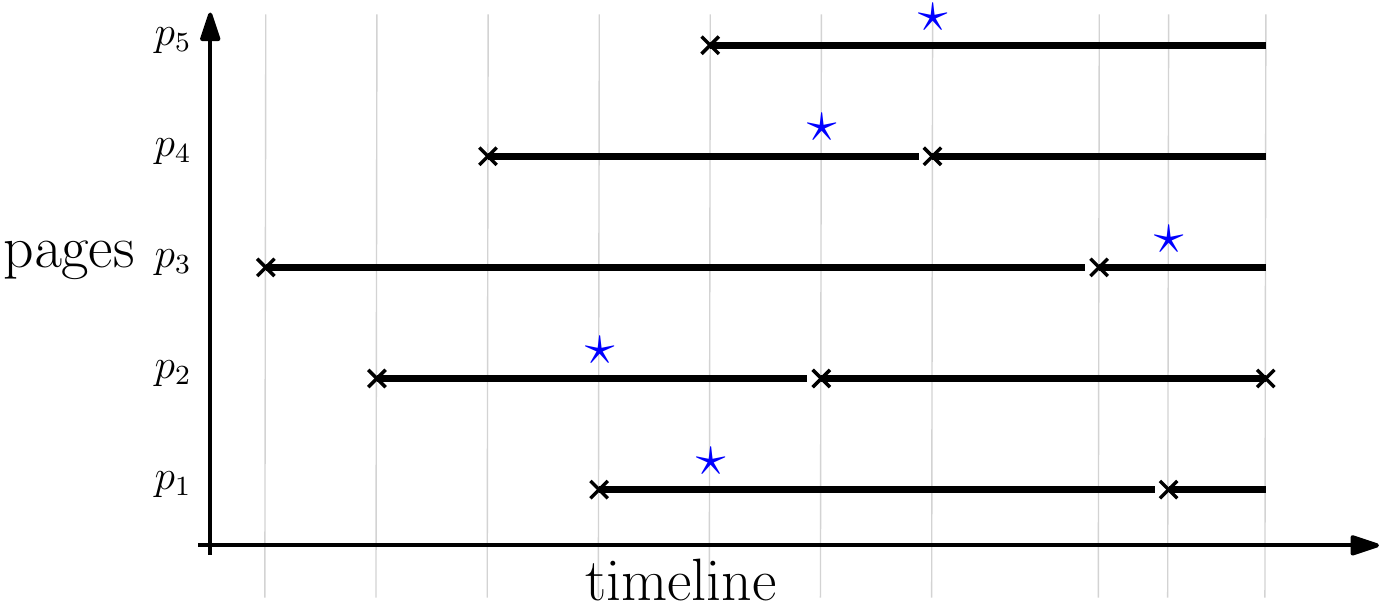}
    \caption{\small\emph{A two-dimensional view of page requests and evictions. 
     The crosses represent page requests and stars represent page evictions.
     This illustration is for a cache of size $3$. }}
    \label{fig:stars}
\end{figure}

The weigh\-ted paging problem has an ``interval covering'' IP formulation~\cite{BansalBN12, Young91}:
$$\min\left\{\sum_{p, j} w_p x_{p,j}: \sum_{p\not= p_t} x_{p, j(p, t)} \geq n-k ~\forall t, x_{p, j}\in \{0, 1\}~ \forall p, j\right\}.$$
For every page $p$, define an
interval starting at each request for it, and ending just before the
next request. Because of the request, this page $p$ must be present in
the cache at the start of each such interval, but may be evicted at
some subsequent point: the IP variable $x_{p, j} \in \{0,1\}$ %
indicates if a page is evicted before its next request. 
While this IP does not explicitly indicate {\em when} a page is evicted, 
any online algorithm solving it must raise a variable $x_{p, j}$ from $0$ 
to $1$ at a specific time between the $j$th and $(j+1)$st request for 
page $p$. We visualize this using a $2$-dimensional
picture indexed by the pages and time, recording the eviction of 
page $p$ at time $t$ by putting a star at location $(p,t)$
(see Fig.~\ref{fig:stars}).
In classical paging, the intervals for any page partition 
its row into disjoint, tightly-fitting segments. The capacity
constraint of the cache forces the following 
property: of the $n$ intervals (for different pages) containing  
time $t$ (these are indexed $j(p, t)$ for page $p$), 
at least $n-k$ contain a star at some time $\leq t$.
In other words, at least $n-k$ pages must have been evicted from the 
cache since their last request.

The situation is more complex in \wPwTw. Previously, it sufficed 
to record page evictions, because page insertions 
were entirely dictated by the requests: whenever a page 
is requested, it must be inserted in the cache if it were evicted 
after its previous request. So, for an insertion to be feasible, it suffices to just ensure that 
sufficiently many pages are evicted since their previous 
request. In \wPwTw, however, 
page requests can be fulfilled at a later time, so 
evictions alone do not completely describe the state of the cache. 
One option is to explicitly encode page insertions via IP variables,
but then we need packing constraints on these variables to 
enforce the size of the cache. Handling such 
packing constraints in online IPs seems beyond the scope of current
techniques in online algorithms. Another idea is to  
reduce the ambiguity of when pages are inserted in the cache, e.g.,
by enforcing that all page insertions are done the end of their  
request intervals (if the page is not in the cache at the beginning of 
the interval). This would be a useful property, because the state of the 
cache could then be completely described by variables for page evictions. 
This property, however, is false: forcing a page
request to be satisfied at the start/end of its request interval
can be much costlier than doing it somewhere in the middle.
E.g., if a heavy page is being evicted, we should serve some outstanding light requests while there 
is an empty slot in the cache (see 
\Cref{sec:endpoints-example}).

\paragraph{The Hitting Set IP Relaxation.}
To overcome these challenges, we first re-interpret the 
interval covering IP for classical paging.
We again 
use $x_{p, t}$ variables (saying page $p$ is evicted at time $t$).
The cache-size constraint at any time $t'$ 
insists that at least $n-k$ pages are evicted at 
times $\leq t'$ 
since their last request. To implement this, we define an interval for each page $p$ starting at 
the last request for $p$ and ending at $t'$, and write a covering
constraint saying at least $n-k$ of 
these intervals have a star within them
(i.e., $x_{p, t} = 1$ for times $t$ within 
these intervals). Note: there is nothing special about the {\em last}
request for a page before $t'$---we could 
have written these constraints for every choice of 
request of every page before $t'$. 
The additional constraints would be redundant 
given the one containing the last requests, and 
would unnecessarily lead to an exponential-sized IP.

In the \wPwTw problem, however, the request intervals
for a page might overlap, or may even be nested, so it is easier to write constraints
for {\em every} request, rather than to identify
some (non-canonical) {\em last} request before time $t'$.
Extending the previous intuition, we define the following intervals for time 
$t'$: corresponding to a request interval $I = (s(I), e(I))$ for a page
with $s(I) < t'$, there is a  constraint interval $(s(I), \max(e(I), t'))$.
Note that if the request interval extends beyond $t'$, i.e., 
$e(I) > t'$, then we must extend the constraint 
interval \textsf{rightwards} to $e(I)$, since the page might be served 
{\em after} $t'$.
Now, we enforce the same constraint as earlier: for 
any choice of such constraint intervals, one for each distinct page,
at least $n-k$ must have a star in them. We call
these constraint intervals the {\em right extensions} of their respective 
request intervals at time $t'$ (see Fig.~\ref{fig:exts} for an example).

\begin{figure}
    \centering
    \includegraphics[width=5in]{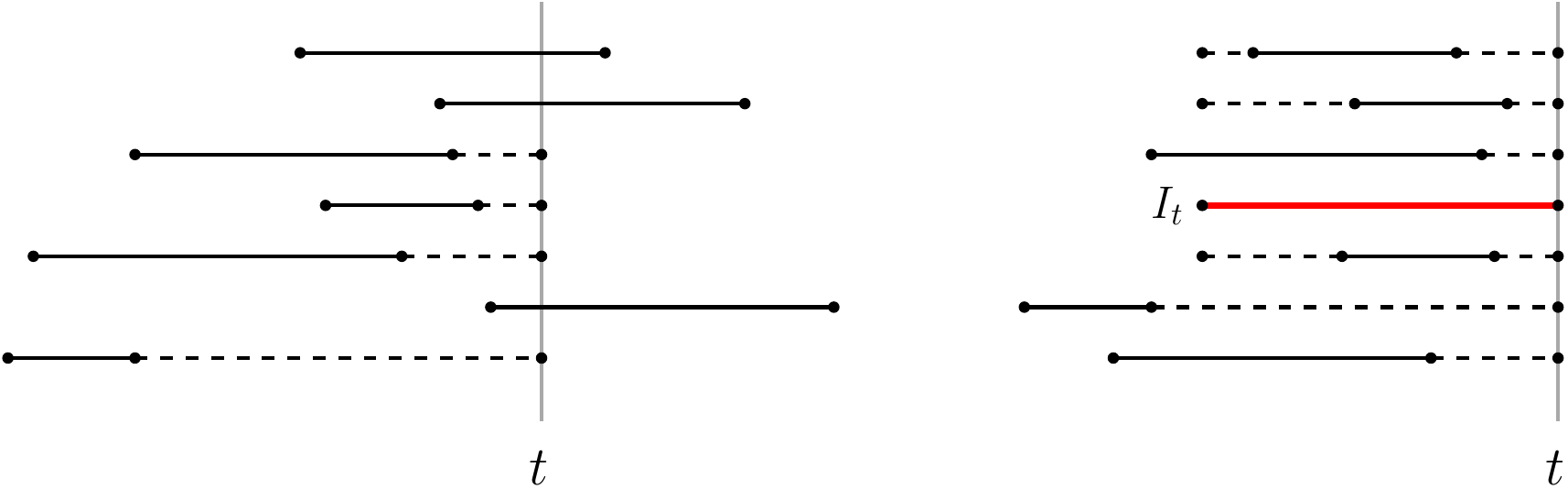}
    \caption{\emph{The left figure illustrates right extensions 
        $\rext{I,t}$ for request
        intervals $I$ shown by solid lines. The dotted lines show how
        these intervals are extended. The figure on the right shows
        double extensions
        $\dext{I,t}$, with the critical interval $I_t$ in red.}}
    \label{fig:exts}
\end{figure}

In classical paging, these %
constraints are valid even if we exclude the 
page currently requested at time $t'$. In other words, 
of the remaining $n-1$ pages, the constraint ensures that at 
least $n-k$ have been evicted ensuring a cache slot for the 
currently requested page. %
All feasible solutions satisfy this constraint since the requested page {\em must}
be in the cache at time $t'$.
This stronger constraint, however, does not hold for \wPwTw. If 
we write the above constraints for $n-1$ pages, then it 
would reserve a cache slot for the remaining page at the current 
time, thereby excluding feasible solutions that do not satisfy 
this property. Conversely, the (weaker) constraints
summing over all $n$ (and not $n-1$) pages 
are not sufficient: they do not reserve 
a cache slot for a requested page at any time during the 
request interval. 

So we need a new set of constraints.
These reserve a cache slot for a page $p$ within 
each request interval $I_t = (s_p, t_p)$ for it. 
Let us exclude this page $p$
and choose a request $I = (s(I), e(I))$, where 
$e(I) \leq t_p$, for each of the remaining $n-1$ pages. 
For each such request (say for a page $p'$), consider a different extended 
constraint interval $(\min(s_p, s(I)), t_p)$. Now we are guaranteed
that in any feasible solution, one of two things happens:
either page $p'$ resides in the cache for the entire extended
interval $(\min(s_p, s(I)), t_p)$ and therefore also for the sub-interval
$(s_p, t_p)$, or it is ``hit'' (inserted or evicted) during
the constraint interval. Since page $p$ must be served in 
its request interval
$(s_p, t_p)$, at most $k-1$ pages can be resident in the cache
during $(s_p, t_p)$, i.e, at least $n-k$ of the $n-1$ pages
are hit during these extended constraint intervals. We call these 
extended intervals {\em double extensions} of their  
request intervals for time $t_p$ (again, see Fig.~\ref{fig:exts}).
Our ``hitting set'' IP comprises these two sets
of requests, for right extensions and double extensions.
We give details of this formulation in \S\ref{sec:IP}. 

\paragraph{Solving the Hitting Set IP Online.}
Loosely, we extend ideas from 
Bansal~{\em et al.}~\cite{BansalBN12} for solving the 
weighted paging IP online to our hitting set IP. There
are some challenges, however. Firstly, the 
hitting set IP
is of exponential size, since we wrote covering constraints
for every choice of request interval for every page. 
Secondly, unlike in weighted paging, there are two sets of 
constraints, one on $n-1$ pages and the other on $n$ pages;  
the weighted paging IP only has the first set. Thirdly, the decision variables are for
$(p, t)$ pairs, and do not uniquely correspond to 
constraint intervals. Nevertheless, we show that, as long as the request intervals for the pages are ``non-nested'', the 
techniques of \cite{BansalBN12} can be adapted to 
our hitting set IP. When the request intervals are nested, we solve
the problem on two carefully selected subsets of the input where the
request intervals are non-nested; then we show, somewhat surprisingly, that the combined solution satisfies the general instance. The competitive 
ratio of this algorithm is $O(\log k)$, asymptotically the same as
weighted paging. This algorithm appears in~\S\ref{sec:first-lp}.

\paragraph{Converting IP Solution to Cache Schedule Online.}
As described earlier, the IP solution only gives us a set of stars,
indicating ``hits'' for each page where each hit might either
represent insertion or eviction of the page from the cache.  Moreover,
the IP solution does not necessarily give {\em all} the insertions and
evictions. E.g., in the case of instantaneous request intervals
representing classical weighted paging, our IP is identical to the
standard interval covering IP and only gives page evictions. Indeed,
the bulk of our technical work is in converting a
feasible IP solution to an actual cache schedule that satisfies all
requests. This is further complicated by the fact that this
translation has to be done online.

The main difficulty is the following: when a
request interval $I$ for a page $p$ arrives, we don't know how long to
wait before serving it. For instance, suppose the IP has a 
``hit'' for a heavy page. The example in 
\Cref{sec:endpoints-example}
shows that we must use this opportunity to serve requests for 
light pages that are currently waiting. But, which pages should 
we serve? Suppose we serve a page $p$ at some time $t \in I$ by
loading $p$ in the cache, and evict it soon after to serve 
other light pages. Now if another interval $I'$ for $p$ arrives after
$t$ and $I'$ overlaps with (or is even nested in) $I$, 
it is clear that we should have waited to load $p$
till $I'$ arrives. In the offline case, we can use a
reverse-delete step where we undo such mistakes. But, in the online
setting, we must find a careful balance between waiting ``long
enough'' and servicing outstanding requests. Specifically, we build a
tree structure over the request intervals (which may not be laminar
in general), and use the structural properties to argue that our algorithm can find a balance between these two competing goals. 
The online conversion 
algorithm appears in \S\ref{sec:solving-wpwtw-using}. 

While we cannot show that our
algorithm achieves the ultimate goal of being $O(\log k)$-competitive,
we do not know any worse gaps for our approach. Indeed, the fact that
the integrality gap of the hitting set formulation is constant, as 
evidenced by our offline solution, gives us hope that the ideas here will
lead to further improvements.

\subsection{Related Work}
\label{sec:related-work}

Azar et al.~\cite{AzarGGP17} study the online service problem with
delays, where a single server services requests in a metric
space. Each request has an associated monotone delay function that
gives the cost of serving requests at each time after its arrival. The
server pays for the total movement plus delay costs. They give an
$O(h^3)$-competitive algorithm for HSTs of height $h$. They extend the
result to $k$ servers at a loss of a factor of $k$, which gives an
$O(k)$-competitiveness for \wPwTw. (Progress
on related problems appears in~\cite{AzarT19}.)  A related problem
is \emph{online multilevel aggregation}~\cite{BienkowskiBBCDF16} where
a single server sits at the root of a tree, requests arrive at the
leaves, and the server occasionally goes to service some subset of
requests and returns to the root. The cost is again the sum of
movement and delay costs. Buchbinder et al.\ gave an
$O(h)$-competitive algorithm for $h$-level
HSTs~\cite{BuchbinderFNT17}, improving on~\cite{BienkowskiBBCDF16};
the model itself combines elements of TCP
acknowledgment~\cite{KarlinKR03} and online joint
replenishment~\cite{BuchbinderKLMS13}. Online problems with delays
were first proposed by Emek et al.~\cite{EmekKW16} for online
matching; see~\cite{AzarCK17,AshlagiACCGKMWW17} for other
work. %

In the classical paging/caching problem with instantaneous requests,
each interval is of length zero and must be satisfied immediately.
Belady's offline algorithm (Farthest in Future) is optimal for the
number of evictions~\cite{Belay66}; in contrast, the offline \wPwTw
problem is APX-hard. We know deterministic $k$-competitive and
randomized $O(\ln k)$-competitive algorithms; both are
optimal~\cite{SleatorT85,FiatKLMSY91}.  \emph{Weighted paging} is
equivalent to the $k$-server problem on a weighted star, so
deterministic $k$-competitiveness follows from the algorithm
$k$-server on
trees~\cite{ChrobakKPV91}.  %
Bansal et al.~\cite{BansalBN12} gave a randomized
$O(\ln k)$-competitive algorithm for weighted paging, illustrating the
power of the primal-dual technique for these
problems. They used an interval covering IP give
by~\cite{Bar-NoyBFNS01, CohenK99}, which we extend in our work.

\paragraph{Paper Outline.}
In Section~\ref{sec:IP},
we describe the new~\ref{eq:IP} formulation for \wPwTw. In
Section~\ref{sec:solving-wpwtw-using}, we show how a solution to
this~\ref{eq:IP} can be used to generate a caching schedule online. 
We give the corresponding offline algorithm in \Cref{sec:overlap}.
We show how to (approximately) solve the IP, both offline and online, in
\Cref{sec:first-lp}.
The reduction from the \wPwD problem to the \wPwTw problem
that changes the objective by at most an $O(1)$-factor 
appears in \Cref{sec:penalties}.
We prove APX-hardness of \wPwTw
in \Cref{sec:np-hard}, and give some illustrative examples in
\Cref{sec:exs}.

\section{The IP Relaxations for \wPwTw and \wPwTwP}
\label{sec:IP}

There is a universe of $n$ pages, and the cache can hold $k$ pages at
any time. Each page $p$ incurs a cost when we evict it from the cache,
which is denoted by its \emph{weight} $w(p) \geq 0$.  In the {\em
  (weighted) Paging with Time Windows} (\wPwTw) problem, each request
specifies a page $p$ and an interval $I = [s(I),
t(I)]$: %
the page $p$ must be in the cache at some time during this interval
$I$. Since the only times of interest in the problem are the start and
end times of intervals, we assume without loss of generality (wlog)
that $s(I), e(I)\in \Z$,
so the interval
$I := [s(I), \ldots, e(I)]$.  %
Note that in the traditional paging problem, each interval $I$
contains a single timestep, i.e., $I = \{t\}$ for some $t$.
In the online setting, a request comprising the identity of
the page and the end time of the interval $e(I)$ 
(i.e., the {\em deadline}) is revealed at its start time $s(I)$.
This is known as the {\em clairvoyant} setting in the 
literature; strong lower bounds are known for the  
 non-clairvoyant setting where the deadline is 
only revealed at time $t(I)$~\cite{AzarGGP17}.

We write a ``hitting set'' \emph{integer} programming relaxation for
this problem: this IP does not capture the \wPwTw problem exactly, but
we show that (a)~it contains only valid constraints, and hence provides
a lower bound on the optimal cost, (b)~it can be solved approximately in
polynomial time, and (c)~the ``relaxation'' gap is small, i.e., a
solution to this IP can be used to obtain a feasible solution to the
original \wPwTw problem.

The IP has Boolean variables $x_{pt}$ for each page-time pair $(p,t)$,
with this variable being set if the page $p$ is ``hit'' at time $t$:
it is either brought into or evicted from the cache at time $t$.  
We assume that for each time $t \in \Z$, there is exactly one request
interval $I$ having $e(I) = t$; this incurs no loss of generality, since
we can remove timesteps with no deadlines, and split times with multiple
intervals ending at it. Hence each request interval $I$ corresponds to a
unique page $\page(I) \in [n]$. For time $t$, let $I_t$ and $p_t$ be the
unique interval ending at time $t$, and its corresponding page; we call
these the \emph{critical interval and page} for time $t$.

As described in the introduction, we use two sets of extensions for 
request intervals to define these constraints: 
\begin{gather}
 \text{if $s(I) \leq t \quad \implies\quad$ right extension of $I$~~~}
 \rext{I,t} := [s(I), \ldots, \max(t, e(I))]. \\
 \text{if $e(I) \leq t \quad \implies\quad$ double extension of $I$~~~}   \dext{I,t} := [\min(s(I_t), s(I)), \ldots, t].
\end{gather}
(See Figure~\ref{fig:exts}.)
The ``Hitting Set IP'' below has variables $x_{pt} \in \{0,1\}$:
\begin{alignat}{2}
  \min \sum_{p,t} w(p) \, & x_{p,t} \tag{IP} \label{eq:IP}\\
  \sum_{I \in \calC} \sum_{t' \in \rext{I,t}} x_{\page(I),t'} &\geq 1
  & \quad &\forall t \, \forall \calC \text{ with $k+1$ requests for
    distinct pages starting
    before $t$}  \tag{R1} \label{eq:1} \\
  \sum_{I \in \calC} \sum_{t' \in \dext{I,t}} x_{\page(I),t'} &\geq 1
  & \quad &\forall t \, \forall \calC \text{ with $k$ requests for distinct
    pages  (excluding $p_t$)  ending before $t$}  \tag{D1} \label{eq:2} 
\end{alignat}
We now show that these sets of constraints are valid:
\begin{restatable}{claim}{IPValid}
\label{cl:valid}
  The constraints~(\ref{eq:1}) and~(\ref{eq:2}) are valid for any  solution to \wPwTw.
\end{restatable}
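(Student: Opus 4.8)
The plan is to verify each of the two constraint families separately against an arbitrary feasible cache schedule. Fix a feasible solution to the \wPwTw instance, i.e.\ a cache schedule serving every request in its interval, and let $x_{pt}=1$ exactly when page $p$ is either brought into or evicted from the cache at time $t$ (a ``hit''). The key observation, used for both families, is that if a page $p$ experiences \emph{no} hit during an interval $[a,b]$, then its membership status in the cache is constant throughout $[a,b]$: it is either resident for the whole interval or absent for the whole interval.

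For constraint family~(\ref{eq:1}): fix a time $t$ and a collection $\calC$ of $k+1$ request intervals for distinct pages, each starting before $t$. Suppose for contradiction that the left-hand side is $0$, so for every $I\in\calC$ the page $\page(I)$ has no hit inside $\rext{I,t}=[s(I),\max(t,e(I))]$. For each such $I$, the page $\page(I)$ must be served somewhere in $I=[s(I),e(I)]\subseteq\rext{I,t}$, so by the no-hit observation $\page(I)$ is resident throughout $\rext{I,t}$, and in particular resident at time $t$ (since $t\in[s(I),\max(t,e(I))]$ because $s(I)\le t$). Thus all $k+1$ distinct pages of $\calC$ are simultaneously in the cache at time $t$, contradicting cache size $k$.

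For constraint family~(\ref{eq:2}): fix $t$, let $I_t=[s(I_t),t]$ and $p_t=\page(I_t)$ be the critical interval and page, and fix a collection $\calC$ of $k$ request intervals for distinct pages other than $p_t$, each ending before $t$ (so $e(I)\le t$ for $I\in\calC$). Again suppose the left-hand side is $0$: for every $I\in\calC$, page $\page(I)$ has no hit inside $\dext{I,t}=[\min(s(I_t),s(I)),t]$. Since $I\subseteq\dext{I,t}$ and $\page(I)$ must be served in $I$, the no-hit observation gives that $\page(I)$ is resident throughout $\dext{I,t}$, which contains $[s(I_t),t]$; so all $k$ pages of $\calC$ are resident for the entire critical interval $[s(I_t),t]$. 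But $p_t$ must be served somewhere in $I_t=[s(I_t),t]$ as well, so at that moment the cache holds $p_t$ together with the $k$ pages of $\calC$ — that is $k+1$ distinct pages — again contradicting the cache bound.

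I expect the only subtlety — really a bookkeeping point rather than a genuine obstacle — is making sure the interval endpoints line up: that $s(I)\le t$ really does place $t$ inside the right extension, and that the double-extension left endpoint $\min(s(I_t),s(I))$ is $\le s(I_t)$ so that the critical interval sits inside every $\dext{I,t}$. Both are immediate from the definitions, so the proof is short; I would simply write out the two paragraphs above, perhaps factoring the ``no hit implies constant residency'' statement into a one-line sub-observation to avoid repeating it.
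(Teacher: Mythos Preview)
Your proof is correct and is essentially the contrapositive of the paper's own argument: where the paper picks a single page absent from the cache at the relevant moment and shows that page must have a hit in its extension, you suppose no page has any hit and deduce that all of them are simultaneously resident, overflowing the cache. The ``no hit implies constant residency'' observation you isolate is exactly what drives the paper's case analysis, so the two proofs are logically identical in content.
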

 \begin{proof}
  Fix a solution for \wPwTw. Set $x_{p.t}$ to 1 if this page $p$ is evicted at time $t$ or loaded into the cache at time $t$ in this solution.  Consider the
  constraint~(\ref{eq:1}) for a collection $\calC$ and time $t$. At time $t$, one
  of the $k+1$ pages corresponding to $\calC$ is not in the cache---let the
  corresponding request interval be $I \in \calC$ for page $p =
  \page(I)$. Two cases arise: in the  solution the page $p$ is in
  the cache either (a)~at some time during $[s(I),t)$, or (ii)~at some
  time during $[t, e(I)]$. (The latter case arises only if
  $t \leq e(I)$.) In the first case, $p$ must have been evicted during
  $[s(I), t]$, whereas in the second case it must have been brought
  into the cache during $[t, e(I)]$. In either case the $x_{p,t'}$
  variables sum to at least $1$ over the right-extended interval for $I$ with
  respect to $t$.

  Now consider the constraint~(\ref{eq:2}) for a collection $\calC$ and time $t$,
  where the critical request at time $t$ is $I_t$ with $p_t :=
  \page(I_t)$. In the solution, let page $p_t$ be in the cache
  at some time $\tau \in I_t$. At this time, at least one of the $k$ pages
  corresponding to the intervals in $\calC$ is not in the cache; say this
  interval is $I$ for page $p := \page(I)$. Again two cases arise: in
  the optimal solution this page is in the cache either (i)~at some time
  during $[s(I), \tau]$ (where this case arises only if $s(I) \leq
  \tau$), or (ii)~at some time $t$ during $[\tau, e(I)]$ (again, this case arises only if $\tau \leq e(I)$ -- note that at least one of these two cases must happen). If the former
  case happens, then $p$ must have been evicted during $[s(I), \tau]$,
  whereas if the second case happens, then $p$ must have been brought in
  the cache during $[\tau, e(I)]$. Since $\tau \in I_t \in
  [s(I_t),e(I_t) = t]$, in both cases the $x_{p,t'}$ variables sum to
at least   $1$ for the doubly-extended interval for $I$ with respect to $t$.
\end{proof}

In the {\em (weighted) Paging with Time Windows and Penalties}
(\wPwTwP) problem, each request interval $I$ has an associated penalty
value $\ell(I) \geq 0$, which is the penalty for not satisfying the
request associated with interval $I$. Not all requests must be
satisfied, but if some request is not satisfied we must pay the
penalty for it. The IP for the \wPwTwP problem is very similar, where
$y_I$ is the indicator for whether we choose to take the penalty.
\begin{alignat}{2}
  \min \sum_{p,t} w(p) \,  x_{p,t} + \sum_I \ell(I) \, & y_I \tag{IPp} \label{eq:IPp}\\
  \sum_{I \in \calC} \Big( y_I + \sum_{t' \in \rext{I,t}}
  x_{\page(I),t'} \Big) &\geq 1
  & \quad &\forall t \, \forall \calC \text{ with $k+1$ requests for
    distinct pages starting
    before $t$}  \tag{R1} \label{eq:1p} \\
  \sum_{I \in \calC} \Big(y_I + \sum_{t' \in \dext{I,t}} x_{\page(I),t'}\Big)
  &\geq 1 - y_{I_t}
  & \quad &\forall t \, \forall \calC \text{ with $k$ requests for distinct
    pages  (excluding $p_t$)  ending before $t$}  \tag{D1} \label{eq:2p} 
\end{alignat}
This IP is a strict generalization of~(\ref{eq:IP}), since we can set
$\ell(I) = \infty$ to force the $y_I = 0$. The proof of validity
of~(\ref{eq:IPp}) for the \wPwTwP problem is identical to
\Cref{cl:valid} above, and is omitted.

\section{Solving \wPwTw and \wPwTwP Online using Online Solutions to (\ref{eq:IP}) and (\ref{eq:IPp})}
\label{sec:solving-wpwtw-using}
\newcommand{\Sat}{\mathsf{Sat}}

Now that we have the IPs, we need to solve them online, and also show how to convert a solution 
into one for the \wPwTw or \wPwTwP problem. Indeed, (\ref{eq:IP}) and (\ref{eq:IPp}) do 
not have any explicit capacity constraints, so we need to extract a ``schedule'' from
the IP solution in an online manner. In this section we show the latter step; we discuss 
solving the IPs in \Cref{sec:first-lp}.

\begin{theorem}
  \label{thm:round-on}
  There is an online algorithm that converts an
  $\alpha$-competitive integral solution to~(\ref{eq:IPp}) into a
  valid $O(\alpha\log n)$-competitive solution for the \wPwTwP instance.
  As a special case, there is an online algorithm that converts an
  $\alpha$-competitive integral solution to~(\ref{eq:IP}) into a
  valid $O(\alpha\log n)$-competitive solution for the \wPwTw instance.
\end{theorem}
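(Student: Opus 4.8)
The plan is to maintain, online, a cache schedule whose total cost (eviction weight plus penalties paid) is within $O(\log n)$ of the cost of the given integral IP solution. The starting point is the observation that the IP solution hands us a set of ``hits'' $x_{p,t'}=1$, each of which we are free to interpret as either an insertion or an eviction of page $p$ at time $t'$, plus a set of intervals $I$ with $y_I = 1$ that we are allowed to leave unsatisfied (paying $\ell(I)$). The constraints~(\ref{eq:1p}) and~(\ref{eq:2p}) guarantee two structural facts for every time $t$: (R1) among any $k+1$ pages whose request intervals start before $t$, at least one has a hit in its right extension or is a penalty interval; and (D1) for the critical interval $I_t$, either we pay its penalty, or among any $k$ other pages with intervals ending before $t$, at least one is hit in its double extension (or is a penalty interval). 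The first fact is what lets us keep the working set small enough to fit in the cache; the second is what lets us actually serve the critical page inside its window.

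First I would set up the bookkeeping: maintain a pool of ``outstanding'' request intervals (those whose window has started but which we have neither served nor abandoned to a penalty), and a notion of which page currently ``owns'' each of the $k$ cache slots. When a hit $x_{p,t'}=1$ is scheduled at time $t'$, or more precisely when we reach time $t'$, we use it as the trigger to reorganize the cache: we may evict $p$ (charging $w(p)$ to that hit) and in the freed slot load some other page $q$ to serve outstanding requests for $q$. The delicate point, flagged in the introduction's endpoint example, is deciding \emph{which} outstanding page to load into a transiently free slot, and how long to wait before serving an interval rather than hoping a cheaper opportunity arrives. I would handle the $O(\log n)$ loss here by a doubling/weight-class argument: group the outstanding intervals by weight class (powers of two of $w(\page(I))$), and within each class use the tree structure over request intervals mentioned in \S\ref{sec:solving-wpwtw-using}'s outline to decide servicing, so that each IP hit of weight $w$ is charged $O(1)$ per class and there are $O(\log n)$ classes --- whence the $O(\log n)$ overhead. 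An interval $I$ is abandoned to its penalty only when the IP solution itself has $y_I = 1$, so penalty cost is paid verbatim, not amplified.

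The core correctness argument is feasibility: every outstanding interval $I$ with $y_I = 0$ must be served within $[s(I), e(I)]$. Here I would use (D1). Suppose the critical interval $I_t$ at time $t = e(I_t)$ is still unsatisfied and $y_{I_t} = 0$. If at every $\tau \in I_t$ the cache were ``full of other pages'' in the sense that $k$ distinct pages $p'\neq p_t$ with intervals ending before $t$ are resident throughout $[s(I_t),\tau]$, then taking $\calC$ to be those $k$ intervals, the left-hand side of~(\ref{eq:2p}) would be $0$ while the right-hand side is $1$, a contradiction --- so some slot is free at some $\tau\in I_t$, or some resident page gets hit inside its double extension (which our reorganization rule converts into an opportunity to swap $p_t$ in). This is the step I expect to be the main obstacle: translating the static combinatorial guarantee of~(\ref{eq:2p}) into a \emph{dynamic} invariant that our online servicing rule maintains --- namely that whenever a critical deadline looms, the rule has either already served that page or is presented with a slot it can legitimately use, without having ``wasted'' that slot on a lower-priority interval. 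Making this robust against nested/overlapping windows is exactly where the tree structure and the careful waiting policy are needed, and I would prove it by an exchange/charging argument tied to that tree: each premature service that turns out to be a ``mistake'' (a later nested interval for the same page) is charged against the hit that paid for it, with the tree depth controlled so the charge stays $O(1)$ per weight class. Finally, the cost accounting: eviction weight is $O(\log n)$ times the $x$-cost by the weight-class charging, penalties are exactly the $y$-cost, and since the IP is a valid relaxation (\Cref{cl:valid}), the IP optimum is at most $\opt$; combined with $\alpha$-competitiveness of the IP solver this yields the claimed $O(\alpha \log n)$ bound, and the \wPwTw statement is the $\ell(I)=\infty$ special case.
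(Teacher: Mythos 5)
Your proposal is a plan rather than a proof: it correctly names the ingredients that the paper also uses (interpreting stars as insertions-or-evictions, paying penalties verbatim exactly when $y_I=1$, weight classes, a tree structure over request intervals), but it explicitly defers the step you yourself identify as ``the main obstacle,'' and that step is where all the work is. Concretely, the paper's algorithm is driven not by the arrival of a star but by the deadline $t$ of the critical interval $I_t$: if the cache is full it evicts the cheapest page $p_{\min}$, and when $w(p_t)\le 2w(p_{\min})$ it must decide which cached pages to evict and which outstanding requests to serve. The key enabling lemma (\Cref{lem:Z}) shows, via a Hall's-theorem argument, that there is a page $\pstar$ in the set $Z^\star$ of cached pages whose double extensions are hit, with $w(U^\circ_{\le 2w(\pstar)}) \le 2\,w(Z^\star_{\le w(\pstar)})$ --- and the proof of $|U^\circ|\le|Z^\star|$ uses the \emph{right-extension} constraints~(\ref{eq:1}), not~(\ref{eq:2}). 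Your sketch assigns~(\ref{eq:1}) only the vague role of ``keeping the working set small'' and never articulates how the two constraint families cooperate: (\ref{eq:2}) guarantees $Z^\star\ne\emptyset$ (so some eviction is payable), while~(\ref{eq:1}) guarantees that the payable evictions can subsidize serving \emph{all} cheap outstanding requests not hit by $\astar_t$. Without this, your rule has no way to certify that when a light critical page reaches its deadline it is actually among the requests served.

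Two further points. First, your cost accounting asserts ``there are $O(\log n)$ classes,'' which is false --- the number of weight classes is $\Theta(\log(w_{\max}/w_{\min}))$ and unbounded in $n$. The paper's $\log n$ arises differently (\Cref{cl:case1altr}): when a time of class $c$ charges the stars in a window, the pages of weight below $w(\pdag)/4n$ contribute at most a quarter of the needed weight (there are at most $n$ of them), so the charge can be confined to classes in $[c-\log n-3,\,c]$ plus a geometrically decaying charge to higher classes; summing over $c$, each star absorbs only $O(\log n)$. Second, the real danger with overlapping/nested windows is not ``premature service'' per se but a single star being charged unboundedly many times by successive critical deadlines; the paper prevents this with the set $U^\star_t$, the EDF prefix of total weight $\le 4w(\pdag)$ in \cref{l:serve2-online}, and the charging forest with its leaf/lone-child case analysis (\Cref{cl:c1,cl:c2}). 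Your ``exchange/charging argument tied to the tree'' gestures at this but supplies neither the forest construction nor the disjointness claims (\Cref{cl:case1alt,cl:case1alt'}) that make each star's total charge bounded. As written, the proposal would not compile into a proof.
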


Although we stated this theorem in terms of the more general \wPwTwP
problem, we will actually prove it for the simpler \wPwTw problem.
This is without loss of generality since an integer solution to 
(\ref{eq:IPp}) for an instance of the \wPwTwP problem 
already specifies the page requests that are being 
satisfied by the solution, and the ones where the solution incurs
the penalty. Given an instance of the \wPwTwP problem, 
we simply remove the requests in the latter set to 
create an equivalent instance of the \wPwTw problem. On this 
instance, we
apply the above theorem for the \wPwTw problem to recover the 
theorem for the \wPwTwP instance.

In the rest of the paper, we move between solutions $x$
to~(\ref{eq:IP}) and their characteristic set
$A^\star := \{(p,t) \mid x_{p,t} = 1\}$.  Visually, thinking of time
as the $x$-axis and the $n$ pages as the $y$-axis, the solution
$A^\star$ corresponds to a set of ``stars'' in the $2$-dimensional
plane.  Let us list some properties of the online solution $\astar_t$
to~\eqref{eq:IP} (which should satisfy all the constraints
corresponding to times $t$ and earlier) that are maintained by the
algorithm in \Cref{sec:first-lp}.  
\begin{OneLiners}
\item[(A1)] \emph{Monotonicity}: $\astar_t \subseteq \astar_{t+1}$ for
  all $t$.
\item[(A2)] {\em Past-Preservation}: At time $t$ the algorithm only
  adds stars corresponding to times $t$ or later. Ideally, at time $t$,
  it should only add stars at time $t$, with the following exception.
\item[(A3)] {\em Sparsity}: for every page $p$, $\astar_t$
  contains at most one star $(p,t')$ with $t' > t$. Furthermore, if
  $\astar_t$ has such a star, then this star  hits all the request intervals
  for $p$ which contain time $t$. In fact, our online algorithm for
  \wPwTw does not need to know the exact location of the stars after
  time $t$---it just needs to know the set of pages $p$ for which the
  solution $\astar_t$ contains such a star.
\end{OneLiners}

\medskip
The main idea of the algorithm is that if the cache is full and we need
to evict a heavy page $p$, we should spend about $w(p)$ amount of weight
in serving other outstanding requests at time $t$. The requests that
need to be serviced need to be carefully chosen, because there are
conflicting goals: (i) we want to service the cheaper requests,
because this way we can service many of these, (ii) we want to go by EDF
(Earliest Deadline First) order because the ones ending soon are more
critical, and finally (iii) we prefer to service the requests which are
hit by the solution $A^\star_t$ because we can directly pay for these service
costs. Interestingly, we show that we can simultaneously take care of
all of these three requirements. Moreover, we can identify a weight $w$
such that we can take care of {\em all} outstanding requests which are
cheaper than $w$ and are not hit by $A^\star_t$.

\subsection{The Online Algorithm for Non-Overlapping Requests}
\label{sec:non-overlap}

\begin{figure}[t]
  \begin{algorithm}[H]
    \ForEach{$t=0, 1, \ldots$}{
      \textbf{let} $I_t$ be the interval with deadline $t$, and let $p_t
      \gets \page(I_t)$ \\
      \label{l:firsttest} \If{cache $C(t)$ full and $I_t$ not satisfied}{
        \textbf{evict} the least-weight page $p_{\min}$ in
        $C(t)$ \label{l:evict1} \\
        \If{$w(p_t) \leq 2\,w(p_{\min})$}{

          \medskip
          $Z^\star \gets \emptyset$. \label{l:Zb1} \\
          \For{every page $p$ in $C(t)$}{
            $I^p_t \gets$ the request interval $I$ with $\page(I) = p$ and
            largest ending time $e(I) < t$. \label{l:Ipee} \\
            \textbf{if} $\dext{I^p_t,t}$ is hit by $\astar_t$ then add %
            $p$ to
            $Z^\star$. \label{l:Ze1}
          }
          
          $U \gets$ unsatisfied request intervals active at time $t$ %
          (one per page, page requests are disjoint%
          ). \label{l:Ud}
          \\
          $U^\circ \gets \{I \in U \mid \nexists t' \in I \text{ with }
          (\page(I),t') \in \astar_t \}$ be intervals in $U$ \textbf{not} hit by
          $\astar_t$ \label{l:Ucirc}\\

          \medskip
          \textbf{serve} and \textbf{evict} all requests in $U \setminus
          U^\circ$. \label{l:serve1} \\ 

            \medskip

            \textbf{let} $U^\circ_{\leq w}$ and $Z_{\leq w}^*$ denote pages in $U^\circ$ and $Z^\star$
            respectively with weight at most $w$. \\
            \textbf{let} $\pstar$ be a page in $Z^\star$ such that $w(U^\circ_{\leq
              2w(\pstar)}) \leq 2 \cdot w(Z^\star_{\leq w(\pstar)})$. \label{l:goodp}
            \\ 
            \textbf{evict} all pages in $Z^\star_{\leq
              w(\pstar)}$. \label{l:evict2} \\
            \textbf{serve} and \textbf{evict} all requests in $U^\circ_{\leq 2w(\pstar)}$. \label{l:serve2} \\
        }
      }
      \textbf{if} $I_t$ not satisfied \textbf{then}
      bring page $p_t$ into
      cache. 
      \label{l:retain}
    }
    \caption{ConvertOnline$(\text{Online~\eqref{eq:IP} solution }\astar_t)$}\label{algo:first}
  \end{algorithm}
  \caption{Online Algorithm to Service Request Intervals}
  \label{fig:algo}
\end{figure}
  
In this section, we assume that no two request intervals for the same
page overlap---that is, for any pair of requests $I, I'$ for the same page,
$I \cap I' = \emptyset$. This gives a simpler algorithm than for
the general case, which follows the same approach but has to deal with
the case that multiple request intervals for the same page may try to
charge to the same star in $A^\star_t$. (See
\S\ref{sec:online-conversion} for the online algorithm for the general
case, and
\Cref{sec:overlap}
for the offline algorithm).

\Cref{algo:first} (see \Cref{fig:algo})
shows how to convert an online solution $\astar_t$ to~\eqref{eq:IP} 
into a feasible solution to the underlying \wPwTw instance. At each
time~$t$, we begin with some pages $C(t)$ in the cache. If the unique
request $I_t$ ending at time $t$ is not already satisfied, and the cache
is full, we evict the cheapest page $p_{\min}$ in the cache. We then potentially serve some other pending requests by
bringing in and then evicting them, and also potentially remove some
other pages from the current cache. (These services and removals help
pay for evicting $p_{\min}$.) 
\begin{figure}
    \centering
    \includegraphics[width=5in]{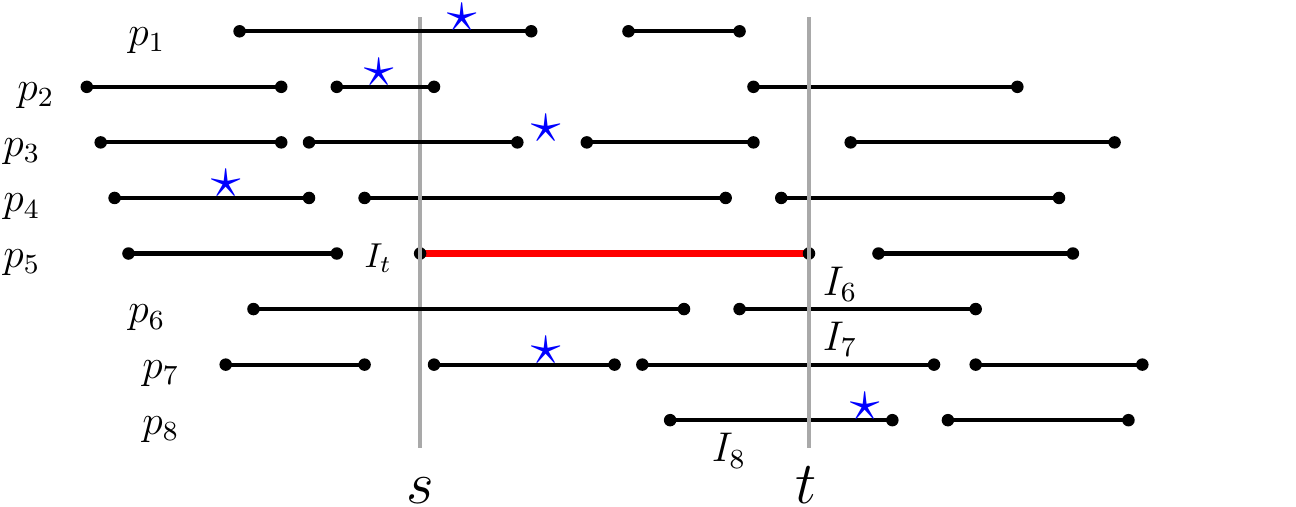}
    \caption{\small\emph{Illustration of the definitions used in
        Algorithm~\ref{algo:first}. The request intervals for each page
        are shown in one horizontal line (these do not overlap in our
        example). Focus on time $t$: the cache has pages
        $C(t)=\{p_1, \ldots, p_4\}$. The page $p_5 = \page(I_5)$ is
        critical at time $t$, with critical interval $I_t=[s,t]$. The
        solution $A^\star$ is given by the stars, each of which
        corresponds to a star $(p,t)$ in the natural manner. The set
        $Z^\star = \{p_1, p_2, p_3\}$ and assuming $I_6, I_7, I_8$
        are unsatisfied at time $t$, the set $U^\circ = \{I_6,
        I_7\}$.}}
    \label{fig:algoex}
\end{figure}

Specifically, for every page $p$ in $C(t)$, define $I^p_t$ to be the
most recent request for $p$ which ends before $t$ --- this is
well-defined because requests don't overlap.
Define $Z^\star$
to be the pages in $C(t)$ for which the interval $\dext{I^p_t,t}$ is hit
by $\astar_t$. (Since $\dext{I^p_t, t}$ ends at time $t$, this only
requires the knowledge of stars in $\astar_t$ at or before time $t$). We
can directly pay for evicting these pages from the
cache. But the situation is tricky---some $\dext{I^p_{t'},t'}$ for a future
time $t'$ may also be hit by the same star in 
$A^\star_t$. So we evict a subset of $Z^\star$---ones for which we
are sure that the corresponding stars of $A^\star_t$ won't be charged
again in the future.

To do this, the first simple observation is that we need to do this charging
only when the critical page is not much heavier than the cheapest page
in the cache, else we can charge the eviction to the much heavier page in the
cache. We define $U$ to be the set of outstanding requests at
time $t$ and $U^\circ$ to be the subset of $U$ which are not hit by
$\astar_t$ (lines~\ref{l:Ud}--\ref{l:Ucirc})---by the sparsity property,
these are the pages $p$ for which $\astar_t$ does not currently have a star beyond time $t$. We service all the requests
in $U \setminus U^\circ$ immediately (we service  a request by loading the corresponding page in the cache, and ``evict a request'' by evicting the corresponding page from the cache)---these request intervals are hit by $A^\star_t$ and
can be directly paid for (because of the non-overlapping intervals). It is
trickier to decide which requests in $U^\circ$ to service. In
\Cref{lem:Z} 
we show there is a page $\pstar$ in $Z^\star$ such that
$w(U^\circ_{\leq 2w(\pstar)}) \leq 2 \cdot w(Z^\star_{\leq w(\pstar)})$,
where the notation $X_{\leq a}$ denotes all the stars in $X$ of
weight at most $a$. We service all the requests in
$U^\circ_{\leq 2w(\pstar)}$ and evict the pages
$Z^\star_{\leq w(\pstar)}$. This ensures that all the remaining
unsatisfied requests are much heavier than the current pages remaining
in the cache. By the observation at the start of this paragraph, the
stars in $A^\star_t$ which are being charged for the eviction of
$Z^\star_{\leq w(\pstar)}$ are not going to be charged again.

Finally, we serve $I_t$ by bringing $p_t = \page(I_t)$ into the cache if
still needed. Observe that the cache $C(t+1)$ at the start of time $t+1$
is contained within $C(t) \cup \{p_t\}$, since all other pages we
satisfy at time $t$ are also evicted. Moreover, if $C(t)$ was full, the cheapest page in $C(t)$ is evicted, and other pages from
$C(t)$ may be evicted too.

\subsubsection{The Analysis}

We first need some
supporting claims to show that the algorithm is well-defined, and then
bound the cost. 

\begin{restatable}{claim}{OnlineBetween}
  \label{cl:0}
  Suppose a page $p$ is evicted from the cache at time $t_1$ but is in
  the cache at the end of time $t_2 > t_1$. Then there must exist a
  request interval $I$ for page $p$ with $t_1 < s(I) \leq e(I) \leq t_2$.
\end{restatable}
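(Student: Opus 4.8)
The plan is to exploit a structural invariant of \Cref{algo:first}: at every time $\tau$, the only page that can belong to $C(\tau+1)$ but not to $C(\tau)$ is the critical page $p_\tau$, and this happens only when $p_\tau$ is inserted by line~\ref{l:retain}, which in turn requires the critical interval $I_\tau$ to be unsatisfied at that moment. All other pages brought into the cache during time $\tau$ are loaded in the ``serve and evict'' steps (lines~\ref{l:serve1} and~\ref{l:serve2}) and evicted again within the same step, so they do not end up in $C(\tau+1)$. Moreover, whenever the algorithm enters the ``cache full'' branch we have $p_\tau \notin C(\tau)$ --- otherwise $I_\tau$, which is active at time $\tau$, would already be satisfied --- so the pages actually removed there ($p_{\min}$, and the subset $Z^\star_{\le w(\pstar)}$ of $Z^\star \subseteq C(\tau)$) are all distinct from $p_\tau$. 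I expect writing out this invariant carefully to be the one slightly fiddly part; everything after it is a short extremal-index argument.

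Granting the invariant, note first that ``$p$ is evicted at time $t_1$'' means $p \in C(t_1)$ and $p \notin C(t_1+1)$: $p$ must be resident to be evicted, and it is not reinserted afterwards since the only insertion in line~\ref{l:retain} is of $p_{t_1}\neq p$. Since also $p \in C(t_2+1)$ and $t_2 \ge t_1+1$, the set $S := \{\tau : t_1 < \tau \le t_2,\ p \in C(\tau+1)\}$ is nonempty; let $t_3 := \min S$. By minimality of $t_3$ we have $p \notin C(t_3)$ while $p \in C(t_3+1)$, so the invariant forces $p = p_{t_3}$ and tells us that $I := I_{t_3}$ is a request interval for $p$, with $e(I) = t_3$, which is unsatisfied at the moment line~\ref{l:retain} executes at time $t_3$. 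As $t_1 < t_3 \le t_2$, this already gives $e(I) = t_3 \le t_2$.

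It remains to prove $s(I) > t_1$. Suppose instead that $s(I) \le t_1$. Combined with $t_1 < t_3 = e(I)$, this puts the integer $t_1$ inside $I$; but $p \in C(t_1)$, so $p$ sits in the cache at time $t_1 \in I$, which means $I$ is satisfied no later than the start of time $t_1$. Since a satisfied request remains satisfied, $I$ would then be satisfied at time $t_3 > t_1$, contradicting the conclusion of the previous paragraph. Hence $s(I) > t_1$, and we obtain a request interval $I$ for $p$ with $t_1 < s(I) \le e(I) \le t_2$, as required.
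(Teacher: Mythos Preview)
Your proof is correct and follows essentially the same approach as the paper: both rely on the observation that line~\ref{l:retain} is the only place a page enters the cache without being immediately evicted, and then argue that the request $I$ bringing $p$ back must have $s(I)>t_1$ since otherwise $I$ would already have been satisfied when $p$ was resident at time $t_1$. Your version is more explicit---isolating the minimal re-entry time $t_3$ and spelling out the invariant $C(\tau+1)\subseteq C(\tau)\cup\{p_\tau\}$---but the idea is identical to the paper's two-line proof.
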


\begin{proof}
  The only step in the algorithm when a page is brought into the cache
  (and not evicted immediately afterwards) is
  line~\ref{l:retain}. Hence, there must be an unsatisfied request interval $I$
  ending at some time $e(I) \leq t_2$ which brought in (and kept) $p$ in
  the cache after it had been evicted at time $t_1$. If $s(I) \leq t_1$, then $I$ would have been satisfied at time $t_1$, which is not true. 
\end{proof}
We now show that the algorithm is well-defined. 
\begin{claim}
  \label{cl:Z}
  The set $Z^\star$ defined in lines~\ref{l:Zb1}--\ref{l:Ze1} is non-empty.
\end{claim}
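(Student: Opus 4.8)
The plan is to read the non-emptiness of $Z^\star$ straight off the double-extension constraint~\eqref{eq:2}, instantiated at the $k$ pages currently in the cache. The first step is to collect the structural facts that hold in this branch of \Cref{algo:first}: since we have entered the block guarded by ``cache $C(t)$ full and $I_t$ not satisfied'', we have $|C(t)| = k$, and moreover $p_t \notin C(t)$ (otherwise $I_t$ would already be satisfied at time $t$). The second step is to check that for every $p \in C(t)$ the interval $I^p_t$ picked in line~\ref{l:Ipee} actually exists: the only place the algorithm loads a page and retains it is line~\ref{l:retain}, which fires for the critical interval at the current step, so any page resident in $C(t)$ must have entered as $p_{t''} = \page(I_{t''})$ for some $t'' < t$; in particular $p$ has a request interval ending strictly before $t$, so the ``largest ending time $e(I) < t$'' in line~\ref{l:Ipee} ranges over a non-empty set. (As usual for paging, one may assume the cache is initially empty without loss of generality.)

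With this in hand, set $\calC := \{ I^p_t : p \in C(t)\}$. This is a collection of exactly $k$ request intervals, for $k$ distinct pages, none of them $p_t$, and all ending before $t$; hence $\calC$ is one of the collections quantified over by constraint~\eqref{eq:2} at time $t$. The online solution $\astar_t$ is required (by the properties of the algorithm in \Cref{sec:first-lp}) to satisfy every constraint of~\eqref{eq:IP} for times at most $t$, so in particular $\sum_{p \in C(t)} \sum_{t' \in \dext{I^p_t, t}} x_{p,t'} \geq 1$. Therefore some $\dext{I^p_t,t}$ contains a star of $\astar_t$, and by the definition in lines~\ref{l:Zb1}--\ref{l:Ze1} that page $p$ is placed into $Z^\star$; hence $Z^\star \neq \emptyset$.

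The only point that needs more than a one-line invocation of the constraint — and hence the ``main obstacle'', such as it is — is verifying that $\calC$ is a legal collection for~\eqref{eq:2}: that the cache really holds $k$ pages, that none of them is the excluded page $p_t$, and that each $I^p_t$ is well-defined and ends before $t$. All three follow from the guard condition of the current block together with the fact that line~\ref{l:retain} is the sole source of retained cache pages, but they do have to be stated explicitly; everything after that is immediate from~\eqref{eq:2}.
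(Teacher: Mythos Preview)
Your proof is correct and follows the same approach as the paper: invoke constraint~\eqref{eq:2} on the collection $\{I^p_t : p \in C(t)\}$ after checking that this collection is legal. The paper's version is terser---it delegates the well-definedness of $I^p_t$ to \Cref{cl:0} (together with the empty-initial-cache assumption) rather than spelling out the line-\ref{l:retain} argument directly---but the content is identical, and your extra care in verifying $|C(t)|=k$, $p_t\notin C(t)$, and $e(I^p_t)<t$ is exactly the hypothesis-checking the paper leaves implicit.
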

\begin{proof}
  For each page $p \in C(t)$, let $I^p_t$ be the request interval defined
  in line~\ref{l:Ipee}---such a request interval exists because of
  Claim~\ref{cl:0} (we assume that the cache is empty initially). Applying the IP constraint~(\ref{eq:2}) to time $t$ and
  these $k$ request intervals implies that at least one of their
  doubly-extended intervals is hit by $\astar_t$, and hence the corresponding page belongs to $Z^\star$.
\end{proof}

\begin{lemma}
  \label{lem:Z}
  There exists a page %
  $\pstar \in Z^\star$ such that 
  \[ w(U^\circ_{\leq 2w(\pstar)}) \leq 2\, w(Z^\star_{\leq
      2w(\pstar)}). \]
\end{lemma}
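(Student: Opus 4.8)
The plan is a ``pick the right weight threshold'' argument over $Z^\star$ sorted by weight. Work at the fixed time $t$; recall $Z^\star\neq\emptyset$ by Claim~\ref{cl:Z}. Two cases are vacuous and may be set aside: if $U^\circ=\emptyset$, or if every interval in $U^\circ$ has a page heavier than $2\max_{q\in Z^\star}w(q)$ (so that $U^\circ_{\le 2w(q)}=\emptyset$ for all $q\in Z^\star$), then any $\pstar\in Z^\star$ works. (The same argument will in fact produce a $\pstar$ with the slightly stronger bound $w(U^\circ_{\le 2w(\pstar)})\le 2\,w(Z^\star_{\le w(\pstar)})$ that Algorithm~\ref{algo:first} uses in Line~\ref{l:goodp}, and the stated inequality follows from it.)

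The key structural input is constraint~(\ref{eq:1}). For an interval $I$ that is active at time $t$ we have $\rext{I,t}=[s(I),\max(t,e(I))]=[s(I),e(I)]=I$, so for $I\in U^\circ$, which by definition has no star of $\astar_t$ inside it, the term $\sum_{t'\in\rext{I,t}}x_{\page(I),t'}$ is $0$. Also, every cache page $p\in C(t)$ has its interval $I^p_t$ ending (hence starting) strictly before $t$, with $\rext{I^p_t,t}=[s(I^p_t),t]\subseteq\dext{I^p_t,t}$, so if $\astar_t$ hits $\rext{I^p_t,t}$ then it hits $\dext{I^p_t,t}$ and $p\in Z^\star$. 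Now fix any weight threshold $a\ge 0$ and let $C^a:=\{p\in C(t):w(p)\le a\}$. The sets $U^\circ_{\le a}$ and $\{p\in C^a:\rext{I^p_t,t}\text{ not hit}\}$ are disjoint (the former are not cache pages), and every one of these intervals/pages contributes $0$ to~(\ref{eq:1}) at time $t$; hence their total size is at most $k$, since otherwise a sub-collection of size $k+1$ would violate~(\ref{eq:1}). Using $|C^a|=k-|\{p\in C(t):w(p)>a\}|$ and $\{p\in C^a:\rext{I^p_t,t}\text{ hit}\}\subseteq Z^\star_{\le a}$, this rearranges into the \emph{level inequality}
\[
  |Z^\star_{\le a}|\ \ge\ |U^\circ_{\le a}|-\bigl|\{p\in C(t):w(p)>a\}\bigr|
  \qquad\text{for all } a\ge 0,
\]
and (taking $a$ above all weights) also $|U^\circ|\le k$.

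It remains to distill a single good $\pstar$ from this family of inequalities. Letting $q_1,q_2,\dots$ enumerate $Z^\star$ in nondecreasing weight order, the plan is to walk down this list and take $\pstar$ to be the first $q_i$ at which the accumulated $Z^\star$-weight below $2w(q_i)$ pays, within a factor of $2$, for the weight $w(U^\circ_{\le 2w(q_i)})$ of the light outstanding pages released at that threshold. The level inequality bounds the \emph{number} of those outstanding pages, the estimate $w(U^\circ_{\le a})\le a\,|U^\circ_{\le a}|$ together with $|U^\circ|\le k$ bounds their \emph{weight}, and telescoping the per-threshold deficits over the sorted weights certifies that the walk terminates successfully. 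I expect the amortization in this last step to be the main obstacle: one must absorb the error term $|\{p\in C(t):w(p)>a\}|$ from the level inequality (heavy cache pages that are not directly charged) and at the same time keep the $Z^\star$-pages used for payment light enough to lie in $Z^\star_{\le 2w(\pstar)}$ (respectively $Z^\star_{\le w(\pstar)}$ for the stronger form) — balancing these two demands is exactly what fixes $\pstar$ and spends the factor $2$, with some extra bookkeeping needed to handle equal weights.
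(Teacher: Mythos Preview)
Your setup is sound: the inclusion $\rext{I^p_t,t}\subseteq\dext{I^p_t,t}$ and the use of constraint~(\ref{eq:1}) are exactly the right ingredients, and your level inequality is correct. But the proposal stops short of a proof. You explicitly flag the final step---walking the sorted list of $Z^\star$ to locate $\pstar$ and absorbing the error term $|\{p\in C(t):w(p)>a\}|$---as ``the main obstacle'' and leave it undone. That step is the entire content of the lemma; the counting that precedes it is preparatory. As written, there is no argument that the walk terminates at a good threshold, and it is not clear that your level inequality (with its heavy-page error term) is strong enough to make such an argument go through.

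Two remarks on how the paper closes the gap. First, the weight restriction in your level inequality is a detour: if you keep \emph{all} of $C(t)\setminus Z^\star$ (not just $C^a$) together with all of $U^\circ$ in the collection for~(\ref{eq:1}), you get the cleaner cardinality bound $|U^\circ|\le|Z^\star|$ with no error term. Second, and more importantly, the paper extracts $\pstar$ not by a threshold walk but by a bipartite matching argument. Form the bipartite graph on $(Z^\star,U^\circ)$ with an edge from $p\in Z^\star$ to $p'\in U^\circ$ whenever $w(p')\le 2w(p)$. If this graph has a perfect matching of $U^\circ$ into $Z^\star$ (possible since $|U^\circ|\le|Z^\star|$), then $w(U^\circ)\le 2\,w(Z^\star)$ and $\pstar=\arg\max_{q\in Z^\star}w(q)$ works. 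Otherwise Hall's condition fails: a \emph{minimal} Hall set $A'\subseteq Z^\star$ has neighborhood $B'=U^\circ_{\le 2\max_{q\in A'}w(q)}$ with $|B'|<|A'|$, and minimality forces $A'\setminus\{a\}$ to match into $B'$ for any $a\in A'$, giving $w(B')\le 2\,w(A')$. Taking $\pstar=\arg\max_{q\in A'}w(q)$ then yields the stronger form $w(U^\circ_{\le 2w(\pstar)})\le 2\,w(Z^\star_{\le w(\pstar)})$ directly. This sidesteps any amortization or telescoping over thresholds.
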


\begin{proof}
  We first claim that $|U^\circ| \leq |Z^\star|$. Indeed, define a set
  of $k+1$ request intervals as follows. For each page
  $p \in C(t) \setminus Z^\star$, consider the request interval $I^p_t$
  for page $p$ as defined in line~\ref{l:Ipee}. Since %
  $p \not\in Z^\star$, we must have $(p,t')\not\in
  \astar_t$ for all times $t' \in
  \dext{I^p_t,t}$. But since the interval ends before
  $t$, we get $\rext{I^p_t, t} \sse \dext{I^p_t,t}$ and so
  $\astar_t$ does not hit the right-extended interval for
  $I^p_t$ either. To this collection of
  $k-|Z^\star|$ intervals, add the request intervals corresponding to
  $U^\circ$---all these request intervals contain
  $t$, so the right-extension operation does not extend them. Moreover,
  we have at most one interval per page, and they are all unsatisfied,
  so the collection now has $|U^\circ| +
  k-|Z^\star|$ many intervals for distinct pages. And none of their
  right-extensions are hit by
  $\astar_t$, so by constraint~(\ref{eq:1}) this collection has size at
  most $k$. This proves that $|U^\circ| \leq |Z^\star|$.

  Let $A$ be the set of pages in $Z^\star$ and $B$ the set of pages in
  $U^\circ$. We set up a bipartite graph on $(A,B)$ with an
  edge between $p \in A$ and $p' \in B$ if $w(p') \leq 2\, w_p$. If this
  graph has a perfect matching, then $w(B) \leq 2\, w(A)$. We choose $\pstar$
  to be the highest-weight page in $Z^\star$.

  Else such a perfect matching does not exist. Let $A' \subseteq A$ be a
  minimal Hall set, and $B'$ be the neighborhood of $A'$. Let $a$ be any
  page in $A'$. The pages in $A' \setminus \{a\}$ can be matched with
  $B'$. Therefore, $w(B') \leq 2\,w(A' \setminus \{a\}) \leq 2w(A')$. Now
  choose $\pstar$ to be the highest weight page in $A'$, to get
  $w(U^\circ_{\leq 2w(\pstar)}) \leq 2\, w(Z^\star_{\leq
    w(\pstar)})$.
\end{proof}

Therefore when we
reach line~\ref{l:goodp}, a page $\pstar$ of the desired form exists, and
the algorithm is well-defined. 
Finally the next claim shows that the request interval $I_t$ gets 
served.

\begin{restatable}{claim}{OnlineConvertFeas}
  \label{clm:pt-evicted}
  Suppose $I_t$ is unsatisfied at time $t$. If $w(p_t) \leq
  2w(p_{\min})$, then the page $p_t$ belongs to either $U\setminus
  U^\circ$ in line~\ref{l:serve1} or to $U^\circ_{\leq 2w(\pstar)}$ in
  line~\ref{l:serve2}, and is served and evicted. Else $p_t$ is served
  by line~\ref{l:retain}, and remains in the cache.
\end{restatable}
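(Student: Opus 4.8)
The plan is to do a case analysis on whether $w(p_t) \le 2w(p_{\min})$, mirroring the branching in \Cref{algo:first}. If $w(p_t) > 2 w(p_{\min})$, then the algorithm skips the block from line~\ref{l:Zb1} to line~\ref{l:serve2}, and since $I_t$ is unsatisfied at time $t$ we fall through to line~\ref{l:retain}, which brings $p_t$ into the cache; $p_t$ is not subsequently evicted at time $t$ (the only evictions happen inside the skipped block and at line~\ref{l:evict1}, and line~\ref{l:evict1} evicts $p_{\min} \ne p_t$ since $p_t \notin C(t)$ as $I_t$ is unsatisfied). So $p_t$ remains in the cache, giving the ``else'' conclusion.

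The main work is the case $w(p_t) \le 2 w(p_{\min})$. Here I want to show $p_t$ is serviced by line~\ref{l:serve1} or line~\ref{l:serve2}. First, since $I_t$ is unsatisfied at time $t$, the interval $I_t$ is active at time $t$ and hence $I_t \in U$ (the set defined in line~\ref{l:Ud}), so $p_t = \page(I_t)$ is one of the pages associated with $U$. Now split on whether $I_t \in U^\circ$. If $I_t \notin U^\circ$, i.e., $I_t$ is hit by $\astar_t$, then $I_t \in U \setminus U^\circ$ and $p_t$ is served and evicted in line~\ref{l:serve1}, as desired. If $I_t \in U^\circ$, I need to show $p_t \in U^\circ_{\le 2 w(\pstar)}$, i.e., $w(p_t) \le 2 w(\pstar)$. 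The key inequality to exploit is $w(p_t) \le 2 w(p_{\min})$ together with the fact that $p_{\min}$ is the least-weight page in $C(t)$, so every page in $Z^\star \subseteq C(t)$ has weight at least $w(p_{\min}) \ge w(p_t)/2$. In particular, taking $\pstar \in Z^\star$ (which is nonempty by \Cref{cl:Z}), we get $w(\pstar) \ge w(p_{\min}) \ge w(p_t)/2$, hence $w(p_t) \le 2 w(p_{\min}) \le 2 w(\pstar)$, so $I_t \in U^\circ_{\le 2 w(\pstar)}$ and $p_t$ is served and evicted in line~\ref{l:serve2}.

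The subtle point I expect to be the main obstacle is confirming that no other line of the algorithm un-does the service of $p_t$ at time $t$ --- that is, once $p_t$ is loaded and then ``evicted'' as part of serving a request in line~\ref{l:serve1} or line~\ref{l:serve2}, this is exactly the ``served and evicted'' behavior claimed, and it genuinely satisfies the request $I_t$ (a request is satisfied by having its page in the cache at some point in its interval, and $p_t$ is in the cache momentarily at time $t \in I_t$ during the serve step). I also need to double-check that in the case $I_t \in U \setminus U^\circ$, line~\ref{l:serve1} indeed serves $I_t$ before any conflicting eviction; since line~\ref{l:serve1} processes all of $U \setminus U^\circ$ uniformly by loading-then-evicting, this is immediate. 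Finally, one should note that the claim says ``$p_t$ belongs to \emph{either} ... \emph{or}'', and the two subcases ($I_t \in U^\circ$ versus not) are exactly these two alternatives, and they are exhaustive by definition of $U^\circ$ as a subset of $U$; so the case analysis is complete.
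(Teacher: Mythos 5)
Your proof is correct and follows essentially the same route as the paper's: place $I_t$ in $U$, split on membership in $U^\circ$, and in the $U^\circ$ case use $\pstar \in Z^\star \subseteq C(t)$ to get $2w(\pstar) \geq 2w(p_{\min}) \geq w(p_t)$. You actually spell out a couple of steps the paper leaves implicit (why $w(\pstar)\ge w(p_{\min})$, and why nothing undoes the service in the else-branch), and you correctly attribute the $U^\circ$ subcase to line~\ref{l:serve2} where the paper's text has a typo pointing back to line~\ref{l:serve1}.
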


\begin{proof}
  The page $p_t$ is unsatisfied at the beginning of time $t$ and so $I_t$
  belongs to $U$. First assume $w(p_t) \leq 2w(p_{\min})$. If $I_t$ is not in
  $U^\circ$, it is served in line~\ref{l:serve1}. Else we reach
  line~\ref{l:serve2}. In this case, the page $\pstar$ chosen in
  line~\ref{l:goodp} has $2w(\pstar) \geq 2w(p_{\min}) \geq w(p_t)$, so $p_t$
  beings to $U^\circ_{\leq 2w(\pstar)}$ and is served in line~\ref{l:serve1}.
  The other case (where $w(p_t) > 2w(p_{\min})$) is clear by construction.
\end{proof}

\subsubsection{The Cost Guarantee}  
\label{sec:costg}
We want to bound the total cost incurred till time $T$. 
The high-level cost analysis goes as follows. If the cache has room we
can just satisfy $I_t$, so suppose the cache is full and we need to pay
to evict $p_{\min}$. If the page $p_t$ %
is twice as heavy as
$p_{\min}$, we can charge $p_{\min}$ to $p_t$ and pay when $p_t$ is
subsequently evicted. Else, if the unsatisfied intervals crossing time
$t$ which are hit by $\astar_t$ have large weight, i.e., if $w(U\setminus
U^\circ) \geq w(p_{\min})$,
we can serve and evict them and then charge to them---this can pay for
$p_{\min}$. Finally, we evict some pages from the current cache that are
hit by $\astar_t$: they pay for both evicting $p_{\min}$ and for serving
some more of the outstanding requests. These pages are evicted from the
cache to ensure they are not charged again. 

We now show how to pay for the possible evictions in
lines~\ref{l:evict1}, \ref{l:serve1}, and~\ref{l:evict2}-\ref{l:serve2}
(since bringing in pages is for free). We maintain the invariant that
each page $p$ in the cache has at most $w(p)$ ``load'' on it; pages
outside the cache have zero load. The load measures the evictions which have not been paid for till now.  If we bring in $p_t$ and if $w(p_t)
\geq 2w(p_{\min})$, its load becomes the load of $p_{\min}$ plus the
cost of evicting $p_{\min}$; thus the total load on $p_t$ is at most its
weight, and $p_t$ remains in the cache, maintaining the invariant. At
the end of the algorithm, the total load over all pages is at most the
weight of the pages in the cache, which is at most the optimum
cost. This adds one to the competitive ratio. Else if $w(p_t) <
2w(p_{\min})$, we evict at least one page in $Z^\star$ (by
\Cref{cl:Z})
and can charge evicting $p_{\min}$ to the eviction of that page---which
we show below how to charge to $\astar_T$.

Next: we charge evicting $U\setminus U^\circ$ to $w(\astar_T)$ as
follows. Each interval in $U\setminus U^\circ$, say for page $p$,
contains some star at $(p,t')$ in $\astar_t$ (and hence in $\astar_T$, and we can charge to star). Since
the request intervals for a page are disjoint (by our simplifying
assumption),  $(p,t)$ cannot lie in any other request interval for
$p$, and will not be charged by line~\ref{l:serve1} again.

Finally, we charge  the eviction cost for lines
\ref{l:evict2}-\ref{l:serve2}. This cost  is $O(w(Z^\star_{\leq w(\pstar)}))$ by
our choice of $\pstar$ in line~\ref{l:goodp}. Observe that for each page $p$
$I$ in $Z^\star_{\leq w(\pstar)}$, the doubly-extended interval $\dext{I^p_t,t}$
is hit by an star at $(\page(I),t') \in \astar_t$ for some $t' \leq t$, so we want to charge to this
star of $\astar_t$. Moreover, each page in $Z^\star_{\leq w(\pstar)}$ is at
least as heavy as $p_{\min}$, so any of these stars of $\astar_t$ can
pay to evict $p_{\min}$ (and its load).  We finally show that no star
of $\astar_T$ can be charged twice in this manner.
\begin{lemma}
  \label{lem:twice}
  No star in $\astar_T$ can be charged twice because of evictions in
  lines~\ref{l:evict2}-\ref{l:serve2}.
\end{lemma}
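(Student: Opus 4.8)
The plan is to track, for each star $(p,t') \in \astar_T$, which page-evictions in lines~\ref{l:evict2}--\ref{l:serve2} it is charged for, and argue that once a star is used in this way, the state of the cache is altered so that no future execution of these lines can route a charge to it again. The key observation is the one already flagged in the text: when we evict the pages $Z^\star_{\leq w(\pstar)}$ at time $t$, we remove from the cache \emph{all} of the pages whose double-extended intervals $\dext{I^p_t,t}$ were hit by the relevant stars, and after line~\ref{l:serve2} every remaining unsatisfied request is strictly heavier than $2w(\pstar)$, hence heavier than every page still in the cache. So the only way a star $(p,t')$ with $t' \le t$ could be charged again at some later time $t_2 > t$ is if page $p$ is back in the cache at time $t_2$ and the interval $I^p_{t_2}$ (the most recent request for $p$ ending before $t_2$) again has $\dext{I^p_{t_2},t_2}$ hitting the \emph{same} star $(p,t')$.

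First I would pin down exactly which star is charged: for $p \in Z^\star_{\leq w(\pstar)}$ with request interval $I^p_t$, by definition of $Z^\star$ there is some $t' \le t$ with $(p,t') \in \astar_t$ lying in $\dext{I^p_t,t} = [\min(s(I_t),s(I^p_t)),t]$; fix that star. Now suppose toward a contradiction that at some later time $t_2 > t$ the same star $(p,t')$ is charged again, i.e., $p \in Z^\star_{\leq w(\pstar_{t_2})}$ at time $t_2$ with $t' \in \dext{I^p_{t_2},t_2}$. Since $p$ was evicted at time $t$ but is in the cache at the end of time $t_2$, \Cref{cl:0} gives a request interval $I'$ for $p$ with $t < s(I') \le e(I') \le t_2$. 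Because $I^p_{t_2}$ is by line~\ref{l:Ipee} the request for $p$ with largest ending time $< t_2$, and $I'$ ends at $e(I') < t_2$ (as $e(I') \le t_2$ and the interval with deadline $t_2$ is the critical one, which for $p$ to be charged we need $p \ne p_{t_2}$ — indeed $p \in C(t_2)$ and the charging only happens when $p$ is evicted, so $I^p_{t_2}$ ends strictly before $t_2$), we get $s(I^p_{t_2}) \ge s(I') > t$. But then $\dext{I^p_{t_2},t_2} = [\min(s(I_{t_2}),s(I^p_{t_2})),t_2]$ has left endpoint $\min(s(I_{t_2}),s(I^p_{t_2}))$, and I would argue this is $> t' $ — or more carefully, the star at time $t' \le t < s(I^p_{t_2})$ cannot lie in $\dext{I^p_{t_2},t_2}$ unless $s(I_{t_2}) \le t'$, which I need to rule out separately.

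The main obstacle, and where I expect to spend real effort, is precisely this last point: the double extension $\dext{I,t_2}$ reaches leftward all the way to $\min(s(I_{t_2}), s(I))$, so even though $s(I^p_{t_2}) > t \ge t'$, the critical interval $I_{t_2}$ at the later time could in principle start at or before $t'$ and pull the extension back over the old star. To handle this I would use property~(A3) (Sparsity): the star $(p,t')$ with $t' > t$ is, by sparsity of $\astar_t$, the \emph{unique} star for page $p$ strictly after time $t$ in $\astar_t$, and it hits \emph{all} request intervals for $p$ containing time $t$; combined with monotonicity~(A1), this lets me conclude that if $p$ is charged via a star strictly after the current time at step $t$, then that same star is what would be detected later, and the careful case analysis above (now correctly accounting for whether $t' \le t$ or $t' > t$) closes the argument. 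In the sub-case $t' \le t$, the eviction-and-heaviness argument of the preceding paragraph applies directly: after line~\ref{l:serve2} at time $t$, page $p$ is out of the cache, and for $p$ to be charged again at $t_2$ it must re-enter via line~\ref{l:retain} for a request interval starting after $t$, whose double extension (having left endpoint $\min(s(I_{t_2}), s(I^p_{t_2}))$ with $s(I^p_{t_2}) > t \ge t'$) can contain $t'$ only if $s(I_{t_2}) \le t'$; but $I_{t_2}$'s own page $p_{t_2} \ne p$ and its star, if any, is a different column, so a star \emph{in column $p$} at time $t' $ is not forced into $\dext{I^p_{t_2},t_2}$ — I'd make this precise by noting the double extension is evaluated only against stars $x_{\page(I^p_{t_2}),\cdot} = x_{p,\cdot}$, and re-invoke sparsity to say $\astar_{t_2}$ has at most one star in column $p$ after time $t_2$, while the charged star sits at $t' \le t < t_2$, i.e. strictly in the past, where by line~\ref{l:Ze1} it can be charged only if it lies in $\dext{I^p_{t_2},t_2}$ — and then the left-endpoint computation finishes it. In the sub-case $t' > t$, sparsity says this is the forward star for $p$ at time $t$, it gets "used up" in the sense that after eviction $p$ leaves the cache, and the same two-sub-case bookkeeping recurs one level down; since $t$ strictly increases, this terminates.
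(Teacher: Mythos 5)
There is a genuine gap, and it sits exactly at the point you yourself flag as ``the main obstacle.'' You correctly reduce to the dangerous case: the star $(p,t')$ charged at time $t_1$ has $t'\le t_1$, the interval $I^p_{t_2}$ starts after $t_1$ (by \Cref{cl:0}), so the star can lie in $\dext{I^p_{t_2},t_2}=[\min(s(I_{t_2}),s(I^p_{t_2})),t_2]$ only if the \emph{critical} interval $I_{t_2}$ satisfies $s(I_{t_2})\le t'$. But your attempt to rule this out does not work. The assertion that ``$I_{t_2}$'s own page is a different column, so a star in column $p$ at time $t'$ is not forced into $\dext{I^p_{t_2},t_2}$'' confuses two things: the double extension is a \emph{time} interval, and the star $(p,t')$ hits it precisely when $t'\ge \min(s(I_{t_2}),s(I^p_{t_2}))$ --- the page owning $I_{t_2}$ is irrelevant to that membership test. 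If $s(I_{t_2})\le t'$ the star genuinely is re-charged, and nothing in your argument forbids $s(I_{t_2})\le t'$. The appeal to sparsity (A3) cannot save this: sparsity concerns the at-most-one star \emph{after} the current time, whereas the charged star sits at $t'\le t_1$, strictly in the past; and your sub-case ``$t'>t$'' does not even arise, since a star charged via line~\ref{l:Ze1} at time $t_1$ must lie in an interval ending at $t_1$.

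The missing idea is the paper's weight argument. If $s(I_{t_2})\le t'\le t_1$, then $I_{t_2}$ was \emph{active and unsatisfied} at time $t_1$. Moreover, reaching lines~\ref{l:evict2}--\ref{l:serve2} at time $t_2$ requires $w(p_{t_2})\le 2w(p_{\min}(t_2))\le 2w(q)$ (since $q$ is in the cache at $t_2$). But at time $t_1$ the page $q$ was evicted in line~\ref{l:evict2}, so $w(q)\le w(\pstar_{t_1})$, and hence every unsatisfied active request of weight $\le 2w(q)\le 2w(\pstar_{t_1})$ was served at $t_1$ by line~\ref{l:serve1} or~\ref{l:serve2} --- in particular $I_{t_2}$, contradicting that it is still unsatisfied at $t_2$. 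Without bringing in this interaction between the charging condition and the serving thresholds of lines~\ref{l:serve1}--\ref{l:serve2}, the lemma does not close.
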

\begin{proof}
  For a contradiction, suppose an star at $(q,t_{q}) \in \astar_T$ is charged
  twice, at time $t_1$ and time $t_2$. Hence, at both these times $q$
  was in the cache and was evicted in line~\ref{l:evict2}, so all
  unsatisfied pages that were active at these times and had weight
  $\leq 2w(q)$ were definitely served by line~\ref{l:serve2}. (We will
  contradict this implication of our assumption.)

  Let $I^{q}_{t_1}$ and $I^{q}_{t_2}$ be the corresponding intervals
  defined in line~\ref{l:Ipee} for the page $q$.  
  \Cref{cl:0} 
  shows
  that $I^{q}_{t_2}$ starts after $t_1$. But we know that
  $t_{q} \leq t_1$, since $(q,t_q)$ was charged at time $t_1$, and so
  $t_q \not \in I^q_{t_2}$. So, in order for $(q,t_q)$ to hit the
  doubly-extended interval $\dext{I^q_{t_2}, t_2}$, it must be the case that 
  the critical interval $I_{t_2}$  contained the time $t_{q}$ (and
  hence time $t_1 \in [t_q, t_2]$). Let $p_{t_2}$ denote $\page(I_{t_2})$. 
  Then $w(p_{t_2}) \leq 2w(q)$,
  else we would merely have evicted the cheapest page at time $t_2$ and
  not reached lines~\ref{l:evict2}-\ref{l:serve2} again. This means
  $p_{t_2}$ had weight at most $2w(q)$, and the request $I_{t_2}$  was active and remained
  unsatisfied at the end of time $t_1$, which contradicts the
  implication above.
\end{proof}

This proves 
\Cref{thm:round-on} 
(without losing the extra $\log n$
factor) in the case of non-overlapping requests for any page $p$.  The
general case %
gets trickier. Indeed, consider the example with a page $p$ having
request intervals $[t_1, t], [t_2, t], \ldots, [t_k,t]$, where $t_1 <
t_2 < \ldots < t_k < t. $ Suppose we have a star $(p,t) \in
A^\star_{t_1}$. Consider a time $t' \in [t_1, t]$ when the algorithm
reaches line~\ref{l:Ud}. If any of these intervals is not satisfied at
$t'$, then they will get counted in $U \setminus U^\circ,$ and so we
will charge the star at $(p,t)$ for servicing $p$ at time $t'$. But this
can happen for multiple values of $t'$, and we have only one star in
$A^\star$ to charge to. Moreover, we cannot say that we will take care
of all these requests at the ending time $t$---since all the pages in
the cache may be very expensive at that time. In the off-line case
(which appears in 
\Cref{sec:overlap}),
one can add a {\em reverse
  delete} step, where we look at all these times when we service some of
these requests, and realize that a subset of them would
suffice. However, we discuss the more involved online case in the next section.

\subsection{Online Algorithm for the General Setting}
\label{sec:online-conversion}

The algorithm from 
\Cref{sec:non-overlap} 
assumes the requests for a
page are non-overlapping. We now
extend it to handle overlapping requests in an online fashion.
\Cref{algo:first-online} (see \Cref{fig:algo-online}) gives the online algorithm---the lines changed  from \Cref{algo:first} are highlighted. 
We call a
request interval $I$ \emph{non-dominating} if it does not contain
another request interval for $page(I)$---we know whether $I$ is
non-dominating only at time $e(I)$. Notice that the definition of
$I^p_t$ in line~\ref{l:Ipee-online} looks only at non-dominating intervals.  

Since the
request intervals for a particular page are no longer disjoint, we do not serve and evict all
the intervals in $U\setminus U^\circ$ when we create space at time $t$
(as \Cref{algo:first} would do in \cref{l:serve1}).
 Instead we only serve the requests hit by $\astar_t$
before time $t$, and some small set of requests that are hit by $\astar_t$
after time $t$. As shown in the example at the end of
\Cref{sec:non-overlap}, 
serving all such requests may lead to unbounded number of chargings to a star in $\astar_T$. These requests are considered in the earliest deadline order and their total weight is a constant times the weight of the cheapest page in $Z^\star$ (denoted by $\pdag$). It is also worth noting that we perform these steps only if $I_t$ is hit by $\astar_t$ (line~\ref{l:ustar-contains-pt}).

It is also worth noting that line~\ref{l:sort} is the only place in the algorithm where we need to know the right end-point of an existing request for a page.

\begin{figure}[ht]
  \removelatexerror
  \begin{algorithm}[H]
    \ForEach{$t=0, 1, \ldots$}{
      \textbf{let} $I_t$ be the interval with deadline $t$, and let $p_t
      \gets \page(I_t)$ \\
      \label{l:firsttest-online} \If{cache $C(t)$ full and $I_t$ not satisfied}{
        \textbf{evict} the least-weight page $p_{\min}$ in
        $C(t)$ \label{l:evict1-online} \\
        \If{$w(p_t) \leq 2\,w(p_{\min})$}{

          \medskip
          $Z^\star \gets \emptyset$. \label{l:Zb1-online}\\
          \For{every page $p$ in $C(t)$}{
            $I^p_t \gets$ \hl{non-dominating} request interval $I$ with $\page(I) = p$ and
            largest ending time $e(I) < t$. \label{l:Ipee-online} \\
            \textbf{if} $\dext{I^p_t,t}$ is hit by $\astar_t$ then add $I^p_t$ to
            $Z^\star$. \label{l:Ze1-online}
          } 
          $U \gets$ unsatisfied request intervals active at time $t$
          (one per page, \hl{if there are multiple choose one with earliest deadline}). \\
          $U^\circ \gets \{I \in U \mid \nexists t' \in I \text{ with }
          (\page(I),t') \in \astar_t \}$ be intervals in $U$ \textbf{not} hit by
          $\astar_t$.\\

          \medskip
            \If{\hl{$I_t \not\in U^\circ$}}{ \label{l:ustar-contains-pt}
              \hl{$U^\star_t \gets$ request intervals $I$ in $(U
                \setminus U^\circ)$ which are
              hit by $\astar_t$ at some time $\leq t$}.
             \label{l:ustar} \\
              \hl{\textbf{serve} and \textbf{evict} all requests in
                $U^\star_t$}. \label{l:serve1-online} \\
              \hl{\textbf{evict} the cheapest page $\pdag$ in
                $Z^\star$ (this may be the same as $p_{\min}$)} \label{l:evict2-online} \\
              \hl{\textbf{sort} intervals in $U \setminus (U^\circ
                \cup U_t^\star)$ with
              weights  $\leq 2w_{p^\dagger}$ in ascending order of end-times}.
              \label{l:sort} \\
              \hl{\textbf{serve} and \textbf{evict} a maximal prefix of these intervals
                with total weight at most
                $4w_{p^\dagger}$}. \label{l:serve2-online} \label{l:case1e}
            }
            \textbf{let} $U^\circ_{\leq w}$ and $Z_{\leq
              w}^\star$ denote pages in $U^\circ$ and $Z^\star$
             with weight at most $w$.   \\
            \textbf{let} $\pstar$ be a page in $Z^\star$ such that
            $w(U^\circ_{\leq
              2w(\pstar)}) \leq 2 \cdot w(Z^\star_{\leq w(\pstar)})$. \label{l:goodp-online}
            \\ 
            \textbf{evict} all pages in $Z^\star_{\leq
              w(\pstar)}$. \label{l:evict3-online} \\
            \textbf{serve} and \textbf{evict} all requests in $U^\circ_{\leq 2w(\pstar)}$. \label{l:serve3-online}
      }
      }
      \textbf{if} $I_t$ not satisfied \textbf{then}
      bring page $p_t$ into
      cache.
      \label{l:retain-online}
    }
    \caption{ConvertOnline(\eqref{eq:IP} solution $\astar_t$ appearing online)}\label{algo:first-online}
  \end{algorithm}
  \caption{Online Algorithm to Service Request Intervals}
  \Description{Online Algorithm}
  \label{fig:algo-online}
\end{figure}

By 
\Cref{lem:Z} 
the page $\pstar$ in 
\cref{l:goodp-online}
exists,
and hence the algorithm is well-defined. For the correctness we need to
show that each page is served. Indeed, for an unsatisfied request $I_t$
at its deadline $t$, either $p_t$ has twice the weight of $p_{\min}$ and is
handled in 
\cref{l:retain-online}.
Else, either the request interval
$I_t$ is hit by $\astar_t$ and so it belongs to $U_t^\star$ and is served/evicted in
\cref{l:serve1-online}, 
or it is not hit by $\astar_t$ and so
belongs to $U^\circ_{\leq 2w(\pstar)}$ (by
\Cref{clm:pt-evicted}) 
and is
served/evicted in 
\cref{l:serve3-online}. 
It remains to estimate the total eviction cost of this algorithm. 

\subsubsection{The Cost Analysis}
\label{sec:onlinecostanalysis}

Consider the run of the algorithm until time $T$; we bound the total
cost incurred until this time, in terms of $w(\astar_T)$. Observe that
evictions can only happen on lines~\ref{l:evict1-online},
\ref{l:serve1-online}--\ref{l:serve2-online}, and
\ref{l:evict3-online}--\ref{l:serve3-online}.  The first and the last of
these can be dealt with as in 
\Cref{sec:non-overlap}. 
Indeed, paying for
$p_{\min}$ and its load is done by either putting a load on $p_t$ if
$w(p_t) \geq 2 w(p_{\min})$ or else at least one other page from $C(t)$
is evicted and charged for, and we can handle $p_{\min}$ by charging a
constant factor more. The total cost incurred during
lines~\ref{l:evict3-online}--\ref{l:serve3-online} is at most
$2 w(\astar_T)$, since the proof for 
\Cref{lem:twice} 
remains unchanged.
Indeed, at each time $t$ when we perform those evictions, we charge the stars in $A^\star_T$ which hit the
intervals in $Z^\star_{\leq w(\pstar)}$, 
and these stars are never charged again due to
\Cref{lem:twice}. 

It remains to bound the cost incurred during
lines~\ref{l:serve1-online}--\ref{l:serve2-online}.  Let $\fT \sse [T]$ contain
the times when we reach those lines. Let
$Z^\star_t$, $U_t$, $U^\circ_t$, and $U^\star_t$ denote the corresponding sets at
time $t$, and $\pdag_t, \pstar_t$ denote the pages chosen in
\cref{l:evict2-online} and~\cref{l:goodp-online}. 
The
evictions in 
\cref{l:serve1-online} are easy to pay for:
\begin{restatable}{claim}{OnlineEasy}
  \label{cl:u1}
  $\cup_{t \in \fT}\; w(U^\star_t) \leq w(\astar_T)$.
\end{restatable}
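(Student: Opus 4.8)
The plan is to build an injective charging from the page evictions performed in \cref{l:serve1-online} to the stars of $\astar_T$, matching each eviction with a star of the \emph{same} page, so that weights line up. First, for a fixed $t \in \fT$ and an interval $I \in U^\star_t$, I would invoke the definition of $U^\star_t$ to produce a star $(\page(I), t') \in \astar_t$ with $t' \in I$ and $t' \le t$; by the monotonicity property~(A1), this star also lies in $\astar_T$, and I charge the cost $w(\page(I))$ of evicting $\page(I)$ at time $t$ to it. Since the cost incurred in \cref{l:serve1-online} at time $t$ is exactly $\sum_{I \in U^\star_t} w(\page(I))$ (serving a page is free, evicting it costs its weight), and each star charged this way has weight equal to that of its page, an injective charging immediately yields $\sum_{t \in \fT} w(U^\star_t) \le w(\astar_T)$.

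The main work is proving injectivity. Suppose a star $(q,t_q) \in \astar_T$ were charged twice, via $I_1 \in U^\star_{t_1}$ and $I_2 \in U^\star_{t_2}$ with $t_1 \le t_2$; note both $I_1$ and $I_2$ are request intervals for page $q$, since the witnessing star of a charged interval $I$ has page $\page(I)$. Since $U_t$ holds at most one interval per page, the case $t_1 = t_2$ forces $I_1 = I_2$, i.e.\ a single charge, so I may assume $t_1 < t_2$. From the witnessing star of $I_1$ I get $t_q \le t_1$ and $t_q \in I_1$; and from $t_q \in I_2$ I get $s(I_2) \le t_q \le t_1$. On the other hand, $I_2 \in U_{t_2}$ is active at time $t_2$, so $e(I_2) \ge t_2 > t_1$, giving $t_1 \in [s(I_2), e(I_2)] = I_2$.

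Now I would use the crucial structural fact: at time $t_1$ the algorithm serves \emph{every} interval of $U^\star_{t_1}$ in \cref{l:serve1-online}, in particular $I_1$, which means page $q$ is loaded into the cache at time $t_1$ (before being evicted again in the same step). Since $t_1 \in I_2$, the presence of $q$ in the cache at time $t_1$ satisfies the request $I_2$, and a satisfied request stays satisfied. This contradicts $I_2 \in U_{t_2}$, which requires $I_2$ to be unsatisfied at time $t_2 > t_1$. Hence the charging is injective and the claim follows. I expect the only fiddly point to be the timeline bookkeeping establishing $t_1 \in I_2$ (using $t_q \le t_1$, which comes from the ``hit at some time $\le t$'' clause in the definition of $U^\star_t$, together with $I_2$ being active at $t_2$); once that is in hand, the ``serving a page satisfies all of its active request intervals'' observation closes the argument.
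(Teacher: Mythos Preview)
Your proof is correct and follows essentially the same approach as the paper: charge each eviction in \cref{l:serve1-online} to a witnessing star $(\page(I),t')\in\astar_t\subseteq\astar_T$ with $t'\in I\cap[0,t]$, and argue injectivity by observing that serving page $q$ at the earlier charging time satisfies every request interval for $q$ that contains $t'$ (hence in particular $I_2$), and that no later-arriving interval can contain $t'$. The paper compresses your case analysis into the single sentence ``we have satisfied all existing request intervals for page $p$ containing the time $t'$---and no future requests can arrive that contain it,'' but the content is the same; your version just makes the step $t_1\in I_2$ explicit.
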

\begin{proof}
  We charge the weight of evicting $p=\page(I)$ for some request interval $I$ in $U^\star_t$
  to the element $(p,t') \in \astar_t$ where $t' \in [0, t] \cap I$.  We
  claim that no element in $\astar_T$ will get charged twice this
  way. Indeed, if we charge to $(p,t')$ at time $t \geq t'$, we have
  satisfied all existing request intervals for page $p$ containing the time
  $t'$---and no future requests can arrive that contain it.
\end{proof}

Some more notation: let
$\Udag_t$ be the prefix of request intervals serviced in
\cref{l:serve2-online}.
For a time $t \in \fT$, define the {\em effective cost} at time $t$ to
be $w(\pdag_t) + w(\Udag_t)$---this is the remaining cost incurred at time $t$ in
\cref{l:evict2-online,l:serve2-online}.   
For an interval $[a,b]$, let $\astar_t[a,b]$
denote the set of $(p,t')$ in $\astar_t$ where $t' \in [a,b]$.
Let $P^\star_t[a,b]$ be the set of pages corresponding to which there is at
least one star in $\astar_t[a,b]$. Note that $w(P^\star_t[a,b]) \leq w(\astar_t[a,b])$ for any time $t$ and interval $[a,b]$. 

\begin{claim}
  \label{cl:case1alt}
  Suppose times $t_1, t_2 \in \fT$ are such that $t_1 < t_2$ and
  $I_{t_2}$ contains time $t_1$. Then the effective cost at time
  $t_1$ is at most at most $\frac52 \, w(P^\star_{t_2}[t_1,t_2])$. 
\end{claim}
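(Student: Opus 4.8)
The goal is to bound the effective cost $w(\pdag_{t_1}) + w(\Udag_{t_1})$ at time $t_1$ by (roughly) the weight of the pages that have stars in $\astar_{t_2}$ inside the window $[t_1, t_2]$. The plan is to argue that every page contributing to the effective cost at time $t_1$ must acquire a star in $\astar_{t_2}$ somewhere in $[t_1,t_2]$, and that distinct such pages give distinct stars, so that summing weights gives the bound. I would split the effective cost into its two pieces and handle them as follows.

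\textbf{Step 1: the page $\pdag_{t_1}$.} By construction in \cref{l:evict2-online}, $\pdag_{t_1} \in Z^\star_{t_1}$, which means the doubly-extended interval $\dext{I^{\pdag_{t_1}}_{t_1}, t_1}$ is hit by $\astar_{t_1}$; this gives a star $(\pdag_{t_1}, t')$ with $t' \le t_1$. That star need not lie in $[t_1, t_2]$, however. The key observation is that $I_{t_2}$ contains $t_1$, and $\pdag_{t_1}$ was in the cache $C(t_1)$; I would use the feasibility argument (as in the proof of \Cref{lem:twice}) to show that either $\pdag_{t_1}$ is served at some time in $[t_1, t_2]$ — producing a fresh star in that window — or it is evicted in $[t_1,t_2]$, again producing a star there, or it stays continuously in the cache throughout $[t_1,t_2]$ and in particular throughout $I_{t_2}$, which would conflict with $p_{t_2}$ needing a cache slot during $I_{t_2}$ (at most $k-1$ pages can persist). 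This forces $\pdag_{t_1}$ to be "hit" in $[t_1,t_2]$, i.e.\ $\pdag_{t_1} \in P^\star_{t_2}[t_1,t_2]$.

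\textbf{Step 2: the prefix $\Udag_{t_1}$.} Each interval $I$ in $\Udag_{t_1}$ lies in $U_{t_1} \setminus (U^\circ_{t_1} \cup U^\star_{t_1})$, so it is hit by $\astar_{t_1}$ but \emph{not} at any time $\le t_1$; by sparsity (A3), the unique such star is at $(\page(I), t'')$ for some $t'' > t_1$, and this star hits \emph{all} request intervals for $\page(I)$ containing $t_1$. Since $I$ is active at $t_1$ (it is in $U_{t_1}$), we have $t_1 \in I$, hence $t'' \le e(I) \le $ deadline; I need $t'' \le t_2$. This is where $I_{t_2} \ni t_1$ helps: the intervals in $U \setminus (U^\circ \cup U_t^\star)$ were sorted by end-time and a prefix of total weight $\le 4w(\pdag_{t_1})$ taken, but more relevantly, I would argue $e(I) \le t_2$ using the earliest-deadline selection together with the fact that $I_{t_2}$ is the critical interval spanning $[t_1,t_2]$ — so the star $t''$ lands in $[t_1,t_2]$, giving $\page(I) \in P^\star_{t_2}[t_1,t_2]$. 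Distinct intervals in $\Udag_{t_1}$ are for distinct pages (one per page in $U$), so these are distinct elements of $P^\star_{t_2}[t_1,t_2]$, and none coincides with $\pdag_{t_1}$ since $\pdag_{t_1} \in C(t_1)$ is satisfied at $t_1$ while the $\Udag_{t_1}$ pages are unsatisfied then. Summing, $w(\pdag_{t_1}) + w(\Udag_{t_1}) \le w(P^\star_{t_2}[t_1,t_2])$, and the constant $\tfrac52$ absorbs the slack from the maximal-prefix rule (the prefix can overshoot $4w(\pdag_{t_1})$ slightly, or one compares $w(\Udag_{t_1})$ against the weights of its own stars with a factor-$2$ loss from the weight-class thresholds).

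\textbf{Main obstacle.} The delicate point is Step 2: ensuring the future star $t''$ attached to an interval in $\Udag_{t_1}$ actually falls at or before $t_2$, rather than beyond it. This relies on correctly exploiting the hypothesis that $I_{t_2}$ \emph{contains} $t_1$ and the earliest-deadline processing order, and on the sparsity property forcing the star to hit every $t_1$-crossing interval for that page — I expect this bookkeeping (matching each effective-cost contributor to a distinct star in the window and controlling the constant) to be the crux, while the cache-capacity argument for $\pdag_{t_1}$ in Step 1 is a routine adaptation of \Cref{lem:twice}.
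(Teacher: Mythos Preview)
Your approach has a genuine gap and differs fundamentally from the paper's argument.

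\textbf{Step 1 is confused.} You try to show $\pdag_{t_1} \in P^\star_{t_2}[t_1,t_2]$ by arguing about the \emph{algorithm's} cache movements (``served at some time in $[t_1,t_2]$ --- producing a fresh star''). But serving or evicting a page in the algorithm's schedule does not create a star in $\astar$; the set $\astar_t$ is the given IP solution, not something the rounding algorithm produces. The only star you actually have for $\pdag_{t_1}$ is the one hitting $\dext{I^{\pdag_{t_1}}_{t_1},t_1}$, and as you note yourself this may lie strictly before $t_1$. There is no general reason $\pdag_{t_1}$ must belong to $P^\star_{t_2}[t_1,t_2]$.

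\textbf{Step 2 is not guaranteed either.} You want $e(I) \le t_2$ for every $I \in \Udag_{t_1}$, relying on earliest-deadline ordering. But if $I_{t_2}$ is not in the sorted list at all (e.g.\ when $I_{t_2}\in U^\circ_{t_1}$, or when $w(p_{t_2}) > 2w(\pdag_{t_1})$), the prefix $\Udag_{t_1}$ can include intervals with deadlines beyond $t_2$, and their stars then lie outside $[t_1,t_2]$.

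\textbf{The paper's route.} The paper does not try to match each contributor to a star. It first uses the built-in bound $w(\Udag_{t_1}) \le 4w(\pdag_{t_1})$ from \cref{l:serve2-online}, so the effective cost is at most $5w(\pdag_{t_1})$; the whole task reduces to showing $w(\pdag_{t_1}) \le \tfrac12\, w(P^\star_{t_2}[t_1,t_2])$. This is done by asking \emph{why was $I_{t_2}$ not served at time $t_1$?} (it is active and unsatisfied at $t_1$ since $t_2\in\fT$ and $t_1\in I_{t_2}$). A short case analysis over whether $I_{t_2}\in U^\circ_{t_1}$, whether $w(p_{t_2})>2w(\pdag_{t_1})$, and whether $I_{t_2}$ was skipped by the maximal-prefix rule gives in each case either a star for $p_{t_2}$ in $[t_1,t_2]$ with weight $\ge 2w(\pdag_{t_1})$, or at least $2w(\pdag_{t_1})$ worth of $\Udag_{t_1}$-intervals with deadlines before $t_2$ (hence stars in $[t_1,t_2]$). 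The key object to reason about is $I_{t_2}$, not $\pdag_{t_1}$ or the pages of $\Udag_{t_1}$ individually.
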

\begin{proof}
  By design, $w(\Udag_{t_1}) \leq 4 w(\pdag_{t_1})$, so the effective
  cost at time $t_1$ is at most $5w(\pdag_{t_1})$. Thus it suffices to
  bound  $w(\pdag_{t_1})$.
  Since $t_2 \in \fT$, the interval $I_{t_2}$ was not served at time
  $t_1$. The possible reasons are:
  \begin{enumerate}
  \item The interval $I_{t_2} \in U^\circ_{t_1}$ and $w(p_{t_2}) >
    2\,w(p^\star_{t_1})$. Since $w(\pdag_{t_1}) \leq
    w(\pstar_{t_1})$ by the choice of $\pdag_{t_1}$, so $w(\pdag_{t_1})
    < \frac12 w(p_{t_2})$. Since $I_{t_2}$ is hit by $\astar_{t_2}$
    (because $t_2 \in \fT$), the past preserving property of the online
    solution $\astar$ implies that there must be a star at $(p_{t_2}, t')$ in $\astar_{t_2}$ for some $t' \in [t_1, t_2]$. Therefore, $w(p_{t_2}) \leq 
    w(P^\star_{t_2}[t_1, t_2]).$
  \item $I_{t_2} \not\in U^\circ_{t_1}$ but
    $w(p_{t_2}) > 2 w(\pdag_{t_1})$, so it was not considered in the sorted
    ordering (in line~\ref{l:sort}). However, $I_{t_2}$ was not in $U^\star_{t_1}$, so it was
    hit by $\astar_{t_1}$ at some time after $t_1$---this means
    $w(p_{t_2})$ is counted in $w(P^\star_{t_1}[t_1,t_2])$. So
    $w(\pdag_{t_1}) \leq \frac{1}{2} w(P^\star_{t_1}[t_1,t_2])$.
  \item $I_{t_2} \not\in U^\circ_{t_1}$ and $w(p_{t_2}) \leq 2
    w(\pdag_{t_1})$ but we did not 
    add $I_{t_2}$ to $\Udag_{t_1}$ at time $t_1$: So  $w(\Udag_{t_1})$ must
    have been more than
    $4\, w(\pdag_{t_1}) - w(p_{t_2}) \geq 2\, w(\pdag_{t_1})$. We added
    intervals to $\Udag_{t_1}$ in the earliest-deadline-first order, so
    all the weight added to $w(\Udag_{t_1})$ before considering
    $I_{t_2}$ belongs to $w(\astar_{t_1}[t_1, t_2])$. Chaining these inequalities,
     $w(\pdag_{t_1}) \leq \frac12 w(\Udag_{t_1}) \leq \frac{1}{2} w(P^\star_{t_1}[t_1,t_2])$.
   \end{enumerate}
  Since $\astar_{t_1} \sse \astar_{t_2}$, we get the desired result. 
\end{proof}

\begin{claim}
  \label{cl:case1alt'}
  Suppose $t_1, t_2 \in \fT$ are such that $t_1 < t_2$, and $I_{t_2}$
  \emph{does not} contain $t_1$. Suppose
  $\pdag_{t_1} = \pdag_{t_2}$---call this page $\pdag$---then there is
  an star for $(\pdag,t')$ in $\astar_{t_2}$ for some
  $t' \in (t_1, t_2]$.
\end{claim}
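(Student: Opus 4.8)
The plan is to exhibit the required star as one hitting the doubly-extended interval $\dext{I^{\pdag}_{t_2},t_2}$ of the interval chosen for $\pdag$ at time $t_2$, and to argue that this interval lies inside $(t_1,t_2]$. Since $\pdag=\pdag_{t_2}$ is the cheapest page of $Z^\star_{t_2}$ (\cref{l:evict2-online}), the interval $I^{\pdag}_{t_2}$ belongs to $Z^\star_{t_2}$, so by \cref{l:Ze1-online} its double extension is hit by $\astar_{t_2}$: there is a star $(\pdag,t')\in\astar_{t_2}$ with $t'\in\dext{I^{\pdag}_{t_2},t_2}=[\min(s(I_{t_2}),s(I^{\pdag}_{t_2})),\,t_2]$. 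Hence it suffices to show $s(I_{t_2})>t_1$ and $s(I^{\pdag}_{t_2})>t_1$; then $\min(s(I_{t_2}),s(I^{\pdag}_{t_2}))\ge t_1+1$, so $t'\in(t_1,t_2]$ as desired. The first inequality is immediate: $I_{t_2}=[s(I_{t_2}),t_2]$ and $t_1<t_2$, so $s(I_{t_2})\le t_1$ would give $t_1\in I_{t_2}$, contradicting the hypothesis that $I_{t_2}$ does not contain $t_1$.

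The crux is $s(I^{\pdag}_{t_2})>t_1$, which I would prove by contradiction, assuming $s(I^{\pdag}_{t_2})\le t_1$. First collect some facts about $\pdag$. It is the page of an interval in $Z^\star_{t_1}$ and of one in $Z^\star_{t_2}$, and $Z^\star$ holds only intervals of cached pages, so $\pdag\in C(t_1)\cap C(t_2)$; also $\pdag$ is evicted at time $t_1$ in \cref{l:evict2-online}. Moreover $\pdag\ne p_{t_1}$ and $\pdag\ne p_{t_2}$, since $t_1,t_2\in\fT$ forces $I_{t_1},I_{t_2}$ to be unsatisfied at their deadlines, whence $p_{t_1}\notin C(t_1)$ and $p_{t_2}\notin C(t_2)$, while $\pdag$ lies in both caches. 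The only reinsertion during step $t_1$ that is not immediately undone is that of $p_{t_1}\ne\pdag$ in \cref{l:retain-online}, so $\pdag\notin C(t_1+1)$; with $\pdag\in C(t_2)$ this forces $t_2\ge t_1+2$. Now \Cref{cl:0}, applied to the eviction of $\pdag$ at $t_1$ and its presence in the cache at the end of time $t_2-1>t_1$, yields a request interval $K$ for $\pdag$ with $t_1<s(K)\le e(K)\le t_2-1<t_2$.

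Finally, let $K'\subseteq K$ be a minimal request interval for $\pdag$; it exists because $K$ is one such interval, and a minimal one contains no other request interval for $\pdag$, i.e., is non-dominating. Then $t_1<s(K)\le s(K')$ and $e(K')\le e(K)<t_2$, so $K'$ is a non-dominating request interval for $\pdag$ whose ending time lies below $t_2$ --- hence a candidate in \cref{l:Ipee-online} at time $t_2$, giving $e(I^{\pdag}_{t_2})\ge e(K')$. Combined with the assumed $s(I^{\pdag}_{t_2})\le t_1<s(K')$, this yields $K'\subsetneq I^{\pdag}_{t_2}$, so $I^{\pdag}_{t_2}$ strictly contains a request interval for its own page --- contradicting the fact that \cref{l:Ipee-online} selects a non-dominating interval. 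Thus $s(I^{\pdag}_{t_2})>t_1$, which completes the proof.

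The only point I expect to require care is the bookkeeping around ``non-dominating'' intervals: namely, that the interval supplied by \Cref{cl:0} can indeed be shrunk to a non-dominating competitor of $I^{\pdag}_{t_2}$ that sits strictly inside it. The small boundary facts used along the way --- that $\pdag$ differs from the critical pages $p_{t_1},p_{t_2}$, and that $t_2\ge t_1+2$ --- all follow from the single observation that the critical page of any time in $\fT$ is absent from the cache at that time.
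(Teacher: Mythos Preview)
Your proof is correct and follows essentially the same approach as the paper: both locate a request interval for $\pdag$ inside $(t_1,t_2)$ (via the eviction at $t_1$ and presence in $C(t_2)$), shrink it to a non-dominating interval, and derive a contradiction with the non-dominating property of $I^{\pdag}_{t_2}$ to conclude $\dext{I^{\pdag}_{t_2},t_2}\subseteq(t_1,t_2]$. Your version is more explicit about the boundary bookkeeping (that $\pdag\ne p_{t_1},p_{t_2}$, that $t_2\ge t_1+2$, and the invocation of \Cref{cl:0}), but the argument is the same.
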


\begin{proof}
  Page $\pdag$ is evicted at time $t_1$, and so it must have
  been brought in by an unsatisfied request $I$; this request must start
  after $t_1$ (else it would be satisfied at $t_1$) and end before $t_2$
  (since $\pdag$ is in the cache at time $t_2$). We claim that $I^{\pdag}_{t_2}$ 
  is also contained in $(t_1, t_2]$. Suppose not. So $s(I^{\pdag}_{t_2}) \leq t_1.$ 
  
  The interval $I$ is either itself non-dominating, or contains a non-dominating request for $\pdag$. In either case, there is a non-dominating request interval for $\pdag$ which is contained in $(t_1,t_2]$---call this $I'$ (it could be same as $I$). Now, $I'$ is not designated as $I^{\pdag}_{t_2}$. It must be the case that 
  $t(I^{\pdag}_{t_2}) \geq t(I')$. But then $I^{\pdag}_{t_2}$ contains $I'$, which contradicts the fact that it is non-dominating. 
  
  Since $I_{t_2}$ is also contained in $(t_1, t_2]$, we see that 
  $\dext{I^{\pdag}_{t_2}, t_2}$ is also contained in $(t_1, t_2]$. Since
  $\pdag \in Z^\star$ at time $t_2$, $\dext{I^{\pdag}_{t_2}, t_2}$
   is hit by $A^\star_{t_2}.$ This proves the claim. 
\end{proof}

\paragraph{The Charging Forest.}
Motivated by Claims~\ref{cl:case1alt} and~\ref{cl:case1alt'}, we define
a directed forest $F=(\fT,E)$ as follows. For time $t \in \fT$, if time
$t'$ is the smallest time such that $t' > t$ and the critical interval
$I_{t'}$ for $t'$ contains $t$, we define $t'$ to be the parent of $t$,
i.e., we add an arc $(t,t')$. If no such time $t' > t$ exists, then $t$
has no parent (i.e., zero out-degree). The following lemma gives some
natural properties of the forest 
$F$.

\begin{figure}
    \centering
    \includegraphics[width=4in]{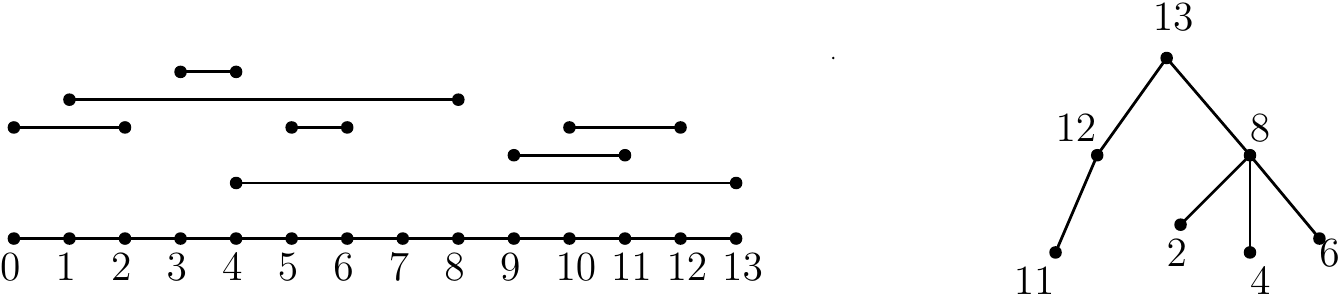}
    \caption{\small\emph{Illustration of $F$: the intervals on the left are $I_t$ for $t \in \fT$ (note that these intervals are identified using their right end-points). The corresponding forest $F$ is shown on the right.}}
    \label{fig:forest}
\end{figure}

\begin{restatable}{lemma}{LemCharging}
  \label{lem:forest} Suppose $t,t' \in \fT$ and $t < t'$.
  \begin{OneLiners}
  \item[(a)] If $I_{t'}$ contains $t$, then $t'$ is an ancestor of $t$ in $F$.
  \item[(b)] If $t'$ is not an ancestor of $t$ in $F$, then
    then any node $t''$ in the subtree rooted at $t'$ satisfies 
    $t'' > t$.
  \end{OneLiners}
\end{restatable}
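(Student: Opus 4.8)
The plan is to prove both parts by induction on $t'$ (or on the structure of the forest), using the construction rule: the parent of $t$ is the \emph{smallest} $t' > t$ whose critical interval $I_{t'}$ contains $t$. For part (a), suppose $I_{t'}$ contains $t$. Let $t''$ be the parent of $t$ in $F$, which exists since at least $t'$ is a candidate, and by minimality $t < t'' \leq t'$. If $t'' = t'$ we are done; otherwise $t'' < t'$, and it suffices to show $I_{t'}$ contains $t''$, for then $t'$ is an ancestor of $t''$ by induction (on the value of $t'$, or on $t' - t$), and hence of $t$. To see $I_{t'}$ contains $t''$: we have $t < t'' < t'$, and $t \in I_{t'} = [s(I_{t'}), t']$, so $s(I_{t'}) \leq t < t'' < t'$, which gives $t'' \in [s(I_{t'}), t') \subseteq I_{t'}$. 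This closes the induction.

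For part (b), suppose $t'$ is not an ancestor of $t$ in $F$, and let $t''$ lie in the subtree rooted at $t'$; we must show $t'' > t$. Since $t''$ is in the subtree of $t'$, either $t'' = t'$ (in which case $t'' = t' > t$ by hypothesis on the pair) or $t''$ is a strict descendant, so by iterating the parent relation there is a chain $t'' = t_0, t_1, \ldots, t_m = t'$ with each $t_{i+1}$ the parent of $t_i$; since the parent of a node is strictly larger than the node, this forces $t'' = t_0 < t_1 < \cdots < t_m = t'$, hence $t'' \leq t'$. Now suppose for contradiction that $t'' \leq t$. Then $t'' \leq t < t'$. Applying part (a) in the contrapositive direction: since $t'$ is not an ancestor of $t$, the interval $I_{t'}$ does \emph{not} contain $t$, i.e.\ $s(I_{t'}) > t$ (as $t < t'$, the only way $t \notin I_{t'} = [s(I_{t'}), t']$ is $s(I_{t'}) > t$). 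But then $s(I_{t'}) > t \geq t''$, so $I_{t'}$ does not contain $t''$ either. This does not immediately contradict $t''$ being a descendant of $t'$, so I instead argue along the chain: for each $i$, $t_{i+1}$ is the parent of $t_i$, so $I_{t_{i+1}}$ contains $t_i$, hence $s(I_{t_{i+1}}) \leq t_i$. Walking up, $s(I_{t'}) = s(I_{t_m}) \leq t_{m-1} < t_m = t'$; more usefully, we get $s(I_{t'}) \leq t_{m-1} \leq t_{m-2} + (\text{gap})$—rather, the cleanest route is: the chain shows $t_i \in I_{t_{i+1}} \subseteq \cdots$, and by part (a), $t'$ is an ancestor of every $t_i$, in particular $I_{t'}$ contains $t_i$ only if... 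Let me restate the key point: from part (a) and the chain, each $t_i$ with $i < m$ satisfies $t_i \in I_{t_{i+1}}$, and inductively $s(I_{t'}) \leq \min_i t_i = t_0 = t''$. So $s(I_{t'}) \leq t'' \leq t$, i.e.\ $t \in [s(I_{t'}), t'] = I_{t'}$, so $I_{t'}$ contains $t$; by part (a) this makes $t'$ an ancestor of $t$, contradicting the hypothesis of part (b). Hence $t'' > t$.

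The main obstacle is part (b): one has to correctly track how the left endpoints propagate along a parent-chain in $F$. The claim $s(I_{t'}) \leq t''$ for $t''$ in the subtree of $t'$ follows because each edge $(t_i, t_{i+1})$ records that $I_{t_{i+1}}$ contains $t_i$, so $s(I_{t_{i+1}}) \leq t_i$; combined with part (a) applied to the pair $(t_i, t')$ (valid since $t_i < t'$ and $t'$ is an ancestor of $t_i$, which we get from the chain), one concludes $I_{t'} \supseteq [s(I_{t'}), t']$ reaches down to cover $t_0 = t''$. Once this monotonicity of left endpoints down a subtree is established cleanly, both parts follow; I would write part (a) first and then reuse it as a black box in part (b).
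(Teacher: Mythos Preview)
Your argument for part~(a) is correct and matches what the paper intends (the paper just says ``simple induction, omitted'').

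For part~(b), however, there is a genuine gap. Your key claim is that $s(I_{t'}) \leq t''$ whenever $t''$ is a descendant of $t'$ (equivalently, that $I_{t'}$ contains every node in its subtree). This is false. Consider three times $1 < 2 < 3$ in $\fT$ with critical intervals $I_1 = [0,1]$, $I_2 = [1,2]$, $I_3 = [2,3]$. Then $2$ is the parent of $1$ and $3$ is the parent of $2$, so $1$ is a descendant of $3$; yet $s(I_3) = 2 > 1$, so $I_3$ does not contain $1$. Your attempted justification (``combined with part~(a) applied to the pair $(t_i, t')$'') uses the \emph{converse} of part~(a), which you have not proved and which fails precisely in examples like this. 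The chain of inequalities $s(I_{t_{i+1}}) \leq t_i$ gives you $s(I_{t'}) \leq t_{m-1}$ and nothing more about $s(I_{t'})$; these inequalities are about different intervals and do not compose.

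The paper's fix is short and uses exactly the pieces you already have. Assume $t'' \leq t$ and take the chain $t'' = t_0 < t_1 < \cdots < t_m = t'$. Since $t_0 \leq t < t_m$, there is an index $i$ with $t \in [t_i, t_{i+1}]$. Now $I_{t_{i+1}}$ contains $t_i$ (parent relation), so $s(I_{t_{i+1}}) \leq t_i \leq t \leq t_{i+1}$, hence $I_{t_{i+1}}$ contains $t$. By part~(a), $t_{i+1}$ is an ancestor of $t$; since $t'$ is an ancestor of $t_{i+1}$ along the chain, $t'$ is an ancestor of $t$, contradiction. The point is to localize to the correct edge of the chain rather than trying to push the left endpoint of $I_{t'}$ all the way down.
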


\begin{proof}
  The first property follows by a simple induction, which we omit.
  For the second property, suppose for a contradiction that $t'' < t$,
  and let $t'' = t_0, t_1, \ldots, t_r = t'$ be the path from $t''$ to
  $t'$ in the forest $F$. Since $t' > t$, there is an $i$ for which
  $t \in [t_i, t_{i+1}]$. Since the interval $I_{t_{i+1}}$ contains
  $t_i$, it also contains $t$. But then, $t_{i+1}$ should be an ancestor
  of $t$ by property~(a), which gives a contradiction.
\end{proof}

We now divide pages and times into classes. For each class $c$, we will consider a sub-forest $F_c$ of $F$. 
We say that a page $p$ is of {\em class} $c$ if $w(p)$ lies in the range $[2^c, 2^{c+1})$. We say that a node $t$ in the charging forest $F$ is of class $c$ if the corresponding page $\pdag_t$ is of class $c$. Let $V^c$ be the vertices of class $c$ in $F$. Let $F^c$ be the minimal sub-graph of $F$ which preserves the connectivity between $V^c$ (as in $F$). So the leaves of $F^c$ belong to $V^c$, but there could be internal vertices belonging to other classes. We now show how to account for the cost incurred for the vertices in $V^c$. Let $A^\star_T(c)$ be the total weight of the stars in $\astar_T$ corresponding to pages of class $c$. We say that a node in $F^c$ is a {\em lone-child} if it is the only child of its parent. 

\begin{claim}
\label{cl:c1}
The total effective cost incurred during the leaf nodes in $F^c$ and the internal nodes of class $c$  in $F^c$ which are not lone-children is  $O(\astar_T(c))$. 
\end{claim}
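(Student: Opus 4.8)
\textbf{Proof proposal for Claim~\ref{cl:c1}.}

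The plan is to charge the effective cost of the relevant class-$c$ nodes to the weight $\astar_T(c)$ of class-$c$ stars, using Claims~\ref{cl:case1alt} and~\ref{cl:case1alt'} as the two charging rules, and using the structure of $F^c$ (in particular \Cref{lem:forest}) to argue that no class-$c$ star is charged more than a constant number of times. First I would recall that the effective cost at a time $t_1\in\fT$ is, by design and by \Cref{cl:case1alt}'s opening line, at most $5\,w(\pdag_{t_1})$; since $t_1$ being of class $c$ means $w(\pdag_{t_1})<2^{c+1}$, it is enough to show that the number of class-$c$ nodes we are accounting for is $O(1)$ times the number of distinct class-$c$ stars in $\astar_T$, i.e. $O(\astar_T(c)/2^c)$. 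So the combinatorial heart of the claim is: produce an injection (up to constant multiplicity) from the set of leaves of $F^c$ together with non-lone-child internal class-$c$ nodes, into the class-$c$ stars of $\astar_T$.

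The key steps, in order. (1) For a leaf $t_1$ of $F^c$, let $t_2$ be its parent in $F^c$. By definition of $F^c$ as the minimal connectivity-preserving subgraph of $F$, the parent $t_2$ in $F^c$ corresponds to an ancestor of $t_1$ in $F$; walking up the $F$-path from $t_1$, the first node $t'$ whose critical interval $I_{t'}$ contains $t_1$ is the $F$-parent, and by repeatedly applying part~(a) of \Cref{lem:forest} some ancestor $t_2'$ in that path has $I_{t_2'}$ containing $t_1$ --- then \Cref{cl:case1alt} applied to $(t_1,t_2')$ gives that the effective cost at $t_1$ is $\le \tfrac52\,w(P^\star_{t_2'}[t_1,t_2'])$, so there is a class-$c$ star located in the time window $(t_1,t_2']$ (it is class $c$ because its weight is $\ge$ the effective cost contribution we are charging, within a factor $2$ — one has to be a bit careful here and possibly charge to a slightly different class, absorbing a constant). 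Actually the cleaner route is: \Cref{cl:case1alt}'s proof exhibits a star $(p,t')$ in $\astar_{t_2'}\subseteq\astar_T$ with $t'\in[t_1,t_2']$ and $w(p)\ge w(\pdag_{t_1})$ up to a factor $2$; assign this star to $t_1$. (2) For an internal class-$c$ node $t_1$ of $F^c$ that is \emph{not} a lone-child, its $F^c$-parent $t_2$ has at least one sibling of $t_1$; the point of excluding lone-children is a standard forest-counting fact: in any forest, the number of non-lone-child nodes plus the number of leaves is at most twice the number of leaves, so it suffices to charge leaves and then absorb the non-lone-child internal nodes into the leaf count with a factor $2$ — alternatively charge each non-lone-child internal node $t_1$ by the same mechanism as a leaf, using an arbitrary leaf descendant, and argue the windows are disjoint enough. (3) Disjointness/bounded-multiplicity: for two class-$c$ leaves $t_1<s_1$ of $F^c$ with respective charged windows $(t_1,t_2]$ and $(s_1,s_2]$, I would use part~(b) of \Cref{lem:forest}: if the $F^c$-subtrees rooted at $t_1,s_1$ are incomparable then the whole subtree hanging below $s_1$ (hence $s_1$ itself and its window's left endpoint region) lies to the right of $t_1$, and symmetrically; combined with the fact that each charged window is of the form $(t_i, \text{ancestor}]$, one gets that a fixed star $(q,\tau)$ of class $c$ can lie in the window of at most $O(1)$ leaves — essentially because along the root-to-$\tau$ path in $F$ only boundedly many class-$c$ branch points can have $\tau$ inside their critical interval. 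Where the $F$-parent of two charged nodes coincides and the windows genuinely overlap, \Cref{cl:case1alt'} is the tool: if $\pdag_{t_1}=\pdag_{t_2}$ and $I_{t_2}$ does not contain $t_1$, there is a \emph{fresh} star for $\pdag$ in $(t_1,t_2]$, giving a distinct class-$c$ star per such node.

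The main obstacle I expect is step~(3): cleanly bounding the multiplicity with which a single class-$c$ star gets charged. The two charging claims give a star in an interval $(t_1,t_2]$ but do not by themselves say the intervals across different charged nodes are nearly disjoint; one must genuinely use the forest structure — that the charged node's window is bounded by an ancestor, that leaves of $F^c$ are pairwise "spread out" by \Cref{lem:forest}(b), and that along any root-path the critical intervals $I_{t'}$ that contain a fixed time $\tau$ are nested (each strictly larger than the previous), so only $O(1)$ of them can have endpoints both giving class-$c$ effective cost against the \emph{same} star. Getting the right constant and handling the boundary case where the same page $\pdag$ repeats (needing \Cref{cl:case1alt'} to manufacture a new star) is the fiddly part; everything else is bookkeeping with the definitions of $F^c$, lone-children, and the effective cost.
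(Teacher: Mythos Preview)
Your proposal has the right framing (effective cost at a class-$c$ node is $O(2^c)$; non-lone-child internal nodes are at most the number of leaves, so reduce to leaves) but the main charging step has a genuine gap, and the paper's route is both different and much simpler.

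The gap is in your step~(1). Invoking \Cref{cl:case1alt} on a leaf $t_1$ and an ancestor $t_2'$ gives you $w(P^\star_{t_2'}[t_1,t_2']) \ge \tfrac{2}{5}\cdot(\text{effective cost at }t_1)$, but this does \emph{not} exhibit a single star of weight comparable to $w(\pdag_{t_1})$. In case~(3) of the proof of \Cref{cl:case1alt}, the weight comes from $\Udag_{t_1}$, which consists of pages each of weight $\le 2w(\pdag_{t_1})$ but possibly much lighter; there is no single class-$c$ star to point to. So your assertion ``exhibits a star $(p,t')$ with $w(p)\ge w(\pdag_{t_1})$ up to a factor~$2$'' is false, and the subsequent attempt to charge to $\astar_T(c)$ specifically breaks down. (This is exactly why the paper later needs the refined \Cref{cl:case1altr} and the disjointness \Cref{cl:c2} to handle the \emph{lone-child} nodes --- that is where \Cref{cl:case1alt}-style charging across classes is actually used.)

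The paper's argument for the leaves avoids \Cref{cl:case1alt} entirely and uses only \Cref{cl:case1alt'}. Group the leaves of $F^c$ by the page $p=\pdag_t$; for a fixed $p$ list them as $t_1<t_2<\cdots$. Two such leaves cannot be $F$-ancestor/descendant (else, since both lie in $V^c$, one would not be a leaf of $F^c$), so by \Cref{lem:forest}(a) the interval $I_{t_i}$ does not contain $t_{i-1}$. Now \Cref{cl:case1alt'} applies with $\pdag_{t_{i-1}}=\pdag_{t_i}=p$ and yields a star $(p,t')$ with $t'\in(t_{i-1},t_i]$. These stars are for the page $p$ itself, hence of class $c$, and the intervals $(t_{i-1},t_i]$ are pairwise disjoint by construction --- so the bounded-multiplicity worry you flag in step~(3) simply disappears. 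The whole argument is three lines; no window-overlap analysis or \Cref{lem:forest}(b) is needed for this claim.
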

\begin{proof}
The effective cost incurred during each time of class $c$ is a constant times $2^c$. 
Since the number of internal nodes which are not lone children is  bounded above by the number of leaf nodes, it is enough to bound the effective cost incurred at the leaf nodes. For a page $p$ of class $c$, let $F^c(p)$ be the leaf nodes $t$ in $F^c$ for which $\pdag_t = p$. Let the times in $F^c(p)$ in increasing order be $t_1, t_2, \ldots, t_k$. Note that $I_{t_i}$ does not contain $t_{i-1}$ for $i=2, \ldots, k$---otherwise $t_{i}$ will be an ancestor of $t_{i-1}$ 
(\Cref{lem:forest}).

\Cref{cl:case1alt'}
now implies that $\astar_T$ contains a star for page $p$ during 
$(t_{i-1}, t_i]$. Thus, the total effective cost incurred during $F^c(p)$ can be charged to the stars in $\astar_T$ corresponding to page $p$. Since all the leaf nodes in $F^c$ belong to class $c$, he result follows. 
\end{proof}

It remains to account for the times in $V^c$ which have only one child in $F^c$. 

\begin{claim}
\label{cl:c2}
Let $t_1$ and $t_2$ be two distinct times of class $c$ which are lone-child nodes in $F^c$. Let $t_1'$ and $t_2'$ be the parents of $t_1$ and $t_2$ respectively. Then the intervals $[t_1, t_1']$ and $[t_2, t_2']$ are internally disjoint. 
\end{claim}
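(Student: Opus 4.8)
The plan is to argue by contradiction: suppose the open intervals $(t_1,t_1')$ and $(t_2,t_2')$ overlap. Since $t_1,t_2$ are both class-$c$ lone-child nodes in $F^c$ with parents $t_1',t_2'$ (in $F^c$), the edge $t_i \to t_i'$ in $F^c$ corresponds to a path in $F$ from $t_i$ up to $t_i'$, all of whose intermediate nodes are of class $\ne c$, and the arcs of $F$ encode containment: if $t \to \hat t$ in $F$ then $I_{\hat t}$ contains $t$ and $\hat t$ is the \emph{smallest} such time. So $I_{t_i'}$ (or more precisely the chain of critical intervals along the path) ``reaches'' back over $t_i$. First I would set up, without loss of generality, $t_1 < t_2$, and split into the cases according to how the two intervals $[t_1,t_1']$ and $[t_2,t_2']$ can interleave on the line.

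The key structural fact I would extract from \Cref{lem:forest} is that containments are ``transitive along ancestor chains'': if $t'$ is an ancestor of $t$ in $F$ then, chaining part~(a) inductively, the time $t$ lies below $t'$ and in particular any node strictly between them on the tree path is $>t$. Combined with \Cref{lem:forest}(b) — a node in a subtree not containing $t$ as an ancestor-descendant is $>t$ — this should force the following dichotomy for $t_1<t_2$: either $t_2$ is a descendant of $t_1$'s parent (so the paths share structure), or the subtrees are ``nested or disjoint'' in the interval sense. Concretely, if $(t_1,t_1')$ and $(t_2,t_2')$ are not internally disjoint and $t_1<t_2$, then since $t_1' \notin (t_2,t_2')$ would be needed for disjointness-failure of a particular type, I'd derive that $t_2 \in [t_1,t_1']$, hence (by part~(a), since $I_{t_1'}$-ancestry over $t_1$ propagates) $t_1'$ is an ancestor of $t_2$; but $t_2$'s parent in $F^c$ is $t_2'$, and since $t_1$ is a \emph{lone child} of $t_1'$ in $F^c$, the only $F^c$-child of $t_1'$ is $t_1$, so $t_2$'s path to the root of its $F^c$-component passes through $t_1$, forcing $t_2' $ to lie on the $F$-path at or below $t_1'$; one then gets $t_1 < t_2 \le t_2' \le t_1'$ with $t_2$ a descendant of $t_1$ — contradicting that $t_2$ is a \emph{leaf-side} lone child distinct from $t_1$ in the same component unless the two intervals are in fact internally disjoint. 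The remaining interleaving case ($t_2 > t_1'$, i.e. $[t_2,t_2']$ starts after $[t_1,t_1']$ ends) makes them trivially internally disjoint, and the nested case ($t_1 < t_2$, $t_2' < t_1'$ with $t_2$ inside) is handled by the same lone-child argument.

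The main obstacle I anticipate is bookkeeping the distinction between $F$ and $F^c$: the parents $t_1',t_2'$ are $F^c$-parents, which are $F$-ancestors but not $F$-parents, so ``$I_{t_1'}$ contains $t_1$'' is not immediate and must be obtained by composing \Cref{lem:forest}(a) up the chain; I need to be careful that the \emph{intermediate} nodes (of other classes) don't provide a shortcut that breaks internal disjointness at a lower level. The clean way to manage this is to phrase everything in terms of the partial order ``$t$ is an $F$-ancestor of $s$'' and the fact that the map $t \mapsto I_t$ is, along any root-to-leaf path, a chain of intervals nested by containment (each $I_{\hat t}\supseteq\{t\}$ and $\hat t$ minimal), so that $[t_i,t_i']$ is essentially the union of the relevant critical intervals along the path; internal-disjointness of $[t_1,t_1']$ and $[t_2,t_2']$ then reduces to: two nodes in the forest with disjoint ``spans'' unless one is an ancestor of the other, and the lone-child hypothesis is exactly what rules out the ancestor case for two \emph{distinct} class-$c$ nodes. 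I would close the proof by stating that in every case either internal disjointness holds or we contradict distinctness / the lone-child property via \Cref{lem:forest}.
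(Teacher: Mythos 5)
Your overall strategy is the same as the paper's: assume WLOG $t_1 < t_2$, observe that failure of internal disjointness forces $t_2 \in (t_1, t_1')$, use \Cref{lem:forest}(a) to promote the containment $t_2 \in I_{t_1'}$ to the statement that $t_1'$ is an $F$-ancestor of $t_2$, and then use the lone-child hypothesis to force the $t_1$--$t_2$ path in $F^c$ to enter $t_1'$ through its unique $F^c$-child $t_1$, making $t_2$ a descendant of $t_1$. (The paper instead splits into the sub-cases $t_2' > t_1'$ and $t_2' < t_1'$ and invokes the lone-child property of $t_2$ or of $t_1$ accordingly; your single-case version, which needs only the lone-child property of $t_1$, is a legitimate streamlining.) However, you misidentify the final contradiction. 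Having derived ``$t_1 < t_2$ and $t_2$ is a descendant of $t_1$,'' you claim this contradicts ``that $t_2$ is a leaf-side lone child distinct from $t_1$''---it does not; one lone-child node can perfectly well be a descendant of another. The contradiction is purely the time ordering: by the construction of the charging forest the parent of any node is strictly larger than the node, so every strict descendant of $t_1$ is strictly smaller than $t_1$, which is incompatible with $t_1 < t_2$. This is exactly how the paper closes both of its cases (``But then $t_1' < t_2$, a contradiction''). Without this observation your argument does not actually terminate, so it must be stated explicitly.

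Second, the ``main obstacle'' you anticipate is a non-issue, but the remedy you propose for it would fail if the obstacle were real. $F^c$ is the minimal subgraph of $F$ preserving connectivity among the class-$c$ vertices, i.e., a Steiner subforest that retains intermediate vertices of other classes; consequently the parent of $t_1$ in $F^c$ is literally its parent in $F$, and $I_{t_1'} \ni t_1$ holds by the definition of the arc $(t_1, t_1')$---no chaining is needed. Be warned that the chaining you suggest (``composing \Cref{lem:forest}(a) up the chain'') is not valid in general: \Cref{lem:forest}(a) gives containment $\Rightarrow$ ancestry, and the converse fails, since each critical interval along a root-ward path is only guaranteed to reach back to the immediately preceding node, not all the way down to the original leaf. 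So if $t_1'$ really were a proper $F$-ancestor rather than the $F$-parent of $t_1$, both your argument and the paper's would need a different justification for $I_{t_1'} \ni t_1$.
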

\begin{proof}
Suppose not. Say $t_1 < t_2 \leq t_1'$. First assume $t_2' > t_1'$. Then $I_{t_2'}$ contains $t_1'$ and so $t_2'$ must be an ancestor of $t_1'$. If $t_1'$ is same as $t_2$, then the result follows easily, otherwise $t_1'$ is a descendant of $t_2$ (since $t_2'$ has only one child). But then $t_1' < t_2$, a contradiction. 

The other case happens when $t_2' < t_1'$. In this case $I_{t_1'}$ contains $t_2'$ (since it contains $t_1$ and $t_1 < t_2'$). If $t_1 = t_2'$, the result again follows trivially. Otherwise $t_2'$ is a descendant of $t_1$, a contradiction. 
\end{proof}

The above Claim along with
\Cref{cl:case1alt} and \Cref{cl:c1}
show that the total cost incurred by times of class $c$ can be charged to $w(A^\star_T)$. Thus, if there are $K$ different classes, we get $O(K)$ approximation. To convert this into $O(\log n)$ approximation, we observe the following refinement of Claim~\ref{cl:case1alt}. For a class $c$, times $t_1 < t_2$, let $A^\star_T(c,[t_1,t_2])$ be the stars of $A^\star_T[t_1, t_2]$ which are of class $c$. 

\begin{claim}
  \label{cl:case1altr}
  Suppose times $t_1, t_2 \in \fT$ are such that $t_1 < t_2$ and
  $I_{t_2}$ contains time $t_1$. Let $\pdag_{t_1}$ be of class $c$. Then the effective cost at time
  $t_1$ is at most at most 
  $$10 \left( \sum_{c'=c-\log n-3}^{c} w(\astar_T(c',[t_1,t_2])) + \sum_{c' > c} \frac{w(\astar_T(c',[t_1, t_2])}{2^{c'-c}} \right). $$
\end{claim}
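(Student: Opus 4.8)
The plan is to rerun the three-case analysis in the proof of \Cref{cl:case1alt}, but this time bounding the effective cost at $t_1$ in terms of the \emph{classes} of the stars that pay for it, not merely their total weight. As in that proof, the effective cost at time $t_1$ is at most $5\,w(\pdag_{t_1})$, and since $\pdag_{t_1}$ has class $c$ we have $w(\pdag_{t_1}) \in [2^c, 2^{c+1})$; so it suffices to bound $w(\pdag_{t_1})$ by a constant times the claimed right-hand side. I will also move every charge onto $\astar_T$ using the monotonicity inclusions $\astar_{t_1} \sse \astar_{t_2} \sse \astar_T$.

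For Cases~1 and~2 of \Cref{cl:case1alt}, the existing argument already supplies $w(\pdag_{t_1}) \le \tfrac12 w(p_{t_2})$ together with a star for page $p_{t_2}$ at some time in $[t_1, t_2]$ (by past-preservation, plus monotonicity in Case~2). The new observation is that $w(p_{t_2}) > 2\,w(\pdag_{t_1}) \ge 2^{c+1}$, so $p_{t_2}$ has some class $c' \ge c+1$ and thus $w(p_{t_2}) \le w(\astar_T(c',[t_1,t_2]))$. Hence $w(\pdag_{t_1}) < 2^c \le w(\astar_T(c',[t_1,t_2]))/2^{c'-c}$, and the effective cost is at most $5\,w(\pdag_{t_1}) < 10\cdot 2^c \le 10\sum_{c'>c} w(\astar_T(c',[t_1,t_2]))/2^{c'-c}$, which already lies inside the claimed bound (the first sum is not needed in these cases).

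Case~3 is where the $\log n$ term is forced and where the real work lies. Here $w(\pdag_{t_1}) \le \tfrac12 w(\Udag_{t_1})$, and since $I_{t_2}$ was not added to the prefix, all of $\Udag_{t_1}$ was serviced before $I_{t_2}$ in earliest-deadline order: its intervals are for distinct pages, each hit by $\astar_{t_1}$ at a time inside $[t_1,t_2]$ (so each contributes a star of its own class to $\astar_T[t_1,t_2]$), and by \cref{l:sort} each has weight at most $2\,w(\pdag_{t_1}) < 2^{c+2}$, hence class at most $c+1$. I would then split these pages by class. Those of class at most $c-\log n-4$ number at most $n$ (since $U$ holds one interval per page) and each weighs less than $2^{c-\log n-3}$, so together they weigh less than $n\cdot 2^{c-\log n-3} = 2^{c-3} \le w(\pdag_{t_1})/8$ (logarithms base $2$). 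The rest have class in $[c-\log n-3,\, c+1]$, so weigh at most $\sum_{c'=c-\log n-3}^{c+1} w(\astar_T(c',[t_1,t_2]))$. Substituting gives $w(\pdag_{t_1}) \le \tfrac12\big(\tfrac18 w(\pdag_{t_1}) + \sum_{c'=c-\log n-3}^{c+1} w(\astar_T(c',[t_1,t_2]))\big)$, which rearranges to $w(\pdag_{t_1}) \le \sum_{c'=c-\log n-3}^{c+1} w(\astar_T(c',[t_1,t_2]))$. Finally I would move the $c'=c+1$ term into the geometric tail via $w(\astar_T(c+1,[t_1,t_2])) = 2\cdot w(\astar_T(c+1,[t_1,t_2]))/2^{(c+1)-c}$ and multiply by $5$, yielding the stated bound with leading constant $10$.

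The step I expect to be the main obstacle is exactly the bookkeeping in Case~3: one has to notice both that \cref{l:sort} caps the weights of the serviced prefix at $2\,w(\pdag_{t_1})$ (so no charged star has class above $c+1$) and that the prefix contains at most $n$ intervals, and then see that these two facts let the arbitrarily-light classes --- those below about $c-\log n$ --- collectively contribute only a constant fraction of $w(\pdag_{t_1})$ and hence disappear from the final sum. Cases~1 and~2, and the chasing of constants and of the $\astar_{t_1}\sse\astar_{t_2}\sse\astar_T$ inclusions, are routine given \Cref{cl:case1alt}.
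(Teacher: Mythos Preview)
Your proposal is correct and follows essentially the same idea as the paper: discard the very light pages using the observation that there are at most $n$ of them (so their total weight is a small fraction of $w(\pdag_{t_1})$), and charge any heavy page geometrically via the $2^{c'-c}$ factor. The paper does this more modularly, applying the bound $w(\pdag_{t_1}) \le \tfrac12\, w(P^\star_{t_2}[t_1,t_2])$ from \Cref{cl:case1alt} as a black box and then splitting $P^\star_T[t_1,t_2]$ by weight; you instead rerun the three-case analysis and handle Cases~1--2 (a single heavy star) separately from Case~3 (the prefix $\Udag_{t_1}$). Both arrive at the same bound.

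One small slip: you write $w(\pdag_{t_1}) < 2^c$, but since $\pdag_{t_1}$ is of class $c$ you only have $w(\pdag_{t_1}) < 2^{c+1}$. Your next line, $5\,w(\pdag_{t_1}) < 10\cdot 2^c$, is nonetheless correct, so the final bound is unaffected.
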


\begin{proof}
\Cref{cl:case1alt} 
shows that $w(P^\star_T[t_1, t_2])$ is at least $\frac{w(\pdag_{t_1})}{2}.$ Let $P'$ be the pages of weight at most $w(\pdag)/4n$ in $P^\star_T[t_1, t_2]$---since the total weight of these pages is at most $w(\pdag)/4$. Thus the pages of class $c-\log n - 2$ and higher contribute at least half of $w(P^\star_T[t_1, t_2])$. Further, if $P^\star_T[t_1, t_2]$ contains a page $p$ of weight higher than $w(p)$, then we can just charge it $w(\pdag_{t_1})$ (and may not even charge to other pages in $P^\star_T[t_1, t_2])$. The desired result now follows from
\Cref{cl:case1alt}. 
\end{proof}

\Cref{cl:c1} and~\Cref{cl:c2} along with~\Cref{cl:case1altr} 
imply that
the total cost incurred during times of class $c$ is a constant times $$
\sum_{c'=c-\log n-3}^{c} w(\astar_T(c')) + \sum_{c' > c}
\frac{w(\astar_T(c'))}{2^{c'-c}}.$$ Summing over all classes
yields~\Cref{thm:round-on}.

\section{Offline Algorithm for the \wPwTw and \wPwTwP Problems}
\label{sec:overlap}

In 
\Cref{sec:solving-wpwtw-using}, 
we gave an online algorithm for \wPwTw and \wPwTwP using 
online (integer) solutions to~\eqref{eq:IP} and \eqref{eq:IPp}. 
We shall now prove the following offline version of
\Cref{thm:round-on}. 

\begin{theorem}
\label{thm:offline}
There is a polynomial time algorithm that converts an
  $\alpha$-approximate integral solution to~(\ref{eq:IPp}) into a
  solution for the \wPwTwP instance and has  approximation ratio of $O(\alpha)$.
As a consequence, there is a polynomial time algorithm that converts an
  $\alpha$-approximate integral solution to~(\ref{eq:IP}) into a
  solution for the \wPwTw instance and has  approximation ratio of $O(\alpha)$.  
\end{theorem}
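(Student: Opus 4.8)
The plan is to reduce \wPwTwP to \wPwTw exactly as in the online case and then give an offline analogue of \Cref{algo:first} equipped with a \emph{reverse-delete} post-processing step. For the reduction: an integral solution of~(\ref{eq:IPp}) specifies through its $y_I$ variables precisely which request intervals are served and which are abandoned to their penalty, so deleting the abandoned intervals yields an equivalent \wPwTw instance on which $\astar := \{(p,t) : x_{p,t}=1\}$ is still feasible for~(\ref{eq:IP}) with the same $x$-cost. Hence it suffices to prove the second statement: convert an $\alpha$-approximate integral $\astar$ for~(\ref{eq:IP}) into a \wPwTw schedule of cost $O(w(\astar)) = O(\alpha\,\opt)$ in polynomial time. (Offline we can also preprocess $\astar$ so that for each page $p$ and each request interval $I$ hit by $\astar$ there is a designated witnessing star, an offline substitute for the sparsity property~(A3).)

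The forward pass sweeps $t = 0, 1, \dots, T$ maintaining a cache and behaves like \Cref{algo:first}: when the critical interval $I_t$ is unsatisfied and the cache is full we evict the cheapest page $p_{\min}$; if $w(p_t) > 2 w(p_{\min})$ we load the eviction cost of $p_{\min}$ onto $p_t$; otherwise we form $Z^\star$ (pages in $C(t)$ whose double extension is hit by $\astar$), serve-and-evict all of $U\setminus U^\circ$, pick $\pstar \in Z^\star$ via \Cref{lem:Z}, evict $Z^\star_{\le w(\pstar)}$, serve-and-evict $U^\circ_{\le 2w(\pstar)}$, and finally bring in $p_t$ if still needed. The only place this fails to be bounded-cost in the overlapping case is the serve-and-evict of $U\setminus U^\circ$: a single page $p$ may be served at many times $t_1, t_2,\dots$, each time to satisfy a distinct interval $I_j \ni t_j$ hit by the \emph{same} star $(p,t')\in\astar$, so this star could be charged $\Omega(\log n)$ times---exactly the loss the online analysis incurred.

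The reverse-delete pass removes this overcharging. Process the serve-and-evict events from the $U\setminus U^\circ$ step (together with the $p_{\min}$-evictions charged against them) in decreasing time order; delete the event ``serve $p$ at $t_j$ for interval $I_j$'' whenever $I_j$ is \emph{still} satisfied after the deletions made so far, i.e.\ whenever some surviving serving of $p$ occurs at a time in $I_j$, or $p$ is already resident in the cache during $I_j$ for reasons recorded by the forward pass. An event is deleted only when its witnessing interval remains covered, so every request of the original instance is still satisfied at the end; and deletions only shrink the cache occupancy at each time, so no capacity constraint is violated, giving a feasible \wPwTw schedule. The crux---and the step I expect to be the main obstacle---is showing that after reverse-delete every star $(p,t')\in\astar$ is charged only $O(1)$ times: the surviving servings of $p$ have ``minimal'' witnessing intervals, and invoking the same tree structure on request intervals as in \Cref{lem:forest} (but without its lone-child case analysis, since offline we can keep one serving per genuinely necessary interval), at most a constant number of these minimal intervals can contain the single time $t'$; likewise a surviving $p_{\min}$-eviction charged at such a time is absorbed into $w(\astar)$ up to a constant factor.

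Given that claim, the cost bookkeeping mirrors \Cref{sec:costg}: the loads accumulated on cached pages total at most the final cache weight, hence $O(\opt)$; the evictions of $Z^\star_{\le w(\pstar)}$ and services of $U^\circ_{\le 2w(\pstar)}$ are charged to stars of $\astar$, each charged $O(1)$ times by the (unchanged) argument of \Cref{lem:twice}; and the surviving serve-and-evict events from $U\setminus U^\circ$, plus the $p_{\min}$-evictions attached to them, cost $O(w(\astar))$ by the structural claim. Summing, the total cost is $O(w(\astar)) = O(\alpha\,\opt)$, and since both passes run in polynomial time, this proves the theorem and, via the reduction above, the \wPwTwP statement.
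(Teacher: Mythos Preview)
Your outline---reduce \wPwTwP to \wPwTw via the $y$-variables, run the forward pass of \Cref{algo:first}, then repair the over-charging of the $U\setminus U^\circ$ servings with a reverse-delete---matches the paper. But the specific reverse-delete you propose and its analysis are not what the paper does, and your version has two real gaps.

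First, feasibility. Your rule deletes the serving of $p$ at time $t_j$ as soon as the \emph{witnessing} interval $I_j$ is still covered. But serving $p$ at $t_j$ may incidentally be the only event satisfying some \emph{other} interval $I'\ni t_j$ for the same page; you never check this, and ``witness covered $\Rightarrow$ everything covered'' is not justified. Second, charging. You appeal to \Cref{lem:forest}, but that forest is built on the \emph{critical} intervals $I_t$ for $t\in\fT$, not on the request intervals of a single page, and it is used only in the online analysis; it plays no role in the offline proof. Saying ``at most a constant number of minimal witnessing intervals can contain a fixed star time'' is exactly the claim that needs proving, and invoking \Cref{lem:forest} does not prove it.

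The paper's reverse-delete is different and sidesteps both issues. It first preprocesses the instance so that no request interval for a page contains another for the same page (remove the outer one)---a step you do not mention and which is essential. Then, letting $\Sat_p$ be the intervals for $p$ served in the $U\setminus U^\circ$ step and $T_p$ their serving times, it picks a \emph{maximal disjoint} subcollection $\Sat'_p\subseteq \Sat_p$ and keeps only those times in $T_p$ that are closest (on each side) to the endpoints of intervals in $\Sat'_p$; call this $T'_p$. Feasibility is a short argument using non-nestedness: any $I\in\Sat_p$ overlaps some $I'\in\Sat'_p$, hence contains an endpoint of $I'$, hence contains the nearest retained serving time on that side. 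Charging is immediate: $|T'_p|\le 4|\Sat'_p|$, and the intervals of $\Sat'_p$ are disjoint so each is hit by a distinct star of $\astar$. No forest, no lone-child analysis, and the remaining costs (loads, $Z^\star_{\le w(\pstar)}$, $U^\circ_{\le 2w(\pstar)}$) go through verbatim from \S\ref{sec:costg} and \Cref{lem:twice}.
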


As in \Cref{thm:round-on}, we will actually prove the above theorem 
for the moire restricted \wPwTw problem. This is sufficient for the 
more general \wPwTwP problem as well, by the same reduction as the one
we used in \Cref{thm:round-on}. Namely, the requests that are satisfied
by the integer solution to an \wPwTwP instance are used to create an 
(equivalent) instance of the \wPwTw problem, and then the above theorem 
for the \wPwTw problem is applied to this instance to derive a valid
solution for the original \wPwTwP instance. 

In the offline setting, we can assume that a request interval for a page $p$ does not contain another interval for the same page---otherwise we can always remove the outer interval. 
Let $\astar$ be an integral solution to~\eqref{eq:IP}, and we 
want
to convert it to a feasible solution to the underlying \wPwTw
instance. As discussed in
\Cref{sec:solving-wpwtw-using}, 
this will be done by 
adding a reverse delete step to
\Cref{algo:first} 
(which considered the special case when all the request intervals for a particular page were mutually disjoint).

\begin{figure}[t]
  \removelatexerror
  \begin{algorithm}[H]
    \ForEach{$t=0, 1, \ldots$}{
      \textbf{let} $I_t$ be the interval with deadline $t$, and let $p_t
      \gets \page(I_t)$ \\
      \label{l:firsttestf} \If{cache $C(t)$ full and $I_t$ not satisfied}{
        \textbf{evict} the least-weight page $p_{\min}$ in
        $C(t)$ \label{l:evict1f} \\
        \If{$w(p_t) \leq 2\,w(p_{\min})$}{

          \medskip
          $Z^\star \gets \emptyset$. \label{l:Zb1f} \\
          \For{every page $p$ in $C(t)$}{
            $I^p_t \gets$ the request interval $I$ with $\page(I) = p$ and
            largest ending time $e(I) < t$. \label{l:Ipeef} \\
            \textbf{if} $\dext{I^p_t,t}$ is hit by $\astar$ then add %
            $p$ to
            $Z^\star$. \label{l:Ze1f}
          }
          \label{l:Udf}
          $U \gets$ unsatisfied request intervals active at time $t$ %
          (one per page, page requests are disjoint%
          ).
          \\
          $U^\circ \gets \{I \in U \mid \exists t' \in I \text{ with }
          (\page(I),t') \in \astar \}$ be intervals in $U$ \textbf{not} hit by
          $\astar$ \label{l:Ucircf}\\

          \medskip
          \textbf{serve} and \textbf{evict} all requests in $U \setminus
          U^\circ$. \label{l:serve1f} \\ 

            \medskip

            \textbf{let} $U^\circ_{\leq w}$ and $Z_{\leq w}^*$ denote pages in $U^\circ$ and $Z^\star$
            respectively with weight at most $w$. \\
            \textbf{let} $\pstar$ be a page in $Z^\star$ such that $w(U^\circ_{\leq
              2w(\pstar)}) \leq 2 \cdot w(Z^\star_{\leq w(\pstar)})$. \label{l:goodpf}
            \\ 
            \textbf{evict} all pages in $Z^\star_{\leq
              w(\pstar)}$. \label{l:evict2f} \\
            \textbf{serve} and \textbf{evict} all requests in $U^\circ_{\leq 2w(\pstar)}$. \label{l:serve2f} \\
        }
      }
      \textbf{if} $I_t$ not satisfied \textbf{then}
      bring page $p_t$ into
      cache. 
      \label{l:retainf}
    }
    \medskip
    \hl{$\Sat \gets$ set of requests serviced in line~}\ref{l:serve1f}
    \\ \label{l:revf}
    \For{\hl{every page $p$}}{
      \hl{$T_p \gets$ set of times when request intervals in $\Sat$ for page
      $p$ are serviced (in line~}\ref{l:serve1f})       \label{l:revpf}
      \\
      \hl{$\Sat'_p \gets$ maximal disjoint collection of request intervals
      for
      page $p$ in $\Sat$.} \\
      \hl{$T_p' \gets$ set of times in $T_p$ closest (on either side) to
      the two end-points of intervals in $\Sat'_p$} \\
      \hl{\textbf{cancel} all movements of $p$ into cache at times in 
      $T_p \setminus T_p'$ (during line~}\ref{l:serve1f}).
   \label{l:lastf} \\
    }
    \caption{ConvertOffline$(\text{IP solution }\astar)$}\label{algo:full}
  \end{algorithm}
  \caption{Offline Algorithm to Service Request Intervals in the General Case}
  \label{fig:algo-full}
  \Description{Offline Algorithm}
\end{figure}

The algorithm is shown in 
\Cref{algo:full}. 
The first part of the
algorithm until 
\cref{l:retainf} is same as in \Cref{algo:first}. 
However, we cannot pay for all the evictions in
line~\ref{l:serve1f}. Therefore, we {\em remove} some of these evictions
in lines~\ref{l:revf}--\ref{l:lastf}.  We describe the details of this
process now. We use $\Sat$ to denote the set of requests serviced during
\cref{l:serve1f}. 
Let $\Sat_p$ be
the requests in $\Sat$ that correspond to $p$ (and $T_p$ be the time at which they are served), and $\Sat'_p$ be a
maximal collection of disjoint intervals in $\Sat_p$. Since each of the
intervals in $\Sat'_p$ is hit by a distinct element of $A^\star$, we can
pay for the service of $\Sat'_p$. We define $T_p'$ to be the time
instances in $T_p$ which are closest on each side to the end-points of
the intervals in $\Sat'_p$, and hence $|T_p'| \leq 4|\Sat'_p|$. It is
not difficult to show that each interval in $\Sat_p$ has non-empty
intersection with $T_p'$, and so it suffices to service $p$ only during
the times in $T_p'$. This is why the algorithm is correct, and services
all requests; we prove these facts formally below.

For the analysis, we again give some supporting claims to show that the
algorithm is well-defined, and then bound the cost. The proofs of
Claim~\ref{cl:0}--\ref{cl:Z}, and Lemma~\ref{lem:Z} remain unchanged. We
restate these here for sake of completeness.

\begin{claim}
  \label{cl:0f}
  Suppose a page $p$ is evicted from the cache at time $t_1$ but is in
  the cache at the end of time $t_2 > t_1$. Then there must exist a
  request interval $I$ for page $p$ with $t_1 < s(I) \leq e(I) \leq t_2$.
\end{claim}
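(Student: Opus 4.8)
The statement to prove is the exact analogue of Claim~\ref{cl:0} for the offline algorithm (\Cref{algo:full}), so the plan is to reuse that proof essentially verbatim, checking that the reverse-delete phase (lines~\ref{l:revf}--\ref{l:lastf}) does not invalidate the argument. The key structural fact I would rely on is that the only way a page gets brought into the cache \emph{and retained} — as opposed to being loaded and immediately evicted within the same timestep — is line~\ref{l:retainf}, which serves the critical request $I_t$. All the service steps in lines~\ref{l:serve1f} and~\ref{l:serve2f} bring a page in and then evict it in the same iteration, so they never contribute to ``$p$ is in the cache at the end of time $t_2$''. (The cancellation step in line~\ref{l:lastf} only removes some of these load-and-evict pairs, which can only help: it cannot cause a page that was absent to become present, since $T_p \setminus T_p'$ movements are into-then-out-of the cache.)

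So here is the argument I would write. Suppose $p$ is evicted at time $t_1$ and is present at the end of time $t_2 > t_1$. Consider the last time in $(t_1, t_2]$ at which $p$ transitioned from absent to present-and-retained; by the observation above this must be an execution of line~\ref{l:retainf} at some time $t$ with $t_1 < t \le t_2$, triggered because the critical request $I_t$ (with $\page(I_t) = p$, i.e., $I = I_t$) was unsatisfied at time $t$. Since $e(I) = t \le t_2$, it remains to show $s(I) > t_1$. Suppose instead $s(I) \le t_1$. Then the interval $I = [s(I), t]$ contains the time $t_1$. But $p$ was evicted at $t_1$, which means $p$ was in the cache at time $t_1$ (you cannot evict an absent page); hence $I$ was already satisfied at time $t_1 \in I$, contradicting the assumption that $I_t$ was unsatisfied at time $t$. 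This gives $t_1 < s(I) \le e(I) \le t_2$, as claimed.

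I do not expect a genuine obstacle here — the claim is deliberately stated to be robust to the reverse-delete modification. The one point requiring a sentence of care is making precise that the reverse-delete step cannot create new ``retained'' insertions: it only \emph{cancels} movements in line~\ref{l:serve1f}, each of which was a load immediately followed by an eviction, so cancelling such a pair leaves the page absent over that interval exactly as if the service had never happened, and in particular does not turn an absent page present. Given that, the wording of Claim~\ref{cl:0} transfers with no change, which is exactly why the paper says ``the proofs of Claim~\ref{cl:0}--\ref{cl:Z} remain unchanged.''
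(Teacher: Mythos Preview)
Your proposal is correct and follows the same argument as the paper: the paper simply notes that the proof of Claim~\ref{cl:0} carries over unchanged, since the forward pass of \Cref{algo:full} (through line~\ref{l:retainf}) is identical to \Cref{algo:first}, and the only retained insertion is line~\ref{l:retainf}. Your additional paragraph verifying that the reverse-delete step cannot create new retained insertions is sound but not strictly needed, because the claim is invoked to show well-definedness of the forward pass and the reverse-delete only removes load--evict pairs (so the end-of-timestep cache contents are unchanged).
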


\begin{claim}
  \label{cl:Zf}
  The set $Z^\star$ defined in lines~\ref{l:Zb1}--\ref{l:Ze1} is non-empty.
\end{claim}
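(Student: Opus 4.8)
The plan is to reuse the proof of \Cref{cl:Z} essentially verbatim, the only changes being that the online solution $\astar_t$ is replaced by the static integral solution $\astar$, and that the appeal to \Cref{cl:0} is replaced by \Cref{cl:0f}. First I would observe that when the algorithm reaches line~\ref{l:Zb1f}, the guard in line~\ref{l:firsttestf} has succeeded, so the cache $C(t)$ is full---it holds exactly $k$ pages---and $I_t$ is not yet satisfied, so $p_t \notin C(t)$. Hence the $k$ pages of $C(t)$ are $k$ distinct pages, none of which is the critical page $p_t$.

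Next I would argue that for each $p \in C(t)$ the interval $I^p_t$ of line~\ref{l:Ipeef} is well-defined, i.e., there is at least one request interval for $p$ with ending time strictly less than $t$. Since the cache is empty initially, $p$ must have entered $C(t)$ at some earlier time, and the only step that brings a page in without immediately evicting it is line~\ref{l:retainf}, which fires at the deadline of an unsatisfied request for $p$; more formally, treating the initial empty cache as an eviction of $p$ at a time $t_1 < t$, \Cref{cl:0f} supplies a request interval $I$ for $p$ with $e(I) \le t-1 < t$. Taking the interval with the largest such ending time yields $I^p_t$, and in particular $e(I^p_t) < t$, so the double extension $\dext{I^p_t,t} = [\min(s(I_t),s(I^p_t)),\ldots,t]$ is well-defined.

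Finally I would apply the double-extension constraint~(\ref{eq:2}) at time $t$ to the collection $\calC = \{\, I^p_t : p \in C(t)\,\}$: this is a family of $k$ request intervals for $k$ distinct pages, all ending before $t$ and none of them a request for $p_t$, so~(\ref{eq:2}) is exactly a constraint of the required form and gives $\sum_{p \in C(t)} \sum_{t' \in \dext{I^p_t,t}} x_{p,t'} \ge 1$. Consequently at least one page $p \in C(t)$ satisfies $(p,t') \in \astar$ for some $t' \in \dext{I^p_t,t}$, i.e., $\dext{I^p_t,t}$ is hit by $\astar$, and that page is added to $Z^\star$ in line~\ref{l:Ze1f}; hence $Z^\star \neq \emptyset$.

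I do not expect any genuine obstacle here: the statement is a direct transcription of \Cref{cl:Z} to the offline algorithm, and the reverse-delete phase added in lines~\ref{l:revf}--\ref{l:lastf} does not touch this part of the argument. The only point requiring (minor) care is the well-definedness of $I^p_t$, which is handled by \Cref{cl:0f} together with the convention that the cache starts empty.
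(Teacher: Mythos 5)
Your proposal is correct and follows exactly the paper's route: the paper simply notes that the proof of \Cref{cl:Z} carries over unchanged, namely that $I^p_t$ exists for each $p\in C(t)$ by \Cref{cl:0f} (with the cache initially empty), and that applying constraint~(\ref{eq:2}) to these $k$ request intervals forces at least one doubly-extended interval to be hit by $\astar$. The extra care you take in verifying that the collection $\calC$ has the required form ($k$ distinct pages, none equal to $p_t$, all intervals ending before $t$) is implicit in the paper but correct and welcome.
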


\begin{lemma}
  \label{lem:Zf}
  There exists a page %
  $\pstar \in Z^\star$ such that 
  \[ w(U^\circ_{\leq 2w(\pstar)}) \leq 2\, w(Z^\star_{\leq
      2w(\pstar)}). \]
\end{lemma}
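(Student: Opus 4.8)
The plan is to reuse the proof of \Cref{lem:Z} essentially verbatim, since the offline setting changes nothing of substance: the time-indexed online solution $\astar_t$ is simply replaced by the single fixed integral solution $\astar$ (which satisfies every constraint of~\eqref{eq:IP}), and $I^p_t$ remains the most recent request interval for $p$ ending strictly before $t$ --- well-defined here because in the offline instance we assumed no request interval for a page contains another. The argument has two parts: the counting bound $|U^\circ| \le |Z^\star|$, and a weighted bipartite-matching step that selects $\pstar$.

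\textbf{Step 1 (counting bound).} I would produce a family of request intervals for pairwise distinct pages, none of whose right-extensions at $t$ is hit by $\astar$, and then invoke constraint~\eqref{eq:1}. For each $p \in C(t) \setminus Z^\star$ take $I^p_t$: since $p \notin Z^\star$, no star of $\astar$ lies in $\dext{I^p_t,t}$, and because $e(I^p_t) < t$ we have $\rext{I^p_t,t} \subseteq \dext{I^p_t,t}$, so its right-extension is un-hit too. To these $k - |Z^\star|$ intervals (the cache is full, so $|C(t)| = k$) adjoin the request intervals of $U^\circ$: each contains $t$, so its right-extension equals the interval itself and is un-hit by the definition of $U^\circ$; there is at most one such interval per page; and since an active, still-unsatisfied request for a page keeps that page out of $C(t)$, these pages are disjoint from those used above. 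The family thus consists of $|U^\circ| + (k - |Z^\star|)$ intervals for distinct pages with no hit right-extension, so if it had $\ge k+1$ members, any $k+1$ of them would make the left side of~\eqref{eq:1} vanish --- a contradiction. Hence $|U^\circ| + k - |Z^\star| \le k$, i.e.\ $|U^\circ| \le |Z^\star|$.

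\textbf{Step 2 (choosing $\pstar$).} Form the bipartite graph on $(Z^\star, U^\circ)$ with an edge between $p \in Z^\star$ and $p' \in U^\circ$ whenever $w(p') \le 2w(p)$; the neighbourhood of a page $p$ is then exactly $U^\circ_{\le 2w(p)}$. If $Z^\star$ can be matched into $U^\circ$ --- which, since $|U^\circ| \le |Z^\star|$, forces $|U^\circ| = |Z^\star|$ and a perfect matching --- then summing $w(p') \le 2w(p)$ over the matched pairs gives $w(U^\circ) \le 2w(Z^\star)$, and $\pstar := $ a heaviest page of $Z^\star$ works because then $Z^\star_{\le w(\pstar)} = Z^\star$. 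Otherwise, by Hall's theorem there is a minimal (under inclusion) deficient set $A' \subseteq Z^\star$; minimality gives $|N(A')| = |A'| - 1$ and, for every $a \in A'$, a perfect matching between $A' \setminus \{a\}$ and $N(A')$, so $w(N(A')) \le 2 w(A' \setminus \{a\}) \le 2 w(A')$. Picking $\pstar$ to be a heaviest page of $A'$, we have $U^\circ_{\le 2w(\pstar)} = N(\pstar) \subseteq N(A')$ and $A' \subseteq Z^\star_{\le w(\pstar)}$, hence
\[ w(U^\circ_{\le 2w(\pstar)}) \;\le\; w(N(A')) \;\le\; 2w(A') \;\le\; 2\,w(Z^\star_{\le w(\pstar)}) \;\le\; 2\,w(Z^\star_{\le 2w(\pstar)}) . \]

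I expect Step 1 to be the only delicate point: the inclusion $\rext{I^p_t,t} \subseteq \dext{I^p_t,t}$, valid precisely because $e(I^p_t) < t$, is what lets the pages outside $Z^\star$ feed into the right-extension constraint~\eqref{eq:1}, and one must also check that all the intervals in play genuinely lie on distinct pages. Step 2 is a routine Hall-type argument. Indeed, since none of this combinatorics differs from the online setting, one may simply cite the proof of \Cref{lem:Z}.
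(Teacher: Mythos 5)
Your proof is correct and follows essentially the same route as the paper: the paper proves \Cref{lem:Zf} by observing that the proof of \Cref{lem:Z} (the counting bound $|U^\circ|\le|Z^\star|$ via constraint~(\ref{eq:1}) applied to the un-hit right-extensions, followed by the Hall-set argument) carries over verbatim to the offline solution $\astar$. Your write-up merely fills in the routine details of the minimal-deficient-set step that the paper leaves terse.
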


\Cref{lem:Zf} shows the existence of page $p^\star$ in
\cref{l:goodpf}. The proof of the following claim is same as that of
\Cref{clm:pt-evicted}.

\begin{claim}
  \label{clm:pt-evictedf}
  Let $I_t$ be unsatisfied at time $t$. If $w(p_t) \leq
  2w(p_{\min})$, then the page $p_t$ belongs to either $U\setminus
  U^\circ$ in line~\ref{l:serve1f} or to $U^\circ_{\leq 2w(\pstar)}$ in
  line~\ref{l:serve2f}, and is served and evicted. Else $p_t$ is served
  by line~\ref{l:retainf}, and remains in the cache.
\end{claim}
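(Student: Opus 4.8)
The plan is to reuse, essentially verbatim, the argument of \Cref{clm:pt-evicted}. The forward pass of \Cref{algo:full} (lines~\ref{l:firsttestf}--\ref{l:retainf}) is line-for-line identical to \Cref{algo:first}, and the reverse-delete phase (lines~\ref{l:revf}--\ref{l:lastf}) only \emph{removes} some of the cache insertions made in \cref{l:serve1f}; it never cancels a servicing that is required for the statement of this claim, which speaks only about what happens during the forward pass. So it suffices to trace the forward pass exactly as before.

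First I would observe that, since $I_t$ is unsatisfied at the start of time $t$ and---by the offline assumption that request intervals for a fixed page are pairwise disjoint---$I_t$ is the unique request interval for $p_t$ active at time $t$, we have $I_t \in U$ with $\page(I_t) = p_t$. Moreover $p_{\min}$ being defined means $C(t)$ was full, i.e.\ we are inside the \textbf{if} at \cref{l:firsttestf}. Now split on the inner test. If $w(p_t) \le 2w(p_{\min})$, the block in lines~\ref{l:Zb1f}--\ref{l:serve2f} runs: if $I_t \notin U^\circ$ then $I_t \in U \setminus U^\circ$ and is served and evicted in \cref{l:serve1f}; otherwise $I_t \in U^\circ$, and here I would use that $Z^\star \subseteq C(t)$ and $p_{\min}$ is the least-weight page of $C(t)$, so the page $\pstar \in Z^\star$ selected in \cref{l:goodpf} satisfies $2w(\pstar) \ge 2w(p_{\min}) \ge w(p_t)$; hence $I_t \in U^\circ_{\le 2w(\pstar)}$ and is served and evicted in \cref{l:serve2f}. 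In either sub-case $p_t$ is served and then immediately evicted. If instead $w(p_t) > 2w(p_{\min})$, the inner \textbf{if} is skipped; as $I_t$ is still unsatisfied, \cref{l:retainf} brings $p_t$ into the cache, and no later line of this iteration evicts it, so $p_t \in C(t+1)$.

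I do not expect a real obstacle: the statement is a transcription of \Cref{clm:pt-evicted} with the labels of \Cref{algo:full} substituted, and the weight inequality $2w(\pstar)\ge 2w(p_{\min})\ge w(p_t)$ is the only computation. The one point I would flag explicitly is the scope of the claim: it asserts only that $p_t$ is serviced during the forward pass; that $I_t$ remains served within $[s(I_t),t]$ after the reverse-delete step is the separate fact that every interval of $\Sat_p$ intersects $T_p'$, so I would simply cite that argument rather than reproduce it here.
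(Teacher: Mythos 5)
Your proof is correct and takes essentially the same route as the paper, which simply declares this claim's proof to be identical to that of \Cref{clm:pt-evicted}: membership of $I_t$ in $U$, the case split on $w(p_t)$ versus $2w(p_{\min})$, and the key inequality $2w(\pstar)\ge 2w(p_{\min})\ge w(p_t)$. One small imprecision: in the offline setting the paper assumes only that requests for a page are non-nested (not pairwise disjoint), but this does not affect the argument, and your remark that the reverse-delete phase cannot invalidate the claim is exactly what \Cref{cl:rev} guarantees.
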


We now show that even after removing some of the services for a page $p$
in lines~\ref{l:revpf}--\ref{l:lastf}, the algorithm services all the
requests in $\Sat_p$. In the claim below, we use the notation in
lines~\ref{l:revpf}--\ref{l:lastf}.

\begin{claim}
\label{cl:rev}
Every request interval in $\Sat_p$ has non-empty intersection with $T_p'$. 
\end{claim}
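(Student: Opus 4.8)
We want to show: for each page $p$, every request interval $I \in \Sat_p$ intersects $T_p'$. Recall $\Sat_p$ is the set of requests for $p$ serviced in line~\ref{l:serve1f}, $T_p$ is the corresponding set of service times, $\Sat'_p$ is a maximal disjoint subcollection of $\Sat_p$, and $T_p'$ consists of the times in $T_p$ closest (on each side) to the endpoints of the intervals in $\Sat'_p$. Since a request $I \in \Sat_p$ is serviced at a time $\tau \in T_p$, it suffices to show that whenever $I$ is serviced at time $\tau$, there is an element of $T_p'$ inside $I$.

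**The plan.** Let me think about what I'd do. Fix $I = [s(I), e(I)] \in \Sat_p$, serviced at time $\tau \in I$. Since $\Sat'_p$ is a \emph{maximal} disjoint subcollection of $\Sat_p$, the interval $I$ must intersect some interval $J \in \Sat'_p$ (otherwise $I$ could be added to $\Sat'_p$, contradicting maximality). Consider the relative position of $I$ and $J$. I would split into cases based on whether $\tau$ lies inside $J$ or to one side of $J$. If $\tau \in J$: then the service time for $J$ itself — call it $\tau_J$ — lies in $J$; I want to argue that some time of $T_p$ lying between (or at) the endpoints region of $J$ that got selected into $T_p'$ actually falls inside $I$. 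If $\tau \notin J$, say $\tau < s(J)$ (the symmetric case is analogous): then $\tau \in I$ and $\tau < s(J) \le$ (right endpoint of $J$), and since $I \cap J \ne \emptyset$ we have $e(I) \ge s(J)$, so $s(J) \in I$. The time in $T_p$ closest to $s(J)$ from the left is at least $\tau$ (since $\tau \in T_p$ and $\tau < s(J)$), hence lies in $[\tau, s(J)] \subseteq I$, and this time is by definition in $T_p'$. That handles the outside case.

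**The inside case — the main obstacle.** The delicate case is $\tau \in J$, where $J \in \Sat'_p$, $I \cap J \ne \emptyset$, but $I$ might be strictly contained in the ``interior'' of $J$ so that neither endpoint of $J$ lies in $I$. Here I need to use more than combinatorics of the collection: I need a structural fact about \emph{when} the algorithm services requests. The key point (to be extracted from Claim~\ref{cl:0f} / line~\ref{l:serve1f} / line~\ref{l:retainf}): a page $p$ is serviced (brought in and then evicted) at a time $\tau$ only if at time $\tau$ there is an \emph{unsatisfied} active request for $p$; once $p$ is serviced at time $\tau$, every request interval for $p$ containing $\tau$ becomes satisfied and is \emph{not} in $\Sat_p$. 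In particular, if $I$ and $J$ are both in $\Sat_p$ and are serviced at times $\tau_I \in I$ and $\tau_J \in J$ respectively with $\tau_I \ne \tau_J$, then $\tau_J \notin I$ (else $I$ would have been satisfied when $J$ was serviced, or vice versa, so one of them wouldn't be in $\Sat_p$). Combined with $I \cap J \ne \emptyset$ and offline non-containment (a request interval for $p$ contains no other request interval for $p$, so $I \not\subseteq J$ and $J \not\subseteq I$), this forces $I$ and $J$ to be ``staggered'': one endpoint of $J$ lies strictly inside $I$. Then the element of $T_p$ selected into $T_p'$ as closest to that endpoint of $J$ can be shown to lie in $I$, by the same closest-point argument as in the outside case (using that $\tau_I \in T_p \cap I$ is a witness on the appropriate side of that endpoint).

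**Assembling it.** So the proof structure I'd write is: (1) by maximality of $\Sat'_p$, pick $J \in \Sat'_p$ with $I \cap J \ne \emptyset$; (2) invoke non-containment and the ``service satisfies all containing requests'' property to conclude that some endpoint $e$ of $J$ satisfies $e \in I$ (strictly, if $I \neq J$; trivially if $I = J$); (3) the time $\tau' \in T_p$ chosen closest to $e$ among times of $T_p$ on the side of $e$ toward $I$ lies in $I$, because the service time $\tau_I$ of $I$ is such a candidate and lies on that side; (4) conclude $\tau' \in T_p' \cap I$. The one step requiring genuine care — and where I'd spend the writing effort — is step~(2), pinning down exactly why $I$ and $J$ must be staggered rather than one containing the other or sharing a service time; this is where the offline non-containment assumption and the semantics of line~\ref{l:serve1f} (that servicing $p$ at $\tau$ clears all active unsatisfied requests for $p$ through $\tau$) are both essential. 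Everything else is elementary interval bookkeeping.
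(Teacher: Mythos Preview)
Your proof is correct and follows the same route as the paper: pick an overlapping $J \in \Sat'_p$ by maximality, use the offline non-nestedness assumption to conclude $I$ and $J$ are staggered (so an endpoint of $J$ lies in $I$), then observe that the $T_p$-time closest to that endpoint on the side where $\tau_I$ sits lies in $I$ and hence in $T_p'$.

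One remark: the servicing-semantics argument you flag as ``the main obstacle'' is unnecessary. The offline non-nestedness assumption (stated at the start of \S\ref{sec:overlap}, and which you yourself invoke) already rules out $I \subseteq J$ and $J \subseteq I$ directly, so overlapping intervals are automatically staggered and the ``inside case'' is no harder than the outside case. The paper's proof accordingly uses only non-nestedness and never appeals to the fact that servicing $p$ at time $\tau$ satisfies all requests containing $\tau$; you can drop that entire paragraph without loss.
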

\begin{proof}
Let $I$ be a request interval in $\Sat_p$. First assume that it lies in $\Sat'_p$, and let $t \in T_p$ be the time at which it is services in line~\ref{l:serve1f}. Since $t$ lies between $s(I)$ and $e(I)$, the closest time in $T_p$ to the right of $s(I)$, call it $t'$,  must lie between $s(I)$ and $t$. Since $t' \in T_p'$, the result follows. 

Now assume $I \notin \Sat'_p$. So there must be an interval $I' \in \Sat'_p$ which overlaps with $I$. Since any two requests for the same page are non-nested, $I'$ contains either $s(I)$ or $t(I)$. Suppose it contains $s(I)$ (the other case is similar). Then $s(I') < s(I) < e(I') < e(I)$. Let $t$ be the time at which $I$ is serviced in line~\ref{l:serve1f}. Say $t$ lies to the right of $e(I')$. Then the time in $T_p$ which is closest to $e(I')$ on the right side, call it $t'$, lies in $T_p'$. But $t' \in [e(I'), t]$ and so it lies in $I$. The case when $t$ is to the left of $e(I')$ is similar -- there will be a time in $T_p'$ which lies in the interval $[s(I), e(I')]$ and so belongs to $I$ as well. 
\end{proof}
The above claim proves that the algorithm services all the request intervals. 

We now analyze the cost incurred by the algorithm. The analysis is again very similar to that in Section~\ref{sec:costg}. To pay for evictions in line~\ref{l:evict1f}, we maintain the invariant that each page $p$ in the cache has at most $w(p)$ ``load'' on it; pages
outside the cache have zero load.  If we bring in $p_t$ and if $w(p_t)
\geq 2w(p_{\min})$, its load becomes the load of $p_{\min}$ plus the
cost of evicting $p_{\min}$; thus the total load on $p_t$ is at most its
weight, and $p_t$ remains in the cache, maintaining the invariant. At
the end of the algorithm, the total load over all pages is at most the
weight of the pages in the cache, which is at most the optimum
cost. This adds one to the competitive ratio.  Else if $w(p_t) <
2w(p_{\min})$, we evict at least one page in $Z^\star$ (by~\Cref{cl:Zf})
and can charge evicting $p_{\min}$ to the eviction of that page---which
we show below how to charge to $\astar$.

Now we consider the cost incurred during line~\ref{l:serve1f}. Because of lines~\ref{l:revf}--\ref{l:lastf}, we do not pay for all of these request intervals. 
Instead, for a particular page $p$, we serve at most $4 |\Sat_p'|$ requests for $p$ in this line. Since the requests in $\Sat'_p$ are disjoint and each of them is hit by $A^\star$, we can charge the service cost to the elements of $A^\star$. 

Finally, we charge  the eviction cost for lines
\ref{l:evict2f}-\ref{l:serve2f}. This cost  is $O(w(Z^\star_{\leq w(\pstar)}))$ by
our choice of $\pstar$ in line~\ref{l:goodpf}. Observe that for each page $p$ 
 in $Z^\star_{\leq w(\pstar)}$, the doubly-extended interval $\dext{I^p_t,t}$
is hit by an element $(p,t') \in \astar$, so we want to charge to this
element of $\astar$. Moreover, each page in $Z^\star_{\leq w(\pstar)}$ is at
least as heavy as $p_{\min}$, so any of these elements of $\astar$ can
pay to evict $p_{\min}$ (and its load).  We finally show that no element
of $\astar$ can be charged twice in this manner. The proof is identical to that of 
Claim~\ref{lem:twice}. 
\begin{lemma}
  \label{lem:twicef}
  No element in $\astar$ can be charged twice because of evictions in
  lines~\ref{l:evict2f}-\ref{l:serve2f}
\end{lemma}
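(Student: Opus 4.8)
The plan is to port the proof of \Cref{lem:twice} almost verbatim, replacing the online, time-indexed solution $\astar_T$ by the static offline solution $\astar$, and arguing by contradiction. Suppose some star $(q,t_q)\in\astar$ is charged twice for evictions in \cref{l:evict2f,l:serve2f}, at times $t_1<t_2$. At each $t_i$ the algorithm reaches \cref{l:evict2f}, so: $q\in C(t_i)$ (hence $w(q)\ge w(p_{\min})$ at time $t_i$); the critical-page test passed, i.e.\ $w(p_{t_i})\le 2w(p_{\min})$ at time $t_i$; the page $\pstar$ chosen in \cref{l:goodpf} at time $t_i$ has $w(q)\le w(\pstar)$, since $q\in Z^\star_{\le w(\pstar)}$ is what gets evicted in \cref{l:evict2f}; and the charging means $(q,t_q)$ hits the doubly-extended interval $\dext{I^q_{t_i},t_i}$ of the interval $I^q_{t_i}$ from \cref{l:Ipeef}, so in particular $t_q\le t_i$.

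First I would pin down the geometry exactly as in \Cref{lem:twice}. Because $q$ is evicted at time $t_1$ yet lies in $C(t_2)$, \Cref{cl:0f} supplies a request interval for $q$ starting strictly after $t_1$; together with non-nestedness of same-page requests this forces $s(I^q_{t_2})>t_1\ge t_q$, so $t_q\notin I^q_{t_2}$. Since $(q,t_q)$ still hits $\dext{I^q_{t_2},t_2}=[\min(s(I_{t_2}),s(I^q_{t_2})),\,t_2]$ while $t_q<s(I^q_{t_2})$, we must have $t_q\ge s(I_{t_2})$; that is, the critical interval $I_{t_2}$ contains $t_q$, and hence contains $t_1\in[t_q,t_2]$. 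Consequently $w(p_{t_2})\le 2w(q)$ (the test at $t_2$ gives $w(p_{t_2})\le 2w(p_{\min})$ and $q\in C(t_2)$ gives $w(p_{\min})\le w(q)$), while $w(q)\le w(\pstar)$ for the page $\pstar$ chosen at time $t_1$, so $w(p_{t_2})\le 2w(\pstar)$ at time $t_1$. Now I would follow $I_{t_2}$ through the forward pass: it is active at $t_1$ and it is unsatisfied at $t_2$ (the algorithm enters the ``$I_{t_2}$ not satisfied'' branch there to evict $q$), hence unsatisfied throughout $[s(I_{t_2}),t_2)$, in particular at $t_1$. So at time $t_1$ the set $U$ contains an unsatisfied active request for page $p_{t_2}$; its weight is $\le 2w(\pstar)$, so it is served at time $t_1$ --- in \cref{l:serve2f} if it lies in $U^\circ$ (where $U^\circ_{\le 2w(\pstar)}$ is served), and in \cref{l:serve1f} if it lies in $U\setminus U^\circ$ (which is served in full). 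Either way $p_{t_2}$ is loaded into the cache at time $t_1\in I_{t_2}$, so $I_{t_2}$ is satisfied in the forward pass from time $t_1$ onward, and the algorithm cannot enter the ``not satisfied'' branch at time $t_2$ --- a contradiction. Every weight inequality and the double-extension bookkeeping are exactly those of \Cref{lem:twice}, and the supporting facts ($Z^\star\ne\emptyset$ by \Cref{cl:Zf}, existence of $\pstar$ via \Cref{lem:Zf}, feasibility via \Cref{clm:pt-evictedf}) transfer without change.

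The one point that needs care beyond the literal transcription of \Cref{lem:twice} is the reverse-delete pass (\cref{l:revf,l:lastf}): ``served in \cref{l:serve1f}'' no longer means ``served in the final schedule.'' I would handle this by insisting that throughout the argument ``satisfied'' means \emph{satisfied by the forward pass} --- which is precisely the predicate the conditional ``$I_t$ not satisfied'' tests --- so a load of $p_{t_2}$ in one of \cref{l:serve1f,l:serve2f} at time $t_1$ already blocks the time-$t_2$ branch, regardless of whether that load is cancelled later. That is really the entire content of the obstacle: it is bookkeeping, not a new idea. For completeness, \Cref{cl:rev} separately certifies that the post--reverse-delete schedule still serves every request (so no infeasibility is reintroduced), and the evictions charged in \Cref{lem:twicef} occur only in \cref{l:evict2f,l:serve2f}, which the reverse-delete never touches, so the charging side of the argument is unaffected.
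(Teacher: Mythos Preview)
Your proposal is correct and follows exactly the approach the paper intends: the paper simply says the proof is identical to that of \Cref{lem:twice}, and you have faithfully ported that argument to the offline setting. Your two added clarifications---using non-nestedness of same-page requests to push $s(I^q_{t_2})>t_1$ (since in \S\ref{sec:overlap} intervals may overlap, unlike \S\ref{sec:non-overlap}), and observing that the ``$I_t$ not satisfied'' test refers to the forward pass and is therefore unaffected by the later reverse-delete---are genuine details the paper's one-line pointer elides, but they do not change the strategy.
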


This completes the proof of~\Cref{thm:offline}.

\newcommand{\previous}[2]{\tau^{#1}_{#2}}
\newcommand{\Dinterval}[2]{D^{#1}_{#2}}

\section{Solving the Integer Program \eqref{eq:IPp} for \wPwTwP}
\label{sec:first-lp}

We now give algorithms to solve the integer program~(\ref{eq:IPp}),
both in the offline and online settings. The main challenge in the
offline case is that the LP relaxation has an unbounded integrality
gap, so just relaxing the integrality constraints and then rounding
will not suffice. Instead, we write a compact IP that has a smaller
gap, and also has fewer constraints. Let us consider an example
problem, that of picking $n-k$ out of $n$ items. All items have unit
weight, so any feasible solution has cost at least $n-k$. If variable
$x_i \in [0,1]$ indicates that we should pick item $i$, we can write
an integer linear constraint for every choice $\calC$ of $k+1$ items
saying that $\sum_{i \in \calC} x_i \geq 1$. But the LP relaxation of
this IP admits the fractional solution where $x_i := \frac{1}{k+1}$
for all $i$, and hence total cost $\frac{n}{k+1} \ll n-k$, showing a
large integrality gap. However, replacing these $\binom{n}{k+1}$
linear constraints by the compact form
$\{\sum_i x_i \geq n-k, x \in [0,1]^n\}$ gives a formulation having no
integrality gap; we use analogous ideas to address both the challenges
above. In the online setting, we need to solve and round the resulting
LPs online, which will require us to refine the primal-dual algorithms
of Bansal et al.~\cite{BBN}.

We handle the constraints for the right and double extensions
separately, in \S\ref{sec:rext-IP-solve} and \S\ref{sec:dext-IP-solve}
respectively; this at most doubles the cost of the solution. In
both cases, we reduce to the following interval covering problem:
\begin{definition}[Tiled Interval Cover]
  In the \emph{tiled interval cover} problem (\intervalcover), for
  each page $p \in [n]$, we are given a collection $\calI_p$ of
  disjoint intervals that partition the entire timeline. All intervals
  in $\calI_p$ have the same weight $w(p)$.  The goal is to select a
  minimum-weight subset of intervals from $\calI := \cup_p \calI_p$
  such that for every time $t$, at least $n-k$ of these selected
  intervals contain $t$.
\end{definition}
The offline algorithms to solve \intervalcover will rely on
total-unimodularity, and the online ones will reduce to primal-dual
algorithms for the classical paging problem. The details of these
solutions to \intervalcover appear in~\Cref{sec:int-cover}.

\subsection{IP Solution for Right Extension Constraints}
\label{sec:rext-IP-solve}

In this section, the focus is only on the right-extension constraints,
i.e., %
the
following IP:
\begin{alignat}{2}
  \min_{x, y \text{ Boolean}} \quad \sum_{p,t} w(p) \,  x_{p,t} + & \sum_I \ell(I) \, y_I  \tag{IP-Rp} \label{eq:IPR1p}\\
  \sum_{I \in \calC} \min\bigg(1, y_I + \sum_{t' \in \rext{I,t}}
  x_{\page(I),t'} \bigg) &\geq 1
  & \quad &\forall t \, \forall \calC,   \tag{R1p} \label{eq:1pp} 
\end{alignat}
again $\calC$ consisting of $k+1$ requests, each for a distinct page,
and each starting before time $t$.
The discussion about getting a
compact IP above can be used to show that constraint~(\ref{eq:1pp}) is
equivalent (for integral solutions) to the
following constraint:
\begin{alignat}{2}
  \sum_{I \in \calC} \min\bigg(1, y_I + \sum_{t' \in \rext{I,t}} x_{\page(I),t'} \bigg) &\geq n-k
  & \quad &\forall t \, \forall \calC,   \tag{R2p} \label{eq:2pp} 
\end{alignat}
where $\calC$ now consists of $n$ requests, one for each of the pages,
and each starting before time $t$.

We now show how to approximately solve~(\ref{eq:IPR1p}) using an
algorithm for \intervalcover.  Consider an instance $\I$ of~(\ref{eq:IPR1p}). We assume that at time 0, there is a request
interval $[0,0]$ with infinite penalty for every page; this only
changes the optimum value by $\sum_p w(p)$. We now create an instance
$\I'$ of \intervalcover by creating a collection of intervals $\cK_p$
for each page $p$ using the procedure in~\Cref{fig:interval}; each of
these intervals will have weight $w(p)$. Essentially each such interval
is obtained by a minimal collection of original request intervals
corresponding to $p$ in $\I$ such that their total penalty exceeds
$w(p)$. \agnote{Figure fix.}

\begin{figure}[t]
  \centering
  \includegraphics[width=6in]{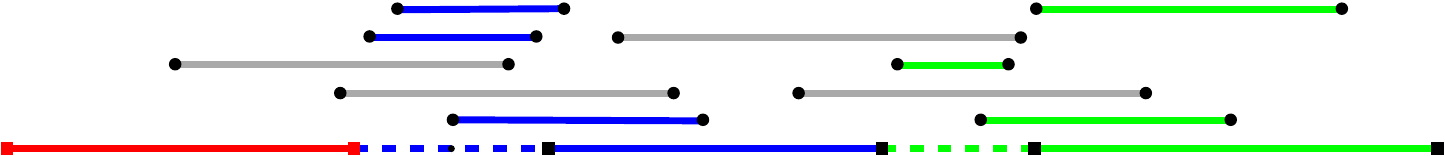}
  \caption{Constructing $\cK_p$ for page $p$: assume each request
    interval has the same penalty $\ell(I) = 1$, and $w(p) = 4$. The
    request intervals in gray intersect with the previous intervals in
    $cK_p$, so we increase $t$ until we see four non-intersecting
    intervals, and create the interval in $\cK_p$ based on the
    endpoint of this fourth interval (with endpoints  denoted by squares).}
  \label{fig:kp}
\end{figure}

\begin{figure}[h]%
    \begin{procedure}[H]
      Initialize $\cK_p  \gets \emptyset, t^\star \gets 0.$ \\
      \For{$t =1, 2, \ldots$}{ 
          ${\cal I}_p' \gets$ set of request intervals for $p$ which are contained in $[t^\star, t]$. \\
          \If{ the total penalty of the intervals in ${\cal I}_p'$ exceeds $w(p)$ }{
          Add $[t^\star,t)$ to $\cK_p$. 
           \\
          Update $t^\star \gets t$. } }
    \end{procedure}
  \caption{The online procedure to construct the partition $\cK_p$ for
    page $p$.}
  \label{fig:interval}
\end{figure}

\begin{lemma}[Forward Direction]
  \label{lem:forward-Rp}
  Consider an integral solution $(x,y)$ to $\I$. Then there is a
  solution $\calS$ to $\I'$ of cost at most
  $2 (\sum_{p,t} w(p) x_{p,t} + \sum_I \ell(I) y_I)$.
\end{lemma}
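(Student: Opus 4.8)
The statement to prove is Lemma~\ref{lem:forward-Rp}: given an integral solution $(x,y)$ to the instance $\I$ of~(\ref{eq:IPR1p}), construct a solution $\calS$ to the \intervalcover instance $\I'$ whose cost is at most twice the cost of $(x,y)$. The plan is to define $\calS$ directly from $(x,y)$ in the natural way — for each page $p$, select exactly those intervals $K \in \cK_p$ for which the solution $(x,y)$ does ``something'' inside $K$, namely either raises some $x_{p,t}$ for $t$ in a suitable window associated with $K$, or takes a penalty $y_I=1$ for some request interval $I$ used in forming $K$. Since each $K \in \cK_p$ has weight $w(p)$, and each such event is ``charged'' to $K$, the total cost of $\calS$ is bounded by $w(p)$ times the number of selected intervals in $\cK_p$, which I then want to bound by roughly $2/w(p)$ times the $(x,y)$-cost restricted to page $p$; summing over $p$ gives the factor $2$.

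**The two things to check.** There are two obligations. First, \emph{feasibility}: for every time $t$, at least $n-k$ of the selected intervals contain $t$. I would fix $t$ and use the equivalent compact constraint~(\ref{eq:2pp}) of~(\ref{eq:IPR1p}): taking $\calC$ to be the collection of $n$ request intervals, one per page, where for each page $p$ I pick the request interval just before (or containing) $t$, the constraint says $\sum_{I\in\calC}\min(1, y_I + \sum_{t'\in\rext{I,t}} x_{\page(I),t'}) \ge n-k$. So at least $n-k$ pages $p$ are ``active at $t$'' in the sense that $y_I=1$ or some star $(p,t')$ lies in $\rext{I,t}$. I then need to argue that for each such page, the interval $K\in\cK_p$ containing $t$ is selected into $\calS$ — this is where the construction of $\cK_p$ in Figure~\ref{fig:interval} matters: the window of original request intervals bundled into $K$ is chosen so that any star hitting a right-extension $\rext{I,t}$ for a request $I$ near $t$ lies within the block of times that $K$ is responsible for, and similarly any penalty is on one of the bundled requests. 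The right-extension business ($\rext{I,t}$ can reach rightward past $t$ up to $e(I)$) is exactly why the bundling/tiling is set up the way it is, so I'd have to be careful that the star could be slightly to the right of $t$ but still falls within the correct tile $K$.

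**The cost bound.** For the cost, fix a page $p$. I want: (number of selected intervals in $\cK_p$) $\cdot\, w(p) \le 2(\sum_t w(p)x_{p,t} + \sum_{I:\,\page(I)=p}\ell(I)y_I)$, i.e.\ the number of selected $\cK_p$-intervals is at most $2(\sum_t x_{p,t} + \sum_I \ell(I)y_I/w(p))$. Each selected $K\in\cK_p$ is selected because of at least one of: (a) a star $x_{p,t}=1$ with $t$ in $K$'s window, which contributes $1$ to $\sum_t x_{p,t}$; or (b) a penalty $y_I=1$ on a request $I$ bundled into $K$, and by construction of Figure~\ref{fig:interval} the bundled requests for $K$ have total penalty between $w(p)$ and $w(p)+$(one request's worth), so in fact the penalty event $y_I=1$ contributes $\ell(I)/w(p)$, and across the bundle these don't quite sum to $1$ — but here I expect I need the factor $2$: a single penalized request of large weight could straddle two consecutive tiles, or the ``minimal collection exceeding $w(p)$'' could barely overshoot, so each tile gets charged at most a constant, and $2$ suffices. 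I would make this precise by a clean charging argument: each star is charged to at most one tile; each penalized request $I$ is charged to at most the tile(s) it helped form, and I bound overcounting by $2$. The main obstacle I anticipate is precisely this bookkeeping — making sure that the rightward reach of right-extensions plus the ``minimal bundle exceeding $w(p)$'' rule interact so that both feasibility and the factor-$2$ cost bound hold simultaneously with no off-by-one tile slippage; the feasibility direction (showing the tile containing $t$ is genuinely selected for each of the $n-k$ active pages) is where I'd spend most of the effort.
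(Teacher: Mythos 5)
Your high-level plan (define $\calS$ from the events of $(x,y)$, prove feasibility via the compact constraint~(\ref{eq:2pp}), and charge the cost tile by tile) matches the paper's, but two of the pieces you commit to are wrong, and they are exactly the pieces that carry the proof. First, the selection rule must add \emph{two} tiles per triggering event: if an event for page $p$ occurs in a tile $K \in \cK_p$, the paper adds both $K$ and its right neighbor $\rt(K)$. This is not optional bookkeeping --- for a fixed time $t$, the constraint~(\ref{eq:2pp}) (applied to requests taken from the last tile $I'(p)$ of $\cK_p$ ending \emph{before} $t$) certifies stars and penalties that live in $I'(p)$, whereas the tile that actually contains $t$ is $\rt(I'(p))$. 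Selecting only the tile where the event occurs leaves $t$ uncovered. This doubling is also the true source of the factor $2$ in the cost bound (one event of cost $\geq w(p)$ pays for two tiles of cost $w(p)$ each), not the "bundle overshoot'' you conjecture.

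Second, your penalty trigger "select $K$ if $y_I = 1$ for \emph{some} request $I$ bundled into $K$'' breaks the cost bound: the construction of $\cK_p$ only guarantees that the \emph{total} penalty of the requests inside a tile exceeds $w(p)$, so a single penalized request can have $\ell(I) \ll w(p)$ and cannot pay for the $2w(p)$ of selected tiles. The correct trigger is that \emph{every} request contained in $K$ has $y_I = 1$, so the charge is against a total penalty exceeding $w(p)$. This choice is also what makes feasibility go through: for each page you pick into $\calC$ a \emph{non-penalized} request contained in $I'(p)$ when one exists, so that for those pages the constraint forces a star in $\rext{I(p),t} \subseteq [s(I'(p)), t]$, which lies in $I'(p) \cup \rt(I'(p))$ and hence causes $\rt(I'(p)) \ni t$ to be selected; the remaining pages fall into the all-penalized case and are selected by the penalty trigger. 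Relatedly, your suggestion to pick for $\calC$ "the request just before (or containing) $t$'' is unsafe: a request containing $t$ has $\rext{I,t}$ reaching to $e(I) > t$, so the certifying star could land in a tile strictly to the right of the one containing $t$, and neither it nor its neighbor would cover $t$. The requests must be drawn from the tile ending before $t$.
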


\begin{proof}
  For an interval $I'$ in $\cK_p$, let $\rt(I')$ to be the interval in
  $\cK_p$ which lies immediately to the right of $I'$. For every page
  $p$ and interval $I' \in \cK_p$, we add both $I'$ and $\rt(I')$ to the
  solution $\calS$ if either of these conditions is satisfied:
  (i)~$x_{p,t} =1$ for some time $t \in I'$, or (ii)~$y_{I}=1$ for
  every request interval $I \in \I$ for page $p$ such that the
  interval $I$ is contained within the interval $I'$.
  The cost guarantee is easy to see. We charge the cost of $I'$ and
  $\rt(I')$ to $w(p)x_{p,t}$ in case (i), and to the total penalty of
  request intervals for page $p$ contained within $I'$ in case (ii).

  Now to prove the feasibility of this solution $\calS$: fix a time
  $t$. For each page $p$, let $I'(p) \in \cK_p$ be the rightmost
  interval which ends before $t$. Classify the set of pages into two
  classes---let $P_1$ be the set of pages $p$ such that every request
  interval $I \in \I$ for $p$ which is contained within $I'(p)$ has
  $y_I=1$; define $I(p)$ to be any of these request intervals. Let
  $P_2 := [n] \setminus P_1$ be the remaining set of pages, i.e.,
  pages $p$ such that there is at least one request interval for it
  (call this request interval $I(p)$) contained within $I'(p)$ such
  that $y_{I(p)}=0$.

  Let $\calC$ be the collection of the pages $I(p)$ defined above,
  one for each page $p$. Applying constraint~\eqref{eq:2pp} to
  $\calC$, we get
  $$ |P_1| + \sum_{p \in P_2} \sum_{t' \in \rext{I(p),t}} \min(1, x_{p,t'}) \geq n-k. $$
  Hence, there is a set $P_2' \sse P_2$ of cardinality $n-k-|P_1|$
  such that for any page $p \in P_2'$, there is a time
  $t' \in I(p) \subseteq I'(p)$ with $x_{p,t'}=1$. It follows that we
  will pick intervals $I(p)$ and $\rt(I(p))$ into the solution
  $\calS$. Moreover, the collection $\calS$ contains the intervals
  $I(p)$ and $\rt(I(p))$ for each page $p \in P_1$. Since all these
  intervals $\rt(I(p))$ contain $t$, we have chosen
  $|P_1 \cup P_2'| \geq n-k$ intervals containing $t$, and hence  $\calS$ is a feasible solution to $\I'$.
\end{proof}

\begin{lemma}[Reverse Direction]
  \label{lem:reverse-Rp}
  Let $\calS$ be a integral solution to the instance $\I'$. Then there is a
  solution $(x,y)$ to $\I$ of cost at most $3w(\calS)$.
\end{lemma}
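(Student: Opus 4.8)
The plan is to read off $(x,y)$ from $\calS$ by placing, for each selected tile of each $\cK_p$, a star at each of its two endpoints together with the cheap bundle of penalties that the construction of $\cK_p$ ``charged'' to that tile. Fix a page $p$, write $\cK_p=\{K_1,K_2,\ldots\}$ in time order with $K_i=[t_{i-1},t_i)$, and let $\calS_p:=\calS\cap\cK_p$. For every selected tile $K_i\in\calS_p$ I set $x_{p,t_{i-1}}\gets 1$; I set $x_{p,t_i}\gets 1$ unless $K_i$ is the last tile of $\cK_p$; and I set $y_I\gets 1$ for every request interval $I$ for $p$ with $I\subseteq[t_{i-1},t_i-1]$ (for the last tile, for every request $I$ for $p$ with $I\subseteq K_i$). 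All remaining variables are $0$.

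To bound the cost, recall that the procedure of \Cref{fig:interval} closes the tile $K_i$ exactly when the request intervals for $p$ contained in $[t_{i-1},t_i]$ first exceed total penalty $w(p)$; hence the requests for $p$ contained in $[t_{i-1},t_i-1]$ have total penalty at most $w(p)$ (and for the last tile all requests for $p$ inside it have total penalty at most $w(p)$). So each selected tile $K_i\in\calS_p$ costs at most $w(p)$ for each of its (at most two) endpoint stars, i.e.\ at most $2w(p)$, plus at most $w(p)$ in penalties; since variables are only ever set to $1$, the total is at most $\sum_p|\calS_p|\cdot 3w(p)=3\,w(\calS)$.

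For feasibility, by the equivalence of~\eqref{eq:1pp} and~\eqref{eq:2pp} for integral solutions (noted above) it suffices to verify~\eqref{eq:2pp}. Fix a time $t$ and request intervals $I(p)$, one per page $p$, each starting before $t$; I must find $n-k$ pages $p$ for which $y_{I(p)}=1$ or $x_{p,t'}=1$ for some $t'\in\rext{I(p),t}$ (call such $p$ \emph{good}). Since $\calS$ is feasible for $\I'$, at least $n-k$ selected tiles contain $t$; the tiles of any fixed page are disjoint, so these come from a set $P_t$ of $n-k$ distinct pages, and it is enough to show every $p\in P_t$ is good. Fix $p\in P_t$, let $K_i=[t_{i-1},t_i)\in\calS_p$ be the unique tile of $\cK_p$ containing $t$, and write $I(p)=[s,e]$ with $s<t$, so $s<t\le t_i-1$. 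If $s<t_{i-1}$, then $t_{i-1}\in[s,\max(t,e)]=\rext{I(p),t}$ (as $s<t_{i-1}\le t$), and $x_{p,t_{i-1}}=1$, so $p$ is good. If $s\ge t_{i-1}$ and $e\ge t_i$, then $K_i$ is not the last tile of $\cK_p$ and we have set $x_{p,t_i}=1$, while $t_i\in[s,e]=\rext{I(p),t}$ since $s<t_i\le e$ (here $\max(t,e)=e$ because $e\ge t_i>t$), so $p$ is good. If $s\ge t_{i-1}$ and $e<t_i$, then $I(p)\subseteq[t_{i-1},t_i-1]$ (or $I(p)\subseteq K_i$ when $K_i$ is the last tile), so $y_{I(p)}=1$ by construction. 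In all cases $p$ is good, so at least $|P_t|\ge n-k$ pages are, proving~\eqref{eq:2pp} and hence that $(x,y)$ is feasible for $\I$.

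The substance of the argument is this three-case split, and the idea that makes it go through is that the star at the \emph{left} endpoint of the selected tile of $p$ containing $t$ automatically lies in $\rext{I,t}$ for every request $I$ of $p$ that \emph{starts} before that tile, so all such ``old'' requests are covered for free; the \emph{right}-endpoint star handles requests that straddle or extend past the tile, and requests living entirely inside the tile are precisely the cheap ones paid for by the penalty bundle. The only fussy points are the special treatment of the last tile of each $\cK_p$ (no right endpoint, so we buy all of its interior penalties) and the effect of the artificial infinite-penalty request $[0,0]$ used to anchor the left end of each timeline---neither of which changes the argument.
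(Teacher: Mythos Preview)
Your proof is correct and takes essentially the same approach as the paper's: place stars at both endpoints of each selected tile of $\cK_p$ and pay the penalties for request intervals strictly contained within. Your three-case feasibility split is just a slightly more explicit version of the paper's two-case argument (the paper merges your cases~1 and~2 into the single case ``$\rext{I(p),t}$ contains one of the two endpoints of the tile''), and your handling of the last tile is a minor technicality the paper glosses over.
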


\begin{proof}
  The solution $(x,y)$ is as follows: for each page $p$ and interval
  $I'=[t'_1,t'_2] \in \cK_p \cap \calS$, (i)~set
  $x_{p,t_1'} = x_{p,t_2'} = 1$, and also (ii)~for every request interval
  $I=[t_1,t_2]$ for page $p$ having $t'_1 \leq t_1 \leq t_2 < t_2'$
  (i.e., intervals contained within such a chosen interval $I'$ and ending strictly earlier) set $y_{I} =
  1$. Since $t_2$ is strictly smaller than $t_2'$, the construction of
  $\cK_p$ ensures that the total penalty cost of such intervals is at
  most $w(p)$. The cost guarantee for $(x,y)$ follows immediately. It
  remains to show feasibility.

  Fix a time $t$. For each page $p$, let $I'(p) \in \cK_p$ be the
  interval containing $t$. By the feasibility of $\calS$, there is a set $P_1 \sse \calS$ of $n-k$
  pages such that their corresponding intervals $I'(p)$ contain
  $t$. Consider some constraint~\eqref{eq:1pp} corresponding to a set
  $\calC$ of requests for instance $\calI$. For each page $p$, let
  $I(p) \in\calC$ be the request interval for $p$. We know that
  $I(p)$ starts before time $t$, but it may end after time
  $t$. For a page $p \in P_1$, two cases arise: (i) $I(p)$ is
  strictly contained in $I'(p)$, or (ii) $\rext{I(p),t}$ contains one
  of the two end-points of $I'(p)$. In the first case we must have
  set $y_{I(p)}$ to 1, whereas in the second case there is a time
  $t'$ (which is one of the two end-points of $I'(p)$) in
  $\rext{I(p),t}$ such that $x_{p,t'}$ is $1$. Since $|P_1| = n-k$, we
  infer that $(x,y)$ satisfies~\Cref{eq:1pp}.
\end{proof}

Combining these with \Cref{lem:offline-intcover}, we see that there is
a 6-approximation offline algorithm for \othercoverp.

\subsubsection{Implementing the Solution Online}
\label{sec:onlineimpl}
For the \emph{online setting}, we can construct the set $\cK_p$
online, and also approximately solve the \intervalcover problem in an
online fashion using the algorithm from~\Cref{sec:int-cover}.
However, the translation given in \Cref{lem:reverse-Rp} cannot be
implemented online. Specifically, it sets the $x$ variables for start
and end times of the chosen intervals $I(p)$ and also $y$ variables
for all intervals strictly contained within $I(p)$, but (i)~setting
$x_{p,t}$ variables for times $t_1 < t$ and $y_I$ variables for past
intervals violates the past-preserving property required
in~\S\ref{sec:solving-wpwtw-using}, and (ii)~the online construction
of $\cK_p$ means the end time $t_2$ of the interval $I(p) \in \cK_p$
is not known at time $t$.

The first issue is easy to fix: if we change the proof of
\Cref{lem:reverse-Rp} so that when interval $I'(p) = [t_1',t_2']$ is
added to $\calS$ at some time $t$, we set $x_{p,t} = 1$ instead of
$x_{p,t_1'}$, and also we set $y_I = 1$ for interval $I$ that are
contained with $I'(p)$ only from now on. This change makes it
past-preserving, and maintains correctness. As for the secdon issue,
that the online algorithm may not know the right end-point $t_2'$ of
$I'(p) \in \cK_p$ at time $t$, and hence cannot add the star in the
future, there is at most one such ``to-be-set'' variable for each page
$p$; moreover, all intervals for page $p$ containing the current time
$t$ are already hit variables corresponding to past times, or by this
single to-be-set variable. This satisfies the \emph{sparsity} property
of \S\ref{sec:solving-wpwtw-using}. 

Hence, combining the above reductions with the algorithmic results
on solving \intervalcover in~\Cref{sec:int-cover}, and implementing these changes in the online
setting, we get:

\begin{lemma}[Right-Extension Algorithms]
  \label{lem:onlinennp}
  There is an online $O(\log k)$-competitive algorithm to solve the
  right-extension constraints with penalties~(\ref{eq:IPR1p}). The
  solution satisfies the monotonicity, past-preserving, and sparsity
  properties required in \S\ref{sec:solving-wpwtw-using}.  Finally,
  there is an offline 6-approximation algorithm for this problem.
\end{lemma}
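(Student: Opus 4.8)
The plan is to assemble the lemma from three pieces already in place: the forward and reverse reductions between an instance $\I$ of~\eqref{eq:IPR1p} and the \intervalcover instance $\I'$ built by the procedure in~\Cref{fig:interval} (\Cref{lem:forward-Rp} and \Cref{lem:reverse-Rp}), the offline and online algorithms for \intervalcover from~\Cref{sec:int-cover}, and the online modification of \Cref{lem:reverse-Rp} sketched above, which makes the translation implementable online while respecting monotonicity, past-preservation, and sparsity.

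For the \emph{offline} bound, I would build $\I'$ (losing only the additive $\sum_p w(p)$ from the time-$0$ intervals), solve it optimally with the total-unimodularity-based algorithm of~\Cref{lem:offline-intcover} to obtain a cover $\calS$ with $w(\calS)=\opt(\I')$, and then invoke \Cref{lem:forward-Rp} (which gives $\opt(\I')\le 2\,\opt(\I)$) and \Cref{lem:reverse-Rp} (whose translated $(x,y)$ is feasible for $\I$ at cost at most $3\,w(\calS)$) to conclude a cost of at most $6\,\opt(\I)$ --- exactly the $6$-approximation for \othercoverp recorded just above.

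For the \emph{online} bound, note that \Cref{fig:interval} already runs online, and that~\Cref{sec:int-cover} provides an $O(\log k)$-competitive online algorithm for \intervalcover: \intervalcover is the interval-covering formulation of weighted paging in which each page $p$ owns a partition of the timeline into equal-weight ($w(p)$) intervals, so the primal-dual framework of Bansal et al.~\cite{BansalBN12} can be adapted to produce an $O(\log k)$-competitive solution $\calS_t$. I would then translate $\calS_t$ using the modified \Cref{lem:reverse-Rp}: when an interval $I'(p)=[t_1',t_2']$ enters $\calS$ at the current time $t$, set $x_{p,t}=1$ rather than $x_{p,t_1'}$, and set $y_I=1$ only for intervals $I\sse I'(p)$ arriving at time $\ge t$. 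Monotonicity then holds because $\calS_t$ is monotone and no variable is ever unset; past-preservation holds because every $x$-variable is set at the present time, the sole exception being one ``to-be-set'' star at the not-yet-revealed right endpoint $t_2'$ of the $\cK_p$-interval currently containing $t$; and sparsity holds because there is exactly one such pending star per page, and by the construction of $\cK_p$ that star together with the already-placed stars hits every request interval for $p$ straddling $t$. Feasibility of $(x,y)$ for~\eqref{eq:IPR1p} carries over from the proof of \Cref{lem:reverse-Rp}, since a right-extension constraint at a later time $t''$ only asks for a star somewhere in $I'(p)$ at a time $\le t''$, and moving the star rightward from $t_1'$ to $t\in I'(p)$ preserves this. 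Composing the $O(\log k)$-competitive \intervalcover solution with the factor-$2$ forward bound and the factor-$3$ reverse translation yields the $O(\log k)$ competitive ratio.

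The main obstacle is the ingredient deferred to~\Cref{sec:int-cover}: building an online $O(\log k)$-competitive algorithm for \intervalcover against the LP value. The delicate point is that, unlike classical weighted paging where the intervals are the canonical request intervals, here each page carries an arbitrary partition of the timeline into $w(p)$-weight tiles, so one must verify that the Bansal et al.\ primal-dual dynamics and potential-function analysis (and the subsequent rounding) still go through over these tiles with only an $O(\log k)$ loss. A lighter secondary point is checking that the online deviation from the offline reverse reduction --- placing the star at $t$ rather than at $t_1'$, and starting the $y$-assignments only from time $t$ --- does not destroy feasibility; as observed, this is precisely what the right-extension structure of the constraints buys us.
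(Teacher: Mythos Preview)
Your proposal is correct and follows essentially the same approach as the paper: combine \Cref{lem:forward-Rp,lem:reverse-Rp} with the \intervalcover algorithms of \Cref{sec:int-cover} for the offline $6$-approximation, and for the online case construct $\cK_p$ online, run the $O(\log k)$-competitive \intervalcover algorithm, and apply the modified reverse translation (placing the star at the current time $t$ rather than $t_1'$, deferring the star at the unknown $t_2'$, and setting $y_I$ only going forward) to obtain monotonicity, past-preservation, and sparsity. One minor remark: your ``main obstacle'' is lighter than you suggest, since in \Cref{sec:int-cover} the paper handles \intervalcover not by reworking the primal-dual analysis over arbitrary tiles but by a direct reduction to weighted paging (adding a zero-weight page $p_0$ requested after every real request), so the Bansal et al.\ result applies verbatim.
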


\subsection{IP Solution for Double Extension Constraints}
\label{sec:dext-IP-solve}

We now want to solve the double-extension constraints~(\ref{eq:2p}):
\begin{alignat}{2}
  \min_{x, y \text{ Boolean}} \quad \sum_{p,t} w(p) \,  x_{p,t} + & \sum_I \ell(I) \, y_I  \tag{IP-D1p} \label{eq:IP1Dp}\\
  \sum_{I \in \calC} \min\bigg(1, y_I + \sum_{t' \in \dext{I,t}}
  x_{\page(I),t'} \bigg) &\geq 1 - y_{I_t}
  & \quad &\forall t \, \forall \calC,   \tag{D1p} \label{eq:2p1p} 
\end{alignat}
where $\calC$ consists of $k$ requests, each for a distinct page (not
equal to page $p_t$), and each request interval ending before time
$t$. As in the previous section, we can  replace~(\ref{eq:2p1p}) by the following
constraints and get exactly the same integer solutions.
\begin{alignat}{2}
  \sum_{I \in \calC} \min\bigg(1, y_I + \sum_{t' \in \dext{I,t}}
  x_{\page(I),t'} \bigg) &\geq (n-k)(1 - y_{I_t})
  & \quad &\forall t \, \forall \calC.   \tag{D2p} \label{eq:2p2p} 
\end{alignat}
Here the set $\calC$ consists of $n-1$ requests, one for each distinct
page different from page $p_t$,  where each of these request
intervals ends before time $t$. The variable $y_{I_t}$ has two
roles---the variable $y_{I_t}$ on the right denotes whether we need to
consider the constraints~\eqref{eq:2-againp} corresponding to time
$t$, whereas the ones on the left denote whether those requests were
satisfied, or if their penalty is paid instead. It turns out that we
can drop the occurrences of the $y_I$ variables on the left, and get
a polynomial-sized covering IP instead.

For each page $p$, let $\cK_p$ be as defined in~\Cref{sec:rext-IP-solve}, and
let $\cT_p$ denote the right end-points of the intervals in $\cK_p$.
For time $t$ and page $p$, define time $\previous{p}{t}$ as follows: let
$I_t=[t_1,t]$ be the request interval which ends at time $t$, and
$p_t$ denote $\page(I_t)$. If $I'(p)$ is the last interval in $\cK_p$
which ends before $t_1$, then $\previous{p}{t}$ is the right end-point of
this interval $I'(p)$. (Of course, the time $\previous{p}{t}$ lies in the set
$\cT_p$). Define the interval $\Dinterval{p}{t} := [\previous{p}{t}, t]$.

\begin{figure}[h]
  \centering
  \includegraphics[width=5in]{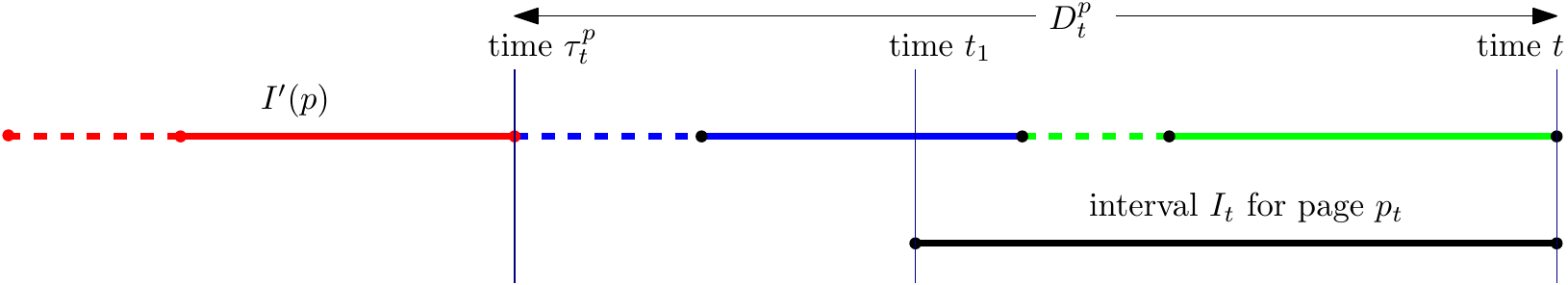}
  \caption{The definition of time $\previous{p}{t}$.}
  \label{fig:taupt}
\end{figure}

Consider the following compact IP, which is equivalent
to~(\ref{eq:IP1Dp}) up to constant factors, as we show next.
\begin{alignat}{2}
  \min_{x, y \text{ Boolean}} \quad \sum_{p,t} w(p) \,  x_{p,t} + & \sum_I \ell(I) \, y_I  \tag{IP-D3p} \label{eq:IP2Dp}\\
  \sum_{p: p \neq p_t} \min\Big(1, \sum_{t' \in \Dinterval{p}{t}} x_{p,t'}\Big) &\geq (n-k)(1-y_{I_t}) &\qquad&\forall t.
  \tag{D3p} \label{eq:2-againp}
\end{alignat}

\begin{lemma}
  \label{lem:twoips}
  Given an integer solution $(x,y)$ to~\eqref{eq:IP1Dp}, there is a solution
  $(x',y')$ to~\eqref{eq:IP2Dp} with cost at most three times as much.
  Conversely, if $(x',y')$ is an integer solution to~\eqref{eq:IP2Dp}, there is
  a corresponding solution to~\eqref{eq:IP1Dp} of at cost at most
  twice that of $(x',y').$
\end{lemma}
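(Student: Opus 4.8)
The plan is to prove the two directions separately, in each case starting from the constraint equivalence observed just before the lemma: for integer solutions, \eqref{eq:2p1p} is equivalent to \eqref{eq:2p2p}, so it suffices to relate \eqref{eq:2p2p} and \eqref{eq:2-againp}. The key structural fact I would establish first is the relationship between the ``double extension'' $\dext{I,t}$ of a request interval $I$ for page $p$ (with $e(I) \le t$) appearing in \eqref{eq:2p2p} and the ``chunk interval'' $\Dinterval{p}{t} = [\previous{p}{t}, t]$ appearing in \eqref{eq:2-againp}. Concretely: if $I$ is a request interval for $p$ with $e(I) < t_1$ (where $I_t = [t_1,t]$), then $I$ is contained in some interval $I'(p) \in \cK_p$, and since $\previous{p}{t}$ is the right endpoint of the last interval of $\cK_p$ ending before $t_1$, we have $\previous{p}{t} \le e(I'(p))$; combined with $\min(s(I_t), s(I)) \le t_1$ and $\previous{p}{t} \ge$ the left endpoint of $I'(p)$'s successor minus one, the two intervals $\dext{I,t}$ and $\Dinterval{p}{t}$ overlap in a controlled way — they agree at the right endpoint $t$ and differ on the left by at most one chunk of $\cK_p$.

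For the forward direction (from $(x,y)$ solving \eqref{eq:IP1Dp} to $(x',y')$ solving \eqref{eq:IP2Dp}), I would set $y' = y$ and build $x'$ from $x$ together with a small number of extra stars: for each page $p$ and each chunk $I' \in \cK_p$ for which the $y_I$ variables of all request intervals $I$ contained in $I'$ are set to $1$, add a star at the right endpoint of $I'$; charge this star to $w(p) = $ the total penalty of those intervals (which by construction of $\cK_p$ exceeds $w(p)$, giving factor $\le 1$ per chunk, hence cost $\le \sum_I \ell(I) y_I$ overall). Additionally, shift/duplicate each star of $x$ to a nearby chunk endpoint as needed. This gives $\cost(x',y') \le 3\,\cost(x,y)$. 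Feasibility of \eqref{eq:2-againp}: fix a time $t$ with $y'_{I_t} = 0$, apply \eqref{eq:2p2p} to the collection $\calC$ consisting, for each page $p \ne p_t$, of the request interval $I(p)$ contained in the chunk $I'(p) := $ the last chunk of $\cK_p$ ending before $t_1$ (such an interval exists because of the ``$[0,0]$ with infinite penalty'' convention, which forces $y_{I}=1$ impossible, or forces a star — this needs to be checked); from \eqref{eq:2p2p} we get $n-k$ pages that are either hit inside $\dext{I(p),t} \subseteq$ (a bounded neighborhood of) $\Dinterval{p}{t}$, or have all their contained-interval $y$'s set; in either case the modified solution $x'$ has a star in $\Dinterval{p}{t}$. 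Hence \eqref{eq:2-againp} holds.

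For the reverse direction (from $(x',y')$ solving \eqref{eq:IP2Dp} to a solution of \eqref{eq:IP1Dp}), I would keep the stars of $x'$, and for each page $p$ and each chunk $I' \in \cK_p$ that contains a star of $x'$, set $y_I = 1$ for every request interval $I$ contained in $I'$ — by construction of $\cK_p$ this costs at most $w(p)$ per chunk, so total penalty cost $\le \sum_{p,t} w(p) x'_{p,t}$, giving factor $2$. To prove feasibility of \eqref{eq:2p2p} at a time $t$ with $y_{I_t} = 0$: note $y'_{I_t} = y_{I_t} = 0$, so \eqref{eq:2-againp} gives $n-k$ pages $p \ne p_t$ each with a star in $\Dinterval{p}{t} = [\previous{p}{t}, t]$. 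For such a page $p$ and the request interval $I(p) \in \calC$ (which ends before $t$), I argue the star lies either inside $\dext{I(p),t}$ (contributing to the left side directly) or inside a chunk of $\cK_p$ all of whose contained intervals got $y = 1$ — but this chunk must be one whose right endpoint is $\ge \previous{p}{t}$, and a short case analysis using the overlap fact above shows that either way the term for $p$ in \eqref{eq:2p2p} is $1$. Summing over the $n-k$ pages gives the constraint.

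\textbf{Main obstacle.} I expect the hard part to be the feasibility arguments in both directions — specifically, pinning down exactly how a star guaranteed in $\Dinterval{p}{t}$ (a union of at most two $\cK_p$-chunks) translates into either a star inside the genuinely smaller set $\dext{I,t}$ or a set-of-$y$'s certificate, and vice versa, when $I$ is an arbitrary request interval of $\calC$ rather than the ``canonical'' one. The off-by-one-chunk discrepancy between $\Dinterval{p}{t}$ and $\dext{I,t}$ is what forces the constant factors ($3$ and $2$) and getting the case analysis airtight — including the boundary behavior enforced by the artificial $[0,0]$ interval of infinite penalty — is where the real work lies; the cost bookkeeping is routine by the $\cK_p$ construction.
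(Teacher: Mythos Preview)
Your forward direction is essentially the paper's approach, though stated loosely. Two things to tighten: (a) when you apply~\eqref{eq:2p2p} you must choose $I(p)$ inside the chunk $I'(p)$ to have $y_{I(p)}=0$ whenever such an interval exists---otherwise ``the term for $p$ equals $1$ because $y_{I(p)}=1$'' tells you nothing about the other intervals in $I'(p)$, so your case ``all contained-interval $y$'s are set'' is not justified; (b) ``shift/duplicate each star of $x$ to a nearby chunk endpoint as needed'' has to mean: for each chunk satisfying condition (i) or (ii), add stars at the right endpoints of \emph{both} that chunk and its right neighbor. With those two fixes the argument goes through exactly as in the paper.

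Your reverse direction, however, has a genuine gap. You propose to keep $x=x'$ and set $y_I=1$ for every request interval $I$ lying inside a $\cK_p$-chunk that contains a star of $x'$. The problem is that in~\eqref{eq:2p2p} the collection $\calC$ is adversarial: the request interval $I(p)$ chosen for page $p$ need not lie in the same chunk as the star. Concretely, suppose the star guaranteed by~\eqref{eq:2-againp} sits at some time $s$ with $\previous{p}{t}\le s<t_1$, hence inside the chunk $\rt(I'(p))=[\previous{p}{t},t_2]$ (with $t_2\ge t_1$). If the adversary picks $I(p)$ to be a request for $p$ lying in a \emph{later} chunk, say with $s(I(p))>t_2\ge t_1$ and $e(I(p))\le t$, then $\dext{I(p),t}=[t_1,t]$ does not contain $s$, and $y_{I(p)}=0$ since $I(p)$'s chunk has no star. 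So the term for $p$ is $0$ even though $p$ was one of your $n-k$ ``good'' pages. Setting $y$'s in one (or any bounded number of) extra chunks does not help, since $I(p)$ can sit arbitrarily many chunks to the right of the star as long as it ends by $t$.

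The paper sidesteps this by \emph{not} touching the $y$ variables at all in the reverse direction. Instead it duplicates each star of $x'$ to the right endpoint of the $\cK_p$-chunk containing it. The point is that this right endpoint lands inside $I_t$ (the paper argues this via the greedy construction of $\cK_p$), and $I_t\subseteq\dext{I,t}$ for \emph{every} choice of $I\in\calC$, so the adversary's freedom in choosing $I(p)$ becomes irrelevant. That is the idea you are missing: you need to push the star into a location that works uniformly for all $\calC$, rather than trying to anticipate which $I(p)$ will be chosen via penalty variables.
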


\begin{proof}
  For the first part, define $y'_I = y_I$ for all $I$, and
  $x'_{p,t}=x_{p,t}$ for all $p$ and $t$. For an interval
  $I \in \cK_p$, define $\rt(I)$ to be the interval immediately to its
  right in $\cK_p$. For each page $p$ and interval $I \in \cK_p$, we
  perform the following steps: let $t_1$ and $t_2$ be the right
  end-points of $I$ and $\rt(I)$ respectively. We set $x'_{p,t_1} =
  x'_{p,t_2} = 1$ if either of these conditions hold: (i)~there is a time
  $t \in I$ such that $x_{p,t}=1$, or (ii) $y_{I'}=1$ for every
  request interval $I'$ for $p$ which is contained within $I$. The cost
  guarantee for $(x',y')$ follows easily. 

  It remains to show that $(x', y')$ is feasible, consider a time $t$
  and assume $y'_{I_t}=y_{I_t} =0$ (otherwise~(\ref{eq:2-againp})
  follows immediately). Let $I_t=[t_1,t]$, and for page $p \neq p_t$,
  let $I'(p)$ be the last interval in $\cK_p$ ending before $t_1$, so
  that $\previous{p}{t}$ is the right end-point of $I'(p)$.  If there is a
  request interval $I$ for page $p$ which is contained in $I'(p)$ and
  $y_{I}=0$, we define $I(p)$ to be such an interval $I$, otherwise
  $I(p)$ is any request interval for $p$ contained within $I'(p).$ Let
  $P_1$ denote the pages for which the first case holds and $P_2$ be
  the second set of pages. Feasibility of~(\ref{eq:2p1p}) for
  these set of intervals implies that
  $$ |P_2| + \sum_{p \in P_1} \min\Big(1, \sum_{t' \in \dext{I(p),t}}x_{p,t'}\Big) \geq n-k.$$
  For pages $p \in P_2$, we would have set $x'_{p,\previous{p}{t}} =
  1$ due to condition~(ii). Furthermore, the interval $\dext{I(p),t}$ contains
  $\Dinterval{p}{t}$ and is contained in $I'(p) \cup \Dinterval{p}{t}$.  Now
  for page $p \in P_1$, suppose $t' \in \dext{I(p),t}$ is such that
  $x_{p,t'}=1$. Then either $t' \in \Dinterval{p}{t}$ in which case
  $x'_{p,t'} = x_{p,t'}$, or else 
  $t' \in I'(p)$ and we would have set $x'_{p.\previous{p}{t}} = 1$ by condition~(i).
  In either case, $\sum_{t' \in \Dinterval{p}{t}} x_{p,t'} \geq 1$. This shows
  that $(x',y')$ is a feasible solution to~\eqref{eq:IP1Dp}.

  We now show the converse. Let $(x',y')$ be a feasible solution
  to~(\ref{eq:IP2Dp}). We construct a feasible solution $(x,y)$ to
  (\ref{eq:IP1Dp}). As above, we first set $x=x', y=y'.$ Furthermore,
  for every $(p,t)$ such that $x'_{p,t} = 1$, let $I=[t_1,t_2]$ be the
  interval in $\cK_p$ containing $t$, and  set $x_{p,t_2} =
  1$. The cost guarantee for $x$ follows easily. To show feasibility,
  fix a time $t$ for which $y_{I_t} = 0$, and let $P$ be the set of
  pages for which the LHS equals 1 in
  constraint~\eqref{eq:2-againp}. For a page $p$, recall that $I'(p)$
  is the interval in $\cK_p$ which ended before $I_t$ started, but
  $\rt(I'(p))$ intersects $I_t$. Now consider the
  constraint~(\ref{eq:2p1p}) for time $t$ and set of pages $\calC$.
  We claim that the LHS term for every page $p \in P$ equals
  1. If $\dext{I,t} \supseteq \Dinterval{p}{t}$, then this claim follows
  from the fact that $p \in P$. So assume that $\dext{I,t} \not\supseteq \Dinterval{p}{t}$, and
  since both intervals share the same right endpoint, $\dext{I,t} \subseteq \Dinterval{p}{t}$. But then, by the greedy construction process, $\rt(I'(p))$ is contained in $\Dinterval{p}{t}$. 
  Since $p \in P$ we know that $x'_{p,t'}=1$ for some
  $t' \in \Dinterval{p}{t}$ and hence in $\rt(I'(p)) \cup I_t$. If $t' \in I_t$, then
  $t' \in \dext{I,t} $ where $I \in \calC$ is the request interval for
  $p$. Since $x_{p,t'}=1$ as well, we see that the LHS term for $p$
  in~\eqref{eq:2p1p} equals 1. On the other hand, if
  $t' \in \rt(I'(p))$, we will set $x_{p,.t''}$ to 1, where $t''$ is
  the right end-point of $\rt(I'(p))$. Since $t'' \in I_t$,
we have
  that $t'' \in \dext{I,t}$, and so the same conclusion holds.
\end{proof}

The compact LP can clearly be solved offline.  The following theorem,
whose proof is deferred the appendix, shows that the fractional
relaxation of~\eqref{eq:IP2Dp} can also be solved online losing only a
logarithmic factor.

\begin{restatable}[Solving the LP Online]{theorem}{solvedext}
  \label{thm:lponline}
  There is an $O(\log k)$-competitive online algorithm which maintains
  a solution to the fractional relaxation of~\eqref{eq:IP2Dp}.
At time $t$, the fractional solution
  only changes (and in fact, increases) the variables $x_{p,t}$ for
  all pages $p$, and $y_{I_t}$.
\end{restatable}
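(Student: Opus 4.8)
The plan is to cast the fractional relaxation of~\eqref{eq:IP2Dp} as an online instance of \intervalcover in which each timestep may additionally be ``waived'' at a cost, solve that instance with the Bansal--Buchbinder--Naor-style primal-dual algorithm for weighted paging developed in~\Cref{sec:int-cover}, and then translate the solution back to the $(x,y)$ variables in an online, past-preserving manner. The overall shape of the argument mirrors the right-extension case (\Cref{lem:onlinennp}), with the extra ingredient being the timestep-waiver role played by $y_{I_t}$ in~\eqref{eq:2-againp}.

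First I would set up the interval-cover view. Fix a page $p$: every window $\Dinterval{p}{t}=[\previous{p}{t},t]$ has its left endpoint $\previous{p}{t}$ in $\cT_p$, and as $t$ ranges over the deadlines this left endpoint is non-decreasing, advancing to the right endpoint of the next chunk of $\cK_p$ exactly when that chunk closes strictly before the start $t_1$ of the critical interval $I_t$. So the windows of page $p$ form a ``sticky-left, growing-right'' family anchored at the tiling $\cK_p$, and the event ``$\sum_{t'\in\Dinterval{p}{t}}x_{p,t'}\ge 1$'' corresponds to ``the current tile of $p$ (together with its right neighbour) is selected'' in a \intervalcover instance whose tiles for $p$ are derived from $\cK_p$, all of weight $w(p)$. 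Under this correspondence, \eqref{eq:2-againp} at deadline $t$ reads: either set $y_{I_t}=1$ (pay $\ell(I_t)$), or ensure that at least $n-k$ of the pages $p\ne p_t$ have their current tile selected. The two directions showing the two fractional optima agree up to constant factors follow the templates of~\Cref{lem:twoips} (and~\Cref{lem:forward-Rp,lem:reverse-Rp}): in the forward direction a hit $x_{p,t'}=1$, or a chunk of $\cK_p$ all of whose request intervals have $y_I=1$, is routed to selecting the corresponding tile and its successor; in the reverse direction a selected tile sets the $x$-variable at its right endpoint, and the construction of $\cK_p$---each chunk has total penalty exceeding $w(p)$---lets us charge the penalties inside a chunk to its $O(w(p))$ tile cost.

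Next I would solve the resulting problem online. The covering family is monotone (one new constraint per deadline, never removed), and the constraint created at time $t$ references only the fresh variables $x_{p,t}$---since each window $\Dinterval{p}{t}$ ends at $t$---and $y_{I_t}$. Hence the weighted-paging primal-dual of~\cite{BansalBN12}, as adapted in~\Cref{sec:int-cover}, applies: maintain a primal fractional solution with a feasible dual, and on the under-satisfied time-$t$ constraint raise the variables $x_{p,t}$ multiplicatively (and $y_{I_t}$), using the $n-k$-out-of-$n$ normalization that gives competitive ratio $O(\log k)$ rather than $O(\log n)$; the $\min(1,\cdot)$ truncation and the waiver term $(n-k)y_{I_t}$ are handled as box/penalty terms in the standard way. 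Since only $x_{p,t}$ and $y_{I_t}$ change at time $t$ and no earlier constraint is disturbed, the resulting fractional solution is monotone and meets the structural requirement of the theorem; translating it back through the online version of the reverse direction (setting $x_{p,t}=1$ at the current time rather than at a tile's left endpoint, and committing to penalties only going forward, as in~\Cref{sec:onlineimpl}) preserves these properties.

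I expect the main obstacle to be the competitive analysis once the timestep-waiver is incorporated: the variable $y_{I_t}$ must act as a capped, ski-rental-type alternative to covering the time-$t$ constraint, and one must verify that the dual growth still pays for the waived mass within an $O(\log k)$ factor. The single place where the tiling structure is genuinely essential is in bounding the cost of ``re-covering'' a page after its window has advanced past an earlier hit; this is exactly what the property ``each chunk of $\cK_p$ has total penalty $>w(p)$'' buys, and the delicate point is making the amortization clean enough that a chain of window advances cannot cascade into an unbounded re-covering cost.
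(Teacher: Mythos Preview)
Your core algorithmic step---run a Bansal--Buchbinder--Naor multiplicative-weights update that, at each deadline $t$, raises only $x_{p,t}$ for the active pages and the waiver variable $y_{I_t}$ until~\eqref{eq:2-againp} is satisfied---is exactly what the paper does. But the paper applies this update \emph{directly} to the LP relaxation of~\eqref{eq:IP2Dp}: it sets $\delta=\tfrac{1}{k+1}$, grows $x_{p,t}$ at rate $(x_{p,\Dinterval{p}{t}}+\delta)/w_p$ and $y_{I_t}$ at rate $(Ry_{I_t}+\delta(|P_\tau|-k))/L_t$, and bounds the amortized cost against a potential $\Phi_t$ that sums $w_p\log\bigl((1+\delta)/(x_{p,\Dinterval{p}{t}}+\delta)\bigr)$ over pages hit by $x^*$ plus an analogous term for the $y^*$ variables. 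There is no reduction to \intervalcover at this stage; the tiled-interval-cover machinery is invoked only \emph{later}, for rounding the integral solution on a non-nested net.

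The detour you propose through \intervalcover is where the gap is. Your claim that ``as $t$ ranges over the deadlines the left endpoint $\previous{p}{t}$ is non-decreasing'' is false in general: $\previous{p}{t}$ depends on $s(I_t)$, and when $I_{t'}$ is nested around $I_t$ with $t<t'$ we have $s(I_{t'})<s(I_t)$, so $\previous{p}{t'}$ can jump leftward. (This is precisely why \Cref{cl:nonnest} is stated only for times in a non-nested net $\cT'$, and why the paper needs the whole \S\ref{sec:gen-pagecover} extension machinery for integral solutions.) Consequently the windows $\Dinterval{p}{t}$ do not form the ``sticky-left, growing-right'' tiling you describe, the correspondence ``$\sum_{t'\in\Dinterval{p}{t}}x_{p,t'}\ge 1$ iff current tile and its right neighbour are selected'' breaks, and the translation back to $(x,y)$ variables you sketch does not go through. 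The fix is simply to drop the reduction and run the primal-dual directly on~\eqref{eq:IP2Dp}; the only structural fact the potential argument actually needs is that each constraint window ends at the current time $t$, so that raising $x_{p,t}$ is always available and past variables are never touched.
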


\begin{corollary}[Integral Penality Variables]
  \label{cor:lponline}
  Given an online fractional solution $(\tx, \ty)$
  to~\eqref{eq:IP2Dp}, we can maintain another online fractional
  solution $(\bx, \by)$ whose cost is at most twice of that of $(\tx,
  \ty)$, such that $\by_I$ is integral for every $I$. This solution
  also has the same property that at time $t$, it only increases the
  variables $x_{p,t}$ and $y_{I_t}$ corresponding to time $t$.
\end{corollary}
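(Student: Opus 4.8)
Since every penalty variable $y_I$ occurs in \eqref{eq:IP2Dp} exactly once — on the right-hand side of the single constraint \eqref{eq:2-againp} indexed by $t=e(I)$ — and since the algorithm of \Cref{thm:lponline} fixes the value $\ty_{I_t}$ at time $t$, the plan is to round $y$ time-by-time and to repair only the one covering constraint that each rounding affects, using only current-time variables; that way the ``only increases $x_{p,t}$ and $y_{I_t}$ at time $t$'' property is preserved automatically.

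At time $t$ I would threshold $\ty_{I_t}$ at $\tfrac12$. If $\ty_{I_t}\ge\tfrac12$, set $\by_{I_t}=1$: this costs $\ell(I_t)\le 2\,\ell(I_t)\ty_{I_t}$, and \eqref{eq:2-againp} for $t$ becomes vacuous. If $\ty_{I_t}<\tfrac12$, set $\by_{I_t}=0$ and repair \eqref{eq:2-againp}: writing $\sigma_p=\sum_{t'\in\Dinterval{p}{t}}\tx_{p,t'}$, feasibility of $(\tx,\ty)$ gives $\sum_{p\ne p_t}\min(1,\sigma_p)\ge(n-k)(1-\ty_{I_t})>\tfrac{n-k}{2}$, so the LHS is short of $n-k$ by a ``deficit'' $\delta_t=(n-k)-\sum_{p\ne p_t}\min(1,\sigma_p)\le(n-k)\ty_{I_t}$. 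Set $\bx:=2\tx$ as a base (monotonicity of $\min(1,\cdot)$ means this never hurts any constraint, and doubling already turns every page with $\sigma_p\ge\tfrac12$ into a fully covered one); then, if the constraint is still short, top it up by raising the current-time variables $\bx_{p,t}$ for the cheapest under-covered pages $p\ne p_t$ until at least $n-k$ pages have window-sum $\ge1$ — or, if that top-up would cost more than $\ell(I_t)$, revert to $\by_{I_t}=1$ for this particular $t$. In all cases $\by$ is integral, \eqref{eq:2-againp} holds, and only time-$t$ variables were touched.

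The cost bookkeeping is the substantive part, and I expect it to be the main obstacle. Doubling $\tx$ contributes $2\sum_{p,t}w(p)\tx_{p,t}$ and the threshold roundings contribute at most $2\sum_I\ell(I)\ty_I$; what must still be absorbed into an $O(1)$ (ideally factor-$2$) blow-up is the cost of the top-ups and of any revert-to-$1$ steps. The difficulty is that a top-up puts genuinely new mass on pages the fractional solution barely touched, while a revert can occur at a $t$ where $\ty_{I_t}$ is tiny, so neither can be charged to the local penalty budget $\ell(I_t)\ty_{I_t}$ on its own. The way out is to use a cost-balance property of the primal-dual algorithm underlying \Cref{thm:lponline}: at time $t$ it raises only the variables $x_{p,t}$ (and $y_{I_t}$), and does so in a cost-balanced fashion, so the cost it incurred at time $t$ — the $x$-mass it placed on the cheap pages together with $\ell(I_t)\ty_{I_t}$ — is, up to a constant, the marginal cost that ``covering $\delta_t$ additional pages'' would require; this is exactly the budget the repair needs. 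Making this precise, and tuning the threshold so that the constants collapse to $2$, is the heart of the argument; the threshold rounding itself and the online bookkeeping are routine once that is in place.
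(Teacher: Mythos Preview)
The paper's proof is a two-liner: set $\by_I = 1$ iff $\ty_I > \tfrac12$ and $\bx_{p,t} = \min(1, 2\tx_{p,t})$, and then assert that ``this at most doubles the cost of the solution, and maintains feasibility.'' There is no top-up step, no revert step, and no appeal to the internals of the primal-dual algorithm.

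You correctly spot the issue the paper glosses over: when $\ty_{I_t} \le \tfrac12$ and you round $\by_{I_t}$ down to $0$, doubling $x$ need not restore the constraint, because $\min(1, 2\sigma_p)$ can be strictly smaller than $2\min(1,\sigma_p)$ whenever $\sigma_p > \tfrac12$. (Concretely: take $n-k=2$, one page $p\neq p_t$ with $\sigma_p=1$, the other with $\sigma_p=0$, and $\ty_{I_t}=\tfrac12$; the constraint $1\ge 2(1-\tfrac12)$ is tight before rounding and becomes $1\ge 2$ after.) So the worry that motivates your top-up/revert machinery is real, and the paper's bare assertion of feasibility does not dispose of it.

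But your proposal does not close the gap either. You push the difficulty into the cost analysis of the top-up and revert steps, and then gesture at ``a cost-balance property of the primal-dual algorithm'' without stating what that property is or proving it yields the needed bound. As you yourself flag, ``making this precise \ldots\ is the heart of the argument,'' and that heart is missing. In particular, a revert at a time $t$ with $\ty_{I_t}$ tiny costs $\ell(I_t)$ against a fractional budget of roughly $\ell(I_t)\ty_{I_t}\approx 0$; nothing in the continuous-update rule of \Cref{thm:lponline} says the $x$-cost spent at time $t$ is comparable to $\ell(I_t)$, since the process balances \emph{rates}, not totals, and halts the moment the constraint is met. Likewise, a top-up on pages the fractional solution barely touched has no obvious charging target within a factor of $2$.

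In short: the paper's route is far simpler than yours but asserts feasibility too quickly; your route recognises that gap but trades it for an unproven cost bound.
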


\begin{proof}
  Fix a time $t$ and the corresponding solution $(\tx, \ty)$. We set
  $\by_{I} = 1$ if $\ty_{I} > \nicefrac12$, and to 0 otherwise; we
  also  set $\bx_{p,t} = \min(1, 2\tx_{p,t})$. This
  at most doubles the cost of the solution, and  maintains
  feasibility.  Furthermore, at time $t$,
  $(\bx, \by)$ only increases the variables $\bx_{p,t}$ for all pages
  $p$ and $\by_{I_t}$.
\end{proof}

Let the problem defined by~\eqref{eq:IP2Dp} be called \pagecoverp, and
fix an instance $\I$ of this problem.  For rest of the discussion, we
maintain an online fractional solution $(\bx, \by)$ to $\I$ with the
properties mentioned in~\Cref{cor:lponline}.
In order to maintain an online integral solution to $\I$, we first
solve the problem for non-nested instances, and then prove an
``extension'' theorem to translate from non-nested instances to all instances.

\subsubsection{Solving Globally Non-Nested Cases of \pagecoverp}
\label{sec:solv-specialPC}

We say that interval $I = [t_1, t_2]$ is \emph{strictly nested} within
$I' = [t_1',t_2']$ if $t_1' \leq t_1 \leq t_2 \leq t_2'$, and either
the first or the last inequality is strict. (We drop the use of
strict, and simply say ``nested'' henceforth.) Two intervals are nested
if one of them is nested within another. Let $\cT'$ be some subset of
the timeline $[T]$ such that for every $t_1 \neq t_2 \in \cT'$, their
critical intervals $I_{t_1}, I_{t_2}$ are not nested. We define the
\specialpagecoverp problem, which solves the problem~(\ref{eq:IP2Dp})
where the constraints correspond to times $t \in \cT'$ and we are also
given that $\by_{I_t}=0$ for all $t \in \cT'$ (i.e., we are not paying
the penalty at these times). First, we show that the intervals
$\Dinterval{p}{t} = [\previous{p}{t},t]$ for any fixed page are also
non-nested.

\begin{claim}
  \label{cl:nonnest} For times $t, t' \in \T'$ and a page $p$ such
  that $p \not\in \{p_t, p_{t'}\}$, the intervals
  $\Dinterval{p}{t},\Dinterval{p}{t'}$ are non-nested.
\end{claim}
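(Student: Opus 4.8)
The plan is to fix a page $p$ and two distinct times $t, t' \in \T'$ with $p \notin \{p_t, p_{t'}\}$, assume without loss of generality that $t < t'$, and show that $\Dinterval{p}{t} = [\previous{p}{t},t]$ and $\Dinterval{p}{t'} = [\previous{p}{t'},t']$ are not nested. The first step is to note that the two critical intervals are ordered at \emph{both} endpoints: since $e(I_t)=t<t'=e(I_{t'})$ and $I_t,I_{t'}$ are not nested (because $t,t'\in\T'$), we must have $s(I_t)<s(I_{t'})$; otherwise $s(I_{t'})\le s(I_t)\le t\le t'$ with the last inequality strict, so $I_t$ would be strictly nested in $I_{t'}$, a contradiction.

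The second step is a monotonicity observation about $\previous{p}{\cdot}$. By definition, $\previous{p}{t}$ is the right endpoint of the last interval of the partition $\cK_p$ that ends before $s(I_t)$, and likewise $\previous{p}{t'}$ is defined from $s(I_{t'})$. Since $s(I_t)<s(I_{t'})$, every interval of $\cK_p$ that ends before $s(I_t)$ also ends before $s(I_{t'})$, and hence $\previous{p}{t}\le\previous{p}{t'}$. Now $\Dinterval{p}{t'}$ has right endpoint $t'>t$, so it cannot be nested inside $\Dinterval{p}{t}$; and $\Dinterval{p}{t}\subseteq\Dinterval{p}{t'}$ would force $\previous{p}{t'}\le\previous{p}{t}$, which combined with the previous inequality forces $\previous{p}{t}=\previous{p}{t'}$ (while $t<t'$). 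So, unless this borderline configuration can occur, the two intervals are non-nested and we are done. (If one adopts the convention that ``nested'' requires strict containment at both endpoints, steps one and two already finish the proof, since the borderline case is excluded outright.)

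The one remaining point --- which I expect to be the main obstacle --- is to rule out the borderline case $\previous{p}{t}=\previous{p}{t'}$: this is exactly the situation where $s(I_t)$ and $s(I_{t'})$ fall inside a single chunk $J\in\cK_p$ (namely $J=\rt(K)$, where $K\in\cK_p$ is the interval whose right endpoint is this common value). Here the plan is to use the greedy construction of $\cK_p$ --- each of its intervals is a \emph{minimal} window whose contained requests for $p$ have total penalty exceeding $w(p)$ --- together with the chain structure of the critical intervals at times of $\T'$ (which by the first step have strictly increasing left \emph{and} right endpoints) to exhibit a breakpoint of $\cK_p$ (an element of $\cT_p$) strictly between $s(I_t)$ and $s(I_{t'})$, contradicting $\previous{p}{t}=\previous{p}{t'}$. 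The bookkeeping in the first two steps is routine; essentially all the work lies in this last exclusion argument, which is where any nontrivial use of the precise structure of $\T'$ and of the chunking procedure will come in.
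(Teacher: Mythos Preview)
Your first two steps are exactly the paper's proof: assuming $t<t'$, the non-nestedness of $I_t,I_{t'}$ gives $s(I_t)\le s(I_{t'})$, and then the definition of $\previous{p}{\cdot}$ (as the last $\cK_p$-breakpoint before the start of the critical interval) immediately yields $\previous{p}{t}\le\previous{p}{t'}$. The paper stops right there.

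Your proposed third step, however, cannot succeed. The partition $\cK_p$ is constructed entirely from the request intervals \emph{for page $p$} and their penalties; it is oblivious to the critical intervals $I_t,I_{t'}$, which are requests for the \emph{other} pages $p_t,p_{t'}\neq p$. Nothing in the greedy chunking forces a $\cK_p$-breakpoint to land between $s(I_t)$ and $s(I_{t'})$: a single long chunk of $\cK_p$ may well contain both points, and then $\previous{p}{t}=\previous{p}{t'}$ genuinely occurs. The ``chain structure of the critical intervals at times of $\T'$'' gives you information about $I_t,I_{t'}$, not about where page-$p$ requests accumulate enough penalty, so there is no lever for the exclusion argument you sketch.

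In short, your parenthetical remark is the right resolution: the paper is content with the monotonicity $\previous{p}{t}\le\previous{p}{t'}$ (which is what its downstream uses of the claim actually need), and does not attempt to exclude the shared-left-endpoint configuration. Drop the third step and your argument coincides with the paper's.
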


\begin{proof}
  Assume wlog that $t < t'$. Let $I_t=[t_1, t], I_{t'}=[t_1',t']$. By
  the non-nested property, $t_1 \leq t_1'$. Therefore, by construction,
  $\previous{p}{t} \leq \previous{p}{t'}$.
\end{proof}

We reduce the instance \specialpagecover to a tiled interval cover
problem with exclusions (\exintervalcover) instance. \exintervalcover
is like \intervalcover, where additionally for each time $t$ %
we are
specified a page $p_t$, and we cannot use the
intervals in $\calI_{p_t}$ for the coverage requirement at time
$t$. In~\Cref{sec:int-cover} we give a constant-factor approximation
algorithm and $O(\log k)$-competitive algorithm for \exintervalcover.

The reduction of the instance $(\I, \cT')$ of \specialpagecoverp to an instance $\I'$ of 
\exintervalcover proceeds as follows. For each page $p$, we build a disjoint 
collection of intervals $\calD_p$ as shown in \Cref{fig:dp}, by 
greedily picking a set of non-overlapping intervals in $\{D^p_t: t
\geq 0\}$ and extending them to partition the timeline. In the instance
$\I'$, the set $\I'(p)$ of intervals for page $p$ is given by the  
intervals in $\calD_p$, and the excluded page $p_t$ is the page
requested at time $t$. Furthermore, the cost of each interval in
$\I'(p)$ is given by $w(p)$, and the covering requirement $R$ at each time $t$ is $n-k$. 
The natural LP relaxation for $\I'$ has a variable  $z_I$ for each interval $I$ such that for every time $t$: 
\begin{gather}
  \sum_{I: I \in \I'(p), p \neq p_t, t \in I} z_I \geq n-k.
\end{gather}

\begin{figure}[h]%
    \begin{procedure}[H]
      Initialize $\calD_p  \gets \emptyset, t^\star \gets 0.$ \\
      \For{$t =1, 2, \ldots$}{ 
          \If{$D^p_t$ does not contain $t^\star$ in the interior}{
          Add $[t^\star, t]$ to $\calD_p$ \\
          Update $t^\star \gets t$. } }
    \end{procedure}
  \caption{The online procedure to construct the partition $\calD_p$ for
    page $p$.}
  \label{fig:dp}
\end{figure}

\begin{lemma}
  \label{lem:reductionp}
  Given the instance $(\I, \cT')$ as above, let $\I'$ be the corresponding
  \exintervalcover instance. 
  \begin{OneLiners}
  \item[(i)] Let $(\bx, \by)$ be a fractional solution to~\eqref{eq:IP2Dp} with $\by_{I_t} = 0$ for all $t \in \cT'$.
  Then there is a fractional solution to the above LP relaxation for $\I'$ of cost at most twice that of $(\bx, \by)$. 
  \item[(ii)] Let $\calS$ be an integral solution to $\I'$. Then there is an integral solution to the \specialpagecoverp instance $(\I, \cT')$ of cost at most twice that of $\cal S$. 
  \end{OneLiners}
 Moreover, both the above constructions can be done efficiently. 
\end{lemma}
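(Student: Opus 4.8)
The lemma has two directions, both of which I would prove by tracking how the greedy partition $\calD_p$ relates to the intervals $\Dinterval{p}{t}$, exactly parallel to Lemmas~\ref{lem:forward-Rp} and~\ref{lem:reverse-Rp} for the right extensions. The key structural fact I would establish first: by the construction in \Cref{fig:dp}, each $\Dinterval{p}{t}$ either is contained in a single interval of $\calD_p$, or is contained in the union of \emph{two consecutive} intervals of $\calD_p$ (because whenever $D^p_t$ straddles the current breakpoint $t^\star$, the greedy step adds $[t^\star, t]$, so a breakpoint lies inside $D^p_t$ only if $D^p_t$ spans at most two $\calD_p$-pieces). Combined with \Cref{cl:nonnest}, this gives the ``tiling'' behaviour needed for both reductions.

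\textbf{Part (i), forward direction.} Given a fractional $(\bx,\by)$ with $\by_{I_t}=0$ on $\cT'$, I would define for each page $p$ and each interval $J\in\calD_p$ a $z$-value: roughly $z_J := \min(1,\ \sum_{t'\in J}\bx_{p,t'} + \rt(J)\text{-contribution} + (\text{penalty mass of request intervals of }p\text{ inside }J))$, and then include both $J$ and $\rt(J)$ in the cover with the appropriate value, as in \Cref{lem:forward-Rp}. The factor-$2$ loss comes from charging $J$ and $\rt(J)$ together. For feasibility at a time $t\in\cT'$: take the $n-1$ pages $p\neq p_t$, pick for each the request interval $I(p)$ inside $I'(p)$ supplied by the construction of $\previous{p}{t}$, apply constraint~\eqref{eq:2-againp} (equivalently \eqref{eq:2p1p}) which since $\by_{I_t}=0$ forces $n-k$ of the $\min(1,\cdot)$ terms over $\Dinterval{p}{t}$ to be $1$; for each such page the mass sits either in one $\calD_p$-piece covering $t$ or in $\rt$ of one ending just before $t$, so the chosen $z$'s cover $t$ to extent $n-k$.

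\textbf{Part (ii), reverse direction.} Given an integral cover $\calS$ of $\I'$, for each chosen interval $J=[a,b]\in\calD_p\cap\calS$ I set $x_{p,a}=x_{p,b}=1$ and $y_I=1$ for request intervals $I$ of $p$ strictly contained in $J$; the construction of $\cK_p$ (which $\calD_p$ refines along the $\cT_p$ breakpoints) bounds the penalty mass inside $J$ by $O(w(p))$, giving the factor-$2$ cost bound. For feasibility of \eqref{eq:2p1p} at $t\in\cT'$ with $y_{I_t}=0$: the $n-k$ pages whose $\calD_p$-interval covers $t$ in $\calS$ each have, by the two-consecutive-pieces fact, either their request interval $I(p)$ strictly inside the covering piece (so $y_{I(p)}=1$) or an endpoint of that piece landing in $\dext{I(p),t}$ (so some $x_{p,t'}=1$ with $t'\in\dext{I(p),t}$), exactly the case analysis in \Cref{lem:reverse-Rp}. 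Efficiency of both directions is immediate since $\calD_p$ is built by a linear scan.

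\textbf{Main obstacle.} The delicate point is verifying that the greedy $\calD_p$ from \Cref{fig:dp} interacts correctly with \emph{both} endpoints of $\dext{I(p),t}$ versus $\Dinterval{p}{t}=[\previous{p}{t},t]$: these two intervals share a right endpoint only after the reductions have aligned things, and in general $\dext{I(p),t}$ can stick out to the left of $\Dinterval{p}{t}$ into $I'(p)$. I would handle this by the same device as in the proof of \Cref{lem:twoips}: if $\dext{I(p),t}\not\supseteq\Dinterval{p}{t}$ then they are comparable and $\rt(I'(p))\subseteq\Dinterval{p}{t}$, which lets the endpoint of $\rt(I'(p))$ (a breakpoint of $\calD_p$, hence an endpoint of some interval in $\calD_p$) serve as the witness time $t'$. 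Getting this boundary bookkeeping exactly right — and checking it is consistent with the online/past-preserving variant to be used later — is the part that needs care; everything else is a transcription of the right-extension arguments.
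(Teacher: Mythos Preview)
Your proposal conflates two different integer programs. The \specialpagecoverp instance $(\I,\cT')$ is governed by constraints~\eqref{eq:2-againp} (D3p), which involve only the intervals $\Dinterval{p}{t}=[\previous{p}{t},t]$ and have \emph{no} $y_I$ terms on the left-hand side (only $y_{I_t}$ on the right, which is $0$ on $\cT'$). You repeatedly argue about~\eqref{eq:2p1p} (D1p), the intervals $\dext{I(p),t}$, and setting $y_I=1$ for request intervals inside a $\calD_p$-piece. That machinery belongs to \Cref{lem:twoips} and \Cref{lem:reverse-Rp}, not here; the translation from D1p to D3p has already been done. Consequently your ``main obstacle'' --- reconciling $\dext{I(p),t}$ with $\Dinterval{p}{t}$ --- is a phantom: the lemma never touches $\dext{I(p),t}$.

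The paper's proof is considerably simpler than what you sketch. For part~(i), it spreads each $\bx_{p,t'}$ onto the $\calD_p$-interval containing $t'$ \emph{and the one immediately to its left}; no penalty mass enters. Feasibility at $t\in\cT'$ follows because, by \Cref{cl:nonnest} and the greedy construction of $\calD_p$, the interval $\Dinterval{p}{t}$ is contained in $I_1\cup\rt(I_1)$ where $I_1\in\calD_p$ is the piece containing $t$; hence all $\bx$-mass on $\Dinterval{p}{t}$ lands on $z_{I_1}$. For part~(ii), one simply sets $x_{p,t_1}=x_{p,t_2}=1$ at the two endpoints of each chosen $J=[t_1,t_2]\in\calS\cap\calD_p$ and sets \emph{no} $y$ variables. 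Feasibility of~\eqref{eq:2-againp} at $t$ holds because $\Dinterval{p}{t}$ overlaps the $\calD_p$-piece $I(p)\ni t$ but cannot be contained in it (the greedy construction guarantees $I(p)$ already contains some $D^p_{t'}$, and non-nestedness forbids strict containment), so $\Dinterval{p}{t}$ must contain an endpoint of $I(p)$. Your claim that ``$\calD_p$ refines along the $\cT_p$ breakpoints'' and that penalty mass inside a $\calD_p$-piece is bounded by $w(p)$ is not established and not needed.
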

\begin{proof}
Let $(\bx, \by)$ be a fractional solution to~\eqref{eq:IP2Dp}. We construct a fractional solution $z$ for the instance $\I'$ as follows. For every variable $\bx_{p,t}$, let $I$ be the interval in $\calD_p$ containing $t$, and let $\lt(I)$ be the interval in 
$\calD_p$ immediately to the left of $I.$ We raise both $z_I$ and $z_{\lt(I)}$ by 
$\bx_{p,t}$. Finally, if any $z_I$ variable exceeds one, we cap it at
one. The cost of $z$ is at most twice that of $\bx$. To show
feasibility, consider a time $t$. Let $I_1$ be the interval in
$\calD_p$ containing $t$ and $I_2 = \rt(I_1)$ be the interval immediately to the right of $I_1$ in $\calD_p$.
Both $I_1$ and $I_2$ contain intervals $D^p_{t_1}$ and $D^p_{t_2}$ for some times $t_1$ and $t_2$ respectively. Therefore,~\Cref{cl:nonnest} implies that $D^p_t$ is contained in $I_1 \cup I_2$. It follows that we will raise $z_{I_1}$ by at least 
$\min(1, \sum_{t' \in D^p_t} \bx_{p,t'}). $ From~\eqref{eq:2-againp} it follows that $z$ is a feasible solution. 

We now prove the second part of the lemma. Let $\calS$ be a feasible
solution to $\I'$. We will build an integral solution $(x, y)$, where
$y_{I_t}=0$ for all $t \in \cT'$, for $(\I, \cT')$. For every
$I=[t_1,t_2] \in \calS \sse \calD_p$, we set $x_{p,t_1} = x_{p,t_2} =
1$. 
The cost guarantee follows easily. To verify feasibility, consider a
time $t$. For each page $p \neq p_t$, let $I(p)$ be the interval in
$\I'(p)$ which contains $t$. Let $P(t)$ be the set of pages for which
$\calS$ contains $I(p)$. By feasibility of the $z$-solution, $|P(t)| \geq (n-k).$ The interval $D^p_t$ and $I(p)$ overlap (both of them contain time $t$). 
Also, $D^p_t$ cannot be contained in $I(p)$, by the way 
the set $\calD_p$ is constructed. Therefore, $D^p_t$ must contain one of the two end-points of $I(p) :=[t_1, t_2]$. Since we set both $x_{p,t_1}, x_{p,t_2}$ to 1, it follows that the LHS term corresponding to page $p$ in~\eqref{eq:2-againp} (for time $t$) is also 1. This shows that $(x,y)$ is feasible. 
\end{proof}

Combining \Cref{lem:reductionp} with the $2$-approximation for
\exintervalcover from \Cref{lem:offline-intcover} gives an 8-approximation algorithm for \specialpagecoverp.
As in
\Cref{sec:onlineimpl}, 
the reduction from the proof of
\Cref{lem:reductionp}(ii) 
can be carried out in an online manner. If the online algorithm for \exintervalcover selects an interval $I :=[t_1, t_2]$ at a time $t$ (note that $t \leq t_2$), we need to add the stars $(p,t_1)$ and $(p,t_2)$ in our solution for $\I$. It turns out that the proof of~\Cref{lem:reductionp} holds  if we add the stars $(p,t)$ and $(p,t_2)$ instead. Further, the star $(p,t_2)$ can be added at time $t_2$. 
 Since the set of constraints~\eqref{eq:2-againp} corresponding to
 time $t$ involve variables at $t$ and earlier only, the online
 algorithm need not {\em remember} at time $t$ the stars which will
 appear in future -- it can keep track of all the stars which have
 been added at time $t$, and any such star which corresponds to time
 $t' > t$ will only appear at time $t'$ in the algorithm. Thus, the
 algorithm satisfies the property that at any time $t$, it will only
 add stars corresponding to time $t$---we call such algorithms {\em
   present-restricted}; this is a stronger property than both
 past-preservation and sparsity (which were defined in \S\ref{sec:solving-wpwtw-using}).

\begin{lemma}
  \label{lem:onlinenonnest}
  There is an online $O(\log k)$-competitive present restricted algorithm to
 \specialpagecoverp. 
Moreover, there is an offline
  algorithm 8-approximation algorithm for \specialpagecover. 
\end{lemma}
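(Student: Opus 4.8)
The plan is to combine the two reductions established above: the reduction from \specialpagecoverp to \exintervalcover (\Cref{lem:reductionp}) and the algorithmic results for \exintervalcover deferred to \Cref{sec:int-cover}. Concretely, for the offline bound we apply \Cref{lem:reductionp}(i) to the optimal fractional solution of \eqref{eq:IP2Dp} to get a fractional \exintervalcover solution of at most twice the cost; we then invoke the $2$-approximation algorithm for \exintervalcover from \Cref{lem:offline-intcover} to obtain an integral \exintervalcover solution of cost at most $2\cdot 2 = 4$ times the fractional optimum of \eqref{eq:IP2Dp}; finally we push this back through \Cref{lem:reductionp}(ii), losing another factor of $2$, to get an integral \specialpagecoverp solution of cost at most $8$ times the optimum. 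Since the fractional optimum of \eqref{eq:IP2Dp} lower-bounds the integral optimum (indeed they are within constants by \Cref{lem:twoips}), this is an $8$-approximation.

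For the online bound, the main work is to check that every step in the above chain can be carried out online with bounded competitive ratio, and that the output has the \emph{present-restricted} property. First, the sets $\calD_p$ are built by the online procedure in \Cref{fig:dp}, so the \exintervalcover instance $\I'$ is revealed online in lockstep with $\I$ (the excluded page $p_t$ and the covering requirement at time $t$ are known at time $t$). Second, we run the $O(\log k)$-competitive online algorithm for \exintervalcover from \Cref{sec:int-cover} on $\I'$. Third, we apply the online version of the reduction in \Cref{lem:reductionp}(ii): whenever the \exintervalcover algorithm selects an interval $I = [t_1,t_2] \in \calD_p$ at some time $t \le t_2$, instead of setting the stars $(p,t_1)$ and $(p,t_2)$ we set $(p,t)$ and $(p,t_2)$. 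The observation (already noted in the text preceding this lemma) is that the feasibility argument in the proof of \Cref{lem:reductionp}(ii) goes through with $(p,t)$ in place of $(p,t_1)$, because the constraint \eqref{eq:2-againp} at a later time $t'$ only cares whether some star of page $p$ lies inside $D^p_{t'} = [\previous{p}{t'}, t']$, and moving the left star rightward from $t_1$ to $t$ keeps it inside the relevant interval whenever $t_1$ was. The star $(p,t_2)$ is deferred to time $t_2$, which is legitimate since the constraint at any time only involves variables at that time or earlier. Hence at each time the algorithm only writes stars at the current time, i.e., it is present-restricted, and present-restriction trivially implies monotonicity, past-preservation, and sparsity.

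The competitive ratio bookkeeping is then routine: the cost of the stars we produce is at most twice the cost of the \exintervalcover solution (one star per selected interval, plus its right neighbour via $\rt(\cdot)$, each of weight $w(p)$), which in turn is $O(\log k)$ times the optimal fractional \exintervalcover cost, which by \Cref{lem:reductionp}(i) is at most twice the optimal fractional cost of \eqref{eq:IP2Dp}, which lower-bounds the cost of any integral solution to the \specialpagecoverp instance. Multiplying the constants gives an $O(\log k)$-competitive present-restricted online algorithm.

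The step I expect to be the main obstacle is verifying carefully that the rightward shift of the left endpoint (from $t_1$ to the current time $t$) in the online reduction does not break feasibility of \eqref{eq:2-againp} for \emph{future} times $t' \in \cT'$ with $p \notin \{p_t, p_{t'}\}$: one must use the non-nestedness of the critical intervals (via \Cref{cl:nonnest}, which gives $\previous{p}{t} \le \previous{p}{t'}$) together with the fact that the selected interval $I = [t_1, t_2] \in \calD_p$ overlaps $D^p_{t'}$ but is not contained in it, to conclude that one of $t$ or $t_2$ still lands inside $D^p_{t'}$ whenever one of $t_1$ or $t_2$ did. Everything else — the online construction of $\calD_p$, invoking the black-box \exintervalcover algorithm, and the constant-factor accounting — is mechanical.
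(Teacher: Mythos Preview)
Your proposal is correct and follows essentially the same route as the paper: reduce to \exintervalcover via \Cref{lem:reductionp}, invoke the offline $2$-approximation (resp.\ online $O(\log k)$-competitive algorithm) for \exintervalcover from \Cref{sec:int-cover}, and push back through \Cref{lem:reductionp}(ii), with the online adaptation of replacing the star $(p,t_1)$ by $(p,t)$ and deferring $(p,t_2)$ to time $t_2$ to achieve present-restriction. One small slip: in your cost paragraph you invoke ``its right neighbour via $\rt(\cdot)$,'' but that operator is from the \emph{forward} reduction in \Cref{lem:reductionp}(i); in part~(ii) the two stars are at the two \emph{endpoints} of the selected interval, not at the interval and its right neighbour---the factor of~$2$ is nonetheless correct.
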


\subsubsection{Algorithm for the General Case of \pagecover}
\label{sec:gen-pagecover}

We now consider the general setting where the critical intervals $I_t$
may be nested. \Cref{cor:lponline} shows that at every time $t$, we
know whether $\by_{I_t}=1$ or not, so we need only worry about times
for which $\by_{I_t} = 0$---call these times $\T$.  Let $\I$ be a
general instance of~\pagecoverp, where we want to obtained a cover for
times in $\T$. We show how to extend a solution for a
\specialpagecoverp sub-instance into one for the original instance
$\I$, while losing a constant factor in the cost. Let us give some
useful notation.
Given a set of times $\T$, a
subset $\calN$ is a \emph{non-nested net} of $\T$ if
\begin{OneLiners}
\item[(i)] for times $t_1 \neq t_2 \in \calN$, their critical intervals
  $I_{t_1}, I_{t_2}$ are non-nested, and 
\item[(ii)] for every time
  $t \in \T\setminus \calN$, there is a time $t' \in \calN$ such that
  $I_{t}$ contains $I_{t'}$.
\end{OneLiners}

A greedy algorithm to construct a a non-nested net $\calN$ of $\T$
simply scans times in $\T$ from left to right, and adds time $t$ to
$\calN$ whenever $I_t$ does not contain $I_{t'}$ for any
$t' \in \calN$. This procedure is implementable online: whenever we
see a time $t$, we know whether it gets added to $\calN$ or not. Given
a set $\T$ of times and a non-nested net $\calN$ of $\T$, we define a
map $\varphi: \T \setminus \calN \to \calN$ as follows---for a time
$t \in \T \setminus \calN$, let $\varphi(t)$ be the right-most time
$t' \in \calN$ such that $I_t$ contains $I_{t'}$.

\begin{claim}[Monotone Map]
  \label{cl:gamma}
  Let $\T$ be a set of times, $\calN$ be a non-nested net of $\T$ and $\varphi$ be the associated map as above. Then for any 
  $t_1', t_2' \in \T \setminus \calN$,
  $t_1' < t_2' \implies \varphi(t_1') \leq \varphi(t_2')$.
\end{claim}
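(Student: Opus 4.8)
We want to show the map $\varphi$ is monotone: if $t_1' < t_2'$ are both in $\T \setminus \calN$, then $\varphi(t_1') \le \varphi(t_2')$. The key structural facts at play are: (a) the net $\calN$ is built greedily left-to-right, so the critical intervals $\{I_t : t \in \calN\}$ are pairwise non-nested, and (b) $\varphi(t')$ is defined as the \emph{rightmost} $s \in \calN$ with $I_s$ nested inside $I_{t'}$. I would argue by contradiction: suppose $t_1' < t_2'$ but $\varphi(t_1') > \varphi(t_2')$. Write $s_1 = \varphi(t_1')$ and $s_2 = \varphi(t_2')$, so $s_1, s_2 \in \calN$, $I_{s_1} \subseteq I_{t_1'}$, $I_{s_2} \subseteq I_{t_2'}$, and $s_1 > s_2$.

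\textbf{Carrying it out.} Recall that intervals are identified by their right endpoints, so $e(I_t) = t$ and $s_1 > s_2$ means $e(I_{s_1}) > e(I_{s_2})$. Since $I_{s_1} \subseteq I_{t_1'}$ and $e(I_{s_1}) > e(I_{s_2}) = s_2$, we have $t_1' = e(I_{t_1'}) \ge e(I_{s_1}) > s_2$. Now I want to locate $s_2$ relative to $I_{t_1'}$. We have $s_2 < s_1 \le t_1'$, so $s_2$ lies to the left of $t_1'$; and $s(I_{t_1'}) \le s(I_{s_1}) \le e(I_{s_2}) = s_2$ would place $s_2$ (the right endpoint of $I_{s_2}$) inside $I_{t_1'}$. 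To get this inequality on left endpoints, note $I_{s_1} \subseteq I_{t_1'}$ gives $s(I_{t_1'}) \le s(I_{s_1})$, and I need $s(I_{s_1}) \le s_2 = e(I_{s_2})$. This is where I would use that $s_1, s_2 \in \calN$ are non-nested with $s_2 < s_1$: non-nestedness of $I_{s_2}$ and $I_{s_1}$ together with $e(I_{s_2}) < e(I_{s_1})$ forces $s(I_{s_2}) \le s(I_{s_1}) \le e(I_{s_2})$ — i.e.\ they overlap with $I_{s_1}$ starting inside $I_{s_2}$ (otherwise they'd be disjoint or nested, both contradictions given the endpoint order). In particular $s(I_{s_1}) \le e(I_{s_2}) = s_2$. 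Combined with $s(I_{t_1'}) \le s(I_{s_1})$ and $s_2 < t_1' = e(I_{t_1'})$, we conclude $s_2 \in I_{t_1'}$, and moreover $I_{s_2} \subseteq I_{t_1'}$ follows since $s(I_{t_1'}) \le s(I_{s_1}) \le s(I_{s_2})$ as well (using non-nestedness in the form just derived) — actually more carefully, $s(I_{t_1'}) \le s(I_{s_1})$ and we need $s(I_{s_1}) \le s(I_{s_2})$; this too comes from the overlap structure of $I_{s_1}, I_{s_2}$. Either way, $I_{s_2}$ is nested inside $I_{t_1'}$ with $s_2 > $ (some element), so $\varphi(t_1')$, being the \emph{rightmost} net element nested in $I_{t_1'}$, would satisfy $\varphi(t_1') \ge s_2$... but that's consistent with $s_1 \ge s_2$, so I need to push harder.

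\textbf{The real obstacle and the fix.} The contradiction must instead come from showing $I_{s_1} \subseteq I_{t_2'}$, which would mean $\varphi(t_2') \ge s_1 > s_2 = \varphi(t_2')$, absurd. So the crux is: from $t_1' < t_2'$, $I_{s_1} \subseteq I_{t_1'}$, $I_{s_2} \subseteq I_{t_2'}$, and $s_1 > s_2$, deduce $I_{s_1} \subseteq I_{t_2'}$. We have $e(I_{s_1}) = s_1 \le t_1' < t_2' = e(I_{t_2'})$, so the right endpoint of $I_{s_1}$ is inside $I_{t_2'}$. For the left endpoint: $s(I_{t_2'}) \le s(I_{s_2}) \le e(I_{s_2}) = s_2 < s_1 = e(I_{s_1})$, so $s(I_{t_2'}) < e(I_{s_1})$; and I need $s(I_{t_2'}) \le s(I_{s_1})$. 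Here I'd use the greedy left-to-right construction of $\calN$ one more time together with the non-nestedness of $I_{s_1}, I_{s_2}$: since $e(I_{s_1}) > e(I_{s_2})$ and they are non-nested, $I_{s_1}$ starts after $I_{s_2}$ starts, i.e.\ $s(I_{s_1}) \ge s(I_{s_2}) \ge s(I_{t_2'})$. Hence $s(I_{t_2'}) \le s(I_{s_1})$ and $e(I_{s_1}) < e(I_{t_2'})$, giving $I_{s_1} \subseteq I_{t_2'}$ (strict nesting, even). This contradicts $s_2 = \varphi(t_2')$ being the rightmost net element nested in $I_{t_2'}$, since $s_1 \in \calN$, $s_1 > s_2$, and $I_{s_1} \subseteq I_{t_2'}$. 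Therefore $\varphi(t_1') \le \varphi(t_2')$, as claimed. The main thing to be careful about is the repeated appeal to the dichotomy ``non-nested intervals with ordered right endpoints have ordered left endpoints,'' which is exactly the characterization of non-nestedness, and to keep straight that intervals are labeled by right endpoints throughout.
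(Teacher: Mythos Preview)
Your proof is correct and, once you arrive at ``The real obstacle and the fix,'' it is essentially identical to the paper's argument: assume $\varphi(t_1') = s_1 > s_2 = \varphi(t_2')$, use non-nestedness of $I_{s_1}, I_{s_2}$ with $e(I_{s_1}) > e(I_{s_2})$ to get $s(I_{s_1}) \ge s(I_{s_2}) \ge s(I_{t_2'})$, combine with $e(I_{s_1}) \le t_1' < t_2'$ to conclude $I_{s_1} \subseteq I_{t_2'}$, contradicting maximality of $s_2$. The paper does this in three lines; your ``Carrying it out'' paragraph is a detour you yourself abandon (and its claim that non-nestedness forces $s(I_{s_1}) \le e(I_{s_2})$ is false---the intervals could be disjoint), so you should simply delete it.
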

\begin{proof}
  Suppose there are
  $t_1' < t_2' \in \T \setminus \calN$ such that
  $\varphi(t_1') = t_1 > t_2 = \varphi(t_2')$.  Let
  $I_{t_1} = [s_1, t_1]$ and $I_{t_2}=[s_2, t_2]$. Since these two
  intervals are non-nested, it must be the case that $s_1 \geq s_2$.
  But then $I_{t_2'}$ contains $I_{t_1}$ as well and we would have set
  $\varphi(t_2') = t_1$.
\end{proof}

Given an integer solution $(\bx,\by)$ for \pagecoverp, we identify it with a set
$\astar$ of stars, where $\astar := \{(p,t) \mid x_{p,t} = 1\}$.
For a time $t$ and a set of elements $\astar$, let $P(\astar, t)$ denote
the set of pages for which the corresponding intervals intervals $D^p_t$ are hit by $\astar$. I.e., we can rephrase
constraint~(\ref{eq:2-againp}) as wanting to find a set $\astar$ such that
$P(\astar, t) \setminus \{p_t\}$ has at least $n-k$ pages.  The main
technical ingredient is the following extension result:

\begin{restatable}[Extension Theorem]{theorem}{ExtDext}
  \label{thm:extension}
  There is an algorithm that takes a set $\T'$ of times, a non-nested net
  $\calN' \sse \T'$, the associated monotone map $\varphi$, and a set $\astar$, and outputs
  another set $B^\star \supseteq \astar$ such that
  \begin{OneLiners}
  \item[(i)] $P(B^\star, t) \supseteq P(\astar, \varphi(t))$ for all $t \in
    \T'\setminus\calN'$, and 
  \item[(ii)] $w(B^\star) \le 3\, w(\astar)$.
  \end{OneLiners}
  This algorithm can be implemented in online manner as well. More
  formally,  assume there is a present preserving online algorithm
  which generates the set $\astar_t$ at time $t \in \T$. Then there is
  a present preserving online algorithm which generates $B^\star_t$ at
  time $t \in \T$ and satisfies conditions~(i) and~(ii) above (with
  $\astar$ and $B^\star$ replaced  by $\astar_t$ and $B^\star_t$ respectively). 
\end{restatable}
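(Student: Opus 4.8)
The plan is to reduce both conclusions to a single containment property of the intervals $\Dinterval{p}{t}$; once that is in hand, the theorem and its online refinement follow with essentially no work, and the only place where something extra is needed — spending the slack in~(ii) on the bookkeeping that the downstream covering argument actually uses — I discuss at the end.

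\textbf{Key claim: $\Dinterval{p}{\varphi(t)} \subseteq \Dinterval{p}{t}$ for every $t \in \T' \setminus \calN'$ and every page $p$.} Write $I_t = [t_1,t]$ and $I_{\varphi(t)} = [s_1,\varphi(t)]$. By the definition of $\varphi$ (see~\Cref{cl:gamma}), $I_t$ contains $I_{\varphi(t)}$, so $t_1 \le s_1$ and $\varphi(t) \le t$. By construction, $\previous{p}{s}$ is the right endpoint of the last interval of $\cK_p$ ending before the left endpoint of $I_s$, which is a non-decreasing function of that left endpoint; hence $t_1 \le s_1$ gives $\previous{p}{t} \le \previous{p}{\varphi(t)}$. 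Combining, $\Dinterval{p}{\varphi(t)} = [\previous{p}{\varphi(t)},\varphi(t)] \subseteq [\previous{p}{t},t] = \Dinterval{p}{t}$. (Contrast this with~\Cref{cl:nonnest}, where both times lie in a common non-nested set and the $D$-intervals are incomparable; here one critical interval contains the other, and we instead get genuine containment.)

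\textbf{Deriving the theorem.} If $\astar$ hits $\Dinterval{p}{\varphi(t)}$ then it also hits its superset $\Dinterval{p}{t}$, so $P(\astar,\varphi(t)) \subseteq P(\astar,t)$ for all $t \in \T'\setminus\calN'$. Taking $B^\star := \astar$ therefore gives $P(B^\star,t) \supseteq P(\astar,\varphi(t))$ and $w(B^\star) = w(\astar) \le 3\,w(\astar)$, i.e.\ conditions~(i) and~(ii). In the online setting set $B^\star_t := \astar_t$: this is present-restricted because $\astar$ is, the cost bound is immediate, and~(i) holds at each time $t \in \T'\setminus\calN'$ since any star of $\astar_t$ witnessing a page of $P(\astar_t,\varphi(t))$ lies at a time $\le \varphi(t) < t$ and hence already hits the longer interval $\Dinterval{p}{t}$.

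\textbf{On the factor $3$ and the real obstacle.} With $B^\star=\astar$ the bound~(ii) holds with constant $1$; the slack in the statement is what the invoking lemma needs. That lemma additionally wants the page $p_{\varphi(t)}$ — which is \emph{excluded} from the constraint solved at the net time $\varphi(t)$ — to be covered at time $t$, so that $|P(B^\star,t)\setminus\{p_t\}|\ge n-k$ and not merely $|P(B^\star,t)\setminus\{p_{\varphi(t)}\}|\ge n-k$. This is arranged by also inserting, for each net time $t'$, the single star for page $p_{t'}$ at the time $a := \min\varphi^{-1}(t')$: by monotonicity of $\varphi$ the set $\varphi^{-1}(t')$ is a contiguous block of times, all $\ge t'$, and for any $t$ in the block one has $\previous{p_{t'}}{t} < s(I_t) \le s(I_{t'}) \le t' \le a \le t$, so $a \in \Dinterval{p_{t'}}{t}$ and this lone star hits $\Dinterval{p_{t'}}{t}$ for the entire block. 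The one delicate point — the main obstacle in the write-up — is bounding the total weight of these inserted stars within the allotted factor: I would charge it to $w(\astar)$ via an accounting argument that uses the fact that distinct net times have non-nested critical intervals, so that the charges land on disjoint portions of $\astar$. Everything stays present-restricted online, since $a = \min\varphi^{-1}(t') \ge t'$ can be added the moment that time is processed.
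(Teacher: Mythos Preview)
Your core observation is correct and yields a strictly simpler proof than the paper's. Since $I_{\varphi(t)} \subseteq I_t$ by definition of $\varphi$, we have $s(I_t) \le s(I_{\varphi(t)})$, so $\previous{p}{t} \le \previous{p}{\varphi(t)}$, and together with $\varphi(t) \le t$ this gives $\Dinterval{p}{\varphi(t)} \subseteq \Dinterval{p}{t}$ for every page $p$. Hence any star of $\astar$ hitting $\Dinterval{p}{\varphi(t)}$ already hits $\Dinterval{p}{t}$, so $P(\astar,\varphi(t)) \subseteq P(\astar,t)$ and the choice $B^\star := \astar$ satisfies~(i) with equality in~(ii). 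The online statement follows identically. The paper instead runs a greedy repair procedure (iterating over $t \in \T'\setminus\calN'$ and adding $(p,t)$ whenever $p \in P(\astar,\varphi(t)) \setminus P(B^\star,t)$) and then proves a charging claim bounding the added weight by $2\,w(\astar)$; your containment shows this loop never inserts anything, so the charging is vacuous. Amusingly, the paper does invoke exactly your containment (``Since $I_1$ contains $I_3$, $D^{p_2}_{t_1}$ contains $D^{p_2}_{t_3}$'') in the proof of the downstream Feasibility Lemma---just not here.

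Your third paragraph, however, is chasing a phantom. The theorem as stated is all that is needed: the exclusion of $p_{\varphi(t)}$ is not patched inside the Extension Theorem by inserting extra stars and spending the factor~$3$. Instead, the paper applies the Extension Theorem \emph{twice}---once on a net $\calN$ of all of $\T$, once on a net $\calN_1$ of the residual bad times $\T_1$---and the Feasibility Lemma shows the union $B^\star \cup B_1^\star$ covers every time to level $n-k$ even after excluding $p_t$. So you should drop that paragraph entirely: your two-line argument already proves the theorem, and the ``real obstacle'' you identify is resolved elsewhere by a different mechanism.
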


We defer the proof to \Cref{sec:ext-thm}, and instead explain how
to use the result in the off-line setting first. We invoke the extension theorem twice. 
For the first
invocation, we use 
\Cref{thm:extension} 
with the entire set of times
$\T$,  a net $\calN$ and the associated monotone map $\varphi$, and with $\astar$ being a solution of
weight at most $8\, \opt(\I)$ given by 
\Cref{lem:onlinenonnest} 
on the
sub-instance $\calN$. This outputs a set $B^\star$ with
$w(B^\star) \leq 24\, \opt(\I)$. Moreover, since $A^\star$ is feasible
for $\calN'$, it follows from the first property of 
\Cref{thm:extension}
that
$|P(B^\star,t) \setminus \{p_t\}| \geq |P(\astar,\varphi(t))| - 1 \geq
n-k-1$ for every time $t \in \T \setminus \T'$.

For the second invocation, let $\T_1 \sse \T \setminus \calN$ be the
subset of times $t$ such that
$|P(B^\star, t) \setminus \{p_t\}| = n-k-1$, i.e., those with
unsatisfied demand. We use 
\Cref{thm:extension} 
again, this time with
$\T_1$, a net $\calN_1$ and the associated monotone map $\varphi_1$, and a solution
$\astar_1$ obtained by using 
\Cref{lem:onlinenonnest}
on the sub-instance
$\calN_1$. This gives us $B_1^\star$ with weight at most
$24\, \opt(\I)$. We output $B^\star \cup B_1^\star$ as our
solution. Somewhat surprisingly, this set $B_1^\star$ gives us the extra
coverage we want, as we show next.

\begin{lemma}[Feasibility]
  \label{lem:general}
  For any time $t$, $|P(B^\star \cup B_1^\star, t)| \setminus \{p_t\}| \geq n-k$. 
\end{lemma}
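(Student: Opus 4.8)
The plan is to split on which of the sets $t$ falls in. First, if $\by_{I_t}=1$ then constraint~\eqref{eq:2-againp} at time $t$ reads ``$\ge 0$'' and there is nothing to prove, so I may assume $t\in\T$. I would then dispose of three routine cases. If $t\in\calN$, then $\astar\sse B^\star$ and $\astar$ is feasible for the \specialpagecoverp sub-instance on $\calN$, so $|P(\astar,t)\setminus\{p_t\}|\ge n-k$; if $t\in\calN_1$, then $\astar_1\sse B_1^\star$ and the same reasoning applies with $\calN_1$; and if $t\in\T\setminus\calN$ but $t\notin\T_1$, then by the very definition of $\T_1$ together with the bound $|P(B^\star,t)\setminus\{p_t\}|\ge n-k-1$ that was established (via \Cref{thm:extension}(i)) for \emph{every} $t\in\T\setminus\calN$, we already have $|P(B^\star,t)\setminus\{p_t\}|\ge n-k$. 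All of these use only that $P(\cdot,t)$ is monotone in the star set. The remaining case $t\in\T_1\setminus\calN_1$ is the crux.

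The key elementary fact I would isolate is a \emph{monotonicity} of the $D$-intervals: if $I_{t_2}\supseteq I_{t_1}$ then, writing $I_{t_i}=[s_i,t_i]$, we have $s_1\ge s_2$, so the last interval of $\cK_p$ ending before $s_1$ ends no earlier than the last one ending before $s_2$; hence $\previous{p}{t_1}\ge\previous{p}{t_2}$, and since also $t_1\le t_2$ this gives $\Dinterval{p}{t_1}\sse\Dinterval{p}{t_2}$ for \emph{every} page $p$, and therefore $P(X,t_1)\sse P(X,t_2)$ for any star set $X$.

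Then I would fix $t\in\T_1\setminus\calN_1$ and set $t':=\varphi_1(t)\in\calN_1\sse\T_1$, so $t'\ne t$ and $I_t\supseteq I_{t'}$. Since $\astar_1$ is feasible for $\calN_1$ at $t'$, the set $S:=P(\astar_1,t')\setminus\{p_{t'}\}$ has $|S|\ge n-k$, and by the monotonicity fact each $p\in S$ has $\Dinterval{p}{t}\supseteq\Dinterval{p}{t'}$ also hit by $\astar_1\sse B_1^\star$, so $S\sse P(B^\star\cup B_1^\star,t)$. If $p_t\notin S$ (in particular whenever $p_t=p_{t'}$), then $S\sse P(B^\star\cup B_1^\star,t)\setminus\{p_t\}$ and we are done. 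The remaining sub-case is $p_t\in S$, which forces $p_t\ne p_{t'}$; here $S\setminus\{p_t\}$ certifies only $n-k-1$ pages, and I would recover the missing page from the \emph{first} extension. Indeed $t'\in\T_1\sse\T\setminus\calN$, so \Cref{thm:extension}(i) for the first invocation gives $P(B^\star,t')\supseteq P(\astar,\varphi(t'))$, and feasibility of $\astar$ at $\varphi(t')\in\calN$ gives $|P(\astar,\varphi(t'))|\ge n-k$; combined with $|P(B^\star,t')\setminus\{p_{t'}\}|=n-k-1$ (the defining property of $\T_1$) this forces $|P(B^\star,t')|=n-k$ and hence $p_{t'}\in P(B^\star,t')$. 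By monotonicity $p_{t'}\in P(B^\star,t)\sse P(B^\star\cup B_1^\star,t)$, and since $p_{t'}\notin S$ and $p_{t'}\ne p_t$, the set $(S\setminus\{p_t\})\cup\{p_{t'}\}\sse P(B^\star\cup B_1^\star,t)\setminus\{p_t\}$ has size at least $(n-k-1)+1=n-k$.

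The main obstacle is exactly this last sub-case: each extension on its own only certifies $n-k-1$ pages distinct from $p_t$ at a deficient time, so the argument must show that the two extensions certify \emph{different} pages. The leverage is the pair of facts above --- the guaranteed set $S$ from the second extension explicitly omits $p_{t'}$, while the first extension, examined at the deficient time $t'$, is \emph{forced} to put $p_{t'}$ into $P(B^\star,t')$, and monotonicity of the $D$-intervals pushes this back to time $t$. I expect the only delicate point to be the bookkeeping of which of $p_t,p_{t'}$ lies in which set; everything else is a short cardinality count built on \Cref{thm:extension}(i), the monotonicity fact, and the feasibility of $\astar$ and $\astar_1$ on $\calN$ and $\calN_1$ respectively.
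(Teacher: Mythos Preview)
Your proposal is correct and follows essentially the same approach as the paper: both arguments reduce to the case $t\in\T_1\setminus\calN_1$, use the defining equality $|P(B^\star,\varphi_1(t))\setminus\{p_{\varphi_1(t)}\}|=n-k-1$ together with \Cref{thm:extension}(i) to force $p_{\varphi_1(t)}\in P(B^\star,t)$, and combine this with the $n-k$ pages certified by $\astar_1$ at $\varphi_1(t)$ (excluding $p_{\varphi_1(t)}$) pushed forward by the containment $\Dinterval{p}{\varphi_1(t)}\sse\Dinterval{p}{t}$. The only cosmetic difference is that the paper avoids your sub-case split on whether $p_t\in S$ by directly showing $|P(B^\star\cup B_1^\star,t)|\ge n-k+1$ and then removing $p_t$.
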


\begin{proof}
  We only need to worry about times $\T_1 \setminus \calN_1$. Consider
  such a time $t_1$. Let $t_2 := \varphi_1(t_1)$ and
  $t_3 := \varphi(t_2)$. For sake of brevity, let $p_i$ denote
  $p_{t_i}$, and $I_i$ denote $I_{t_i}$. Note that
  $I_3 \subset I_2 \subset I_1$, and since there are no nested intervals
  of the same page, also $p_1 \neq p_2 \neq p_3$.
  Recall: we want to show $|P(B^\star \cup B_1^\star, t_1)| \setminus \{p_1\}| \geq n-k$.
  
  Note that $P(B^\star, t_2)$ contains $P(A^\star, t_3)$, and the latter
  has size at least $n-k$ by construction. Hence, for $t_2$ to appear in
  $\T_1$, we must have $|P(B^\star, t_2)| = n-k$ and
  $p_2 \in P(A^\star, t_3)$. 
  Since $I_1$ contains $I_3$, $D^{p_2}_{t_1}$ contains
  $D^{p_2}_{t_3},$ and so, $B^\star$ hits $D^{p_2}_{t_1}$, i.e., $p_2 \in P(B^\star, t_1).$

  Since $P(\astar_1, t_2) \setminus \{p_2\}$ has size $n-k$ (by
  construction of $\astar_1$), and $P(B_1^\star, t_1)$ contains
  $P(\astar_1, t_2)$, it follows that
  $P(B_1^\star, t_1) \setminus \{p_2\}$ also has size at least $n-k$.
  Therefore, $P(B^\star \cup B_1^\star, t_1)$ has size at least $n-k+1$
  because $P(B^\star, t_1)$ contains $p_2$. This implies the lemma.
\end{proof}

Since $w(B^\star \cup B_1^\star) \leq 48 \opt(\I)$, we get a 48-approximation algorithm for \pagecoverp. It is easy to check that these arguments carry over to the online case as well; we briefly describe the main steps. The set $\calN$ can be generated in an online manner using a greedy algorithm (as mentioned in the beginning of this section). We invoke the online algorithm in
\Cref{lem:onlinenonnest} 
to get a present restricted solution $\astar_t$ for all $t \in \calN$.
\Cref{thm:extension} 
implies that the present restricted solution $B^\star_t$ can be constructed at time $t$. Given $B^\star_t$, we can tell whether a particular time $t$ qualifies for being in $\T_1$. The same argument can now be repeated to show that we can maintain $(B^\star_1)_t$ for all $t \in \T_1$. Combining this with
\Cref{lem:onlinenonnest}, 
we get 
\begin{lemma}
  \label{lem:onlinenest}
  There is an online $O(\log k)$-competitive present restricted algorithm to
 \pagecoverp. 
Moreover, there is an offline
  algorithm 48-approximation algorithm for \pagecoverp. 
\end{lemma}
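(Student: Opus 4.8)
The plan is to assemble the three components already in place—the non-nested solver of \Cref{lem:onlinenonnest}, the Extension Theorem (\Cref{thm:extension}), and the feasibility analysis of \Cref{lem:general}—and to check that each runs online with the present-restricted property. By \Cref{cor:lponline} we may assume the penalty variables $\by_{I_t}$ are already fixed, so it suffices to produce a star set $\astar$ with $|P(\astar,t)\setminus\{p_t\}|\ge n-k$ for every time $t$ in the set $\T$ of times with $\by_{I_t}=0$. First I would build a non-nested net $\calN$ of $\T$ by the left-to-right greedy scan described earlier; this is online, since on arrival of $t$ we can test whether $I_t$ contains some previously-added $I_{t'}$. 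Running the algorithm of \Cref{lem:onlinenonnest} on the sub-instance $\calN$ gives a present-restricted solution $\astar$ of weight $O(\log k)\,\opt(\I)$ (offline, $\le 8\,\opt(\I)$). Applying the online version of \Cref{thm:extension} to $\T$, $\calN$, the associated monotone map $\varphi$, and $\astar$ yields a present-restricted $B^\star\supseteq\astar$ with $w(B^\star)\le 3\,w(\astar)$ and $P(B^\star,t)\supseteq P(\astar,\varphi(t))$ for all $t\in\T\setminus\calN$; since $\astar$ is feasible for $\calN$, this already forces $|P(B^\star,t)\setminus\{p_t\}|\ge n-k-1$ at every $t\in\T$, a deficit of at most one.

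Next I would identify the set $\T_1\subseteq\T\setminus\calN$ of times where this deficit is actually realized; this is detectable online once $B^\star_t$ has been output at time $t$. I then repeat the construction on $\T_1$: greedily build a non-nested net $\calN_1$, run \Cref{lem:onlinenonnest} on $\calN_1$ to get $\astar_1$, and apply \Cref{thm:extension} once more to obtain a present-restricted $B^\star_1$ with $w(B^\star_1)=O(\log k)\,\opt(\I)$ (offline $\le 24\,\opt(\I)$). The output is $B^\star\cup B^\star_1$. Feasibility is exactly \Cref{lem:general}: for $t_1\in\T_1\setminus\calN_1$ one sets $t_2:=\varphi_1(t_1)$, $t_3:=\varphi(t_2)$, uses $I_3\subset I_2\subset I_1$ and the fact that intervals of the same page are never nested (so $p_1,p_2,p_3$ are distinct) to argue that the page $p_2$ missing for $B^\star$ at $t_2$ is covered by $B^\star$ at $t_1$, while $B^\star_1$ supplies $n-k$ further pages at $t_1$, giving $|P(B^\star\cup B^\star_1,t_1)\setminus\{p_1\}|\ge n-k$. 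Summing the weights, $w(B^\star\cup B^\star_1)\le 24\,\opt(\I)+24\,\opt(\I)=48\,\opt(\I)$ offline, and $O(\log k)\,\opt(\I)$ online; a union of present-restricted solutions is present-restricted, which completes both claims.

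The main obstacle I expect is not the bookkeeping but the correctness of using a \emph{single} extra invocation on $\T_1$ (i.e.\ \Cref{lem:general}), rather than iterating the extension indefinitely. The delicate point is that $\T_1$ is itself a set of times with nested critical intervals, so one must verify that the monotone map $\varphi_1$ of the second net interacts correctly with $\varphi$: specifically, that for every $t_1\in\T_1\setminus\calN_1$ the page $p_2=p_{\varphi_1(t_1)}$ is simultaneously (a) \emph{not} among the $n-k$ pages guaranteed by $B^\star_1$ at $t_1$ and (b) inside $P(B^\star,t_1)$, so that the two coverage sets overlap in a controlled way and combine to size $\ge n-k+1$ before deleting $p_1$. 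The online implementation adds the minor concern that $\T_1$ is revealed incrementally, forcing \Cref{thm:extension} to be invoked on a growing instance; this is harmless because the constraints \eqref{eq:2-againp} at time $t$ only involve variables at times $\le t$, so a present-restricted algorithm never needs to revise past output.
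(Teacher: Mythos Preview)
Your proposal is correct and follows essentially the same approach as the paper: two rounds of (non-nested net $\to$ \Cref{lem:onlinenonnest} $\to$ \Cref{thm:extension}), with \Cref{lem:general} certifying that the second round closes the one-unit deficit, and the cost bound $24+24=48$ offline (resp.\ $O(\log k)$ online) coming from the factor-$3$ blowup in the Extension Theorem applied to the factor-$8$ non-nested solver. Your online discussion (greedy net construction, detectability of $\T_1$ once $B^\star_t$ is output, and closure of present-restrictedness under union) matches the paper's as well; the only minor imprecision is in your point~(a)---the argument does not require $p_2\notin P(B_1^\star,t_1)$, only that $|P(B_1^\star,t_1)\setminus\{p_2\}|\ge n-k$, which together with $p_2\in P(B^\star,t_1)$ gives $|P(B^\star\cup B_1^\star,t_1)|\ge n-k+1$.
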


\begin{corollary}
\label{cor:gen}
There is a constant factor (offline) approximation algorithm for~\eqref{eq:IPp}. Further, there is an $O(\log k)$-competitive online solution that satisfies the past-preserving and sparsity property as in~\S\ref{sec:solving-wpwtw-using}. 
\end{corollary}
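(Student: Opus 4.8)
The plan is to use the fact that, restricted to $\{0,1\}$-valued points, the feasible region of~\eqref{eq:IPp} is exactly the intersection of the feasible regions of the right-extension IP~\eqref{eq:IPR1p} and the double-extension IP~\eqref{eq:IP1Dp}: constraint~\eqref{eq:1p} is equivalent to~\eqref{eq:1pp} on integral solutions (the ``compact IP'' discussion), and~\eqref{eq:2p} is equivalent to~\eqref{eq:2p1p}, while the three IPs share the same variables and the same objective. So I would (i)~run the algorithm of~\Cref{lem:onlinennp} to get an integral solution $(x^R,y^R)$ to~\eqref{eq:IPR1p}; (ii)~run the present-restricted algorithm of~\Cref{lem:onlinenest} on the associated instance of \pagecoverp, and feed its output through the reverse direction of~\Cref{lem:twoips} to obtain an integral solution $(x^D,y^D)$ to~\eqref{eq:IP1Dp}; and (iii)~output the ``union'' $(x,y)$ defined by $x_{p,t}:=\max(x^R_{p,t},x^D_{p,t})$ (i.e., the union of the two star sets) and $y_I:=\max(y^R_I,y^D_I)$.

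For feasibility, observe that raising the $x$-variables and the $y_I$-variables can only help constraint~\eqref{eq:1p}, and raising $x$ and $y_I$ on the left together with $y_{I_t}$ on the right can only help constraint~\eqref{eq:2p}. Since $(x,y)$ dominates $(x^R,y^R)$ it satisfies all the R1 constraints, and since it dominates $(x^D,y^D)$ it satisfies all the D1 constraints, so it is feasible for~\eqref{eq:IPp}. Its objective is at most $\cost(x^R,y^R)+\cost(x^D,y^D)$. Dropping constraints only decreases the optimum, so the optimum of~\eqref{eq:IPR1p} and that of~\eqref{eq:IP1Dp} are each at most $\opt$, the optimum of~\eqref{eq:IPp}; combining this with the constant-factor losses of~\Cref{lem:twoips} in passing between \pagecoverp and~\eqref{eq:IP1Dp}, each component costs $O(1)\cdot\opt$ offline and $O(\log k)\cdot\opt$ online, which gives the two stated bounds.

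It remains to check the structural properties of~\S\ref{sec:solving-wpwtw-using} in the online case. Monotonicity (A1) and past-preservation (A2) are preserved under unions, and each component has them: $(x^R,y^R)$ by~\Cref{lem:onlinennp}, and $(x^D,y^D)$ because~\Cref{lem:onlinenest} is present-restricted and the translation from \pagecoverp to~\eqref{eq:IP1Dp} only places the extra star at the right endpoint of the current $\cK_p$-interval, which lies at or after the current time (exactly as in~\S\ref{sec:onlineimpl}). For sparsity (A3): the component $\astar^R$ carries at most one future star per page, and that star hits every still-active request interval for the page; the translated component $\astar^D$ likewise carries at most one future (``to-be-set'') star per page with the same hitting guarantee, by the same $\cK_p$-based argument used for right extensions in~\S\ref{sec:onlineimpl}. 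Since the online rounding algorithm of~\S\ref{sec:solving-wpwtw-using} only needs the set of pages owning a future star (and, if one wants the literal ``at most one'' form of~(A3), one may keep only the later of a page's two future stars, which still hits all of that page's request intervals containing the current time), the union satisfies~(A3).

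I expect the main obstacle to be precisely this last step: \Cref{lem:twoips} is proved only offline, so one must re-run its reverse direction in the online setting (turning a present-restricted solution of \pagecoverp into a past-preserving solution of~\eqref{eq:IP1Dp}) and establish the one-future-star-per-page sparsity for the $\Dinterval{p}{t}$ intervals, mirroring the argument of~\S\ref{sec:onlineimpl} for right extensions. Everything else—the union, its feasibility, and the cost/competitive-ratio bookkeeping—is routine.
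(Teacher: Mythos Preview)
Your proposal is correct and follows essentially the same approach as the paper: solve the right-extension and double-extension IPs separately (via \Cref{lem:onlinennp} and \Cref{lem:onlinenest}), translate the \pagecoverp solution through the reverse direction of \Cref{lem:twoips}, and take the union.

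The one place where the paper is slightly cleaner than your proposal is the step you flagged as the main obstacle. The paper observes that the reverse conversion of \Cref{lem:twoips} actually preserves \emph{present-restrictedness}, not merely past-preservation with sparsity: when a star $(p,t)$ appears in $\astar_2$ at time $t$, the extra star at the right endpoint $t_2$ of the current $\cK_p$-interval can simply be deferred and placed at time $t_2$, because every constraint~\eqref{eq:2p1p} for time $t$ involves only variables at times $\leq t$. Hence $\astar_3$ contributes \emph{no} future stars at all, and the sparsity of the union comes entirely from the right-extension component $\astar_1$. This obviates your ``keep only the later of a page's two future stars'' patch and the separate $\Dinterval{p}{t}$ sparsity argument you anticipated.
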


\begin{proof}
Let $\I$ be an instance of \wPwTwP. 
Let $\astar_1$ and $\astar_2$ be (offline) solutions for~\eqref{eq:IPR1p} and~\eqref{eq:IP2Dp} as guaranteed by
\Cref{lem:onlinennp} and~\Cref{lem:onlinenest} 
respectively. 
\Cref{lem:twoips} shows that $A_2^\star$ can be mapped to a solution $A_3^\star$ that satisfies~\eqref{eq:IP1Dp}, 
and $\cost(A_3^\star) \leq 2~\cost(A_2^\star). $ Further $A^\star := A_1^\star \cup A_3^\star$ is a feasible solution to~\eqref{eq:IPp}. Since~\eqref{eq:IPR1p} and~\eqref{eq:IP1Dp} are special cases of~\eqref{eq:IPp},
\Cref{lem:onlinennp,lem:onlinenest,lem:twoips} imply that 
$$ \cost(A_1^\star) + \cost(A_3^\star) \leq 6~\opt(\I) + 2~\cost(A_2^\star) \leq 6~\opt (\I) + 96~\opt(\eqref{eq:IP1Dp})
\leq O(1) \cdot \opt(\I).$$

The online version follows analogously. 
Note that the conversion from $\astar_2$ to $\astar_3$ in~\Cref{lem:twoips} can be carried out in an online manner, and if $\astar_2$ is present restricted, then so is $\astar_3$. 
Since $\astar_1$ and $\astar_3$ are past preserving, so is $\astar.$ Also the sparsity of $\astar_1$ and the fact that $\astar_3$ does not add any star in the future implies that $\astar$ also satisfies the sparsity property. 
\end{proof}

\Cref{cor:gen}, along with~\Cref{thm:round-on} and~\Cref{thm:offline}, implies~\Cref{thm:main-on} and~\Cref{thm:main-off} respectively.
The integrality gap of~\eqref{eq:IPp} is constant for the following reason -- the integrality gap of the LP relaxations for \pagecover and \specialpagecover are $O(1)$, and the reductions in
\Cref{lem:forward-Rp,lem:reverse-Rp,lem:reductionp} 
also hold between the fractional solutions to the corresponding problems.

\section{Extension to Paging with Delay}
\label{sec:penalties}

In this section, we show a simple reduction from the {\em (weighted) Paging
  with Delays} (\wPwD) problem to the \wPwTwP problem.
which allows us to translate the
results of the previous sections giving an $O(1)$-approximate offline
algorithm and an $O(\log k \log n)$-competitive online algorithm for
the \wPwTwP problem to get the same asymptotic performance for the
\wPwD problem.

We transform an instance $\I$ of \wPwD to an instance $\I'$ of
\wPwTwP as follows. Recall that each request in $\I$ is specified by a triple
$(p,t, \penaltyfn)$, where $p$ is the requested page, $t$ is the time
at which this request is made, and
$\penaltyfn: \{t, t+1, \ldots, \} \to \R_{\geq 0}$ denotes the
non-decreasing loss function associated with it. We may assume
without loss of generality that $\penaltyfn(t) = 0$, since otherwise
we can work with the function
$\penaltyfn'(t') := \penaltyfn(t') - \penaltyfn(t)$, and the
competitive ratio is no worse. To model this request, we create an
ensemble of intervals $[t,t']$ for each $t' \geq t$ in the \wPwTwP
instance $\I'$, where the penalty for the interval $I := [t,t']$ is
$\penaltyfn(t'+1) - \penaltyfn(t')$, for each $t' \geq t$.

To see the equivalence, suppose this request $(p,t,\penaltyfn)$ is
served at time $t'$---i.e., the page $p$ enters the cache after time
$t$ only at time $t'$. Then all intervals in its ensemble ending at
later times are also satisfied. Moreover, intervals ending at earlier
times $t,t+1, \ldots, t'-1$ are not satisfied, and their penalty adds
up to $\penaltyfn(t') - \penaltyfn(t) = \penaltyfn(t')$, as desired.
Given this equivalence and the algorithmic results for the \wPwTwP
problem, we get:
\begin{theorem}
\label{thm:wPwD}
    There is an $O(\log k\log n)$-competitive online algorithm 
    and an $O(1)$-approximate offline algorithm for the \wPwD problem.
\end{theorem}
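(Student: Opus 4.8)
The plan is to establish that the transformation $\I \mapsto \I'$ sketched above is a cost-preserving bijection between \wPwD schedules and \wPwTwP schedules for $\I'$ (so in particular $\opt(\I) = \opt(\I')$), and then to invoke the offline $O(1)$-approximation and online $O(\log k \log n)$-competitive algorithms for \wPwTwP on $\I'$.

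First I would make the correspondence precise. After the harmless normalization $\penaltyfn(t) = 0$, the request $(p,t,\penaltyfn)$ of $\I$ is replaced by the ensemble $\{[t,t'] : t' \ge t\}$ with $\ell([t,t']) = \penaltyfn(t'+1) - \penaltyfn(t')$. A schedule is, on either side, just a description of which pages occupy the cache at each time; the weight (eviction) cost of a fixed schedule is identical in $\I$ and $\I'$, so it suffices to match the delay cost with the total penalty of unsatisfied intervals. The key structural point is that the ensemble is totally nested: the intervals share the left endpoint $t$ and have increasing right endpoints. Hence, letting $\tau \ge t$ be the earliest time at which page $p$ sits in the cache (with $\tau = \infty$ if there is no such time), the satisfied intervals of the ensemble are exactly those with $t' \ge \tau$, and the penalty paid on the ensemble telescopes to $\sum_{t \le t' < \tau} (\penaltyfn(t'+1) - \penaltyfn(t')) = \penaltyfn(\tau) - \penaltyfn(t) = \penaltyfn(\tau)$, which is precisely the delay that the \wPwD objective charges for serving this request at time $\tau$. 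Summing over all requests gives the claimed cost-preserving bijection, hence $\opt(\I) = \opt(\I')$.

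Next I would verify that the reduction runs online. Since $\penaltyfn$ is revealed at the arrival time $t$, every interval $[t,t']$ of the ensemble, together with its deadline $t'$ and its penalty, is known at its own start time $t$, so $\I'$ is a legitimate clairvoyant \wPwTwP instance in the sense of \S\ref{sec:IP}. Two cosmetic issues need a remark: several intervals may share a deadline (one per ensemble still active at that time), which is dealt with by the WLOG splitting already assumed in \S\ref{sec:IP}; and an ensemble is formally infinite, but for the competitive analysis truncated at any horizon $T$ only the finitely many intervals with deadline $\le T$ matter, and intervals of zero penalty can simply be omitted, so this causes no difficulty.

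Finally, the theorem follows by composition. For the offline claim, \Cref{thm:offline} together with \Cref{cor:gen} gives an $O(1)$-approximation for \wPwTwP, which by the cost-preserving reduction yields an $O(1)$-approximation for \wPwD. For the online claim, \Cref{cor:gen} produces an $O(\log k)$-competitive integral solution to~\eqref{eq:IPp}, and \Cref{thm:round-on} converts it into an $O(\log k \log n)$-competitive \wPwTwP schedule; running this on the online-constructed $\I'$ gives an $O(\log k\log n)$-competitive algorithm for \wPwD. The only real content is the cost-preservation step of the second paragraph, and the one thing to notice there is that nestedness of each ensemble forces the ``satisfied suffix'' structure and makes the penalty sum telescope; everything else is a direct appeal to earlier results.
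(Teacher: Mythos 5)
Your proposal is correct and follows essentially the same route as the paper: the same ensemble-of-intervals reduction with penalties $\penaltyfn(t'+1)-\penaltyfn(t')$, the same telescoping of unsatisfied-interval penalties to the delay cost $\penaltyfn(\tau)$, and the same appeal to the \wPwTwP results. Your write-up is slightly more careful than the paper's (making the nestedness/suffix structure, the online revelation of the ensemble, and the truncation of infinite ensembles explicit), but the argument is the same.
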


This completes the proof of \Cref{thm:main-on,thm:main-off}.

\subsection*{Acknowledgments} We thank Ravishankar Krishnaswamy for
valuable discussions about this problem; many of the ideas here
arose in discussions with him.  This research was done under the
auspices of the  Indo-US Virtual Networked Joint Center IUSSTF/JC-017/2017.
AG was supported in part by NSF award CCF-1907820. DP was supported
in part by NSF award CCF-1535972, and an NSF CAREER award CCF-1750140.

{\small
\bibliographystyle{alpha}
\bibliography{paper}
}

\appendix
\section{NP-Hardness of \wPwTw}
\label{sec:np-hard}

We now show that the \wPwTw is APX-hard, even when the cache size $k=1$
and we have unit weights. The reduction is the same as that of Nonner
and Souza~\cite{NonnerS09} for the joint-replenishment problem, and we
give it here for completeness. The reduction is from the (unweighted)
\VC problem on bounded-degree graphs.

Consider an instance $\I$, consisting of a graph $G=(V,E)$, of the \VC
problem. We reduce it to an instance $\I'$ of the \wPwTw problem. In the
instance $\I'$, we have one page $p_e$ for every edge $e \in E$. We also
have a special page $p^\star$. All pages have unit weight and the cache
size $k$ is 1. We now specify the request intervals for each page. The
timeline $T$ is the line $[0,|V|+1]$. For the page $p^\star$ we have
request intervals $[t,t]$ for every integer $t \in T$, i.e., this page
must be in the cache (or brought into the cache) at each integer time
$t$. Now consider the page $p_e$ for the edge $(u,v) \in E$. Assume wlog
that $u < v$. We have three request intervals for this page $e$: $I_e^1
= [0, u], I_e^2 = [u,v], I_e^3 = [v,n+1]$, where $n$ denotes $|V|.$ Note
that these are closed intervals. This completes the description of the
reduction. We first prove the easier direction (see Figure~\ref{fig:red} for 
an example).

\begin{figure}[ht]
    \centering
    \includegraphics[width=3in]{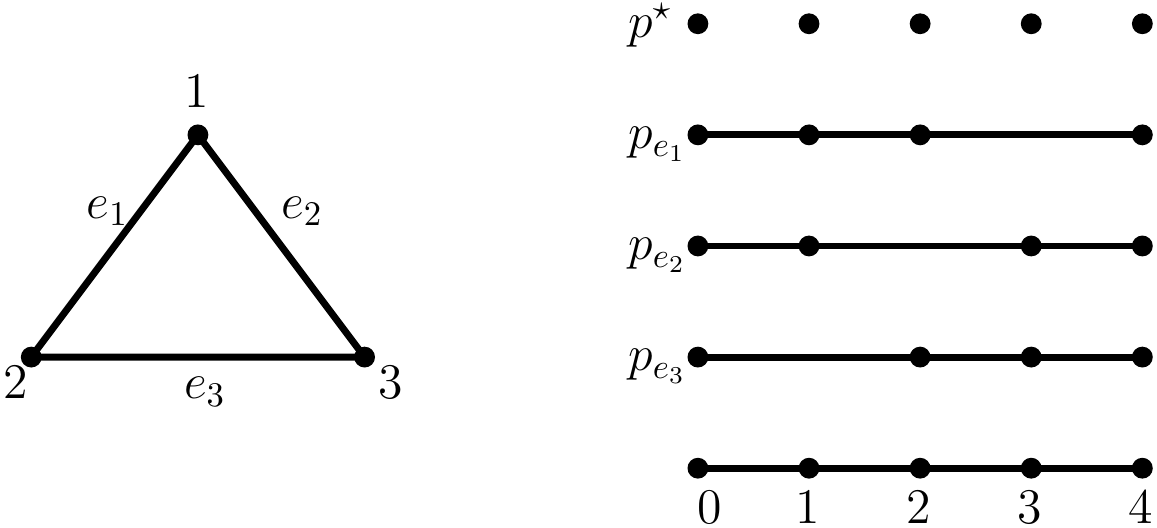}
    \caption{\emph{Illustration of the reduction from \VC to \wPwTw. The page $p^\star$ is 
    requested at each time. All the other pages have three request intervals, shown by solid lines with end-point delimiters. }}
    \label{fig:red}
\end{figure}

\begin{claim}
  \label{cl:np1}
  Suppose there is a vertex cover of $G$ of size at most $r$ (in the
  instance $\I$). Then there is a solution to $\I'$ of cost at most $r +
  2|E|+1.$
\end{claim}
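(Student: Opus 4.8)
The plan is to turn a vertex cover $S \subseteq V$ with $|S| \le r$ into an explicit caching schedule for the \wPwTw instance $\I'$ and then count its cost. The intuition is that $p^\star$ is the only page requested at integer times, so it should reside in the cache ``by default'', and we evict it only briefly, at a handful of \emph{active} times, in order to service the edge pages in batches.

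Concretely, I would first fix the schedule. For an edge $e=(u,v)$ with $u<v$, the cover property gives $u\in S$ or $v\in S$. If $u\in S$, I service $I_e^1=[0,u]$ and $I_e^2=[u,v]$ at time $u$, and $I_e^3=[v,n+1]$ at time $n+1$; if $u\notin S$ (hence $v\in S$), I service $I_e^1$ at time $0$ and $I_e^2,I_e^3$ at time $v$. Let $B:=\{0\}\cup S\cup\{n+1\}$ be the active times and, for $t\in B$, let $E_t$ be the set of edges some of whose intervals are scheduled at time $t$; since each edge contributes intervals to exactly two distinct times of $B$ (namely $\{u,n+1\}$ or $\{0,v\}$), we get $\sum_{t\in B}|E_t| = 2|E|$. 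The cache dynamics are: keep $p^\star$ in the cache between active times; at each $t\in B$ in increasing order, if $E_t\neq\emptyset$, evict $p^\star$, then for each $e\in E_t$ fetch $p_e$ (serving its scheduled intervals) and immediately evict it, and finally fetch $p^\star$ back --- except at $t=n+1$, where $p^\star$ need not be reloaded; and at $t=0$ the cache starts empty, so the first fetch there is free.

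Next I would check feasibility. For $p^\star$: it is in the cache at every non-active integer time (it sits there undisturbed), and at each active time it is in the cache at the start of that time's processing (for $t=0$, at the end), so every request $[t,t]$ is met. For an edge $e$: each of $I_e^1,I_e^2,I_e^3$ is scheduled at an active time lying inside it, and at that time $p_e$ is momentarily in the cache, so all three requests are met. Then I would bound the cost by counting fetches: at time $0$ at most $|E_0|+1$ fetches (the edge pages and then $p^\star$), at each $t\in S$ with $E_t\neq\emptyset$ exactly $|E_t|+1$ fetches, at $n+1$ exactly $|E_{n+1}|$ fetches, and none elsewhere, for a total of at most $\sum_{t\in B}|E_t| + |S| + 1 = 2|E| + |S| + 1 \le 2|E| + r + 1$; since the cost (number of evictions) is at most the number of fetches, we are done.

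The one point needing care --- and the only real obstacle --- is the time discretization: several request intervals share an integer endpoint (for instance $p^\star$'s $[t,t]$ and some $I_e^2=[u,t]$ both ``end at'' the vertex $t$), so the batched bursts push several pages through the size‑one cache ``at the same time $t$''. I would handle this exactly as the paper's model does: treat the processing of each integer time as a short sequence of cache moves, appealing to the reduction that separates requests with a common deadline into distinct sub‑times, so that satisfying a request interval only means the page is in the cache at some moment of some integer time inside it. This is what legitimizes the bursts while charging only a single $p^\star$-reload per active time, which is precisely what keeps the total at $2|E|+r+1$.
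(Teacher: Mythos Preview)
Your proposal is correct and follows essentially the same approach as the paper's proof: both construct a schedule that keeps $p^\star$ in the cache by default, evicts it only at a small set of ``burst'' times (the cover vertices plus the endpoints $0$ and $n{+}1$), and services each edge page exactly twice. The only cosmetic differences are that the paper services $p_e$ at every cover vertex incident to $e$ (falling back to $0$ or $n{+}1$ when just one endpoint is covered), whereas you deterministically pick one cover endpoint per edge and always pair it with $0$ or $n{+}1$; and the paper counts evictions directly while you count fetches---both yield the same bound $2|E|+r+1$.
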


\begin{proof}
  Let $U$ be a vertex cover of size $r$. The caching schedule is as
  follows: we will ensure that at the end of each time $t$, the page
  $p^\star$ is in the cache. This will ensure that all requests for
  $p^\star$ are satisfied.  For every $u \in U, $ we do the following:
  let $N(u)$ be the edges incident to $u$ in $G$. We bring each of the
  pages $p_e \in N(u)$ in the cache and the evict it. At the end of this
  process (at time $u$) we bring $p^\star$ back in the cache.

  For every edge $e \in E$, we have ensured that we bring $e$ in the
  cache either at time $u$ or $v$ (or both). If we bring in $e$ at both
  the times, we have satisfied all the requests for $p_e$. Otherwise, we
  would have satisfied two out of the three request for $p_e$, and the
  unsatisfied request would be either $I_e^1$ or $I_e^3$. Let $E_1$ be
  the set of edges $e$ for which the request $I_e^1$ is unsatisfied, and
  let $I_e^3$ be the set of edges $e$ for which $I_e^3$ is
  unsatisfied. At time 0, we bring in and then evict all pages in
  $I_e^1$. Then we bring in page $p^\star$. At time $n+1$, we evict
  $p^\star$ and bring in and then evict all the pages in $E_3$. This
  yields a feasible caching schedule. The total number of times
  $p^\star$ is evicted is at most $r+1$ (at each of the times in $U$,
  and maybe at time $n+1$). Every page $p_e$ is evicted exactly
  twice. This proves the claim.
\end{proof}

\begin{claim}
  \label{cl:np2}
  Suppose there is a solution to $\I'$ of cost at most $r+2|E|+1$. Then
  there is a vertex cover of $G$ of size at most $r+1$.
\end{claim}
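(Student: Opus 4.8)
The plan is to reverse‑engineer a small vertex cover out of a cheap caching schedule, exactly in the spirit of Nonner and Souza~\cite{NonnerS09}. Fix a solution to $\I'$ of cost at most $r + 2|E| + 1$; since weights are unit, this solution performs at most $r+2|E|+1$ evictions. First I would put the solution into a convenient \emph{canonical form} by a standard exchange argument: at every integer time the cache holds $p^\star$ at the ``boundary'' between consecutive time‑steps. This is free — $p^\star$ is requested at every integer time, and sliding a maximal block of cache operations wholly inside one time‑step changes neither which requests are served nor the number of evictions. The payoff is that every residency of a page $p_e$ in the cache is then confined to a single integer step $t$, and such a residency satisfies exactly those request intervals of $p_e$ that contain the integer $t$.

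The structural heart of the argument is the following claim. For an edge $e=(u,v)$ with $u<v$ recall the intervals $I_e^1=[0,u]$, $I_e^2=[u,v]$, $I_e^3=[v,n+1]$; a residency of $p_e$ in step $t$ satisfies $I_e^1$ iff $t\le u$, satisfies $I_e^2$ iff $u\le t\le v$, and satisfies $I_e^3$ iff $t\ge v$. Define $S\subseteq\{1,\dots,n\}=V$ to be the set of integer times at which $p^\star$ is evicted at least once. Then I claim: \emph{if $e=(u,v)$ has $u\notin S$ and $v\notin S$, then $p_e$ is brought into the cache at least three times.} Indeed, $u\notin S$ together with canonicity (where $p^\star$ sits in the cache at the start of step $u$) forces $p^\star$ to occupy the cache throughout step $u$, so $p_e$ has no residency during step $u$; likewise none during step $v$. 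Hence any residency of $p_e$ satisfying $I_e^1$ lies in a step $\le u-1$, any one satisfying $I_e^2$ lies in a step in $\{u+1,\dots,v-1\}$ (in particular $v\ge u+2$, otherwise $I_e^2$ is unsatisfiable), and any one satisfying $I_e^3$ lies in a step $\ge v+1$; these three ranges are pairwise disjoint, so the three residencies are distinct. The same single‑step reasoning also shows that for \emph{every} edge $e$ the page $p_e$ has at least two residencies, since a residency in a single step $t$ can satisfy at most two of $I_e^1,I_e^2,I_e^3$ and never both $I_e^1$ and $I_e^3$ (these two are disjoint as $u<v$).

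Now I would do the accounting. Let $B$ be the set of edges $e=(u,v)$ with $u\notin S$ and $v\notin S$. By the claim, each such $p_e$ is loaded $\ge 3$ times and every other $p_e$ is loaded $\ge 2$ times, so the pages $\{p_e\}_{e\in E}$ contribute at least $2|E|+|B|$ evictions. Meanwhile $p^\star$ is loaded once at time $0$ and, by canonicity, is restored during every step in $S$ after being evicted there, so $p^\star$ is loaded at least $|S|+1$ times and hence evicted at least $|S|$ times (the cache ends occupied by $p^\star$, so for each page the number of evictions equals the number of loads minus one or, for $p^\star$, exactly the number of residencies that end). Summing, the total number of evictions is at least $|S|+2|E|+|B|$, and since this is at most $r+2|E|+1$ we get $|S|+|B|\le r+1$. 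Finally, let $U\subseteq V$ consist of $S$ together with one endpoint of each edge in $B$: every edge $e=(u,v)$ is covered (if $e\notin B$ then $u\in S$ or $v\in S\subseteq U$; if $e\in B$ then its chosen endpoint lies in $U$), and $|U|\le |S|+|B|\le r+1$, which proves the claim.

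\textbf{Expected main obstacle.} The conceptually central step — that $p_e$ needs three loads once $p^\star$ is undisturbed at both $u$ and $v$ — is short. The part that needs the most care is making the ``canonical form'' normalization and the $\pm1$ bookkeeping watertight enough to land exactly at $r+1$ (rather than at $r+O(1)$, which would already suffice for APX‑hardness since $\opt_{\VC}=\Theta(|E|)$ on bounded‑degree graphs); in particular one must be careful that each eviction of $p^\star$ at a time in $S$ is charged to a distinct load of $p^\star$, and that residencies spanning a step boundary really cannot occur after normalization.
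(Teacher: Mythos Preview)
Your counting argument is sound once the canonical form is in place, and it takes a genuinely different route from the paper. The paper performs an explicit \emph{exchange}: whenever $T_e'\cap\{u,v\}=\emptyset$ it replaces two loads of $p_e$ (one in $I_e^1$, one in $I_e^2$) by a single load at $u$, trading one eviction of $p_e$ for at most one extra eviction of $p^\star$, and then reads off a vertex cover from the times used in the modified schedule. You instead count directly in the \emph{unmodified} schedule: define $S$ (vertices where $p^\star$ is evicted) and $B$ (edges with neither endpoint in $S$), show that cost $\ge |S|+2|E|+|B|$, and output $S$ together with one endpoint per $B$-edge. Both arguments rest on the same structural fact---if neither endpoint of $e=(u,v)$ sees an eviction of $p^\star$, then $p_e$ needs three separate loads---but your packaging avoids touching the schedule and makes the $r+1$ bookkeeping cleaner.

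The one step that is not as innocent as you claim is the normalization. A residency of $p_e$ can straddle the boundary between steps $t$ and $t{+}1$ (loaded late in step $t$, evicted early in step $t{+}1$), and ``sliding a maximal block wholly inside one time-step'' does not handle this: for the edge $e=(t,t{+}1)$, such a straddling residency satisfies all three of $I_e^1,I_e^2,I_e^3$ with a single load, and contracting it to either step loses one of them. So the canonical form is not literally free. The paper sidesteps exactly the same issue by simply asserting ``$p^\star$ must be present at the end of each integer time,'' so you are in good company; for the APX-hardness application the slack is harmless, and it disappears entirely if one spaces the reduction out (e.g.\ place vertex $i$ at time $2i$) so that no two interval endpoints for the same page are adjacent integers. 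You were right to flag this as the point needing the most care.
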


\begin{proof}
  Let $\cal S$ be a solution to the caching problem. For an edge $e$,
  let $T_e$ be the timesteps when $e$ is brought into the cache. Since
  $p^\star$ must be present at the end of each integer time, $T_e$ must
  have non-empty intersection with each of the three request intervals
  for $e$.
  We now modify $\cal S$ to a solution ${\cal S}'$ which has the
  following property: (i) The total cost of ${\cal S}'$ is at most that
  of $\cal S$, (ii) for every edge $e=(u,v)$, the corresponding set
  $T'_e$ in ${\cal S}'$ has non-empty intersection with $\{u,v\}$.

  Initialize ${\cal S}'$ and $T'_e$ to $\cal S$ and $T_e$
  respectively. While there is an edge $e=(u,v)$ such that $T_e'$ does
  not contain $u$ or $v$, we do the following: $T'_e$ must contain a
  distinct time in each of the intervals $I_e^1, I_e^2, I_e^3$ -- let
  $t_1, t_2, t_3$ denote these three times. Note that $t_2$ must be in
  the interior of $I_e^2$. Assume wlog that $u < v$. Instead of bringing
  in $p_e$ at times $t_1$ and $t_2$ (and evicting them at these times),
  we will bring in $p_e$ at time $u$. This will save us a cost of 1 in
  the total eviction cost of $p_e$. However, it may happen that earlier
  $p^\star$ was not getting evicted at time $u$, and now we will need to
  evict it (and then bring it back into cache) at time $u$. Still, this
  will not increase the cost of the solution.

  Thus, we see that $\cup_e T'_e$ must contain a vertex cover $U$ of
  $G$. Since $p^\star$ must be getting evicted at each of these times,
  the total cost of ${\cal S}'$ (and hence, that of ${\cal S}$) is at
  least $2|E|+|U|$. This implies the claim.
\end{proof}

Using the above two claims, we show that the \wPwTw problem is APX-hard. 
\begin{lemma}
  \label{lem:hard}
  Let $G$ be a graph of maximum degree~4. Suppose there is an
  $(1+\eps)$-approximation for $\wPwTw$ problem. Then there is a
    $(1+9\eps)$-approximation for \VC on $G$.
\end{lemma}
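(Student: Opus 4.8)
The plan is to route everything through the caching instance $\I'$ built in the reduction above, relating $\opt(\I')$ to the minimum vertex cover cost of $G$ and then converting additive error into multiplicative error using the degree bound. Write $\tau$ for the size of a minimum vertex cover of $G$ and $m := |E(G)|$; the case $\tau = 0$ is trivial, so assume $\tau \ge 1$. The two ingredients are already available: Claim~\ref{cl:np1} gives $\opt(\I') \le \tau + 2m + 1$, and the construction inside the proof of Claim~\ref{cl:np2} is a polynomial-time procedure that turns \emph{any} feasible caching schedule of cost $C$ into a vertex cover of $G$ of size at most $C - 2m$ (the ``$r+1$'' in the statement of Claim~\ref{cl:np2} is only the effect of writing $C = r + 2m + 1$; its proof actually establishes $\cost \ge 2m + |U|$ for the cover $U$ it outputs).

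Given the hypothesized $(1+\eps)$-approximation algorithm for \wPwTw, I would run it on $\I'$ to get a schedule $S'$ with $\cost(S') \le (1+\eps)\,\opt(\I') \le (1+\eps)(\tau + 2m + 1)$, the second inequality being Claim~\ref{cl:np1}. Feeding $S'$ into the schedule-to-cover procedure from the proof of Claim~\ref{cl:np2} produces, in polynomial time, a vertex cover $U$ of $G$ with
\[
|U| \;\le\; \cost(S') - 2m \;\le\; (1+\eps)(\tau + 2m + 1) - 2m \;=\; \tau + 1 + \eps\,(\tau + 2m + 1).
\]
Now the bounded-degree hypothesis enters: since $G$ has maximum degree $4$, a minimum vertex cover is incident to at most $4\tau$ edges, so $m \le 4\tau$ and therefore $\eps(\tau + 2m + 1) \le \eps(9\tau + 1) \le 9\eps\tau + \eps$. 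Hence $|U| \le (1+9\eps)\tau + (1+\eps)$.

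It remains to dispose of the additive $(1+\eps)$. This is a lower-order term: \VC on maximum-degree-$4$ graphs stays APX-hard on families with $\tau \to \infty$ (take disjoint unions of a fixed hard instance), and on such instances the $O(1)$ slack is negligible against $(1+9\eps)\tau$, so the construction is --- up to this additive lower-order term --- a $(1+9\eps)$-approximation for \VC on $G$. Composing with the known APX-hardness of \VC on bounded-degree graphs then yields the APX-hardness, and a fortiori the NP-hardness, of \wPwTw even for $k=1$ with unit page weights.

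I do not expect a genuine obstacle here; once Claims~\ref{cl:np1} and~\ref{cl:np2} are in hand the rest is bookkeeping. The two points that warrant a little care are: (a)~confirming that the modification loop in the proof of Claim~\ref{cl:np2} runs for only polynomially many steps --- each pass strictly decreases the number of edges $e$ with $T'_e \cap \{u,v\} = \emptyset$ and never raises the cost --- so that it really is an efficient reduction with the stated cost guarantee; and (b)~making sure the $O(1)$ additive error inherited from the ``$+1$'' of Claim~\ref{cl:np1} is handled cleanly, which is precisely why the statement is phrased with the degree bound and is used only in the asymptotic (APX-hardness) regime.
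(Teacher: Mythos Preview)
Your proposal is correct and follows essentially the same route as the paper's proof: run the $(1+\eps)$-approximation on the reduced instance $\I'$, invoke Claim~\ref{cl:np1} to bound $\opt(\I')$ by $\tau+2m+1$, use the procedure inside Claim~\ref{cl:np2} to extract a vertex cover of size at most $C-2m$, apply $m\le 4\tau$ from the degree bound, and absorb the additive $O(1)$ via disjoint unions. Your write-up is a bit more explicit than the paper's about reading off the $|U|\le C-2m$ bound from the proof (rather than the statement) of Claim~\ref{cl:np2} and about the polynomial termination of the modification loop, but the argument is the same.
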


\begin{proof}
  Let $\cal A$ be the $\alpha$-approximation algorithm for $\wPwTw$. The
  algorithm for \VC on $G$ is as follows: use the reduction described
  above to get an instance $\I'$ of $\wPwTw$. Run $\cal A$, and then use
  the proof of Claim~\ref{cl:np2} to get a vertex cover for $G$.

  Suppose $G$ has a vertex cover of size $r$. Since the maximum degree
  of $G$ is 4, we know that $|E| \leq 4r$. Now Claim~\ref{cl:np1}
  implies that $\I'$ has a solution of cost at most $2|E| + r+1$, and so
  $\cal A$ outputs a solution of cost at most
  $(1+\eps)(2|E| + r + 1) = 2|E| + 2\eps |E| + (1+\eps) (r+1) \leq
  2|E|+(1+9\eps) r + O(1)$. Claim~\ref{cl:np2} now shows that there is a
  vertex cover of size at most $(1+9\eps)r + O(1)$ in $G$.  This proves
  the lemma; the additive 1 can be ignored because we can take multiple
  copies of $G$ and make $r$ as large as we want.
\end{proof}

Finally, the fact that vertex cover is hard to approximate to within
$\approx 1.02$ on $4$-regular graphs~\cite{ChlebikC06} implies that 
\wPwTw is $\approx 1.002$-hard, and completes the proof.

\section{Some Illustrative Examples}
\label{sec:exs}

\subsection{Evictions at Endpoints are Insufficient}
\label{sec:endpoints-example}

It is easy to check that we cannot hope to service every interval $I$ at
either $s(I)$ or $t(I)$, which we can do for the unweighted case.
Indeed, consider the following input: suppose $k=1$ and there is a very
heavy page which is requested at each time, and so we need to have it in
the cache at every time. Now there are $n$ unit weight pages, but there
request intervals are $[0,n], [1, n+1], [2, n+2], \ldots.$ 

\begin{figure}[ht]
    \centering
    \includegraphics[width=3in]{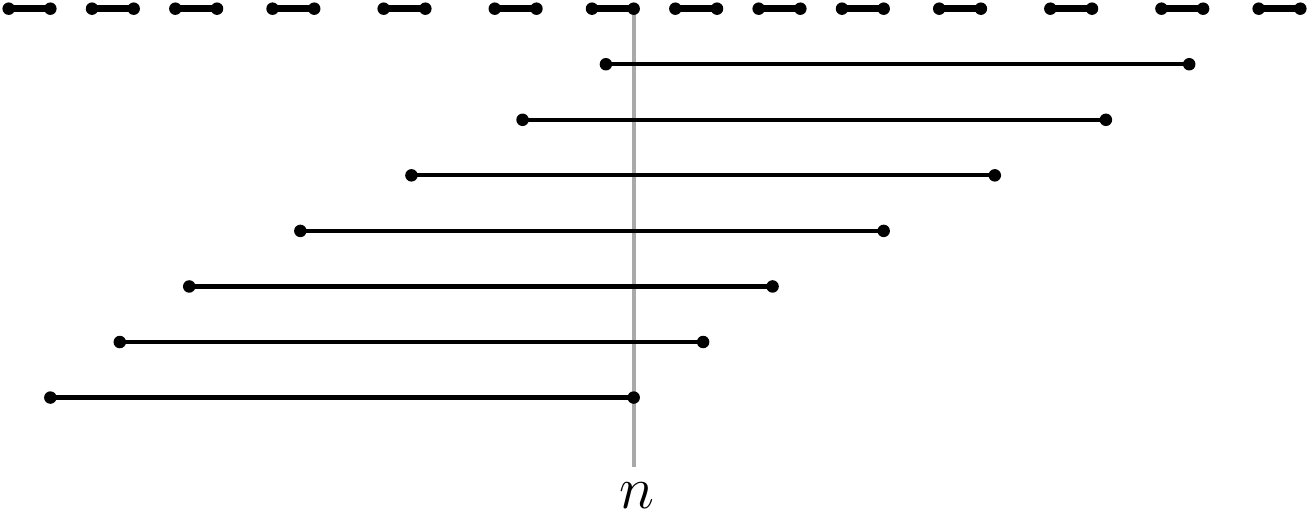}
    \label{fig:bad-example}
\end{figure}
The optimal solution is to service all these requests at time $n$,
because then we will evict the heavy page only once. Thus, our algorithm
needs to use these windows of opportunity to service as many cheap
requests as possible.

\subsection{An Integrality Gap for the Interval Hitting LP}
\label{sec:bad-ex}

We now consider a natural LP relaxation for \wPwTw which extends that
for weighed caching, and show that it has large integrality gap.  We
have variables $x_{p,J}$ for pages $p$ and intervals $J \sse [T]$,
indicating that $J$ is maximal interval during which the page $p$ is in
the cache for the entire interval $J$. Recall that we are allowed to
service many requests at each timestep, so each timestep may have up to
$n$ loads and $n$ evictions. To handle this situation, we ``expand'' the
timeline so that all such ``instantaneous'' services can be thought of
as loading each page in the cache for a tiny amount of time, and then
evicting it.  This will ensure that we can write a packing constraint in
the LP relaxation which says that no more than $k$ pages are in the
cache at any particular time.

Let $N$ be a large enough integer ($N \geq n$, where $n$ is the number
of distinct pages will suffice). We assume that all $s(I), t(I)$ values
for any request interval $I$ are multiples of $N$ (this can be easily
achieved by rescaling). Let $E$ denote the set of end-points of the
request intervals (so each element in $E$ is a multiple of $N$).  As
above, we have variables $x_{p,J}$, where the end-points of $J$ are
integers (which \emph{need not} be multiples of $N$).  The idea is that
between two consecutive intervals of $E$, we can pack $N$ distinct unit
size intervals, each of which may correspond to loading and then
evicting a distinct page. We can now write the LP relaxation:
\begin{alignat}{2}
  \min \sum_{p,J} w(p) \cdot & x_{p,J} & & \notag \\
  \label{eq:bad1}
  \sum_{J: J \cap I \neq \emptyset} x_{p,J} &\geq 1 &\qquad\qquad&
  \forall  \text{ request intervals $I$ with $p = \page(I)$} \\
  \label{eq:bad2}
  \sum_p \sum_{J : t \in J} x_{p,J} &\leq k &\qquad\qquad&
  \forall \text{ integer  times $t$} \\
  \notag
  x_{p,J} &\geq 0
\end{alignat}

\begin{theorem}
  \label{thm:bad-ex}
  The above LP has an integrality gap of $\Omega(k)$.
\end{theorem}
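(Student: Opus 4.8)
The plan is to exhibit, for every cache size $k$, an instance in which a few ``heavy'' pages hog the cache while many ``light'' pages want short visits, arranged so that the time‑expansion makes the capacity constraint~\eqref{eq:bad2} essentially free for the LP, whereas a real schedule is forced into many expensive evictions. Concretely, take cache size $k$ and $k$ heavy pages $h_1,\dots,h_k$ of weight $W:=k^2$, each with the single‑point request interval $[iN,iN]$ for every $i\in\{0,1,\dots,k^2\}$, where $N:=2k^2$ is the expansion factor; thus at each of these $k^2+1$ ``boundary'' times all $k$ heavies must be resident and the cache is completely full. Add $m:=k^2$ unit‑weight light pages $p_0,\dots,p_{k^2-1}$, where $p_i$ has the single request interval $[iN,(i+1)N]$, so that $p_i$ can only be served strictly inside the $i$‑th ``gap'' $(iN,(i+1)N)$. (Both endpoint types are multiples of $N$, as the construction requires.)

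First I would lower‑bound the integral optimum. For each $i$, in any feasible schedule $p_i$ must be resident at some $\tau_i\in[iN,(i+1)N]$; since the boundary times $iN,(i+1)N$ are full of heavies, $\tau_i\in(iN,(i+1)N)$, and at time $\tau_i$ at most $k-1$ heavies are resident although all $k$ were resident at time $iN$. Hence at least one heavy page is evicted during $(iN,\tau_i]$, i.e.\ inside gap $i$. The gaps are pairwise disjoint, so there are at least $k^2$ heavy evictions, and the integral cost is $\Omega(k^2W)=\Omega(k^4)$.

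Next I would build a cheap fractional solution. Put $\delta:=1/k^3$. For each heavy $h_j$ set $x_{h_j,[0,k^2N]}=1-\delta$ and add point masses $x_{h_j,\{iN\}}=\delta$ at the $k^2+1$ boundary times; this covers every heavy request to exactly $1$, contributes $k(1-\delta)$ to~\eqref{eq:bad2} at every integer time and an extra $k\delta$ at boundary times only, and costs $kW(1+k^2\delta)=k^3+k^2=O(k^3)$ in total. For each light $p_i$ place $1/(k\delta)=k^2\le N-1$ point masses of height $k\delta$ at distinct integer positions strictly inside gap $i$: there the heavies contribute only $k(1-\delta)$, so~\eqref{eq:bad2} reads $k(1-\delta)+k\delta=k$, which is tight but satisfied; constraint~\eqref{eq:bad1} for $p_i$ sums to $(1/(k\delta))\cdot k\delta=1$; and the cost is $1$ per light, $k^2$ overall. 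Hence $\mathrm{OPT}_{\mathrm{LP}}=O(k^3)$, and dividing into the integral bound $\Omega(k^4)$ yields integrality gap $\Omega(k)$. (Here ``cost'' means total weight of loads — equivalently of maximal in‑cache intervals, the LP objective — and the fixed additive offset $\sum_p w(p)=O(k^3)$ only inflates both sides, so the ratio is unaffected.)

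The main difficulty is not any single step but calibrating the three parameters $W,N,\delta$ jointly. The expansion $N$ must be large enough that each light request is met by many thin masses rather than one unit mass: a unit‑height light mass would need a full free slot, forcing $\delta\ge 1/k$ and destroying the $O(k^3)$ bound on the heavy cost. Simultaneously each light window must be confined to a single gap, so that the schedule cannot amortise one heavy eviction over many light requests — this is exactly the source of the $\Omega(k^2)$ forced evictions. And $\delta$ must be small enough that the ``padding'' masses keeping the heavies barely resident cost $o(W)$ per heavy, while $W\approx k^2$ is chosen so the heavy term $kW$ dominates the light term $k^2$, giving gap $\Theta(k^2W/(kW))=\Theta(k)$. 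A secondary, purely bookkeeping point is the ``multiples of $N$'' convention on request endpoints, handled by taking the light windows as $[iN,(i+1)N]$ (touching, but never served at, the heavy boundary times) instead of strictly interior intervals.
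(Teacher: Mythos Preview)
Your construction is correct and yields the claimed $\Omega(k)$ gap. The paper's own proof uses the same core mechanism---heavy pages fractionally occupying $1-\delta$ of a cache slot each, with light requests served via many micro-intervals that each consume only the residual $k\delta$ capacity---but instantiates it differently. The paper takes $k$ heavy pages of weight $k$ and a \emph{single} light page of weight $1$, and gives \emph{all} $k+1$ pages the identical sequence of long overlapping request intervals $E_i=[ikN,(i+1)kN]$ for $i=0,\dots,T$; the integral lower bound there needs a small trick (restricting attention to $E_0,E_4,E_8,\dots$) to guarantee that consecutive light services force \emph{distinct} heavy evictions. Your version---point requests for the heavies at the boundary times and $k^2$ distinct unit-weight light pages, each confined to its own gap---makes the integral lower bound cleaner, since disjoint gaps immediately give disjoint forced heavy evictions, at the cost of using $\Theta(k^2)$ pages rather than $k+1$. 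Both constructions exploit precisely the feature you isolated: the time expansion lets the packing constraint~\eqref{eq:bad2} be met by spreading a light page's unit of presence over $\Theta(1/(k\delta))$ thin slivers, so the LP never ``sees'' the need to vacate a full heavy slot. Your closing parenthetical about the additive offset $\sum_p w(p)$ is unnecessary---the LP objective is exactly what you computed, and the ratio $\Omega(k^4)/O(k^3)$ already gives the gap---but it does no harm.
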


\begin{proof}
  Suppose we have $k$ ``heavy'' pages with weight $ k$ each, and $1$
  ``light'' page with weight 1. The request intervals for each of the
  pages are $E_0, E_1, \ldots, E_T$, where $E_i = [ikN, (i+1)kN]$, and
  $N$ and $T$ are suitable large parameters ($T, N > k^3$ will suffice).
  
  We first argue that any integral solution must have $\Omega(T)$
  cost. To see this, consider the request intervals
  $E_0, E_4, E_8, \ldots$ for the light page $p$.  The page $p$ must be
  brought at least once during each of these intervals -- say at
  timeslots $t_0, t_4, t_8, \ldots$, where $t_{4i} \in E_{4i}$ for all $i$.
  Notice that $E_{4i+2}$ lies strictly between $t_{4i}$ and $t_{4i+4}$, and hence
  each of the heavy pages must be present at least once during
  $E_{4i+2}$. Since there can be at most $k-1$ heavy pages in the cache
  at time $t_{4i}$, it follows that at least one heavy page must be
  brought into the cache during $[t_{4i}, t_{4i+4}]$. This argument
  shows that the cost of any integral solution must be $\Omega(Tk)$.
  
  Now we argue that there is a fractional solution to the LP of total cost $O(T)$. 
  For each heavy page $q$, we define $x_{q,J} = 1-1/k^2$, where $J = [0, (T+1)kN]$, i.e., $J$ is the entire timeline. 
  Notice  that each interval $E_i$ is of length $Nk$. We can therefore find $(k+1)k^2$ disjoint intervals of length 1
  each in it, and ``assign'' $k^2$ of these intervals to each of the
  $k+1$ pages. Let ${\cal S}_{i,p}$ be the 
  set of $k^2$ unit length intervals assigned to page $p$ (which could be the light page or one of the heavy pages). 
  For each heavy page $q$ and each unit length interval $H$ assigned to it, we set $x_{q,H}$ to $1/k^4$. For the 
  light page $p$ and each unit length interval $H$ assigned to it, we set $x_{p,H}$ to $1/k^2$. 
  
  Now we check  feasibility of this solution. Consider a heavy page $q$ and the request interval $E_i$ for
  it. The LHS of constraint~(\ref{eq:bad1}) for this request is $(1-1/k^2) + k^2/k^4=1$, where the first term corresponds
  to $x_{p,J}$ and the second term comes from the $k^2$ unit length intervals in ${\cal S}_{i,q}$. For the light page $p$ and the request interval $E_i$ for it, the LHS of this constraint is 1, because each of the $k^2$ unit length 
  intervals $H$ in ${\cal S}_{i,p}$ has $x_{p,H}$ equal to $1/k^2$. The constraint~(\ref{eq:bad2}) is easy to check -- for any time $t$, the LHS is at most  $k(1-1/k^2) + 1/k^2 \leq k$, because the first term comes from $x_{q,J}$ for each heavy page $q$, and the second term comes from the fact that all the unit length intervals are disjoint.  
  
  Let us now compute the cost of this solution. For a heavy page $q$, the total cost is $k(1-1/k^2) + T/k$, where the
  first term comes because of the long interval $J$ and the second term is because of the unit length intervals. 
  This is $O(T/k)$. Summing over all heavy pages, this cost is $O(T)$. For the light page, we have $Tk^2$ unit 
  length intervals, each to a fractional extent of $1/k^2$. Therefore the total cost here is $T$ as well. This proves the integrality gap of $\Omega(k)$. 
\end{proof}

The essential problem with this LP is that the heavy pages are being
almost completely fractionally assigned, leaving a tiny $\eps$ amount of
space. Since all the requests are long, they can be slowly satisfied
over $1/\eps$ time periods, which is much less cost than the cost of
actually evicting a heavy page.

\section{Proof of The Extension Theorem}
\label{sec:ext-thm}
\ExtDext*

\begin{proof}
  The procedure to obtain $B^\star$ from $A^\star$ is a simple greedy
  procedure, and appears in \Cref{fig:augment}: it goes over the times in $\T'\setminus \calN'$,
  and fixes any violations to the containment condition of the theorem by
  adding a new element to $B^\star$.
  \begin{figure}[h]
    \begin{procedure}[H]
      Initialize $\newS \gets \oldS$. \\
      \For{$t \in \T'\setminus \calN'$ in increasing order}{
        \For {every page $p \in P(\oldS, \varphi(t)) \setminus P(\newS,
          t)$ }{ \label{l:for}
          Add $(p,t)$ to $\newS$. }
      }
      \Return $\newS$.
    \end{procedure}
    \caption{The extension procedure to prove \Cref{thm:extension}}
    \label{fig:augment}
  \end{figure}
  For brevity, define $\T'' := \T'\setminus \calN'$. It immediately
  follows that for any $t' \in \T''$, the set $P(A^\star, \varphi(t))$
  is a subset of $P(B^\star, t)$.  We just need to bound the cost of
  $\newS$.

  Let the times in $\T''$ be $t_1 < t_2 < \ldots$, and \Cref{cl:gamma}
  shows that $\varphi(t_1) \leq \varphi(t_2) \leq \cdots$. The desired
  result now easily follows from the following claim.
  \begin{claim}
    \label{cl:B}
    Suppose we add $(p,t_{i}), (p, t_{j}), (p, t_{k})$ to the set
    $\newS$ for some page $p$ and times $t_{i} < t_{j} < t_{k}$.  Then
    there is a time $t \in [t_{i}, t_{k}]$ such that $(p,t) \in \oldS$.
  \end{claim}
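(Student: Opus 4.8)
The plan is to extract the required $\oldS$-star from the \emph{middle} addition $(p,t_j)$ and then pin it down between $t_i$ and $t_k$ using the other two additions. First I would unwind what it means for $(p,t_j)$ to be added during the iteration $t=t_j$ of the procedure in \Cref{fig:augment}: the membership test must have succeeded, so at that moment $p\in P(\oldS,\varphi(t_j))$ and $p\notin P(\newS,t_j)$. The first half produces a time $\sigma$ with $(p,\sigma)\in\oldS$ and $\sigma\in D^{p}_{\varphi(t_j)}=[\tau^{p}_{\varphi(t_j)},\varphi(t_j)]$; since $\varphi(t_j)\le t_j<t_k$ this already gives $\sigma<t_k$, so the only thing left is to show $\sigma\ge t_i$.

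For that lower bound I would use the earlier addition $(p,t_i)$: because $t_i<t_j$, the star $(p,t_i)$ is already present in $\newS$ by the time the loop reaches $t_j$, yet $p\notin P(\newS,t_j)$, so $(p,t_i)\notin D^{p}_{t_j}=[\tau^{p}_{t_j},t_j]$, and as $t_i<t_j$ this forces $t_i<\tau^{p}_{t_j}$. Next I would compare $\tau^{p}_{\varphi(t_j)}$ with $\tau^{p}_{t_j}$: since $\varphi(t_j)\in\calN'$ and $I_{t_j}$ contains $I_{\varphi(t_j)}$, we have $s(I_{\varphi(t_j)})\ge s(I_{t_j})$, and the quantity $\tau^{p}_{\cdot}$ — the right end-point of the last interval of $\cK_p$ lying before the start of the critical interval — is weakly monotone in that start, hence $\tau^{p}_{\varphi(t_j)}\ge\tau^{p}_{t_j}$. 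Chaining $\sigma\ge\tau^{p}_{\varphi(t_j)}\ge\tau^{p}_{t_j}>t_i$ yields $\sigma\in(t_i,t_j]\subseteq[t_i,t_k]$, which is the time $t$ the claim asks for.

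The step I expect to be most delicate is the monotonicity $\tau^{p}_{\varphi(t_j)}\ge\tau^{p}_{t_j}$: it hinges on reading ``$I_{t_j}$ contains $I_{\varphi(t_j)}$'' as genuine interval containment (so that simultaneously $s(I_{\varphi(t_j)})\ge s(I_{t_j})$ and $\varphi(t_j)\le t_j$), together with the fact that the preceding $\cK_p$-endpoint only moves rightward when the critical interval starts later. Note the third star $(p,t_k)$ is not consumed by the argument itself — a symmetric derivation starting from $(p,t_k)$ in place of $(p,t_j)$ works equally well — but the three-star phrasing is exactly what the cost part of the Extension Theorem uses: since neither $t_i$ nor $t_k$ can carry an $\oldS$-star for $p$ (otherwise $(p,t_i)$ or $(p,t_k)$ would not have been added), the witness $\sigma$ is strictly interior to $[t_i,t_k]$, and one can then charge the added stars of a fixed page in disjoint triples to distinct $\oldS$-stars of that page, giving $w(\newS)\le 3\,w(\oldS)$.
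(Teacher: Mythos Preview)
Your proof is correct, and it takes a slightly different route from the paper's. The paper extracts the witnessing $\oldS$-star from the \emph{last} addition: it uses $p\in P(\oldS,\varphi(t_k))$ to get a star in $D^p_{\varphi(t_k)}$, and then needs two separate observations---first that $I_{\varphi(t_k)}\subseteq[t_j,\varphi(t_k)]$ (else $(p,t_j)$ would already hit $D^p_{t_k}$), and second that $D^p_{t_j}\subseteq[t_i,t_j]$---which are chained via the same $\tau^p$-monotonicity you isolate to place the star in $[t_i,\varphi(t_k)]$. Your extraction from the \emph{middle} addition short-circuits the first of these steps: the containment $I_{\varphi(t_j)}\subseteq I_{t_j}$ is immediate from the definition of $\varphi$, so you get $\tau^p_{\varphi(t_j)}\ge\tau^p_{t_j}$ directly, and the single observation $t_i<\tau^p_{t_j}$ finishes. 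As you point out, your argument in fact only consumes two consecutive additions and lands the star in $(t_i,t_j]$, which is a sharper conclusion than the claim states; the paper's argument genuinely uses all three. Both proofs rest on the same monotonicity of $\tau^p_{\cdot}$ in the start of the critical interval, so the difference is one of organization rather than of underlying mechanism.
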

  \begin{proof}
    Fix a page $p$ with $t_{i}, t_{j}, t_{k}$ as in the statement above,
    and let $t \leq t_{k}$ be the largest time such that
    $(p,t) \in \oldS$. %
    We now make a sequence of observations:
    \begin{itemize}
    \item[(i)] We claim that $\varphi(t_{k}) \geq t_{j}$ and
      $I_{\varphi(t_k)} \sse [t_j, \varphi(t_k)]$. If either is false then
      $I_{t_k}$ contains $t_j$, in which case there is no need to add
      $(p, t_k)$ to $\newS$, because $\newS$ already contains
      $(p,t_{j})$.
    \item[(ii)] Moreover, the interval $D^p_{t_j} \subseteq [t_i, t_j]$.
      Clearly $D^p_{t_j}$ ends at $t_j$ (by definition). If it
      starts before $t_i$, then there is no need to add $(p,t_j)$ to $\newS$.
    \item[(iii)] Next, $D^p_{ \varphi(t_k)} \subseteq [t_i, \varphi(t_k)]$: Since $\varphi(t_k) \geq t_j$, $D^p_{ \varphi(t_k)}$ starts after (or at the same time as) $D^p_{t_j}$ starts, and so this follows by~(ii) above. 
    \end{itemize}
    Now since $p$ lies in $P(\oldS, \varphi(t_k))$, statement~(iii) implies
    that we have $(p,t) \in \oldS$ for some $t \in [t_i, \varphi(t_k)]$.
  \end{proof}
  \Cref{cl:B} means that we can charge the three elements added to $\newS$ to
  this element $(p,t) \in \oldS$ that lies in between $[t_1, t_3]$. This
  proves the cost bound, and hence \Cref{thm:extension}.
  
  In the online setting, $B^\star_t$ can be easily constructed from $\astar_t$ using the procedure in~\Cref{fig:augment}; and it is easy to check it is also present restricted. 
\end{proof}

\section{The Tiled Interval Cover Problem}
\label{sec:int-cover}

In the \emph{tiled interval cover} problem (\intervalcover), the input
is the following.  For each page $p \in [n]$, we have a collection
$\calI_p$ of disjoint intervals that cover the entire timeline, with
each such interval having weight $w_p$. We also have a requirement $n-k$.
The goal is to pick some set of intervals from $\calI = \cup_p \calI_p$
that minimize their total weight, such that every time $t$ is covered by
$n-k$ different intervals. In the version \emph{with exclusions}
(\exintervalcover), the interval $E_t$ ending at time $t$ does not count
towards the requirement of $n-k$. (As always we assume that a unique
interval ends at each time.)

\subsection{The Offline Case}
\label{sec:intcover-offline}

\begin{lemma}
  \label{lem:offline-intcover}
  The linear relaxation for the \intervalcover problem is integral,
  whereas the that for the \exintervalcover problem has an integrality
  gap of at most $2$. 
\end{lemma}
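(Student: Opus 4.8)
First observe that the coverage of a time $t$ changes only at the (finitely many) endpoints of the intervals in $\calI = \bigcup_p \calI_p$, so it suffices to impose the covering constraint at these ``event times''; this yields a finite LP. Form its constraint matrix $M$ with one row per event time and one column per interval $I \in \calI$, with $M_{t,I} = \mathbf{1}[t \in I]$. Because each interval occupies a contiguous block of event times, every column of $M$ has its ones in consecutive positions --- the consecutive-ones property --- so $M$ is an interval matrix and is therefore totally unimodular. Appending the rows of the identity to encode the box constraints $z \le \mathbf 1$ preserves total unimodularity, and the right-hand side $\big((n-k)\mathbf 1,\ \mathbf 1\big)$ is integral; hence by Hoffman--Kruskal every vertex of $\{z : Mz \ge (n-k)\mathbf 1,\ 0 \le z \le \mathbf 1\}$ is integral, so the LP optimum is attained at an integral point, i.e., the relaxation is integral. (If one prefers not to include the box constraints, the pure covering polyhedron $\{z \ge 0 : Mz \ge (n-k)\mathbf 1\}$ with $M$ totally unimodular is already integral.)

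\textbf{The \exintervalcover LP.} The only change is that the column of an interval $I \in \calI_p$ loses its entry in row $t$ exactly when $t$ is the (unique) time whose excluded page is $p$, i.e.\ when $I$ is the interval of $\calI_p$ containing $t$. Under the convention that $p_t$ is the page whose interval \emph{ends} at $t$, each such deletion removes only the right endpoint of a column, so the column is still a contiguous block of event times; then $M'$ is again an interval matrix, still totally unimodular, and the same Hoffman--Kruskal argument gives an integral LP (gap $1 \le 2$). In general, an excluded time may fall in the \emph{interior} of an interval of $\calI_p$, punching a hole in that column and destroying total unimodularity --- and this is where the factor $2$ enters. The plan for that case: take an optimal fractional $z^\star$; since dropping the exclusions only adds ones to the constraint matrix, $z^\star$ is feasible for the \intervalcover instance on the same intervals with requirement $n-k$, so by the first part there is an \emph{integral} $\widehat z$ with $w^\top \widehat z \le w^\top z^\star$ that covers every event time at least $n-k$ times in the exclusion-free sense. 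Such $\widehat z$ can be infeasible for \exintervalcover only at times $t$ where it reaches $n-k$ only with the help of the forbidden interval $E_t$, and at each such time one extra non-forbidden interval covering $t$ suffices. To supply these cheaply, note that $z^\star$ with its weight on forbidden intervals zeroed still covers every such time to fractional extent $\ge n-k \ge 1$, so it is feasible for the residual problem ``cover each bad time once, with exclusions''; this residual problem still has interval-like structure, so a fractional solution of it can be rounded up to an integral one at only constant-factor loss, and adding that to $\widehat z$ gives an integral feasible \exintervalcover solution of cost $O(w^\top z^\star)$. A careful accounting --- folding the repair into the first solve, or strengthening the requirement to $n-k+1$ only on the ``bad windows'' rather than globally --- brings the constant down to $2$.

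\textbf{Main obstacle.} The substantive difficulty is precisely the \exintervalcover case with interior exclusions: one must bound the cost of repairing every time where the totally-unimodular solution leaned on a forbidden interval, and argue that these repairs can be charged to $z^\star$ without double counting. The structural features that should make this work --- and the crux of the argument --- are that within a single page the intervals are disjoint and that each time forbids exactly one interval per page, so the ``holes'' are sparse and the residual covering instance inherits enough interval structure to be rounded with only a factor-$2$ loss; pinning down the cleanest such rounding (half-integrality of the residual LP, or an explicit local patching) is the step that needs the most care.
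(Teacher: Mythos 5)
Your first half (integrality of the \intervalcover relaxation via the consecutive-ones property and total unimodularity) is correct and is exactly the paper's argument. The second half, however, has a genuine gap: you correctly isolate the hard case (an excluded time in the interior of an interval destroys the consecutive-ones structure), you correctly observe that rounding the exclusion-free relaxation yields an integral $\widehat z$ that can fail only at times $t$ where it reaches the requirement only with the help of the forbidden interval $E_t$, but the repair step is left as a plan rather than a proof. The residual problem you invoke --- ``cover each bad time once, with exclusions'' --- is itself an \exintervalcover instance (with requirement $1$), so asserting that it ``can be rounded up to an integral one at only constant-factor loss'' is circular: that is precisely the statement being proved, and nothing in your sketch (half-integrality, local patching) is established. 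You also acknowledge that the factor-$2$ accounting is not pinned down. As written, the proof of the second claim is incomplete.

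The missing idea in the paper is a different and self-contained rounding that never has to round a residual instance \emph{with} exclusions. Given a fractional $z$ feasible for the \exintervalcover LP, first put into $\calS'$ every interval with $z_I \geq \nicefrac12$. For each time $t$ let $R_t' := R_t - \#\{I \in \calS' : t \in I,\ I \neq E_t\}$ be the residual requirement, and set $\tz_I := 2 z_I$ on the remaining intervals. Then $\sum_{I \ni t,\ I \neq E_t,\ I \notin \calS'} \tz_I \geq 2R_t'$, so $\tz$ is a feasible \emph{fractional} solution to a plain \intervalcover instance (exclusions dropped) with the \emph{inflated} requirement $2R_t'$; by the first part this instance has an integral optimum $\calS''$ of cost at most $2\, w^\top z$. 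The point of inflating the requirement to $2R_t'$ is exactly to absorb the exclusion: of the $\geq 2R_t'$ intervals of $\calS''$ covering $t$, at most one is $E_t$, and $(2R_t'-1)^+ \geq R_t'$, so $\calS' \cup \calS''$ meets the excluded requirement at every time. The factor $2$ comes solely from the doubling. If you want to salvage your own outline, this ``double the residual demand, then drop the exclusions'' step is the piece you need in place of the unproved rounding of the residual excluded instance. (Your side remark that when $E_t$ is always the interval ending at $t$ the excluded matrix is still an interval matrix is correct, but the paper explicitly needs the general case, since in its application the excluded interval for page $p_t$ need not end at $t$.)
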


\begin{proof}
  We can even show this for the case where the weights and requirements
  are non-uniform, i.e., each time $t$ has a potentially different
  requirement $R_t$, and each interval has a different weight $w_I$. Indeed, for the \intervalcover problem, the
  constraint matrix
  \begin{alignat}{2}
    \min_{z \in [0,1]^{|\calS|}} \quad\sum_{I} w_I \, & z_{I} \label{eq:IPforIC}\\
    \sum_{I \in \calS: t \in I} z_I & \geq R_t &\qquad & \forall t \notag
  \end{alignat}
  has the consecutive-ones property and forms a totally-unimodular
  system, so the linear relaxation has integer extreme points and an
  optimal integer solution can be found in polynomial time.

  Now let $z$ be a solution to the LP relaxation for (note the exclusion of 
  $E_t$ from the sum)
  \begin{alignat}{2}
    \min_{z \in [0,1]^{|\calS|}} \quad\sum_{I} w_I \, & z_{I} \label{eq:IPforICE}\\
    \sum_{I \in \calS: t \in I, I \neq E_t} z_I & \geq R_t &\qquad
    & \forall t.  \notag
  \end{alignat}
  Recall that $E_t$ is the interval ending at $t$ (though we will not
  need this for our solution). To construct
  an integer solution $\calS'$, first add to $\calS'$ all the intervals
  $I$ with $z_I \geq 1/2$. Now for each time $t$, let $R_t'$ be the
  residual coverage needed; i.e., define $R_t' := R_t - \#\{ I \in
  \calS' \mid t \in I, I \neq E_t\}$. Moreover, define $\tz_I := 2z_I$
  for $I \not\in \calS'$, and $z_I = 0$ for $I \in \calS'$. Clearly,
  $\sum_{I \ni t: I \neq E_t, I \not\in \calS'} \tz_I \geq 2R_t'$. Treat
  this as a solution to an \intervalcover instance on the subcollection
  $\calS \setminus \calS'$ (crucially, ignoring the exclusions) with
  these adjusted requirements $2R_t'$, and let $\calS''$ be an optimal
  integer solution. For each time $t$, there are now $(R_t-R_t')$
  non-excluded sets in $\calS'$, and at least $(2R_t' - 1)^+ \geq
  \max(R_t', 0)$ non-excluded sets from $\calS''$ covering it, which
  gives the desired coverage level of $R_t$. Due to the rounding up by a
  factor of $2$, the cost of the solution is at most $2\,w^\intercal z$.
\end{proof}

\subsection{The Online Case}
\label{sec:intcover-online}

The online model for \intervalcover and \exintervalcover is that
intervals are revealed online: specifically, the endpoint of an interval
is revealed only when it ends (and since we are dealing with tiled
instances, the next interval for that page starts immediately
thereafter).

In the online case, the \exintervalcover happens to be essentially
identical to formulation used in online primal-dual algorithms for
weighted paging, e.g., by~\cite{BansalBN12}. There are $n$ pages and a cache of
size $k$, so these constraints say that at time $t$, there must have
been $n-k$ pages apart from $p$ that are evicted since they were last
requested. Hence, the intervals for a page start just after each request
for the page, and end at the time of the next request. This means we can
simulate the end of intervals in $\calI_p$ by requesting page $p$. The
integer program is the following, where the page $I_t$ corresponds to
the page $p_t$ requested at time $t$.
\begin{gather*} \min\left\{ \sum_{p} \sum_{I \in \calI_p} w_p x_I \mid
    \sum_{I \in \calI \setminus \{I_t\}: t \in I} x_I \geq n-k
    ~~\forall t, \quad x_I \in \{0,1\} ~~ \forall I \right\}.
\end{gather*}
Using this connection and the result of~\cite{BansalBN12} immediately gives us
an $O(\log k)$ randomized online algorithm for \exintervalcover. To make
the online model closer to the rest of the paper, let us reformulate
the above IP as follows:
\begin{gather*} \min\left\{ \sum_{p,t} w_p x_{p,t} \mid \sum_{I \in
      \calI \setminus \{I_t\}: t \in I} \min\left(\sum_{t' \in I, t'
        \leq t} x_{p,t}, 1 \right) \geq n-k ~~\forall t, \quad x_{p,t}
    \in \{0,1\} ~~\forall p,t \right\}. 
\end{gather*}
It is easy to between these two formulations, using the correspondence
that at some time $t$, the variable $x_I$ has value equal to $\min( 1,
\sum_{t' \in I: t' \leq t} x_{pt'})$.
The algorithm from~\cite{BansalBN12} gives us an algorithm that only changes
the variables at the current time $t$; hence this is clearly a
\emph{past preserving} algorithm.

To get an algorithm for \intervalcover, we change the instance slightly:
we add in a new page $p_0$ (so there are $n+1$ pages) and make the cache
of size $k+1$. This new page has weight zero, so it can be brought in
and evicted at will. Now we request page $p_0$ immediately after a
request for any other page. (Denote the original request times by
integers, and the requests for $p_0$ by half-integers.) Observe that at
times $t - \nicefrac12$ when $p_0$ is requested, the constraints force
$(n+1)-(k+1) = n-k$ ``real'' intervals covering time $t$ to have been
chosen, which is precisely what we wanted. Now, suppose time $t$
corresponds to the interval in $\calI_{p_t}$ ending, and causing us to
request the page $p_t$. The paging constraint then asks for $n-k$ pages
except page $p_t$ to be chosen. But since page $p_0$ has zero weight, we
can choose it, so we need to only choose $n-k-1$ intervals from the rest
of the pages except $\{p_0, p_t\}$. Since $p_0$ will always be chosen,
this constraint is implied by the constraint at time $t -
\nicefrac12$. So this reduction to paging exactly models the
\intervalcover problem, and we get an $O(\log k)$-competitive algorithm
from~\cite{BansalBN12} again. We summarize the discussion of this section in
the following lemma:
\begin{lemma}
  \label{lem:bbn-tiled}
  There are randomized online algorithms for the \intervalcover and
  \exintervalcover problems that are $O(\log k)$-competitive against
  oblivious adversaries.
\end{lemma}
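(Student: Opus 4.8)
The plan is to reduce both problems to the online weighted paging problem and then invoke the $O(\log k)$-competitive randomized algorithm of Bansal, Buchbinder and Naor~\cite{BansalBN12}. First I would observe that the online formulation of \exintervalcover is, essentially verbatim, the interval-covering IP used in the primal-dual analysis of weighted paging: identify each page $p\in[n]$ with a page of weight $w_p$ in a caching instance, and identify the disjoint intervals of $\calI_p$ with the time-gaps between consecutive requests for $p$ (an interval ``ending at time $t$'' corresponds to a request for its page at time $t$, which is consistent with our tiled-online model where endpoints are revealed as they occur). Under this correspondence, selecting an interval $I\in\calI_p$ means evicting $p$ during that gap, the coverage requirement ``at least $n-k$ non-excluded intervals contain $t$'' becomes ``at least $n-k$ pages other than $p_t$ have been evicted since their last request'', and excluding $E_t$ matches the fact that the currently requested page need not be evicted. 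Hence any $O(\log k)$-competitive randomized algorithm for weighted paging yields one for \exintervalcover; moreover, since the algorithm of~\cite{BansalBN12} only raises $x$-variables at the current timestep, the resulting solution is past-preserving, matching the claim made in \S\ref{sec:intcover-online}.

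For \intervalcover we must eliminate the ``no exclusions'' feature, i.e., we need all $n-k$ intervals covering a time $t$ to count, including the one ending at $t$. Here I would use the zero-weight dummy-page trick: add a new page $p_0$ of weight $0$, enlarge the cache to size $k+1$, and interleave a request for $p_0$ at a half-integer time just after each original request. At a time $t-\nicefrac12$ the paging constraint forces $(n+1)-(k+1)=n-k$ ``real'' intervals to cover time $t$, with no exclusion on any real page (the exclusion falls on $p_0$), which is exactly the \intervalcover requirement; and at an integer time $t$ the paging constraint (which excludes $p_t$) is automatically satisfied because $p_0$ has zero weight and can always be selected, so it is implied by the constraint at $t-\nicefrac12$. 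Thus \intervalcover on $n$ pages with requirement $n-k$ is faithfully modelled by weighted paging on $n+1$ pages with cache $k+1$, and~\cite{BansalBN12} again gives an $O(\log k)$-competitive randomized algorithm.

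The only point needing care --- and the main (mild) obstacle --- is verifying that the dummy-page reduction introduces no additional cost (since $p_0$ is free) and that the constraint at each integer time is genuinely dominated by the one at the preceding half-integer time, so that the paging solution, restricted to the real pages, is feasible for \intervalcover and has the same cost. This is a short but necessary check; with it in hand, both competitiveness bounds follow against oblivious adversaries, completing the proof of \Cref{lem:bbn-tiled}.
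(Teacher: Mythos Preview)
Your proposal is correct and follows essentially the same approach as the paper: you reduce \exintervalcover directly to the interval-covering IP for weighted paging and invoke~\cite{BansalBN12}, and for \intervalcover you use exactly the zero-weight dummy page $p_0$ with cache size $k+1$ and interleaved half-integer requests, arguing that the half-integer constraints enforce the \intervalcover requirement while the integer-time constraints are dominated. This matches the paper's argument step for step.
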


\section{Proof of~\Cref{thm:lponline}}
Recall the Integer Program~\eqref{eq:IP2Dp}:
\begin{alignat}{2}
  \min \sum_{p,t} w_p \, & x_{p,t} + \sum_I L(I) y_I \tag{IP-Dp} \label{eq:IP2Dp2}\\
  \sum_{p: p \neq p_t} \min(1, \sum_{t'=\cT(p,t)}^t x_{p,t'}) &\geq (n-k) (1-y_{I_t}) &\qquad&\forall t 
  \tag{D2p} \label{eq:2-againp2}
\end{alignat}

In this section, we prove the following result:
\solvedext*

For sake of brevity, we rename $y_{I_t}$ as $y_t$, $L_{I_t}$ as
$L_t$, $(n-k)$ as $R$, and the interval $[\cT(p,t),t]$ as $I(p,t)$: the only fact we use about this interval is that $I(p,t)$ always moves to the right~(\Cref{cl:nonnest}). We can now rewrite the linear relaxation of the above IP as 

\begin{alignat}{2}
  \min \sum_{p,t} w_p \, & x_{p,t} + \sum_t L_t y_t \tag{LP-p} \label{eq:LP2Dp2}\\
  \sum_{p: p \neq p_t} \min(1, \sum_{t' \in I(p,t)} x_{p,t'}) + R y_t &\geq R  &\qquad&\forall t \\
  x_{p,t}, y_t & \geq 0 & \qquad & \forall p,t
 \label{eq:lp2}
\end{alignat}

Observe that the cost of all $x_{p,t}$ variables corresponding to the same page $p$  is the
same. If the penalty costs $L_t = \infty$ we get the hard covering problem. The following algorithm is a simple extension of a result of
Bansal et al.~\cite{BBN}; we give it here for sake of 
completeness.

  All the variables are initialized to 0. 
  For an interval $I$ and page $p$, let $x_{p,I}$ denote
  $\sum_{t \in I} x_{p,t}$. Let $\delta = \frac{1}{k+1}$. The algorithm is
  simple: at each time $t$, if the corresponding constraint for  time $t$ is violated, then we
  raise some variables. Imagine this happening via a continuous
  process, with a clock starting at $\tau = 0$ and continuously increasing until
  the constraint is satisfied. Let
  $P_\tau = \{ p  \mid p \neq p_t, x_{p,I(p,t)} < 1\}$ be the pages in this
  constraint that are ``active'' at clock value $\tau$, i.e., the variables $x_{p,I(p,t)}$  are not
  already at their maximum value. We must have $|P_\tau| \geq k+1$,
  else the LHS of the constraint would have $R=n -k$ of the
  $x_{p,I(p,t)}$ values already at $1$, and the constraint would be
  satisfied. %
  Now raise the variables $x_{p,t}$ for every page $p$ in $P_\tau$ at the following rate:
  \[ \frac{dx_{p,t}}{d\tau} = \frac{x_{p,I(p,t)} + \delta}{w_p}. \]
  Also,
    \[ \frac{dy_t}{d\tau} = \frac{y_t R + \delta(|P_\tau|-k)}{L_t}. \]
  Note that we raise only the last variable $x_{p,t}$ for each interval $I(p,t)$, but
  it is raised proportional to the value of the entire interval. As
  these values rise, more pages fall out of the set $P_\tau$ until
  the constraint is satisfied.

  To show the competitiveness, let $x^*$ denote the optimal integer
  solution to~\eqref{eq:IP2Dp} after satisfying the constraint for time
  $t$. 
 Also, the  interval $I_{p,t}$ is not defined  for $p=p_t$, we define it for the sake of analysis to be same as $I(p,t')$ where 
$t'<t$ is
   the most recent time such that $p \neq p_{t'}$.
Now let the potential
  be
  \[ \Phi_t := 3 \sum_{p :  x^*_{p, I(p,t)} \geq 1} w_p \log
    \left( \frac{1+\delta}{\min(1, x_{p,I(p,t)}) + \delta} \right)
    + 3\; \sum_{s \leq t: y^*_s = 1} L_s \log \left( \frac{1+\delta}{y_s + \delta} \right)
 . \]
  Note that each term in the potential is non-negative. We show that
  the amortized cost of the algorithm with respect to this potential
  can be paid for by the optimal cost times $O(\log (1+1/\delta))$.

  First, suppose the constraint at time $t$ is revealed. This causes
  the current intervals $I(p,t)$ to possibly change, and hence some
  terms from the potential may disappear (since current intervals only
  move to the right). But dropping terms can only decrease the
  potential function. Next, let $OPT$ augment its
  solution. Suppose OPT decides to set $y^*_t = 1$, then OPT's cost is
  $L_t$, whereas the potential goes up by $3L_t \log
  (\frac{1+\delta}{\delta})$. Moreover, 
  for each variable $x^*_{p,t}$ that is set to $1$ (there
  is no reason to raise any other variable), the cost to $OPT$ is
  $w_p$, whereas the potential increase is at most
  $3 w_p \log (\frac{1+\delta}{\delta})$. Hence we have
  \[ \Delta \Phi \leq \Delta OPT \cdot 3 \log (1+1/\delta). \]
  Observe that since the current intervals $I(p,t)$ only move rightwards, we
  charge each optimal variable only once.

  Finally, the algorithm moves via the continuous process above. The
  instantaneous cost incurred by the algorithm is
  \begin{align}
   \frac{dALG}{d\tau} &= \sum_{p \in P_\tau} w_i
                        \frac{dx_{p,t}}{d\tau} + L_t \frac{dy_t}{d\tau} \\
                      &= \sum_{i \in P_\tau} (x_{p, I(p,t)} + 
                        \delta) + R \, y_t + (|P_\tau| -k)\delta \\
                      &= \sum_{p \in P_\tau} x_{p,I(p,t)} + R y_t + 
                        \delta\;(|P_\tau| + |P_\tau| - k) \\
                      &< R - (R - |P_\tau|) +
                        \delta\;(|P_\tau| + |P_\tau| - k) \\
                      &< (|P_\tau| - k) + 2|P_\tau|\,\delta.
  \end{align}
  But since $|P_\tau| \geq k+1$ and $\delta = \frac{1}{k+1}$, we get $|P_{\tau}|\,\delta \leq
  |P_{\tau}|-k$; this is the first time we use the value of
  $\delta$. Hence
  \[ \frac{dALG}{d\tau} \leq 3(|P_\tau| - k). \]
  Finally, using the chain rule
  and the definition of the continuous process, the decrease in potential is:
  \begin{align*}
   - \frac{d\Phi}{d\tau} &= 3 \sum_{p \in P_\tau: x^*_{p,I(p,t)} \geq 1}
    \frac{w_p}{x_{p,I(p,t)} + \delta} \cdot \frac{x_{p,I(p,t)} + \delta}{w_p}
    + 3\;\frac{L_t}{y_t + \delta} \cdot \frac{R y_t +
                           (|P_\tau|-k)\delta}{L_t} \cdot \mathbf{1}_{y^*_t = 1}  \\
    &\geq 3\cdot \#\{ p \in P_\tau \mid x^*_{p,I(p,t)} \geq 1\} +
      3(|P_\tau| - k) \cdot \mathbf{1}_{y^*_t = 1} %
  \end{align*}
  The first equality above uses the fact that for pages in $P_\tau$, $x_{p,I(p,t)}$ is strictly less
  than $1$, and so the truncation by 1 does not have any effect. Now either $y^*_t = 1$, in which
    case the second term gives us $3(|P_\tau| - k)$. Or else
    $y^*_t = 0$ and the second term is not present, but then $x^*$ is
    a feasible solution to the covering constraint~\eqref{eq:lp2} at time $t$. Therefore,  at most $n-R=k$ 
    intervals $I(p,t)$ are not
    hit by $x^*$. So the contribution of the first term in this case
    is at least $3(|P_\tau| - k)$. Putting these together, we get
  that $- \frac{d\Phi}{d\tau} \geq 3(|P_\tau| - k)$, and hence
  \[ \frac{d}{d\tau}(ALG + \Phi) \leq 0. \] This shows
  $\log (1+1/\delta)$-competitiveness. Using the setting of
  $\delta = \frac{1}{k+1}$ completes the proof of~\Cref{thm:lponline}.

\end{document}